\documentclass[a4paper,UKenglish, thm-restate, cleveref]{lipics-v2021}
\pdfoutput=1 %
\hideLIPIcs  %

\title{Dynamic Planar Graph Isomorphism is in DynFO}

\author{Samir Datta}{Chennai Mathematical Institute \& UMI ReLaX, Chennai, India}{sdatta@cmi.ac.in}{https://orcid.org/0000-0003-2196-2308}{}
\author{Asif Khan}{Chennai Mathematical Institute, Chennai, India}{asifkhan@cmi.ac.in}{https://orcid.org/0009-0001-5950-8891}{}
\author{Felix Tschirbs}{Ruhr University Bochum, Bochum, Germany}{felix.tschirbs@rub.de}{https://orcid.org/0009-0005-7824-6507}{}
\author{Nils Vortmeier}{Ruhr University Bochum, Bochum, Germany}{nils.vortmeier@rub.de}{https://orcid.org/0009-0000-2821-7365}{}
\author{Thomas Zeume}{Ruhr University Bochum, Bochum, Germany}{thomas.zeume@rub.de}{https://orcid.org/0000-0002-5186-7507}{}

\authorrunning{S. Datta and A. Khan and F. Tschirbs and N. Vortmeier and T. Zeume} 

\Copyright{Samir Datta and Asif Khan and Felix Tschirbs and Nils Vortmeier and Thomas Zeume} 

\ccsdesc[500]{Theory of computation~Complexity theory and logic}

\keywords{Dynamic complexity theory, parallel computation, dynamic algorithms}

\category{} %

\funding{Samir Datta and Asif Khan are partially funded by a grant from Infosys foundation. Felix Tschirbs, Nils Vortmeier and Thomas Zeume are supported by the Deutsche Forschungsgemeinschaft (DFG, German Research Foundation), grant 532727578.}%

\usepackage{thmtools}
\usepackage{thm-restate}

\usepackage{regexpatch}
\makeatletter
\xpatchcmd\thmt@restatable{%
	\csname #2\@xa\endcsname\ifx\@nx#1\@nx\else[{#1}]\fi
}{%
	\ifthmt@thisistheone
	\csname #2\@xa\endcsname\ifx\@nx#1\@nx\else[{#1}]\fi
	\else
	\csname #2\@xa\endcsname[{Restated}]
	\fi}{}{}
\makeatother

\usepackage{xspace}
\usepackage{enumerate}
\usepackage{tikz}
\usetikzlibrary{arrows,decorations.pathmorphing,backgrounds,calc,positioning,fit,petri,matrix,backgrounds, decorations.pathreplacing, shapes.geometric,intersections}
\usepackage{soul}
\usepackage{subcaption}
\usepackage[above]{placeins}
\nolinenumbers

\newcommand{\biTree}{\ensuremath{\text{bi-tree}}}
\newcommand{\triTree}{\ensuremath{\text{tri-tree}}}
\newcommand{\mat}[1]{\mathbf{#1}}
\newcommand{\M}{\mathcal{M}}
\newcommand{\transpose}{\mathsf{T}}

\newcommand{\biComp}{\ensuremath{$2$\text{-comp}}}

\newcommand{\X}{\ensuremath{\text{X}}}
\newcommand{\rcX}{\ensuremath{\text{rcX}}}

\newcommand{\ST}{\ensuremath{\text{ST}}}
\newcommand{\rcST}{\ensuremath{\text{rcST}}}

\newcommand{\graph}{\ensuremath{\text{graph}}}

\newcommand{\sibIsoCount}{\mathsf{SiblingIsoCount}}

\newcommand{\tpl}{\bar}

\newcommand{\mtext}[1]{\textsc{#1}}
\newcommand{\dist}{\mtext{dist}\xspace}

\newcommand{\ins}{\mtext{ins}\xspace}
\newcommand{\del}{\mtext{del}\xspace}

\newcommand{\schema}{\ensuremath{\sigma}\xspace}

\newcommand{\query}{\ensuremath{Q}}
\newcommand{\inp}{\ensuremath{\calI}\xspace}
\newcommand{\aux}{\ensuremath{\calA}\xspace}%

\newcommand{\N}{\ensuremath{\mathbb{N}}}

\newcommand{\Q}{\ensuremath{\mathbb{Q}}}
\newcommand{\Z}{\ensuremath{\mathbb{Z}}}
\newcommand{\bigO}{\ensuremath{\mathcal{O}}}

\newcommand{\df}{\ensuremath{\mathrel{\smash{\stackrel{\scriptscriptstyle{
    \text{def}}}{=}}}} \;}

\newcommand  {\myclass} [1]  {\ensuremath{\textsf{\upshape #1}}}

\newcommand{\StaClass}[1]{\myclass{#1}\xspace}

\newcommand{\DynClass}[1]{\myclass{Dyn#1}\xspace}

\newcommand     {\LOGSPACE}     {\StaClass{LOGSPACE}}
\newcommand     {\NL}   {\StaClass{NL}}
\renewcommand   {\P}    {\StaClass{P}}
\newcommand     {\NP}   {\StaClass{NP}}
\newcommand     {\NC}   {\StaClass{NC}}

\newcommand     {\AC}   {\StaClass{AC}}

\newcommand{\FO}{\StaClass{FO}}

\newcommand{\FOar}{\StaClass{FO$(\leq,+,\times)$}}%

\newcommand{\CQ}[1][]{\StaClass{CQ}}
\newcommand{\UCQ}[1][]{\StaClass{UCQ}}
\newcommand{\CQneg}[1][]{\StaClass{CQ\ensuremath{^{\mneg}}}}
\newcommand{\UCQneg}[1][]{\StaClass{UCQ\ensuremath{^{\mneg}}}}

\newcommand{\mneg}{\neg} %

\newcommand{\DynFO}{\DynClass{FO}}

\theoremstyle{plain}
\newtheorem*{theorem*}{Theorem}
\theoremstyle{definition}
\newtheorem*{question*}{Question}
\newtheorem*{openquestion*}{Open question}

\newenvironment{proofsketch}{\begin{proof}[Proof sketch.]}{\end{proof}}

\newenvironment{proofof}[1]{\begin{proof}[Proof (of #1).]}{\end{proof}}

\providecommand {\calA}      {{\mathcal A}\xspace}

\providecommand {\calI}      {{\mathcal I}\xspace}

\newcommand{\prog}{\ensuremath{\Pi}\xspace}

\newcommand{\auxSchema}{\ensuremath{\schema_{\text{aux}}}\xspace}

\definecolor{iltisBeige1}{HTML}{fef6ee}
\definecolor{iltisBeige2}{HTML}{e3d5c8}
\definecolor{iltisBeige3}{HTML}{cfbeb0}
\definecolor{iltisBeige4}{HTML}{bfac9b}

\definecolor{iltisGrey1}{HTML}{edebe8}
\definecolor{iltisGrey2}{HTML}{d9d6d2}
\definecolor{iltisGrey3}{HTML}{c2bfbc}
\definecolor{iltisGrey4}{HTML}{a6a4a1}

\definecolor{iltisLightGreen1}{HTML}{f4ffe9}
\definecolor{iltisLightGreen2}{HTML}{d4f7b2}
\definecolor{iltisLightGreen3}{HTML}{bde697}
\definecolor{iltisLightGreen4}{HTML}{a3d177}

\definecolor{iltisGreen1}{HTML}{cfffe1}
\definecolor{iltisGreen2}{HTML}{a2e8bd}
\definecolor{iltisGreen3}{HTML}{86d1a2}
\definecolor{iltisGreen4}{HTML}{6fbf8d}

\definecolor{iltisYellow1}{HTML}{fef2d0}
\definecolor{iltisYellow2}{HTML}{ffe3a2}
\definecolor{iltisYellow3}{HTML}{ffd77d}
\definecolor{iltisYellow4}{HTML}{f2c55e}

\definecolor{iltisRed1}{HTML}{ffe9e6}
\definecolor{iltisRed2}{HTML}{eda498}
\definecolor{iltisRed3}{HTML}{e08475}
\definecolor{iltisRed4}{HTML}{c26e60}

\definecolor{iltisOrange1}{HTML}{ffdfb3}
\definecolor{iltisOrange2}{HTML}{ffc97d}
\definecolor{iltisOrange3}{HTML}{ebb467}
\definecolor{iltisOrange4}{HTML}{e0a34c}

\definecolor{iltisCyan1}{HTML}{e0fffe}
\definecolor{iltisCyan2}{HTML}{b4e0df}
\definecolor{iltisCyan3}{HTML}{95c7c5}
\definecolor{iltisCyan4}{HTML}{81b3b0}

\definecolor{iltisBlue1}{HTML}{cce8ff}
\definecolor{iltisBlue2}{HTML}{8db8d9}
\definecolor{iltisBlue3}{HTML}{6f9abd}
\definecolor{iltisBlue4}{HTML}{5c86a8}
\definecolor{iltisBlue5}{HTML}{2e5b80}

\definecolor{iltisViolet1}{HTML}{ede6ff}
\definecolor{iltisViolet2}{HTML}{b1a5cc}
\definecolor{iltisViolet3}{HTML}{9b8eba}
\definecolor{iltisViolet4}{HTML}{8578a6}

\colorlet{midgray}{black!50}

\tikzstyle{dEdge}=[
  -latex', %
  thick, 
  shorten >=2pt, 
  shorten <=2pt,
  draw=black!80,
]

\tikzstyle{dhEdge}=[
  -latex', %
  thick, 
  shorten >=3pt, 
  shorten <=3pt,
  draw=black!80,
]
\tikzstyle{uEdge}=[
  thick, 
  shorten >=3pt, 
  shorten <=3pt,
  draw=black!80,
]
\tikzstyle{uhEdge}=[
  thick, 
  shorten >=3pt, 
  shorten <=3pt,
  draw=black!80,
]

\tikzstyle{cEdge}=[
  ultra thick, 
  shorten >=3pt, 
  shorten <=3pt,
  draw=black!80,
]

\tikzstyle{dotsEdge}=[
  very thick, 
  loosely dotted, 
  shorten >=3pt, 
  shorten <=3pt
]

\tikzstyle{snakeEdge}=[
  ->, 
  decorate, 
    shorten >=2pt, 
  shorten <=2pt,
  decoration={snake,amplitude=.4mm,segment length=2.5mm,post length=0.0mm},
]

\tikzstyle{snakeEdgea}=[
  ->, 
  decorate, 
  decoration={snake,amplitude=.4mm,segment length=3mm,post length=0.5mm}
]
\tikzstyle{blackNode}=[
	shape=circle, draw, fill,inner sep=-1.5pt,minimum size=2pt
]

\newcommand{\es}[3][\sigma]{[\smash{{#2}\xrightarrow{#1}{#3}}]}

\newcommand{\ISOt}{\ensuremath{\textsc{iso}_3}}

\newcommand{\spqrIso}{\textsc{iso}_2}
\newcommand{\ContextIso}{\textsc{x\text{-}iso}}
\newcommand{\SubtreeIso}{\textsc{st\text{-}iso}}

\newcommand{\bcIso}{\textsc{iso}_1}

\EventEditors{Claudia Faggian and Joost-Pieter Katoen}
\EventNoEds{2}
\EventLongTitle{41st Annual Symposium on Logic in Computer Science (LICS 2026)}
\EventShortTitle{LICS 2026}
\EventAcronym{LICS}
\EventYear{2026}
\EventDate{July 20--23, 2026}
\EventLocation{Lisbon, Portugal}
\EventLogo{}
\SeriesVolume{380}
\ArticleNo{13}

\begin{document}
	
	\maketitle

\begin{abstract}
	Consider two planar graphs which are subject to edge insertions and deletions. We show that whether the two graphs are isomorphic can be maintained with first-order logic formulas and auxiliary data of polynomial size. This places the dynamic planar graph isomorphism problem into the dynamic descriptive complexity class $\DynFO$. As a consequence, there is a dynamic constant-time parallel algorithm with polynomial-size auxiliary data which maintains whether two dynamic planar graphs are isomorphic.
\end{abstract}

\section{Introduction}\label{section:intro}

The \emph{graph isomorphism problem} asks whether two given graphs are isomorphic, i.e., whether there is a bijection between their vertices that preserves their edges. It is one of the few algorithmic problems known to be in $\NP$, but not known to be either in $\P$ or $\NP$-complete. In fact, it is only known that the problem can be solved in quasipolynomial time \cite{Babai16} and that it is as hard as computing determinants \cite{Toran04} (i.e., it is not even known to be hard for $\P$).

While it is open  whether the graph isomorphism problem can be solved in polynomial time for general graphs, several restricted classes of graphs have been identified for which polynomial time or even logarithmic space suffices. For instance, early on it was shown that isomorphism of trees is complete for logarithmic space \cite{Lindell92}, which was generalized in subsequent work to graphs of bounded tree width \cite{Bodlaender90, GroheV06,DasTW12,ElberfeldS17}, planar graphs \cite{DattaLNTW22}, and graphs of bounded genus \cite{ElberfeldK14}. In particular, the graph isomorphism problem on these graph classes can be solved efficiently in parallel. It can be solved by a CRCW PRAM with $n^{O(1)}$ processors in $O(\log{n})$ time or equivalently by $\AC^1$-circuits, i.e., circuits with $\neg$-gates and unbounded fan-in $\wedge$- and $\vee$-gates of polynomial size and logarithmic depth. From a logical perspective this means that the graph isomorphism problem for these restricted graph classes can be expressed by first-order formulas of quantifier-depth $\bigO(\log n)$ \cite{BarringtonIS90}.

In this work we study the isomorphism problem for graphs that change dynamically over time, i.e., graphs where edges may be inserted or deleted. Our guiding question is whether less than $\bigO(\log n)$ parallel time  suffices to solve the graph isomorphism problem for dynamic graphs for restricted graph classes. Almost three decades ago, Etessami made a first step into this direction by proving that the dynamic graph isomorphism problem for trees can be maintained in constant parallel time by $\AC^0$-circuits with auxiliary data of polynomial size \cite{Etessami98}. Later Mehta showed that also $3$-connected planar isomorphism can be maintained in constant parallel time, assuming the graph stays $3$-connected at all times~\cite{Mehta14}.

In this work, we generalize both results and establish that the graph isomorphism problem for planar graphs can be maintained in constant parallel time.

\begin{theorem*}[Main theorem]
  The dynamic graph isomorphism problem for planar graphs can be maintained by $\AC^0$-circuits with auxiliary data of polynomial size.
\end{theorem*}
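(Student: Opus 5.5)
The plan is to prove the equivalent statement that dynamic planar graph isomorphism is in $\DynFO$. Since $\FO$ with built-in arithmetic coincides with uniform $\AC^0$, this yields the circuit formulation. The approach follows the static logspace algorithm of \cite{DattaLNTW22}, which works in three layers: decompose into biconnected components, then into triconnected components, and use the essentially unique embeddings of $3$-connected planar graphs. The task is to make each layer dynamically maintainable with first-order updates.

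\textbf{Step 1: maintain the decompositions.} Under edge insertions and deletions, I will maintain the block-cut tree and, for each block, its SPQR-tree (triconnected decomposition). I also maintain a planar embedding, i.e.\ a rotation system, of each $3$-connected component. I would build on known $\DynFO$ results: reachability and spanning forests, and maintenance of planar embeddings and $3$-connectivity in the style of Mehta's work \cite{Mehta14}. The idea is that a single edge change alters these trees only along one path, namely between the blocks or SPQR-nodes containing the endpoints. Such path-local restructurings (merging the nodes on a path, or splitting a node) are $\FO$-definable once tree paths are available. So I will also maintain ancestor and path relations of these trees, which is standard for dynamic forests in $\DynFO$.

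\textbf{Step 2: isomorphism of the pieces.} For $3$-connected planar components, I maintain canonical codes relative to a starting edge and orientation, following Mehta. Concretely, I store pairwise isomorphism relations $\ISOt$ between (component, starting flag) pairs. Because a change touches only $O(1)$ components in a structured way, and comparing two embedded $3$-connected graphs reduces to comparing traversal orders, I expect these relations to be $\FO$-updatable using maintained distance and ordering data.

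\textbf{Step 3: lift through the trees.} I maintain isomorphism relations for rooted subtrees, such as $\SubtreeIso$ and its complement version $\ContextIso$ for "tree minus subtree" contexts. These are kept for every possible root, lifting first through SPQR-trees ($\spqrIso$) and then through block-cut trees ($\bcIso$). Isomorphism of two graphs is then read off by trying all roots of one tree against a fixed root of the other. Inside a node, children must be matched. For P-nodes and cut vertices this is a multiset comparison of isomorphism classes, so I need counts of siblings in each class, stored as relations ($\sibIsoCount$) over numbers up to $n$. For S- and R-nodes the matching is forced by the cyclic order or embedding, up to $O(1)$ choices of flip and rotation.

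\textbf{Main obstacle.} The hardest part is that one change can alter the isomorphism type of every subtree along a root path of unbounded length, so the relations cannot be recomputed bottom-up in constant depth. My plan is to handle this in the style of Etessami's tree result. A rooted subtree is decomposed into the unchanged subtrees hanging off the path plus the path itself. Isomorphism of two such configurations is characterised by a first-order condition on stored context isomorphisms ($\ContextIso$) between paths, which compare the sequences of hanging subtree classes level by level. Whether the stored relations suffice, especially once SPQR-node merges and splits happen at the same time as path updates, is what I would need to check most carefully. I would first verify it for the block-cut layer alone.
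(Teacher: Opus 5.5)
Your overall architecture---maintain the decomposition trees, maintain isomorphism information for the triconnected pieces, and lift it through contexts and sibling counts in Etessami's style---matches the paper's. Step~2, however, has a genuine gap. Mehta's canonical-code approach relies on the promise that the graph stays $3$-connected. It copes with changes that only modify an existing $3$-connected component. It does not cope with an insertion of type $\es[+]{2}{3}$, which merges all triconnected components along an SPQR-tree path of unbounded length into one new $3$-connected component. Nor does it cope with the reverse deletion $\es[-]{3}{2}$, which unfurls a component into such a path. So your premise that a change touches only $O(1)$ components is false. After such a merge, your auxiliary data contains nothing about the new component from which a canonical code could be obtained by a first-order update.

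The paper's key idea is to prepare this information in advance. It maintains embedding data not only for the current $3$-connected components but for every graph $G[\rho]$ that would arise from merging along a coherent path $\rho$; there are polynomially many such graphs. To make this data compositional, it uses Tutte's spring embedding, which is the solution of a linear system. The inverse Tutte matrices are kept modulo many small primes. They are updated by the Sherman--Morrison--Woodbury formula under constant-rank changes and under overlapped block-diagonal merges and splits. Primes lost to singular matrices are replenished with the muddling technique of \cite{DMSVZ19}. Some mechanism of this kind, supplying isomorphism data for components that do not yet exist, is what your Step~2 is missing.

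Your lifting step also lacks two ingredients. First, when a component unfurls into a path, two things are needed immediately. One is the distances along that path. The other is a globally consistent orientation of its unboundedly many separating pairs: each pair can be mapped in two ways, so $O(1)$ choices per node compose to exponentially many. The paper handles distances by maintaining them on auxiliary ``disk'' cycles $\mathsf{Disk}(C,F_1,F_2)$, for every $3$-connected component and every pair of combinatorial faces. It handles orientation via coherence and the same-face relation of \cite{DattaKM23}, which make the orientation of the end pairs determine all the others.

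Second, at a block node of the block-cut tree the matching of children is not forced. A biconnected component can have exponentially many isomorphisms. Knowing only that two blocks are isomorphic does not tell you whether some isomorphism maps cut vertices to cut vertices with isomorphic subtrees. The paper resolves this by colouring each cut vertex with its context-dependent subtree isomorphism class and keeping a coloured copy of the graph for every context. It then maintains isomorphism of coloured biconnected components.
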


Similar to Etessami's and Mehta's results, the changing graphs are promised to remain planar and one bit of the auxiliary data indicates whether the graphs are isomorphic after each change.

For proving the result, like Etessami and Mehta, we work in the dynamic descriptive complexity framework introduced by Patnaik and Immerman \cite{PatnaikI94, PatnaikI97}. In this framework, the input graphs as well as the auxiliary data are represented by relational structures. After each change, all auxiliary relations are updated by first-order formulas with the changed edge and a tuple of nodes as parameters. The tuple of nodes will be included into an updated auxiliary relation if the formula evaluates to true in the relational auxiliary structure before the change. The problems maintainable in this way by first-order formulas constitute the dynamic descriptive complexity class $\DynFO$ and coincide with the problems maintainable by $\AC^0$-circuits with auxiliary data of polynomial size (due to the correspondence of $\AC^0$ and first-order logic, see \cite{BarringtonIS90}). In their proofs, Etessami and  Mehta\footnote{Mehta showed membership in the class $\DynFO^+$ that allows for polynomial-time initialization.} established that dynamic tree isomorphism and dynamic $3$-connected planar isomorphism are in $\DynFO$.

\begin{theorem*}[Main theorem, rephrased]
  The dynamic graph isomorphism problem for planar graphs is in $\DynFO$.
\end{theorem*}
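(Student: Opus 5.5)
We follow the classical static route for planar isomorphism (Datta et al.): decompose each graph into biconnected components (block-cut tree), decompose each biconnected component into triconnected components (SPQR tree), and canonize 3-connected planar components via their (essentially unique) embeddings. The dynamic proof maintains all three layers with first-order updates. The inner layer comes from Mehta's $\DynFO$ algorithm for 3-connected planar isomorphism, adapted so that it does not need $3$-connectivity to persist and needs no polynomial-time initialization. The outer two layers use an Etessami-style tree-isomorphism maintenance, extended to colored and rooted trees.

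\textbf{Maintaining the decompositions.} First we maintain, in $\DynFO$, the block-cut tree and, for each block, its SPQR tree. We also maintain a planar embedding (rotation system) of each triconnected component. Reachability and biconnectivity under edge changes are known to be in $\DynFO$, so block membership and cut vertices are first-order definable from maintained data. For the SPQR tree we maintain separation pairs and the induced decomposition. Our plan is to show that a single edge change alters only a path of nodes in each tree: nodes are merged along a tree path on insertion and split on deletion. The new trees are then definable from the old ones plus path information (ancestor relations and paths in trees are maintainable).

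\textbf{Isomorphism of rooted subtrees.} Next we maintain the relations $\bcIso$, $\spqrIso$ and $\ISOt$, which record isomorphism between all pairs of rooted subtrees, in every possible rooting, of the block-cut trees, of the SPQR trees, and between 3-connected components with marked virtual edges and given starting flags. Following Etessami, we store isomorphism for all pairs (subtree, rooting), together with counts of isomorphic siblings ($\sibIsoCount$), so that checking a child multiset is first-order expressible. The isomorphism is checked by comparing the counts of each isomorphism class among the children.

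\textbf{Handling a change.} A change affects only subtrees containing the changed path. For affected subtrees we recompute isomorphism bottom-up along the path. Doing this iteratively is impossible in FO. Instead, we express the answer via a contextual isomorphism relation $\ContextIso$ between "subtree minus a hole" pieces, together with $\SubtreeIso$. Two path-subtrees are then isomorphic iff some alignment of their paths gives piecewise-isomorphic contexts. This alignment condition has to be made first-order, for instance by precomputed equality relations over path segments, as in Etessami's argument.

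\textbf{Main obstacle.} The hardest step is this path recomputation at the SPQR level. The reason is that isomorphism of 3-connected pieces depends on embedding flips and virtual edge orientations. A change can therefore alter the canonical data of $\bigO(n)$ nested components at once, and the context relations must track orientation as well as isomorphism type. We would first try to store, for each pair of 3-connected nodes, the isomorphism relation for each of the constantly many choices of orientation at the marked virtual edges. We would then show that composing contexts along a path reduces to reachability in an auxiliary layered graph, which is maintainable in $\DynFO$.
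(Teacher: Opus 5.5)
Your three layers (block-cut tree, SPQR trees, and $3$-connected pieces, with Etessami-style context and sibling-count relations on the trees) match the paper's architecture. However, the proposal leaves open exactly the steps where the difficulty lies, and one step fails. The most serious gap is the inner layer. Mehta's method maintains canonical data only for a component that \emph{stays} $3$-connected. The critical changes are of two kinds. In type $\es[+]{2}{3}$, a single insertion merges an unbounded path of SPQR nodes into a new $3$-connected component, whose $\ISOt$ information must be available after one \FO update. In type $\es[-]{3}{2}$, such a component unfurls. Saying that Mehta's algorithm is ``adapted'' gives no mechanism for either. The paper's key idea is to maintain Tutte-embedding data, namely inverses of Tutte matrices modulo polynomially many $\bigO(\log n)$-bit primes together with their products with Laplacian columns. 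This data is kept not only for every $3$-connected component but also for every graph $G[\rho]$ induced by a coherent path $\rho$, i.e., for every component a future insertion could create. The data composes under gluing along separating pairs. So it can be updated with the Sherman--Morrison--Woodbury identity under edge changes, exchange of pinned vertices, and merging or splitting. Primes at which an inverse ceases to exist are discarded and replenished with the muddling technique.

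At the tree level, two further steps do not go through as stated. First, you propose to resolve the two possible mappings of each separating pair on an unfurled path by ``reachability in an auxiliary layered graph, maintainable in \DynFO''. This fails because \DynFO maintains reachability only for graphs that change by few edges per step, whereas your layered graph is created wholesale by the change. Solving it statically is not \FO either: composing width-two swap/no-swap relations along an unbounded path can express parity. The paper avoids this via coherence. After inserting an edge between vertices of the two end pairs, the maintained relation $\textsc{same-face}$ singles out one vertex of each intermediate separating pair. Hence fixing the orientation of the end pairs (constantly many choices) fixes all the others. Second, Etessami-style alignment needs distances along the unfurled path immediately after the deletion, even if the component survived many earlier deletions unchanged (Figure~\ref{fig:delayed-unfurl}). ``Paths in trees are maintainable'' does not cover a path that was absent from the old tree. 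The paper maintains, for every $3$-connected component $C$ and every pair of combinatorial faces $F_1,F_2$, distances on the cyclic graph $\mathsf{Disk}(C,F_1,F_2)$. By Lemma~\ref{lem:spqrDist}, these give the order and spacing of the separating pairs after deleting an edge between $F_1$ and $F_2$. Analogous cycle-component distances handle $\es[-]{2}{1}$ unfurlings of the block-cut tree. Finally, ``coloured trees'' at the block-cut level hides a real issue. Only the existence of a block isomorphism is known, so cut vertices must be coloured by subtree isomorphism type, which depends on the rooting. The paper therefore keeps a coloured copy of the graph for every context and needs $\spqrIso$ to remain maintainable under bulk recolourings.
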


This result continues the exploration of the power of first-order logic as an update mechanism for dynamic problems, which intensified with the positive resolution of a conjecture by Patnaik and Immerman that directed graph reachability is in $\DynFO$ \cite{DKMSZ18}. This and further results of the last decade --  including, for instance, that MSO-definable algorithmic problems on bounded treewidth are in $\DynFO$ \cite{DMSVZ19} and that testing planarity and maintaining plane embeddings are in $\DynFO$ \cite{DattaKM23} --  are proof that $\DynFO$ or dynamic constant-parallel time, respectively,  is surprisingly powerful. However, while $\DynFO$ contains \NL-complete problems like directed graph reachability and artificial \P-complete problems \cite{PatnaikI97}, there are also simple queries like ``Is the number of nodes that have an edge to a red node odd?'', which is definable in first-order logic extended by parity quantifiers, which are not known to be in $\DynFO$ \cite{VortmeierZ21}.

\subparagraph*{Outline.} We recall basic notions for graphs and dynamic complexity theory in Section \ref{section:preliminaries}. After outlining high-level proof ideas in Section \ref{section:proof-overview} and discussing the structural impact of edge changes on planar graphs in Section \ref{section:changes}, we provide a more detailed proof sketch in Section \ref{section:proof-details}. More proof details are in the appendix.

\section{Preliminaries and notations}\label{section:preliminaries}
We recall elementary notions from graph theory based on \cite{DattaLNTW22}.

\subparagraph*{Basic graph notions}
In this work, all graphs $G = (V, E)$, where $V$ is the set of vertices (also called nodes) and $E$ is the set of edges, are simple and undirected. For a vertex $v \in V$, the neighbourhood $N(v) = \{u \mid (v,u) \in E\}$ contains all vertices adjacent to $v$. The graph induced by a subset  $X\subseteq V(G)$ is denoted by $G[X]$ and $G-X$ denotes the graph $G[V(G)\setminus X]$. Two graphs $G = (V, E)$ and $G^* = (V^*, E^*)$ are \emph{isomorphic}, denoted by $G\simeq G^*$, if there is an isomorphism from $G$ to $G^*$, i.e., a bijection $\pi:V\to V^*$ such that $(u,v) \in E$ if and only if $(\pi(u),\pi(v))\in E^*$ for all $u,v\in V$. %

A graph is \emph{planar} if it can be drawn on the plane such that edges do not cross except at their endpoints. Connected regions in a drawing of a planar graph are called \emph{faces} \cite{Diestel12}.%

A \emph{(undirected) tree} is a graph in which there is exactly one path between any pair of nodes; a forest is a disjoint union of trees. Given a tree $T$, we denote by $\ST(x,r)$ the subtree rooted at $r$, in the tree $T$ rooted at $x$. 
A \emph{(rooted) context} $X = (V, E, r, h)$ is a tree $(V, E)$ with root $r \in V$ and one distinguished leaf $h \in V$, called the \emph{hole}. Two contexts $X = (V, E, r, h)$, $X^* = (V^*, E^*, r^*, h^*)$ are isomorphic if there is a root- and hole-preserving isomorphism between them, i.e., an isomorphism that maps $r$ to $r^*$ and $h$ to $h^*$. For a forest $F$ and nodes $x$, $r$, $h$ occurring in this order on some path, the context  $\X(x, r, h)$ is defined as the context we obtain by taking $\ST(x,r)$, removing all children of $h$, and taking $r$ as root and $h$ as hole.

\subparagraph*{Connectivity and decomposition trees} A graph is \emph{connected} if there is a path between any two vertices $u$ and $v$.
A subset $S \subseteq V$ with $|S| = k$ is a \emph{$k$-separating} set if $G - S$ is not connected. Such a set $S$ is a \emph{separating vertex} or \emph{cut vertex} if $k = 1$ and a \emph{separating pair} if $k = 2$. A graph $G$ is $k$-connected if it has no $(k-1)$-separating set.  If a graph is $2$-connected, we also call it \emph{biconnected}.  

A biconnected subgraph $C= G[U]$ is a \emph{biconnected component} of $G$ if $U$ is an inclusion-maximal set of nodes such that $G[U]$ is biconnected. The decomposition of a connected graph into its biconnected components can be given as a tree:
\begin{definition}[Biconnected component tree]
	The \emph{biconnected component tree} $\biTree(G)$ of a connected graph $G$ has a node for each biconnected component and each cut vertex of~$G$. There is an edge in $\biTree(G)$ between the node for a biconnected component $C$ and the node for a cut vertex $u$ if $u$ occurs in $C$.
\end{definition}

We further aim to divide biconnected components (that are not just single edges).
A separating pair $\{u,v\}$ of a biconnected graph $G$ is a \emph{$3$-connected separating pair} if there are three vertex-disjoint paths between $u$ and $v$ in $G$. 
Let $U \subseteq V$ be an inclusion-maximal set such that $G[U]$ is not separated by any $3$-connected separating pair, that is, $G[U \setminus  \{u,v\}]$ is connected for any $3$-connected separating pair $\{u,v\}$.
The \emph{triconnected component} associated with $U$ is obtained from $G[U]$ by adding an edge $(u,v)$ for each $3$-connected separating pair $\{u,v\}$ of $G$ with $u, v \in U$, if it is not already present. Such added edges are called \emph{virtual edges}. A triconnected component is either $3$-connected or a (chordless) cycle and we call every graph \emph{triconnected} if it is either $3$-connected or a cycle~\cite[p. 137]{HopcroftT73}.

\begin{definition}[Triconnected component tree]
	The \emph{triconnected component tree} $\triTree(G)$ of a biconnected graph $G$ has a node for each triconnected component and for each $3$-connected separating pair of $G$. There is an edge between the node for a triconnected component $C$ and the node for a $3$-connected separating pair $\{u, v\}$, if $u$ and $v$ both occur in $C$.
\end{definition}

Both the biconnected component tree of a connected graph and the triconnected component tree of a biconnected graph are indeed trees and unique (up to isomorphism). Triconnected component trees are also known under the name SPQR trees \cite{BattistaT96,GutwengerM01}.

If $X$ is a context of a biconnected or triconnected component tree $T$ for a graph $G$, we denote by $\graph(X)$ the subgraph of $G$ induced by the set of vertices represented in $X$, that is, that occur as a cut vertex or in a biconnected component respectively in a separating pair or a triconnected component represented in $X$.

We also use the notion of \emph{coherent paths} defined in~\cite{DattaKM23}.
\begin{definition}[Coherent paths]
	Let $C_1, C_2$ be two triconnected components of a triconnected component tree $\triTree(G)$ and let $a_1, a_2$ be vertices of $C_1, C_2$, respectively. The path between $C_1, C_2$ in $\triTree(G)$ is an \emph{$(a_1,a_2)$-coherent path} if $G+\{(a_1,a_2)\}$ is planar.
\end{definition}
	We also denote the $(a_1,a_2)$-coherent path from $C_1$ to $C_2$ by $\rho(C_1, C_2,(a_1, a_2))$. Often, we do not mention the vertices $a_1, a_2$ explicitly and say that the path between $C_1, C_2$ in $\triTree(G)$ is a \emph{coherent path} if there are $a_1, a_2$ such that the path is $(a_1,a_2)$-coherent.
	
	Note that the path between triconnected components $C_1, C_2$ may be non-coherent: for example, if $C_1$ and $C_2$ need to be embedded into different faces of another triconnected component, any edge insertion between $C_1$ and $C_2$ destroys planarity.

\subparagraph*{Dynamic complexity.}

The goal of dynamic (descriptive) complexity theory is to study the complexity of {maintaining}
the answer to a query where the input structure is subject to changes \cite{PatnaikI97}.
In this paper, the input structure is a planar graph $G$ and the change operations are insertions $\ins_e$ and deletions $\del_e$ of edges $e$ such that $G$ remains planar.

A \emph{dynamic program} $\prog$ stores the input structure as well as a set~$\calA$ of auxiliary relations over some (relational) schema $\auxSchema$ and over the same domain as the input structure. For every auxiliary relation symbol $R \in \auxSchema$ and change operations $\delta\in\{\ins,\del\}$, the program $\prog$ has an \emph{update formula} $\varphi_\delta^R(\tpl x;e)$, which can access input and auxiliary relations, as well as the changed edge $e$. After a change $\delta_e$, an auxiliary relation $R^\calA$ is defined by  $\{\tpl x\mid (\tpl x,e)\models\varphi_\delta^R\}$.

A dynamic program $\prog$ \emph{maintains} a dynamic query $\query$, if after applying a sequence $\alpha$ of changes to an initial input structure $\inp_0$ with empty relations and applying the corresponding update formulas to $(\inp_0, \aux_0)$, where $\aux_0$ is an initial auxiliary structure, a dedicated auxiliary relation always equals the result $\query$ on the input structure.

The class \DynFO contains all dynamic queries that are maintained by a dynamic program with \FOar formulas\footnote{Using a linear order $\leq$, we can identify the vertex set $V$ with an initial segment of the natural numbers. The relations $+$ and $\times$ provide addition and multiplication on these numbers. Assuming the existence of these relations does not increase the expressive power of \DynFO, see \cite[Proposition~7]{DKMSZ18}. We use that $\FOar$ corresponds to uniform $\AC^0$ \cite{BarringtonIS90}.} as update formulas and first-order definable initialization of $\aux_0$.

\section{Proof overview}\label{section:proof-overview}

The inspiration for our dynamic program for planar isomorphism comes from  algorithms that test planar isomorphism in linear time~\cite{HopcroftW74} and logarithmic space~\cite{DattaLNTW22}. These algorithms decompose given graphs $G$ and $G^*$ into their biconnected and triconnected component trees. They then proceed by identifying pairs of isomorphic triconnected components; isomorphic biconnected components via their triconnected component trees; and finally isomorphic connected components via their biconnected component trees.

Our dynamic program follows this decomposition and maintains
\begin{description}
 \item[(1)] isomorphisms between triconnected components,
 \item[(2)] which biconnected components are isomorphic by maintaining isomorphism information for their triconnected component trees, and
 \item[(3)] which connected components are isomorphic by maintaining isomorphism information for their biconnected component trees.
\end{description}

Our main theorem follows: the input to our dynamic program is the disjoint union of $G$ and $G^*$; it holds $G \simeq  G^*$ if these connected components of the input graph are isomorphic.

\subparagraph*{(1) Isomorphisms between triconnected components}
Crucial for maintaining isomorphisms between triconnected components
is that $3$-connected planar graphs have a unique embedding (up to reflection and choice of outer face) in the plane \cite{Whitney33}. This fact has been used for deciding isomorphism between two $3$-connected planar graphs in various settings, see~\cite{Weinberg66,ThieraufW10}. Under the promise that graphs remain 3-connected, Mehta used their unique embeddings to maintain whether two $3$-connected planar graphs are isomorphic by maintaining canonical breadth first search trees \cite{Mehta14}.

Without the promise that graphs remain 3-connected, there are changes that (i) do not change the triconnected component tree and only modify existing 3-connected components, and changes that (ii) change the triconnected component tree, e.g., by merging triconnected components along a path of the tree into a new 3-connected component. Mehta's approach handles (i), but not (ii).

We use, instead, Tutte's canonical spring embedding~\cite{Tutte63} which is obtained by fixing three arbitrary vertices on a common face of the graph, putting identical springs between adjacent pairs of vertices, and computing an equilibrium. Tutte showed that this equilibrium yields a canonical plane embedding of 3-connected planar graphs, called the \emph{Tutte embedding}. To check whether two $3$-connected planar graphs are isomorphic, their Tutte embeddings can be compared for all choices of three vertices (cf.~\cite{KieferPS19}).

Our dynamic algorithm maintains representations of Tutte embeddings for all 3-connected components as well as for all such components that arise from edge insertions that merge multiple triconnected components along a path of the triconnected component tree into a new 3-connected component. The latter helps to deal with changes of type~(ii). For maintaining embeddings, we use that coordinates of graph vertices in the plane --- $\mat x\in\mathbb{Q}^n$ and $\mat y\in\mathbb{Q}^n$ for the $x$ and $y$-coordinates of the $n$ vertices --- can be computed by solving a system of linear equations of the form $\mat T \: \mat x= \mat b$ and $\mat T \: \mat y = \mat b$. The \emph{Tutte matrix} $\mat T$ is almost the Laplacian of the graph, except that the three rows corresponding to the fixed vertices have $1$ at diagonal entries and $0$ elsewhere. For solving these systems, it is sufficient to maintain the inverse of $\mat T$ and then compute $\mat T^{-1} \mat b$ (exploiting that $\mat b$ only has a constant number of non-zero entries).

We use that edge insertions and deletions in the graph can be translated to the following operations on Tutte matrices:
\begin{itemize}
 \item[(a)] Modifying constantly many entries of a Tutte matrix $\mat T$, or
 \item[(b)] Combining two Tutte matrices $\mat T_1$ and $\mat T_2$ into a new Tutte matrix $\mat T$ in an overlapped block diagonal fashion with small overlap. %
\end{itemize}
We show that the inverses of Tutte matrices can be maintained under both kind of changes using the Sherman-Morrison-Woodbury identity (see, e.g., \cite{HendersonS81}).

A technical issue in this approach is that entries of $\mat T^{-1}$ may be large, and hence the arithmetic operations required may not be carried out in $\FOar$.  Therefore we maintain inverses modulo polynomially many small ($\bigO(\log n)$ bits) primes and compute the Chinese remainder representation of the positions of the vertices in the Tutte embedding. During updates, primes may become invalid because $\mat T^{-1}$ does not exist for them. For this reason we refresh available primes regularly using a ``muddling technique'' from \cite{DMSVZ19}.

From this information we can infer whether any two $3$-connected components $C$ and $C^*$ of the graph are isomorphic, as well as all possible isomorphisms between them if they are.

\subparagraph*{(2) Isomorphic biconnected components}
For maintaining isomorphic biconnected components, we exploit that two biconnected components are isomorphic if and only if there is an isomorphism between their triconnected component trees that maps triconnected components to triconnected components and separating pairs to separating pairs in such a way that
\begin{itemize}
 \item triconnected components $C$ are mapped to isomorphic triconnected components $C^*$;
 \item separating pairs $\{u_1, v_1\}, \ldots, \{u_k, v_k\}$  of $C$ are mapped to separating pairs $\{u^*_1, v^*_1\} \df \pi(\{u_1, v_1\}), \ldots, \{u^*_k, v^*_k\} \df \pi(\{u_k, v_k\})$ of $C^*$ such that the subtree rooted at $\{u_i, v_i\}$ in the triconnected component tree is isomorphic to the subtree rooted at $\{u^*_i, v^*_i\}$, for all $i \in \{1, \ldots, k\}$; and
 \item the isomorphism from $C$ to $C^*$ is compatible with the mapping $\pi$ of the separating pairs.
\end{itemize}
Thus, we have to maintain whether (a) triconnected component trees are isomorphic, and (b) their isomorphisms can be ``extended'' to isomorphisms of the underlying graphs. We extend the dynamic algorithm for tree isomorphism due to Datta et al.~\cite{DattaK0TVZ24} -- a variant of Etessami's algorithm \cite{Etessami98} -- for (a), but exploit our knowledge of isomorphisms between triconnected components for~(b).

The dynamic algorithm for tree isomorphism by Datta et al.~maintains auxiliary relations
\begin{itemize}
 \item $\textsc{x-iso}$ that stores tuples $(X, X^*) \df (x, r, h, x^*, r^*, h^*)$ such that the contexts $X \df \X(x, r, h)$ and $X^* \df \X(x^*, r^*, h^*)$ are isomorphic,
  \item $\#\textsc{iso-siblings}$ that stores tuples $(x, r, y, m)$ such that $y$ has $m$ isomorphic siblings within the subtree $\ST(x,r)$, and
 \item $\textsc{dist}$ that stores tuples $(x, y, d)$ such that the distance between nodes $x$ and $y$ is $d$.
\end{itemize}

For maintaining isomorphic biconnected components, we maintain similar relations for the triconnected component trees but ensure that both contexts within those trees as well as their induced subgraphs within the biconnected components are isomorphic.

The main challenge is that the triconnected component tree can change significantly. For instance, a triconnected component node can unfurl into a path of triconnected component nodes and separating pair nodes, when an edge is deleted.\footnote{Similarly, such a path of biconnected component nodes and separating pair nodes can merge into a single triconnected component node when an edge is inserted.}

For updating whether contexts $X$ and $X^*$ are isomorphic where one of the contexts, say $X$, contains parts of such an unfurled path, we follow Etessami's approach and match subtrees of nodes along this path with their counterparts in $X^*$ using the distance information (see below). Isomorphism of these unchanged subtrees can be verified using the stored auxiliary data.  A difference to tree isomorphism is that we also need to map the separating pair nodes, for which, a priori, there are two possible ways for each pair $\{u, v\}$ and $\{u^*, v^*\}$ which might lead to exponentially many possible mappings if the path contains many such pairs. Fortunately, the unfurled path is a coherent path which implies that if $\{u_1, v_1\}$ and $\{u_2, v_2\}$ are adjacent separating pairs on the path, then the mapping of $\{u_1, v_1\}$ to $\{u_1^*, v_1^*\}$ determines the mapping of $\{u_2, v_2\}$ to $\{u_2^*, v_2^*\}$. Thus, the mapping of the first separating pair on the unfurled path determines the mappings for all subsequent pairs. To exploit this, we use auxiliary information from \cite{DattaKM23} for determining this mapping (see Section~\ref{section:proof-details:biconnected}).

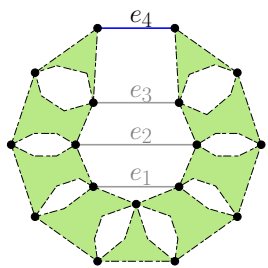
\begin{figure}[t!]
	\centering

\begin{minipage}[b]{0.45\textwidth}\centering

\tikzset{every picture/.style={line width=0.75pt}} %
\scalebox{0.4}{
\begin{tikzpicture}[x=0.75pt,y=0.75pt,yscale=-1,xscale=1]
	\draw [color={rgb, 255:red, 0; green, 0; blue, 255 }  ,draw opacity=1 ][line width=1.5]    (188.54,32.26) -- (284.96,32.26) ;
	\draw [color={rgb, 255:red, 155; green, 155; blue, 155 }  ,draw opacity=1 ][line width=1.5]    (183.26,126) -- (290.24,126) ;
	\draw [color={rgb, 255:red, 155; green, 155; blue, 155 }  ,draw opacity=1 ][line width=1.5]    (161.1,178.83) -- (312.4,178.83) ;
	\draw [color={rgb, 255:red, 155; green, 155; blue, 155 }  ,draw opacity=1 ][line width=1.5]    (183.26,231.67) -- (290.24,231.67) ;
	\draw  [fill={rgb, 255:red, 184; green, 233; blue, 134 }  ,fill opacity=1 ][dash pattern={on 3.75pt off 3pt on 7.5pt off 1.5pt}] (392.77,178.83) -- (362.97,269.41) -- (284.96,325.4) -- (188.54,325.4) -- (110.53,269.41) -- (80.73,178.83) -- (110.53,88.25) -- (188.54,32.26) -- (183.26,126) -- (161.1,178.83) -- (183.26,231.67) -- (236.75,253.55) -- (290.24,231.67) -- (312.4,178.83) -- (290.24,126) -- (284.96,32.26) -- (362.97,88.25) -- (392.77,178.83) ;
	\draw  [fill={rgb, 255:red, 255; green, 255; blue, 255 }  ,fill opacity=1 ][dash pattern={on 3.75pt off 3pt on 7.5pt off 1.5pt}][line width=0.75]  (175.29,89) -- (183.26,126) -- (146.92,136.28) -- (118.55,117.37) -- (110.53,88.25) -- (137.46,79.54) -- cycle ;
	\draw  [fill={rgb, 255:red, 255; green, 255; blue, 255 }  ,fill opacity=1 ][dash pattern={on 3.75pt off 3pt on 7.5pt off 1.5pt}][line width=0.75]  (324.99,141) -- (290.24,126) -- (302.98,90.44) -- (334.85,78.34) -- (362.97,88.25) -- (354.65,115.7) -- cycle ;
	\draw  [fill={rgb, 255:red, 255; green, 255; blue, 255 }  ,fill opacity=1 ][dash pattern={on 3.75pt off 3pt on 7.5pt off 1.5pt}][line width=0.75]  (137.46,164.65) -- (161.1,178.83) -- (137.46,193.01) -- (109.1,193.01) -- (80.73,178.83) -- (109.1,164.65) -- cycle ;
	\draw  [fill={rgb, 255:red, 255; green, 255; blue, 255 }  ,fill opacity=1 ][dash pattern={on 3.75pt off 3pt on 7.5pt off 1.5pt}][line width=0.75]  (336.04,193.01) -- (312.4,178.83) -- (336.04,164.65) -- (364.4,164.65) -- (392.77,178.83) -- (364.4,193.01) -- cycle ;
	\draw  [fill={rgb, 255:red, 255; green, 255; blue, 255 }  ,fill opacity=1 ][dash pattern={on 3.75pt off 3pt on 7.5pt off 1.5pt}][line width=0.75]  (336.04,230.84) -- (362.97,269.41) -- (326.58,259.21) -- (290.24,231.67) -- (307.67,221.38) -- cycle ;
	\draw  [fill={rgb, 255:red, 255; green, 255; blue, 255 }  ,fill opacity=1 ][dash pattern={on 3.75pt off 3pt on 7.5pt off 1.5pt}][line width=0.75]  (146.92,259.21) -- (110.53,269.41) -- (137.46,230.84) -- (165.83,221.38) -- (183.26,231.67) -- cycle ;
	\draw  [fill={rgb, 255:red, 255; green, 255; blue, 255 }  ,fill opacity=1 ][dash pattern={on 3.75pt off 3pt on 7.5pt off 1.5pt}][line width=0.75]  (236.75,253.55) -- (222.57,297.03) -- (188.54,325.4) -- (184.74,297.03) -- (194.2,268.66) -- cycle ;
	\draw  [fill={rgb, 255:red, 255; green, 255; blue, 255 }  ,fill opacity=1 ][dash pattern={on 3.75pt off 3pt on 7.5pt off 1.5pt}][line width=0.75]  (236.75,253.55) -- (279.3,268.66) -- (288.76,297.03) -- (284.96,325.4) -- (250.93,297.03) -- cycle ;
	\draw  [fill={rgb, 255:red, 0; green, 0; blue, 0 }  ,fill opacity=1 ] (105.8,88.25) .. controls (105.8,85.64) and (107.91,83.52) .. (110.53,83.52) .. controls (113.14,83.52) and (115.25,85.64) .. (115.25,88.25) .. controls (115.25,90.86) and (113.14,92.98) .. (110.53,92.98) .. controls (107.91,92.98) and (105.8,90.86) .. (105.8,88.25) -- cycle ;
	\draw  [fill={rgb, 255:red, 0; green, 0; blue, 0 }  ,fill opacity=1 ] (178.53,126) .. controls (178.53,123.39) and (180.65,121.27) .. (183.26,121.27) .. controls (185.87,121.27) and (187.99,123.39) .. (187.99,126) .. controls (187.99,128.61) and (185.87,130.72) .. (183.26,130.72) .. controls (180.65,130.72) and (178.53,128.61) .. (178.53,126) -- cycle ;
	\draw  [fill={rgb, 255:red, 0; green, 0; blue, 0 }  ,fill opacity=1 ] (156.38,178.83) .. controls (156.38,176.22) and (158.49,174.1) .. (161.1,174.1) .. controls (163.71,174.1) and (165.83,176.22) .. (165.83,178.83) .. controls (165.83,181.44) and (163.71,183.56) .. (161.1,183.56) .. controls (158.49,183.56) and (156.38,181.44) .. (156.38,178.83) -- cycle ;
	\draw  [fill={rgb, 255:red, 0; green, 0; blue, 0 }  ,fill opacity=1 ] (76,178.83) .. controls (76,176.22) and (78.12,174.1) .. (80.73,174.1) .. controls (83.34,174.1) and (85.46,176.22) .. (85.46,178.83) .. controls (85.46,181.44) and (83.34,183.56) .. (80.73,183.56) .. controls (78.12,183.56) and (76,181.44) .. (76,178.83) -- cycle ;
	\draw  [fill={rgb, 255:red, 0; green, 0; blue, 0 }  ,fill opacity=1 ] (105.8,269.41) .. controls (105.8,266.8) and (107.91,264.69) .. (110.53,264.69) .. controls (113.14,264.69) and (115.25,266.8) .. (115.25,269.41) .. controls (115.25,272.03) and (113.14,274.14) .. (110.53,274.14) .. controls (107.91,274.14) and (105.8,272.03) .. (105.8,269.41) -- cycle ;
	\draw  [fill={rgb, 255:red, 0; green, 0; blue, 0 }  ,fill opacity=1 ] (178.53,231.67) .. controls (178.53,229.05) and (180.65,226.94) .. (183.26,226.94) .. controls (185.87,226.94) and (187.99,229.05) .. (187.99,231.67) .. controls (187.99,234.28) and (185.87,236.39) .. (183.26,236.39) .. controls (180.65,236.39) and (178.53,234.28) .. (178.53,231.67) -- cycle ;
	\draw  [fill={rgb, 255:red, 0; green, 0; blue, 0 }  ,fill opacity=1 ] (183.81,325.4) .. controls (183.81,322.79) and (185.93,320.67) .. (188.54,320.67) .. controls (191.15,320.67) and (193.26,322.79) .. (193.26,325.4) .. controls (193.26,328.01) and (191.15,330.13) .. (188.54,330.13) .. controls (185.93,330.13) and (183.81,328.01) .. (183.81,325.4) -- cycle ;
	\draw  [fill={rgb, 255:red, 0; green, 0; blue, 0 }  ,fill opacity=1 ] (232.02,253.55) .. controls (232.02,250.94) and (234.14,248.82) .. (236.75,248.82) .. controls (239.36,248.82) and (241.48,250.94) .. (241.48,253.55) .. controls (241.48,256.16) and (239.36,258.28) .. (236.75,258.28) .. controls (234.14,258.28) and (232.02,256.16) .. (232.02,253.55) -- cycle ;
	\draw  [fill={rgb, 255:red, 0; green, 0; blue, 0 }  ,fill opacity=1 ] (280.24,325.4) .. controls (280.24,322.79) and (282.35,320.67) .. (284.96,320.67) .. controls (287.57,320.67) and (289.69,322.79) .. (289.69,325.4) .. controls (289.69,328.01) and (287.57,330.13) .. (284.96,330.13) .. controls (282.35,330.13) and (280.24,328.01) .. (280.24,325.4) -- cycle ;
	\draw  [fill={rgb, 255:red, 0; green, 0; blue, 0 }  ,fill opacity=1 ] (358.25,269.41) .. controls (358.25,266.8) and (360.36,264.69) .. (362.97,264.69) .. controls (365.59,264.69) and (367.7,266.8) .. (367.7,269.41) .. controls (367.7,272.03) and (365.59,274.14) .. (362.97,274.14) .. controls (360.36,274.14) and (358.25,272.03) .. (358.25,269.41) -- cycle ;
	\draw  [fill={rgb, 255:red, 0; green, 0; blue, 0 }  ,fill opacity=1 ] (285.51,231.67) .. controls (285.51,229.05) and (287.63,226.94) .. (290.24,226.94) .. controls (292.85,226.94) and (294.97,229.05) .. (294.97,231.67) .. controls (294.97,234.28) and (292.85,236.39) .. (290.24,236.39) .. controls (287.63,236.39) and (285.51,234.28) .. (285.51,231.67) -- cycle ;
	\draw  [fill={rgb, 255:red, 0; green, 0; blue, 0 }  ,fill opacity=1 ] (388.04,178.83) .. controls (388.04,176.22) and (390.16,174.1) .. (392.77,174.1) .. controls (395.38,174.1) and (397.5,176.22) .. (397.5,178.83) .. controls (397.5,181.44) and (395.38,183.56) .. (392.77,183.56) .. controls (390.16,183.56) and (388.04,181.44) .. (388.04,178.83) -- cycle ;
	\draw  [fill={rgb, 255:red, 0; green, 0; blue, 0 }  ,fill opacity=1 ] (307.67,178.83) .. controls (307.67,176.22) and (309.79,174.1) .. (312.4,174.1) .. controls (315.01,174.1) and (317.13,176.22) .. (317.13,178.83) .. controls (317.13,181.44) and (315.01,183.56) .. (312.4,183.56) .. controls (309.79,183.56) and (307.67,181.44) .. (307.67,178.83) -- cycle ;
	\draw  [fill={rgb, 255:red, 0; green, 0; blue, 0 }  ,fill opacity=1 ] (285.51,126) .. controls (285.51,123.39) and (287.63,121.27) .. (290.24,121.27) .. controls (292.85,121.27) and (294.97,123.39) .. (294.97,126) .. controls (294.97,128.61) and (292.85,130.72) .. (290.24,130.72) .. controls (287.63,130.72) and (285.51,128.61) .. (285.51,126) -- cycle ;
	\draw  [fill={rgb, 255:red, 0; green, 0; blue, 0 }  ,fill opacity=1 ] (358.25,88.25) .. controls (358.25,85.64) and (360.36,83.52) .. (362.97,83.52) .. controls (365.59,83.52) and (367.7,85.64) .. (367.7,88.25) .. controls (367.7,90.86) and (365.59,92.98) .. (362.97,92.98) .. controls (360.36,92.98) and (358.25,90.86) .. (358.25,88.25) -- cycle ;
	\draw  [fill={rgb, 255:red, 0; green, 0; blue, 0 }  ,fill opacity=1 ] (183.81,32.26) .. controls (183.81,29.65) and (185.93,27.54) .. (188.54,27.54) .. controls (191.15,27.54) and (193.26,29.65) .. (193.26,32.26) .. controls (193.26,34.88) and (191.15,36.99) .. (188.54,36.99) .. controls (185.93,36.99) and (183.81,34.88) .. (183.81,32.26) -- cycle ;
	\draw  [fill={rgb, 255:red, 0; green, 0; blue, 0 }  ,fill opacity=1 ] (280.24,32.26) .. controls (280.24,29.65) and (282.35,27.54) .. (284.96,27.54) .. controls (287.57,27.54) and (289.69,29.65) .. (289.69,32.26) .. controls (289.69,34.88) and (287.57,36.99) .. (284.96,36.99) .. controls (282.35,36.99) and (280.24,34.88) .. (280.24,32.26) -- cycle ;

	\draw (227.1,9.26) node [anchor=north west][inner sep=0.75pt]    {\Huge$e_{4}$};
	\draw (227.1,103.49) node [anchor=north west][inner sep=0.75pt]  [color={rgb, 255:red, 128; green, 128; blue, 128 }  ,opacity=1 ]  {\Huge$e_{3}$};
	\draw (227.1,155.48) node [anchor=north west][inner sep=0.75pt]  [color={rgb, 255:red, 128; green, 128; blue, 128 }  ,opacity=1 ]  {\Huge$e_{2}$};
	\draw (227.1,207.47) node [anchor=north west][inner sep=0.75pt]  [color={rgb, 255:red, 128; green, 128; blue, 128 }  ,opacity=1 ]  {\Huge$e_{1}$};

\end{tikzpicture}
}
\end{minipage} 	\caption{A $3$-connected component that unfurls into a path of $3$-connected components after deleting $e_4$.} %
	\label{fig:delayed-unfurl}
\end{figure}

For updating distance information when unfurling a path, it is necessary to maintain additional information for triconnected components. To see this, consider the 3-connected component in Figure~\ref{fig:delayed-unfurl}, which remains 3-connected when deleting the edges $e_1$, $e_2$, and $e_3$; and unfurls into a path only when also deleting $e_4$. When the path unfurls, distance information for this path needs to be available immediately.  A careful examination of the separating pair vertices of the unfurled path just before the deletion of an edge $e$ reveals that they lie on a disk formed by the faces adjacent to $e$ (see Figure~\ref{figure:biconnected:distances} and Section \ref{section:proof-details:biconnected}).%

Technically, we show that isomorphism can be maintained even for coloured biconnected components, where colours need to be observed as well. Considering coloured biconnected components is helpful for maintaining isomorphic connected components. Intuitively, two nodes of a biconnected component will have the same colour, if they are cut vertices in the biconnected component tree of a connected component that have isomorphic subtrees.

\subparagraph*{(3) Isomorphic connected components.}
For maintaining isomorphic connected components, similarly to biconnected components, we exploit that two connected components are isomorphic if and only if there is an isomorphism between their biconnected component trees that maps biconnected components to biconnected components and cut vertices to cut vertices such that
\begin{itemize}
 \item biconnected components $C$ are mapped to isomorphic biconnected components $C^*$;
 \item cut vertices $v_1, \ldots, v_k$  of $C$ are mapped to cut vertices $v^*_1 \df \pi(v_1), \ldots, v^*_k \df \pi(v_k)$ of $C^*$ such that the subtree rooted at ${v_i}$ in the biconnected component tree is isomorphic to the subtree rooted at ${v^*_i}$, for all $i \in \{1, \ldots, k\}$; and
 \item some isomorphism from $C$ to $C^*$ is compatible with the mapping $\pi$ of the cut vertices.
\end{itemize}
Thus, we have to maintain whether (a) the biconnected component trees are isomorphic, and (b) the isomorphism can be ``extended'' to an isomorphism of the underlying graphs. We again build on the dynamic algorithm for tree isomorphism and maintain auxiliary information for isomorphic contexts, distances, and number of isomorphic siblings in the biconnected component tree. Again, a challenge is to handle when an edge deletion unfurls a biconnected component into a path of biconnected components and cut vertices. However, updating isomorphic contexts is easier to handle as there is only one mapping from a cut vertex $v$ to its corresponding cut vertex $v^*$ (as opposed to potentially two such mappings from separating pairs $\{u, v\}$ to $\{u^*, v^*\}$ in the case of the triconnected component trees). Also, maintaining distances is easier in this case.

One complication arises from the fact that there may be many ways to isomorphically map a biconnected component $C$ to a biconnected component $C^*$. For triconnected components (see (2)), there is a unique mapping (up to
reflection and choice of outer face) which can be inferred from the Tutte information. For biconnected components, we cannot deduce such a mapping from the isomorphism information maintained for biconnected components. To ensure that a biconnected component $C$ can be mapped to a biconnected component $C^*$ such that cut vertices $v$ are mapped to cut vertices $v^*$ with isomorphic subtrees, we colour cut vertices of $C$ and $C^*$ with their subtree isomorphism types. If there is an isomorphism between such coloured biconnected components then this isomorphism maps cut vertices $v$ to cut vertices $v^*$ with isomorphic subtrees.

\section{Effects of Changes on Planar Graph Decompositions}\label{section:changes}

When an edge of a planar graph $G$ is modified, the decomposition of $G$ into its biconnected and triconnected component trees can change significantly. %
In this section, building on \cite[Definition 8]{DattaKM23}, we systematically discuss the structural impact of edge changes.

The effect of modifying an edge $(u, v)$ depends on whether (i) the edge $(u,v)$ is inserted or deleted, (ii) the vertices $u$ and $v$ are in the same $k$-connected component before the change, for $k \leq 3$, and whether (iii) $u$ and $v$ are in the same $k$-connected component after the change, for $k \leq 3$.
We denote the \emph{type} of an edge insertion (respectively edge deletion) with $\es[+]{k}{k'}$ (respectively $\es[-]{k}{k'}$), where $k, k' \in \{0,\ldots,3\}$ are the maximal numbers such that the two affected vertices are in the same $k$-connected component before and in the same $k'$-connected component after the change.

Depending on the type of a change, the change may affect only triconnected component trees, only biconnected component trees, or both.

\subparagraph*{Changes only relevant for the triconnected component tree}

The following types of changes only affect the triconnected components, any biconnected component trees remain unchanged.

\subparagraph{$\es[+]{3}{3}$.} The vertices $u$ and $v$ belong to a common $3$-connected component before and after the edge change. In this case, the triconnected component tree remains unchanged.

\begin{figure*}[t]
	\centering
	\begin{subfigure}{.55\textwidth}\centering

		\scalebox{0.7}{
      	\begin{tikzpicture}[
		xscale=1.0,
		yscale=0.7,
		vertex/.style={circle,fill=black,inner sep=1.6pt},
		hollow/.style={circle,draw, fill=white,inner sep=1.6pt},
		region/.style={draw=black,fill=gray!35},
		edge/.style={line width=1pt},
		bluepath/.style={blue!60,line width=2pt}
		]

		\coordinate (a)  at (-4.25,0);
		\coordinate (b)  at ( 4.25,0);

		\coordinate (a1) at (-3,1.2);
		\coordinate (b1) at (-3,-1.2);

		\coordinate (a2) at (-1,1.2);
		\coordinate (b2) at (-1,-1.2);

		\coordinate (a3) at ( 1,1.2);
		\coordinate (b3) at ( 1,-1.2);

		\coordinate (a4) at ( 3,1.2);
		\coordinate (b4) at ( 3,-1.2);

		\filldraw[region]
		(a1) -- (a4) .. controls(5.3,1.2) and (5.3,-1.2) .. (b4) -- (b1) .. controls(-5.3,-1.2) and (-5.3,1.2) .. cycle;

		\filldraw[fill=white, bend right=80]
		(a1) to (b1) to (a1);

		\filldraw[fill=white, bend right=80]
		(a2) to node[vertex,pos=0.4] (inner1) {} node[vertex, pos=0.6] (inner2) {} (b2) to (a2);

		\filldraw[region]
		(inner1) ..controls ++(-10:0.7) and ++(10:0.7) .. (inner2.center) .. controls ++(30:0.3) and ++(-30:0.3) .. (inner1);

		\filldraw[fill=white, bend right=80]
		(a3) to (b3) to (a3);

		\filldraw[region, bend right=15]
		(a3) to (b3) to (a3);

		\filldraw[region]
		(a3) to[bend right=50] (b3) to[bend left=30] (a3);

		\filldraw[region]
		(a3) to[bend left=50] (b3) to[bend right=30] (a3);

		\filldraw[fill=white, bend right=80]
		(a4) to (b4) to (a4);

		\coordinate[right= 0.5 of b1] (lower1);
		\coordinate[left= 0.5 of b2] (lower2);
		\coordinate[right= 0.5 of a2] (upper1);
		\coordinate[left= 0.5 of a3] (upper2);

		\filldraw[region]
		(lower1) ..controls ++(-70:0.7) and ++(-110:0.7) .. (lower2.center) .. controls ++(-70:1.5) and ++(-110:1.5) ..
		node[vertex, pos=0.6] (lower3) {}
		node[vertex, pos=0.8] (lower4) {}
		(lower1);

		\filldraw[region]
		(lower3) ..controls ++(-135:0.3) and ++(-135:0.3) .. (lower4.center) .. controls ++(-135:0.6) and ++(-135:0.6) ..
		(lower3);

		\filldraw[region]
		(upper1) ..controls ++(70:0.7) and ++(110:0.7) .. (upper2.center) .. controls ++(70:1.5) and ++(110:1.5) ..
		node[vertex, pos=0.6] (upper3) {}
		node[vertex, pos=0.8] (upper4) {}
		(upper1);

		\filldraw[region]
		(upper3) ..controls ++(135:0.3) and ++(135:0.3) .. (upper4.center) .. controls ++(135:0.6) and ++(135:0.6) ..
		(upper3);

		\draw[bluepath]
		(a) .. controls (-3,4) and (3,4) .. (b);

		\foreach \p in {a1,b1,a2,b2,a3,b3,a4,b4,inner1,inner2, lower1, lower2, upper1, upper2, lower3, lower4}{
			\node[vertex] at (\p) {};
		}

		\node[hollow, label=left:$u$] at (a) {};
		\node[hollow, label=right:$v$] at (b) {};

		\node[above] at (a1) {$a_1$};
		\node[below] at (b1) {$b_1$};

		\node[above] at (a2) {$a_2$};
		\node[below] at (b2) {$b_2$};

		\node[above] at (a3) {$a_3$};
		\node[below] at (b3) {$b_3$};

		\node[above] at (a4) {$a_4$};
		\node[below] at (b4) {$b_4$};

		\node at (-2,0) {$R_1$};
		\node at (0,0) {$R_2$};
		\node at ( 2,0) {$R_3$};

		\node at (-4,-0.6) {$R_u$};
		\node at ( 4,-0.6) {$R_v$};

	\end{tikzpicture}
	}

    \scalebox{0.65}{
      	\begin{tikzpicture}[
		every node/.style={node distance=0.5cm},
		tri/.style={
			draw,
			fill=gray!40,
			regular polygon,
			regular polygon sides=3,
			minimum size=6mm
		},
		rect/.style={
			draw,
			fill=black,
			minimum width=3mm,
			minimum height=5mm
		},
		edge/.style={line width=1.2pt},
		highlight/.style={yellow!45,opacity=0.7}
		]
		\node[tri, label=below:$R_2$] (R2) {};

		\node[rect,left=of R2, label=below:$a_2b_2$] (a2b2){};
		\node[tri,left=of a2b2, label={[below right= 0.5cm and -0.1cm]$R_1$}] (R1){};
		\node[rect,left=of R1, label=below:$a_1b_1$] (a1b1){};
		\node[tri,left=of a1b1, label=below:$R_u$] (Ra){};
		\node[rect, right=of R2, label=above:$a_3b_3$] (a3b3) {};
		\node[tri, right=of a3b3, label=below:$R_3$] (R3) {};
		\node[rect, right=of R3, label=below:$a_4b_4$] (a4b4) {};
		\node[tri, right=of a4b4, label=below:$R_v$] (Rb) {};
	 	\draw[edge] (Ra) -- (a1b1) -- (R1) -- (a2b2) -- (R2) -- (a3b3) -- (R3) -- (a4b4) -- (Rb);

		\node[rect, below=of R1] (R1-1) {};
		\node[tri, below=of R1-1] (R1-2) {};
		\node[rect, left=of R1-2] (R1-3) {};
		\node[tri, left=of R1-3] (R1-4) {};
		\draw[edge] (R1) -- (R1-1) -- (R1-2) -- (R1-3) -- (R1-4);

		\node[rect, above=of R1] (R1b-1) {};
		\node[tri, left=of R1b-1] (R1b-2) {};
		\draw[edge] (R1) -- (R1b-1) -- (R1b-2);

		\node[rect, above= of R2] (R2-1) {};
		\node[tri, above= of R2-1] (R2-2) {};
		\node[rect, left= of R2-2] (R2-3) {};
		\node[tri, left= of R2-3] (R2-4) {};
		\draw[edge] (R2) -- (R2-1) -- (R2-2) -- (R2-3) -- (R2-4);

		\node[tri, below =1cm of a3b3] (a3b3-1) {};
		\node[tri, left =1cm of a3b3-1] (a3b3-2) {};
		\node[tri, right=1cm of a3b3-1] (a3b3-3) {};
		\draw[edge] (a3b3) -- (a3b3-1);
		\draw[edge] (a3b3) -- (a3b3-2);
		\draw[edge] (a3b3) -- (a3b3-3);

		\begin{pgfonlayer}{background}
			\coordinate[left= 0.4cm of Ra] (left);
			\coordinate[right= 0.4cm of Rb] (right);
			\draw[highlight, line width=12mm]
			(left) --(right);
		\end{pgfonlayer}

	\end{tikzpicture}

    }

		\caption{Before the insertion.}
		\label{fig:sfig:SPQRpathB}
	\end{subfigure}%
	\begin{subfigure}{.42\textwidth}\centering
		\scalebox{0.75}{
      \begin{tikzpicture}[
		xscale=1.0,
		yscale=0.7,
		vertex/.style={circle,fill=black,inner sep=1.6pt},
		hollow/.style={circle,draw, fill=white,inner sep=1.6pt},
		region/.style={draw=black,fill=gray!35},
		regionlight/.style={draw=black,fill=gray!15},
		edge/.style={line width=1pt},
		bluepath/.style={blue!60,line width=2pt}
		]

		\coordinate (a)  at (-2.5,0);
		\coordinate (b)  at ( 2.5,0);

		\coordinate (a2) at (-1,1.2);
		\coordinate (b2) at (-1,-1.2);

		\coordinate (a3) at ( 1,1.2);
		\coordinate (b3) at ( 1,-1.2);

		\filldraw[regionlight]
		(-2,1.2) -- (2,1.2) .. controls(3.5,1.2) and (3.5,-1.2) .. (2,-1.2) -- (-2,-1.2) .. controls(-3.5,-1.2) and (-3.5,1.2) .. cycle;

		\path[bend right=80]
		(a2) to node[vertex,pos=0.4] (inner1) {} node[vertex, pos=0.6] (inner2) {} (b2) to (a2);

		\filldraw[region]
		(inner1) ..controls ++(-10:0.7) and ++(10:0.7) .. (inner2.center) .. controls ++(30:0.3) and ++(-30:0.3) .. (inner1);

		\filldraw[region, bend right=15]
		(a3) to (b3) to (a3);

		\filldraw[region]
		(a3) to[bend right=50] (b3) to[bend left=30] (a3);

		\filldraw[region]
		(a3) to[bend left=50] (b3) to[bend right=30] (a3);

		\coordinate[left= 1 of b3] (lower2);
		\coordinate[left= 1.25 of lower2] (lower1);
		\coordinate[right= 0.5 of a2] (upper1);
		\coordinate[left= 0.5 of a3] (upper2);

		\filldraw[region]
		(lower1) ..controls ++(-70:0.7) and ++(-110:0.7) .. (lower2.center) .. controls ++(-70:1.5) and ++(-110:1.5) ..
		node[vertex, pos=0.6] (lower3) {}
		node[vertex, pos=0.8] (lower4) {}
		(lower1);

		\filldraw[region]
		(lower3) ..controls ++(-135:0.3) and ++(-135:0.3) .. (lower4.center) .. controls ++(-135:0.6) and ++(-135:0.6) ..
		(lower3);

		\filldraw[region]
		(upper1) ..controls ++(70:0.7) and ++(110:0.7) .. (upper2.center) .. controls ++(70:1.5) and ++(110:1.5) ..
		node[vertex, pos=0.6] (upper3) {}
		node[vertex, pos=0.8] (upper4) {}
		(upper1);

		\filldraw[region]
		(upper3) ..controls ++(135:0.3) and ++(135:0.3) .. (upper4.center) .. controls ++(135:0.6) and ++(135:0.6) ..
		(upper3);

		\draw[bluepath]
		(a) .. controls (-2,1.5) and (2,0) .. (b);

		\foreach \p in {a3,b3,inner1,inner2, lower1, lower2, upper1, upper2, lower3, lower4}{
			\node[vertex] at (\p) {};
		}

		\node[hollow, label=left:$u$] at (a) {};
		\node[hollow, label=right:$v$] at (b) {};

		\node[above] at (a3) {$a_3$};
		\node[below] at (b3) {$b_3$};

		\node at (-0.3,0) {$R_{uv}$};

	\end{tikzpicture}
    }

    \scalebox{0.65}{
      \begin{tikzpicture}[
		every node/.style={node distance=0.5cm},
		tri/.style={
			draw,
			fill=gray!40,
			regular polygon,
			regular polygon sides=3,
			minimum size=6mm
		},
		rect/.style={
			draw,
			fill=black,
			minimum width=3mm,
			minimum height=5mm
		},
		edge/.style={line width=1.2pt},
		highlight/.style={yellow!45,opacity=0.7}
		]
		\node[tri, label={[above right= -0.3cm and 0.1cm]$R_{uv}$}] (Rab) {};

		\node[rect, below left= 0.5cm and 1cm of Rab] (R1-1) {};
		\node[tri, below=of R1-1] (R1-2) {};
		\node[rect, left=of R1-2] (R1-3) {};
		\node[tri, left=of R1-3] (R1-4) {};
		\draw[edge] (Rab) -- (R1-1) -- (R1-2) -- (R1-3) -- (R1-4);

		\node[rect, above left=0.5cm and 1cm of Rab] (R1b-1) {};
		\node[tri, left=of R1b-1] (R1b-2) {};
		\draw[edge] (Rab) -- (R1b-1) -- (R1b-2);

		\node[rect, above= of Rab] (R2-1) {};
		\node[tri, above= of R2-1] (R2-2) {};
		\node[rect, left= of R2-2] (R2-3) {};
		\node[tri, left= of R2-3] (R2-4) {};
		\draw[edge] (Rab) -- (R2-1) -- (R2-2) -- (R2-3) -- (R2-4);

		\node[rect, below right= 0cm and 1cm of Rab, label=right:$a_3b_3$] (a3b3) {};
		\draw[edge]	(Rab) -- (a3b3);
		\node[tri, below =1cm of a3b3] (a3b3-1) {};
		\node[tri, left =1cm of a3b3-1] (a3b3-2) {};
		\node[tri, right=1cm of a3b3-1] (a3b3-3) {};
		\draw[edge] (a3b3) -- (a3b3-1);
		\draw[edge] (a3b3) -- (a3b3-2);
		\draw[edge] (a3b3) -- (a3b3-3);

		\begin{pgfonlayer}{background}
			\coordinate[left= 0.5cm of Rab] (left);
			\coordinate[right= 0.5cm of Rab] (right);
			\draw[highlight, line width=12mm]
			(left) --(right);
		\end{pgfonlayer}

	\end{tikzpicture}
    }
		\caption{After the insertion.}
		\label{fig:sfig:SPQRpathA}
	\end{subfigure}
	\caption{Illustration of the effect of inserting an edge $(u,v)$ of type $\es[+]{2}{3}$ on a graph and its triconnected component tree. In the triconnected component tree, the path from $R_u$ to $R_v$ is merged into a new triconnected component node $R_{uv}$.}	\label{fig:SPQRpath}
\end{figure*}
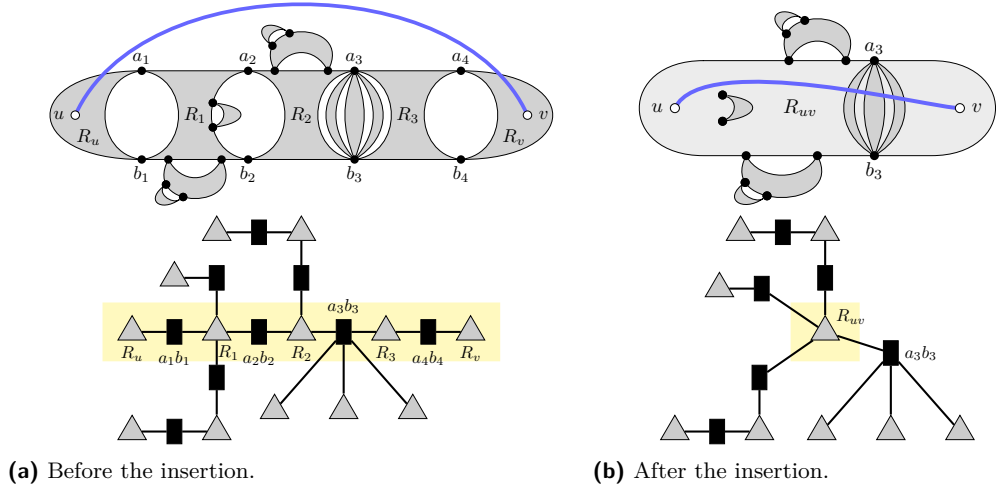

\subparagraph{$\es[+]{2}{3}$.} The nodes $u$ and $v$ lie in distinct triconnected %
components $R_u$ and $R_v$ in the same biconnected component of~$G$; in the resulting graph $G'$ they are in a common $3$-connected component $R_{uv}$. See Figure~\ref{fig:SPQRpath} for an illustration.

Let the $R_u$ to $R_v$ path\footnote{If $u$ or $v$ are in multiple triconnected components, $R_u$ and $R_v$ are defined such that they are the only triconnected components on the path that contain $u$ and $v$, respectively. This choice is unique.} in $\triTree(G)$ be $R_uP_1R_1P_2\cdots P_kR_kP_{k+1}R_v$, where the $R_i$ are triconnected component nodes and the $P_i$ are $3$-connected separating pair nodes.
We first assume that all nodes $R_i$ represent $3$-connected components (and not cycle components).
After the insertion, all $3$-connected components on the path merge into a $3$-connected component $R_{uv}$.
The new triconnected component tree $\triTree(G')$ is obtained from $\triTree(G)$ by replacing the $R_u$ to $R_v$ path by a new node for $R_{uv}$. To this end, the separating pair nodes of degree $2$ and all triconnected component nodes are replaced by $R_{uv}$; all remaining separating pair nodes of degree more than $2$ are connected by an edge to the new node $R_{uv}$.

If any $R_i$ on the $R_u$ to $R_v$ path in $\triTree(G)$ is a cycle component, let $P_i = \{a_1,b_1\}$ and $P_{i+1} = \{a_2,b_2\}$ be the two separating pairs adjacent to $R_i$ on the path and let $(a_1, \ldots, a_2,b_2,\ldots, b_1)$ be the cyclic order of the vertices in the cycle.
In $G'$, the vertices $a_1,a_2,b_1,b_2$ get absorbed into the merged $3$-connected component $R_{uv}$; the cycle $R_i$ is split into two cycles $C_u=(a_1,\ldots, a_2)$ and $C_v=(b_2, \ldots, b_1)$ with corresponding triconnected component nodes in $\triTree(G')$. New separating pair nodes for $\{a_1,a_2\}$ and $\{b_1,b_2\}$ are introduced, which connect $R_{uv}$ to $C_u$ and to $C_v$, respectively.
Any separating pair node that is attached to $R_i$ in $\triTree(G)$ but does not lie on the path from $R_u$ to $R_v$ gets attached to either $C_u$ or $C_v$ in $\triTree(G')$, depending on whether the separating pair nodes lie in $C_u$ or $C_v$.  In case $R_u$ (resp. $R_v$) itself is a cycle component, the cycle $R_u$ (resp. $R_v$) is to be split into possibly two cycles, analogous to a cycle component that lies internally on the $R_u$ to $R_v$ path, by insertion of a chord from $u$ (resp. $v$) to each vertex of $P_1$ (resp. $P_{k+1}$).

\subparagraph{$\es[+]{2}{2}$.} If $u$ and $v$ belong to the same $2$-connected component but not to the same $3$-connected component before and after the insertion of the edge $(u,v)$, they belong to a common cycle component $C$ in $G$ (otherwise this would be a change of type $\es[+]{2}{3}$). %
The triconnected component tree $\triTree(G')$ is obtained from $\triTree(G)$ by replacing the triconnected component node of $C$ with a path $C_1 P_{uv} C_2$, where $P_{uv}$ corresponds to the new separating pair $\{u,v\}$ and $C_1, C_2$ correspond to the two cycles obtained by splitting $C$ along the chord $(u, v)$.
Each neighbour $P$ of $C$ in $\triTree(G)$ is made a neighbour of either $C_1$ or $C_2$ in $\triTree(G')$, depending on the location of the vertices of the separating pair $P$.

\smallskip

Edge changes of types $\es[-]{3}{3}$ and $\es[-]{3}{2}$, $\es[-]{2}{2}$ reverse edge changes of type $\es[+]{3}{3}$, $\es[+]{2}{3}$, and $\es[+]{2}{2}$. Effects on the decompositions are also reversed.

\subparagraph*{Changes affecting both component trees}

Changes that create or destroy biconnected components affect both the triconnected and biconnected component trees.

\subparagraph{$\es[+]{1}{2}$.} Before the insertion of $(u,v)$, the vertices $u$ and $v$ belong to distinct biconnected components $B_u$ and $B_v$; after the insertion, $u$ and $v$ end up in a common biconnected component $B_{uv}$. See Figure~\ref{fig:BCpath} in the appendix for an illustration.

Let $(B_u=B_0) c_1B_1c_2\cdots c_{k-1} B_{k-1} c_{k} (B_k = B_v)$ be the $B_u$ to $B_v$ path in $\biTree(G)$, where the $B_i$ are biconnected component nodes and the $c_i$ are cut vertex nodes.
After the insertion, all the biconnected components on this path merge into one biconnected component $B_{uv}$. The new biconnected component tree $\biTree(G')$ is obtained from $\biTree(G)$ by replacing the $B_u$ to $B_v$ path by a new node for $B_{uv}$. To this end, the cut vertex nodes of degree $2$ and all biconnected component nodes are replaced by $B_{uv}$; all remaining cut vertex nodes of degree more than $2$ are connected by an edge to the new node $B_{uv}$.

The triconnected component tree for the newly formed biconnected component $B_{uv}$ is obtained from the triconnected component trees of the old biconnected components that merge into $B_{uv}$ as follows. Let $c_0 = u, c_1, \ldots, c_k, c_{k+1} = v$ be the list of cut vertex nodes between $B_u$ and $B_v$ in the order they appear, preceded by $u$ and followed by $v$. Every pair $\{c_i, c_{i+1}\}$ of consecutive vertices in the list forms a $3$-connected separating pair in $B_{uv}$.
In every component $B_i$ the edge $(c_i, c_{i+1})$ is inserted as virtual edge if it is not present. A new separating pair node for $\{c_i, c_{i+1}\}$ is added to the resulting triconnected component tree for the component, if it not already exists (and if $B_i$ not only consists of the single edge between these nodes), and it becomes a neighbour of the unique triconnected component node whose component contains $c_i$ and~$c_{i+1}$.

The tree $\triTree(B_{uv})$ contains all resulting triconnected component trees and an additional new triconnected component node for the cycle that consists of the (potentially virtual) edges $(c_i, c_{i+1})$ and the inserted edge $(u,v)$. The latter node becomes a neighbour of the separating pair nodes $\{c_i, c_{i+1}\}$.

\smallskip

Edge changes of type $\es[-]{2}{1}$ reverse edge changes of type $\es[+]{1}{2}$. Effects on the decompositions are also reversed.

\subparagraph*{Changes only relevant for the biconnected component tree}

The following types of changes only affect $\biTree(G)$, the triconnected component tree $\triTree(G)$ is unchanged.

\subparagraph{$\es[+]{0}{2}$.} Suppose $u$ and $v$ belong to distinct connected components $C_u$ and $C_v$ before the edge insertion. After the insertion, $C_{u}$ and $C_v$ merge into one connected component $C_{uv}$.
	The biconnected component tree $\biTree(C_{uv})$ of $C_{uv}$ is obtained from $\biTree(C_u)$ and $\biTree(C_v)$ as follows. It contains a node $B_{uv}$ for the trivial biconnected component that only consists of the edge $(u,v)$.
	If $u$ is already a cut vertex node in $\biTree(C_u)$, this cut vertex node is connected to $B_{uv}$. Otherwise, first a new cut vertex node for $u$ is introduced to $\biTree(C_u)$ as a neighbour to the unique biconnected component node whose component includes $u$.
	Analogously, the cut vertex node for $v$ from $\biTree(C_v)$ is connected to $B_{uv}$, potentially after introducing that cut vertex node.

\smallskip

Edge changes of type $\es[-]{2}{0}$ reverse edge changes of type $\es[+]{0}{2}$. Effects on the decompositions are also reversed.
 
\section{Maintaining Planar Graph Isomorphism}\label{section:proof-details}
In this section we show our main result.

\begin{theorem*}[Main theorem, rephrased]
  The dynamic graph isomorphism problem for planar graphs is in $\DynFO$.
\end{theorem*}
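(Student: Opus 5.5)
We prove the theorem by assembling three dynamic programs, one for each level of the decomposition described in Section~\ref{section:proof-overview}. Each level uses the auxiliary relations of the level below it. The input is the disjoint union of $G$ and $G^*$. As a base we use the known $\DynFO$ programs for planar embeddings, biconnected and triconnected component trees, and coherent paths from \cite{DattaKM23}. Throughout, we assume these decompositions are available as first-order definable relations after every change. The argument proceeds by case distinction over the change types $\es[\pm]{k}{k'}$ of Section~\ref{section:changes}. For every auxiliary relation we give an $\FOar$ update formula for each type, using the structural descriptions there of how the trees are modified.

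\textbf{Step 1 (triconnected components).} For each $3$-connected component and each choice of three vertices on a common face, we maintain the inverse of the Tutte matrix modulo polynomially many $\bigO(\log n)$-bit primes. Changes of types $\es[\pm]{3}{3}$ modify constantly many entries, so we handle them by Sherman--Morrison. A merge along a path (type $\es[+]{2}{3}$) is an overlapped block-diagonal combination with small overlap, handled by Woodbury. To make the merged component's inverse immediately available, we additionally maintain inverses for the \emph{potential} components $R_{uv}$ arising from every coherent path. Deletions reverse these operations. If a prime becomes singular, it is replaced using the muddling technique of \cite{DMSVZ19}. From the CRT representations of the Tutte coordinates we define $\ISOt$, the relation of all isomorphisms between $3$-connected components. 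Two components are isomorphic via a bijection exactly when some choice of three face-vertices on each side, with matching reflection, yields identical coordinates. Equality of CRT residues is first-order. Cycle components are handled trivially via distances.

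\textbf{Step 2 (coloured biconnected components).} We adapt the tree-isomorphism program of \cite{DattaK0TVZ24} to $\triTree$. It maintains $\ContextIso$, sibling isomorphism counts, and $\dist$. Here the context isomorphism additionally requires that the underlying subgraphs, including vertex colours, are isomorphic, with separating pairs mapped consistently with $\ISOt$. When the tree is modified by a path collapsing into $R_{uv}$ or a node unfurling into a path, we update $\ContextIso$ in Etessami's manner. We use $\dist$ to align the pendant subtrees along the path with their counterparts and check those subtrees against the old relations. Coherence of the unfurled path means the orientation of the first separating pair determines all later ones, so only two global alignments need to be checked. The main obstacle is maintaining $\dist$ for paths that unfurl only after a late deletion, as in Figure~\ref{fig:delayed-unfurl}. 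I would first try to show that the separating pairs of any path that could unfurl on deleting $e$ lie on the disk formed by the faces adjacent to $e$. We would then maintain, for every edge $e$ of a $3$-connected component, the cyclic order and pairwise distances of these potential separating pairs, updating this information under $\es[\pm]{3}{3}$ changes via the embedding.

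\textbf{Step 3 (connected components).} We run the analogous program on $\biTree$ with $\ContextIso$, sibling counts and $\dist$. Cut vertices are coloured by the isomorphism type of their subtree, and biconnected components are compared using the coloured relation from Step~2. Because cut vertices map uniquely and distances in $\biTree$ change only along the merged or unfurled $B_u$–$B_v$ path, this step is simpler. Colour changes caused by a subtree change propagate only along one root path, and Step~2 must tolerate this. I would handle it by letting colours be defined relative to a root $x$, as in $\ST(x,r)$, so that recolouring is first-order definable from $\ContextIso$. Finally, $G\simeq G^*$ holds iff the two component trees are isomorphic as rooted-at-some-node trees with matching colours, which is first-order from $\ContextIso$ by quantifying over roots.
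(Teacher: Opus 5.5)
Your overall architecture is the paper's: Tutte-matrix inverses modulo many small primes updated via Sherman--Morrison--Woodbury, additional data for the components $G[\rho]$ of all coherent paths, and muddling to refresh primes; then context-based tree isomorphism on $\triTree$, with coherence fixing the orientation of separating pairs; then colouring of cut vertices by subtree type on $\biTree$. The genuine gap is in your plan for distances along unfurling paths. You store, for each existing edge $e$ of a $3$-connected component, the disk of the two faces adjacent to $e$, i.e.\ the ordered separating pairs that deleting $e$ would create. You update this only under $\es[\pm]{3}{3}$. This data is not closed under updates. Deleting an edge $e'$ of type $\es[-]{3}{3}$ merges its two faces $F_2,F_3$. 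For an edge $e$ between $F_1$ and $F_2$, the new disk of $F_1$ and $F_2\cup F_3$ contains all pairs $(a,b)$ with $a$ on $F_1$ and $b$ on $F_3$ that share a third face. There can be unboundedly many such pairs. Since $F_1$ and $F_3$ need not share an edge, nothing you store gives their cyclic order or their distances, and recomputing distances would require counting, which is not first-order. This is exactly the situation of Figure~\ref{fig:delayed-unfurl}. The inner faces below $e_3$ share no edge with the outer face. Yet after $e_3,e_2,e_1$ are deleted they have merged into the face at $e_4$, and the separating pairs they contribute must be known when $e_4$ is deleted. Similarly, after a $\es[+]{2}{3}$ merge the new $3$-connected component needs this data for all of its edges at once, including edges on the newly formed glue and straddling faces. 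Your proposal does not address this case.

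The paper closes the gap by defining $\mathsf{Disk}(C,F_1,F_2)$ for \emph{arbitrary}, not necessarily adjacent, pairs of combinatorial faces, relative to $\mathsf{Emb}(C,F_1)$ maintained via \cite{DattaKM23}. It maintains \textsf{DiskDistance} for all such pairs, and this family is closed under changes:
\begin{itemize}
\item an insertion splitting $F_2$ yields disks whose cyclic orders are inherited and whose distances drop by counts read off the old disk;
\item a deletion is the reverse;
\item a $\es[+]{2}{3}$ merge composes constantly many old disks along glue and straddling faces, and the disk for the new edge is read off $\textsc{dist}_2$ on the old path.
\end{itemize}
Lemma~\ref{lem:spqrDist} then turns disk distances into $\triTree$ distances.

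A smaller omission of the same kind occurs in Step~3. After a $\es[-]{2}{1}$ deletion, the unfurled $\biTree$ path needs its internal distances immediately. The paper gets them via Lemma~\ref{lem:distBC} from separately maintained distances inside cycle components, which your Step~1 also uses without maintaining them. Finally, your root-relative colouring leaves open which recolourings Step~2 must absorb. The paper keeps a coloured copy for every $\biTree$ context and hole colouring, with colours given by representative tuples. Step~2 then only has to tolerate single recolourings, recolouring of an emerging or vanishing cycle, and renaming a whole colour class.
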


Our proof follows the proof overview provided in Section \ref{section:proof-overview}. Our dynamic program represents connected components, biconnected components, and triconnected components by tuples of vertices of the underlying graph. Biconnected and triconnected component trees are then maintained as auxiliary relations.

\begin{lemma}
\label{theorem:maintainability-decomposition}
	Biconnected (resp. triconnected) component trees for connected (resp. biconnected) components of arbitrary graphs can be maintained in $\DynFO$.
\end{lemma}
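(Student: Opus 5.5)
Since biconnected and triconnected component trees are first-order definable from reachability-type information, the plan is to maintain that information in $\DynFO$ and then define the trees by first-order formulas in every step.

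The key prior result is that reachability in undirected graphs, and more strongly $k$-vertex-connectivity for constant $k$, can be maintained in $\DynFO$. This follows from the $\DynFO$ reachability result of \cite{DKMSZ18}, which also handles reachability in the graph $G - S$ for any set $S$ of constantly many vertices given as parameters. From \cite{DattaKM23} we may also assume that maintaining the relevant connectivity information for planarity testing is possible.

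Concretely, I would maintain the relation $\mathrm{Conn}(s,t,a,b)$, expressing that $s$ and $t$ are connected in $G-\{a,b\}$ (also with $a=b$ or with no removed vertex). From this relation everything else is first-order definable.

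The biconnected component tree is obtained as follows.
\begin{itemize}
\item $c$ is a cut vertex iff there are neighbours $s,t$ of $c$ that are disconnected in $G-\{c\}$.
\item Two edges lie in the same biconnected component iff no single vertex separates them. Equivalently, for edges $(s,s'),(t,t')$, for every $c$ some endpoints not equal to $c$ remain connected in $G-c$.
\item A biconnected component is represented canonically by a tuple of vertices, e.g.\ its lexicographically smallest edge. This uses the order $\leq$.
\item The tree edges are given by the incidence relation between cut vertices and components.
\end{itemize}

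For the triconnected component tree of a biconnected component, the steps are these.
\begin{itemize}
\item A pair $\{a,b\}$ is a separating pair iff some vertices are disconnected in $G-\{a,b\}$.
\item It is a $3$-connected separating pair iff there are three vertex-disjoint $a$--$b$ paths. By Menger this holds iff, in the component, $G-\{a,b\}$ has at least two components each attached to both $a$ and $b$, or $(a,b)$ is an edge together with at least two such components. Both conditions are first-order over $\mathrm{Conn}$.
\item A vertex set $U$ of a triconnected component is characterised via the equivalence ``not separated by any $3$-connected separating pair'' on non-pair vertices, together with the pair vertices adjacent to the class.
\item Cycle components need extra care: vertices of a cycle are pairwise separated by $3$-connected pairs. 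Here I would instead represent components by edges (real or virtual), where two edges are equivalent iff no $3$-connected separating pair separates them. Each component is then named by a minimal edge.
\end{itemize}

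The main obstacle is making this edge-based characterisation exactly match the Hopcroft--Tarjan triconnected components, in particular distinguishing cycle (S) components from parallel bonds. By the definition given in the preliminaries, a triconnected component is $3$-connected or a cycle, and multiple virtual edges are merged, so P-nodes are absent and the separating-pair nodes play their role. I expect the careful verification that virtual edges and adjacency of pair nodes to components are definable, via ``both $a,b\in U$'', to be the bulk of the work. The dynamic part reduces entirely to maintaining $\mathrm{Conn}$, which I would take from \cite{DKMSZ18,DattaKM23}; no planarity assumption is needed for this lemma.
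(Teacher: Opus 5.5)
Your overall route is the paper's: maintain undirected connectivity of $G-S$ for all sets $S$ of constantly many vertices, then define cut vertices, blocks, $3$-connected separating pairs and triconnected components in \FO. The paper runs the undirected connectivity program of \cite{PatnaikI97} in parallel for every tuple of removed vertices, so directed reachability is not needed. Your biconnected part is fine, but two of your first-order definitions for the triconnected part fail as stated.

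First, the Menger criterion. Inside a biconnected component every component of $G-\{a,b\}$ is attached to both $a$ and $b$, since otherwise $a$ or $b$ would be a cut vertex. So your first disjunct holds for every separating pair: in the $4$-cycle $a,x,b,y$ it would make $\{a,b\}$ a $3$-connected separating pair, although only two disjoint $a$--$b$ paths exist. The presumably intended ``three components, or the edge $(a,b)$ plus two components'' is still incomplete, because a single component can carry two disjoint paths. For example, take $K_4$ on $\{a,b,c,d\}$ minus the edge $(a,b)$, plus a path $a,x,b$. Apply Menger directly to your relation instead. For non-adjacent $a,b$, require $\mathrm{Conn}(a,b,c,d)$ for all $c,d\notin\{a,b\}$. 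For adjacent $a,b$, require that for every $c\notin\{a,b\}$ some neighbour $a'\notin\{b,c\}$ of $a$ satisfies $\mathrm{Conn}(a',b,a,c)$; this says $c$ does not separate $a$ from $b$ once the edge $(a,b)$ is removed.

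Second, naming triconnected components by the minimal edge of a class, or via classes of non-pair vertices, breaks down. A real or virtual edge $(a,b)$ with $\{a,b\}$ a $3$-connected separating pair lies in every component adjacent to that pair node, so your relation on edges is not an equivalence. Moreover, a component may contain no other edges and no non-pair vertex at all, e.g.\ a triangle or a $K_4$ each of whose vertex pairs carries a further attached $3$-connected piece. Your remark that cycle vertices are pairwise separated by $3$-connected pairs is also false. Only consecutive cycle vertices can form such a pair, and deleting them leaves the rest of the cycle connected.

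Use instead the representation the paper works with: name a component by any three of its vertices. This is unambiguous, since two distinct triconnected components share at most two vertices. It is \FO-definable from your data:
\begin{itemize}
\item two vertices of a biconnected component lie in a common triconnected component iff no $3$-connected separating pair avoiding both separates them;
\item three vertices do so iff they do so pairwise, because the nodes containing a fixed vertex form a subtree and pairwise intersecting subtrees of a tree share a node;
\item a pair node $\{a,b\}$ is adjacent to a component iff $a$ and $b$ both lie in it.
\end{itemize}
With these two repairs your proposal becomes a faithful, more detailed version of the paper's sketch.
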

\begin{proofsketch}
	Reachability and connectivity in undirected graphs can be maintained in \DynFO \cite{PatnaikI97}. By maintaining connectivity information when ignoring a constant number $k$ of vertices and their edges, for all choices of $k$ vertices, also $k$-connectivity can be maintained. From this information we can identify $k$-connected components, cut vertices and $3$-connected separating pairs in \FO.
\end{proofsketch}

We show that isomorphisms between triconnected components can be maintained in Section \ref{section:proof-details:triconnected}, and that isomorphic biconnected and connected components can be maintained in Sections \ref{section:proof-details:biconnected} and~\ref{section:proof-details:connected}. %

\subsection{Maintaining Isomorphisms between Triconnected Components}\label{section:proof-details:triconnected}

We show how to maintain isomorphisms between triconnected components of planar graphs.

\begin{proposition}\label{prop:3iso}
	Let $G$ be a planar graph. The relation $\ISOt$ that contains all tuples $(a,b,c,d, a^*, b^*, c^*, d^*)$ such that
	\begin{itemize}
		\item $a,b,c,d$ and $a^*, b^*, c^*, d^*$ are vertices of triconnected components $C$ and $C^*$ of $G$, respectively,
		\item $C$ and $C^*$ are isomorphic via an isomorphism $\pi_{abc}$ fixed by $\pi_{abc}(a) = a^*, \pi_{abc}(b) = b^*$, $\pi_{abc}(c) = c^*$, and
		\item $\pi_{abc}(d) = d^*$ (i.e., $\pi_{abc}$ can also be accessed via $\ISOt$).
	\end{itemize}
	can be maintained in \DynFO under insertions and deletions of edges, provided that $G$ stays planar.
\end{proposition}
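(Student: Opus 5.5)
We follow the outline of Section~\ref{section:proof-overview}. Cycle components are handled separately and directly: by Lemma~\ref{theorem:maintainability-decomposition} we know the cycle components, and from maintained distances along the cycle (or cycle length via the decomposition) the relation $\ISOt$ restricted to cycles is \FO-definable. Indeed, an isomorphism of cycles is determined by the images of two adjacent vertices, and $\pi_{abc}(d)=d^*$ is checked by comparing cyclic distances and orientation. So the main work concerns $3$-connected components.

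\textbf{Tutte embeddings.} For a $3$-connected planar component $C$ and a triple $(a,b,c)$ of vertices on a common face, let $\mat T_{abc}$ be the Tutte matrix. Solving $\mat T_{abc}\mat x=\mat b_x$ and $\mat T_{abc}\mat y=\mat b_y$, with $a,b,c$ pinned to fixed coordinates, gives the canonical Tutte embedding. For $3$-connected planar graphs, Whitney and Tutte give the following criterion: $\pi_{abc}$ exists iff $a^*,b^*,c^*$ lie on a common face of $C^*$ and the two embeddings coincide as point sets together with the edge relation. In that case $\pi_{abc}$ maps each vertex to the vertex of $C^*$ with the same coordinates. Since $\mat b$ has only constantly many nonzero entries, the coordinates are constant-size linear combinations of entries of $\mat T_{abc}^{-1}$.

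Given such coordinates, $\ISOt$ is \FO-definable. Let $f(v)$ denote the vertex of $C^*$ whose coordinates equal those of $v$. Then $(a,b,c,d,a^*,b^*,c^*,d^*)\in\ISOt$ iff every vertex of $C$ has such a counterpart, $f$ is bijective and edge-preserving, and $f(d)=d^*$. Equality of rationals is tested through equality of numerators and denominators modulo each of polynomially many $\bigO(\log n)$-bit primes. By the Chinese remainder theorem this is sound, because the entries have polynomially many bits (via Cramer's rule and Hadamard's bound). Faces can be identified via the maintained planar embedding of \cite{DattaKM23}.

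\textbf{Maintaining inverses.} We therefore maintain, for each prime $p$ in a pool and each component and triple, $\mat T_{abc}^{-1}\bmod p$. This includes the hypothetical components $R_{uv}$ that would arise from an $\es[+]{2}{3}$ insertion along a path, so that merges are available immediately. The changes fall into three groups:
\begin{itemize}
\item Changes of type $\es[\pm]{3}{3}$ alter constantly many entries of $\mat T$. Here the Sherman--Morrison--Woodbury identity $(\mat T+\mat U\mat V^\transpose)^{-1}=\mat T^{-1}-\mat T^{-1}\mat U(\mat I+\mat V^\transpose\mat T^{-1}\mat U)^{-1}\mat V^\transpose\mat T^{-1}$ involves only the inverse of a constant-size matrix, so the update is \FO over $\mathbb{Z}_p$.
\item Merges (type $\es[+]{2}{3}$) combine the Tutte matrices along a path into an overlapped block-diagonal matrix whose overlaps are the separating pairs, each of constant size. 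This is treated as a block-diagonal matrix plus a low-rank correction. The difficulty is that the path may have unboundedly many overlaps, so the total correction rank is not constant. For this reason we do not compute merges on the fly: we maintain inverses for every potential merged component, parametrized by its endpoints, and update them under the constant-rank changes above.
\item Deletions that unfurl a component (type $\es[-]{3}{2}$) are read off from the maintained inverses of the pieces, which are themselves kept as potential components.
\end{itemize}

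\textbf{Main obstacle.} A prime becomes invalid when $\det(\mat I+\mat V^\transpose\mat T^{-1}\mat U)\equiv 0 \bmod p$. To handle this we use the muddling technique of \cite{DMSVZ19}. We keep many primes and, over each phase of polynomially many steps, recompute the inverses for fresh primes gradually, spread across changes, so that at every time enough valid primes remain; only polynomially many primes divide the relevant nonzero determinants. I expect this, together with keeping inverses for all potential merged components consistent across tree restructurings, to be the main obstacle.
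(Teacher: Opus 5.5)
Your architecture matches the paper's: Tutte matrices modulo many small primes, SMW updates, stored inverses for the merged graphs $G[\rho]$ of all coherent paths $\rho$, and muddling. But the step you flag as the main obstacle is the heart of the proof, and you leave it open while asserting something false about it.

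You claim the stored inverses for potential merged components stay current ``under the constant-rank changes above''. That covers only $\es[\pm]{3}{3}$.
\begin{itemize}
\item After an $\es[+]{2}{3}$ insertion of $(u,v)$, the inverse for $R_{uv}$ itself is available. But take any coherent path $\rho'$ of $\triTree(G')$ passing through or ending at $R_{uv}$. Seen from $G$, the graph $G'[\rho']$ arises from two edge insertions that merge the union of two intersecting tree paths. It contains unboundedly many vertices from graphs you stored \emph{separately}, so it is not a bounded-rank modification of any single stored matrix. After one such change your invariant is lost.
\item The same happens after an $\es[+]{1}{2}$ insertion, for coherent paths that cross the new cycle component.
\item Your $\es[-]{3}{2}$ bullet is incorrect as stated: the pieces of the unfurled path are sub-pieces of one old node and were never stored.
\end{itemize}
Two ingredients are missing.
\begin{itemize}
\item[(a)] A first-order operation that combines two separately stored inverses for graphs $G_1,G_2$ sharing only a separating pair, plus one edge. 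This assembles a block-diagonal matrix from $\mat S_1^{-1},\mat S_2^{-1}$ and then applies a rank-$3$ SMW correction (Lemma~\ref{lem:lina-merging}). You also need its reverse (splitting) and pin exchange.
\item[(b)] The observation that each $G'[\rho']$ is obtained from at most three stored graphs of $G$ by a constant number of such operations. These are the subpaths $D_1$--$B_1$, $C_1$--$C_2$ and $B_2$--$D_2$ in the proof of Lemma~\ref{lem:tutte:coherent}.
\end{itemize}
You were right to reject merging a whole path on the fly. The point, though, is that a \emph{constant} number of merges of stored pieces suffices, so merging cannot be discarded.

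Second, storing only $\mat T^{-1}\bmod p$ does not let you carry out these operations in \FOar. The correction matrices contain full Laplacian columns of separating-pair vertices and of pins, whose degrees are unbounded. Terms like $\mat c_1^\transpose\mat T^{-1}\mat c_2$ would then need sums of unboundedly many residues modulo $p$, which is not first-order definable, since it subsumes counting. This is why the paper's $\M(\cdot)$ also stores and updates $\mat T^{-1}\mat c$, $\mat c^\transpose\mat T^{-1}$ and $\mat c_1^\transpose\mat T^{-1}\mat c_2$ for all Laplacian columns.

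Two smaller points. First, the statement also covers triples $a,b,c$ not on a common face, which your criterion misses; the paper existentially quantifies face triples whose induced isomorphism sends $a,b,c$ to $a^*,b^*,c^*$. Second, apply \cite[Theorem~4.2]{DMSVZ19} concretely:
\begin{itemize}
\item the fresh data is $\AC^2$-computable, so first-order maintenance is needed for only $\log^2 n$ steps;
\item all intermediate matrices are invertible over $\Q$ by Lemma~\ref{lem:lina-invertable};
\item determinants are below $2^{n^2}$, so $\bigO(n^8)$ of the initial $n^{11}$ primes are lost per step.
\end{itemize}
Both of these points presuppose the update procedure described above.
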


For triconnected components $C, C^*$ that are cycles, this follows as distances are maintained for them (see Lemma~\ref{lemma:cycle-distances}).%

For the proof for $3$-connected components, we maintain Tutte embeddings for all tuples $(a,b,c)$ and $(a^*, b^*, c^*)$. Finding isomorphic components then boils down to checking that two such embeddings are equal and the induced isomorphism maps a vertex to the vertex of the other component with the same position in the embedding.

We formalize Tutte embeddings and the auxiliary data required to maintain them in Section~\ref{sec:tutte:overview}.
In Section~\ref{sec:tutte:algebra} we introduce linear algebra tools for maintaining the data under several operations and use them to prove Proposition~\ref{prop:3iso} in Section~\ref{sec:tutte:final}.

\subsubsection{The Tutte embedding and maintained auxiliary data}\label{sec:tutte:overview}

We make the definition of the Tutte embedding more formal.
Assuming three vertices $v_{i_1}, v_{i_2}, v_{i_3}$ of the input graph $G$ are pinned and the edges are replaced by springs, we consider the forces that act on the non-pinned vertices through the springs. At equilibrium, every vertex is stationary and hence the forces acting on it must all sum to zero.
We obtain the following system of equations for the first coordinate $x_i$ of each vertex $v_i$ of $G$ (and a second, analogous system for the second component, which we omit in the following), where the first coordinates of the pinned vertices $v_{i_1}, v_{i_2}, v_{i_3}$ are $p_{i_1}, p_{i_2}, p_{i_3}$, respectively:
\begin{align*}
	\deg(v_i) x_i& = \sum_{(v_i,v_j)\in E}^{} x_j & \forall v_i\not\in \{v_{i_1}, v_{i_2}, v_{i_3}\}\\
	x_i& =  p_i & \forall v_i\in \{v_{i_1}, v_{i_2}, v_{i_3}\}
\end{align*}
Writing the equations in linear algebraic terms, we have $\mat T \: \mat x = \mat b$
where $\mat T = \mat T(G, v_{i_1}, v_{i_2}, v_{i_3})$ is a square matrix with entries
\begin{align*}
	T_{ij}=
	\begin{cases}
		~-1 & \text{if } i\neq j,  (v_i,v_j)\in E \text{ and } v_i\notin \{v_{i_1}, v_{i_2}, v_{i_3}\}\\
		~\deg(v_i) & \text{if } i=j  \text{ and }  v_i\notin \{v_{i_1}, v_{i_2}, v_{i_3}\}\\
		~1 & \text{if } i=j  \text{ and } v_i\in \{v_{i_1}, v_{i_2}, v_{i_3}\}\\
		~0 & \text{otherwise}\\
	\end{cases}
\end{align*}
and $\mat{b}$ is such that
\begin{align*}
b_i=
\begin{cases}
	~0 & \text{if } v_i \notin \{v_{i_1}, v_{i_2}, v_{i_3}\} \\
	~p_{i} & \text{if } v_{i} \in \{v_{i_1}, v_{i_2}, v_{i_3}\}.
\end{cases}
\end{align*}

The matrix $\mat T$ is very similar to the \emph{Laplacian} $\mat L$ of the graph $G$, which has the entries $L_{ii}= \deg(v_i)$ for all $i$ and $L_{ij}= -1$ if $(v_i,v_j)\in E$ and $0$ otherwise for all $i \neq j$. So, $\mat T$ is obtained from $\mat L$ by replacing the rows for the pinned vertices $v_{i_1}, v_{i_2}, v_{i_3}$ with the row that has $1$ at the diagonal entry and $0$ otherwise. Note that $\mat L$ is symmetric, but $\mat T$ is not.

Our goal is to maintain the inverse $\mat T^{-1}$ of $\mat T$. From this, we obtain $\mat x$ as $\mat x = \mat T^{-1} \: \mat b$. As $\mat b$ has only three non-zero values, this computation can be performed in \FOar, assuming that $\mat T^{-1}$ is available in a suitable form and all values are polynomially bounded.

Note that $\mat T^{-1}$ is a matrix over $\Q$. It follows that $\mat x$ is also over the rationals if  $p_{i_1}, p_{i_2}, p_{i_3}$ are rational numbers.
As $\mat T^{-1}$ may have entries $\frac{a}{b}$ that involve exponentially large numbers, we maintain $\mat T^{-1}$ modulo many polynomially large primes.

We now present the auxiliary information that is maintained.
Let $G$ be a graph of size $n$ (or a subgraph of a graph of size $n$) and let $v_{i_1}, v_{i_2}, v_{i_3}$ be three of its vertices. Let $p$ be a prime that is polynomially bounded in $n$. Let $\mat L$ be the Laplacian of $G$ and let $\mat T = \mat T(G, v_{i_1}, v_{i_2}, v_{i_3})$ be the matrix as explained above with respect to $G$ and  $v_{i_1}, v_{i_2}, v_{i_3}$. We call $(G, v_{i_1}, v_{i_2}, v_{i_3})$ \emph{embedding-inducing} if $\mat T^{-1}$ exists and $(G, v_{i_1}, v_{i_2}, v_{i_3}, p)$ \emph{modulo embedding-inducing} if $\mat T \bmod p = \mat T(G, v_{i_1}, v_{i_2}, v_{i_3},p)$ has an inverse over $\Z_p$.

We often use the following fact.
\begin{lemma} \label{lem:lina-invertable}
If $G$ is a $3$-connected planar graph and $v_{i_1}, v_{i_2}, v_{i_3}$ are on the same face then $(G, v_{i_1}, v_{i_2}, v_{i_3})$ is embedding-inducing.
\end{lemma}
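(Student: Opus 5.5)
The plan is to prove a stronger statement: $\mat T(G, v_{i_1}, v_{i_2}, v_{i_3})$ is invertible whenever $G$ is connected and at least one vertex is pinned. Neither planarity nor the face condition is needed for invertibility. Those hypotheses matter only for Tutte's theorem that the resulting solution is a plane embedding, which is not part of the claim. Since a $3$-connected graph is connected, the lemma then follows immediately.

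First I reorder the vertices so that the pinned vertices $v_{i_1}, v_{i_2}, v_{i_3}$ come first. Such a simultaneous permutation of rows and columns does not affect invertibility. In this order $\mat T$ has the block lower-triangular form
\[
\mat T = \begin{pmatrix} \mat I_3 & \mat 0 \\ \mat B & \mat L' \end{pmatrix},
\]
where $\mat L'$ is the principal submatrix of the Laplacian $\mat L$ on the non-pinned vertices $U = V \setminus \{v_{i_1}, v_{i_2}, v_{i_3}\}$. The first block row has this shape because each pinned row is a unit row. Hence $\det \mat T = \det \mat L'$, and it suffices to show that $\mat L'$ is nonsingular.

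Next I show that $\mat L'$ is positive definite. For $\mat z \in \mathbb{R}^{U}$, extend $\mat z$ by zeros on the pinned vertices and expand $\mat z^\transpose \mat L \mat z$. This gives
\[
\mat z^\transpose \mat L' \mat z = \sum_{(v_i,v_j)\in E,\ v_i,v_j\in U} (z_i - z_j)^2 + \sum_{(v_i,v_j)\in E,\ v_i\in U,\ v_j\notin U} z_i^2 .
\]
Both sums are nonnegative. Suppose the total is $0$. The first sum forces $\mat z$ to be constant on every connected component $K$ of $G - \{v_{i_1}, v_{i_2}, v_{i_3}\}$. Because $G$ is connected, each such $K$ contains a vertex adjacent to some pinned vertex. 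The second sum then forces $\mat z$ to vanish at that vertex, and hence on all of $K$. So $\mat z = \mat 0$, which makes $\mat L'$ positive definite and therefore invertible over $\Q$. Consequently $\mat T^{-1}$ exists, i.e., $(G, v_{i_1}, v_{i_2}, v_{i_3})$ is embedding-inducing.

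The only step requiring care is the quadratic-form identity, and it is routine: each edge inside $U$ contributes $(z_i - z_j)^2$, and each edge from $U$ to a pinned vertex contributes $z_i^2$ through the diagonal degree term. This also covers the degenerate case $U = \emptyset$, where $\mat T = \mat I_3$ and invertibility is immediate.
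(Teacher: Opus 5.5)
Your proof is correct, and it takes a different route from the paper.

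The paper argues directly on $\mat T$. The pinned rows are strictly diagonally dominant. Every other row has diagonal entry equal to the sum of the absolute values of its off-diagonal entries. Connectivity of $G$ gives, from every row, a chain of nonzero entries leading to a pinned row. Hence $\mat T$ is weakly chained diagonally dominant, and the paper concludes nonsingularity from the criterion of Shivakumar and Chew \cite{ShivakumarC74}.

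You instead peel off the pinned rows using the block lower-triangular structure. This reduces the question to the principal Laplacian submatrix $\mat L'$ on the unpinned vertices, which you show is positive definite via the energy identity.

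Both arguments use only connectivity and the presence of at least one pinned vertex. So both establish the stronger statement you point out: planarity and the common-face condition play no role in invertibility.

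The paper's version is a two-line appeal to an external nonsingularity theorem, applied to the nonsymmetric $\mat T$ as it stands. Yours is self-contained and gives slightly more, namely $\det \mat T = \det \mat L' > 0$. It also handles the degenerate case $U = \emptyset$ explicitly, for example a triangle with all three vertices pinned, which the paper later uses as a $3$-connected building block.
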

\begin{proof}
	A matrix $\mat A$ is \emph{weakly chained diagonally dominant} if it is weakly diagonally dominant, has strictly diagonally dominant rows and for every non-strictly diagonally dominant row~$i$ there is a strictly diagonally dominant row $j$ and a sequence $A_{i i_1}, A_{i_1 i_2}, \ldots A_{i_r j}$ of non-zero entries. Such a matrix is non-singular \cite{ShivakumarC74}.
	
	For any connected graph $G$ with any vertices $v_{i_1}, v_{i_2}, v_{i_3}$, the matrix $\mat T = \mat T(G, v_{i_1}, v_{i_2}, v_{i_3})$ is weakly chained diagonally dominant: the rows of $v_{i_1}, v_{i_2}, v_{i_3}$ are strictly diagonally dominant and from every vertex of $G$, at least one of the nodes $v_{i_1}, v_{i_2}, v_{i_3}$ is reachable without using an outgoing edge of $v_{i_1}, v_{i_2}$, or $v_{i_3}$.
\end{proof}

Note that it can be maintained in \DynFO whether three vertices of a $3$-connected planar graph lie on a common face \cite{DattaKM23}.

For a modulo embedding-inducing $(G, v_{i_1}, v_{i_2}, v_{i_3}, p)$, where $G$ contains $n$ vertices, let $\M(G,v_{i_1}, v_{i_2}, v_{i_3}, p)$ contain the following information:
\begin{enumerate}
 \item the inverse $\mat T^{-1} \in \Z_p^{n \times n}$,
 \item the products $\mat T^{-1} \, \mat c \in \Z_p^{n}$ and $\mat c^\transpose \, \mat T^{-1} \in \Z_p^{1 \times n}$ for each column $\mat c$ of the Laplacian $\mat L$,
 \item the products $\mat c_1^\transpose \, \mat T^{-1} \, \mat c_2 \in \Z_p$ for each pair $\mat c_1, \mat c_2$ of columns of $\mat L$.
\end{enumerate}

In the remainder of this section, we will show how to maintain $\M(G,v_{i_1}, v_{i_2}, v_{i_3}, p)$ under various operations and that this suffices to maintain isomorphism of $3$-connected components.

\subsubsection{Tools for Maintaining Tutte auxiliary information}\label{sec:tutte:algebra}

We discuss now how $\M(G,v_{i_1}, v_{i_2}, v_{i_3}, p)$ can be maintained using \FOar formulas under operations like edge changes, exchanging of pinned vertices and merging of graphs (to be defined later), provided the inverses continue to exist.
Subsection~\ref{sec:tutte:final} uses these insights for maintaining isomorphisms for $3$-connected components in \DynFO. %

Large parts of our reasoning rely on the Sherman-Morrison-Woodbury (SMW) formula, see \cite{HendersonS81}, which explains how to obtain the inverse of a matrix that is the result of changing an $n \times n$ matrix $\mat A$ by adding the product $\mat U \mat V^\transpose$ of two arbitrary $n \times k$ matrices $\mat U$ and $\mat V$, as long as the original and the new matrix are invertible:
\begin{align}
	\label{eq:smw}
	(\mat A+ \mat U \mat V^\transpose)^{-1} = \mat A^{-1} - \mat A^{-1}\mat U{(\mat I+\mat V^\transpose \mat A^{-1}\mat U)}^{-1}\mat V^\transpose \mat A^{-1}
\end{align}

In this formula, $\mat I$ denotes the $k \times k$ identity matrix.

Given the matrix $\mat T$ for an embedding-inducing $(G, v_{i_1}, v_{i_2}, v_{i_3})$, inserting an edge $(v_i, v_j)$ leads to the change of four entries of $\mat T$, assuming that $v_i, v_j$ are neither of $v_{i_1}, v_{i_2}, v_{i_3}$: the diagonal entries $T_{ii}$ and $T_{jj}$ are incremented, the entries $T_{ij}$ and $T_{ji}$ are decremented. The new matrix $\mat T'$ can be expressed as $\mat T + \mat a \mat a^\transpose$, where $\mat a$ is the (column) vector with entries $a_i =1$, $a_j = -1$ and $0$ everywhere else.

\begin{lemma}\label{lem:lina-edges}
Fix a domain of size $n$. Let $G$ be a graph with at most $n$ vertices, $v_{i_1}, v_{i_2}, v_{i_3}$ three of its vertices, and $p$ a prime of magnitude $\bigO(n^c)$, for some constant $c$, such that $(G, v_{i_1}, v_{i_2}, v_{i_3}, p)$ is modulo embedding-inducing.
If $G'$ results from $G$ by inserting or deleting some edge, then given $\M(G,v_{i_1}, v_{i_2}, v_{i_3}, p)$ one can determine whether $(G', v_{i_1}, v_{i_2}, v_{i_3}, p)$ is also modulo embedding-inducing in \FOar and if so, $\M(G',v_{i_1}, v_{i_2}, v_{i_3}, p)$ can be defined.
\end{lemma}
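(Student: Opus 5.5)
The plan is to write the edge change as a low-rank update $\mat T' = \mat T + \mat U\mat V^\transpose$ with $k\le 2$. We then apply the Sherman--Morrison--Woodbury identity~\eqref{eq:smw} over $\Z_p$. Every quantity it needs is either a constant-size matrix or an entry of the already stored products in $\M(G,v_{i_1},v_{i_2},v_{i_3},p)$.

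\textbf{Step 1: the update matrices.} Let the changed edge be $(v_i,v_j)$. If neither endpoint is pinned, then $\mat T' = \mat T \pm \mat a\mat a^\transpose$ with $\mat a = \mathbf e_i - \mathbf e_j$, as noted before the lemma. If one endpoint, say $v_j$, is pinned, then only row $i$ changes: $T_{ii}$ changes by $\pm1$ and $T_{ij}$ by $\mp1$. This gives $\mat T' = \mat T \pm \mathbf e_i\mat a^\transpose$. If both endpoints are pinned, then $\mat T'=\mat T$ and nothing changes. In every case $\mat T' = \mat T + \mat U\mat V^\transpose$, where $\mat U$ and $\mat V$ have at most two columns, each a signed combination of at most two unit vectors. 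Writing $\mat U$ with two columns $\pm\mathbf e_i,\pm\mathbf e_j$ is also fine.

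\textbf{Step 2: invertibility test.} By the matrix determinant lemma, $\det\mat T' = \det\mat T\cdot\det(\mat I+\mat V^\transpose\mat T^{-1}\mat U)$. Hence $(G',v_{i_1},v_{i_2},v_{i_3},p)$ is modulo embedding-inducing if and only if the constant-size matrix $\mat K \df \mat I+\mat V^\transpose\mat T^{-1}\mat U$ is invertible over $\Z_p$. Each entry of $\mat K$ is a $\pm$-combination of at most four entries of $\mat T^{-1}$. These entries are numbers of $\bigO(\log n)$ bits, so $\mat K$, its determinant and its inverse (via the adjugate and a modular inverse, found by quantifying over elements below $p$) are \FOar-definable.

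\textbf{Step 3: updating $\mat T^{-1}$.} Formula~\eqref{eq:smw} gives each entry $(\mat T'^{-1})_{rs}$ as $(\mat T^{-1})_{rs}$ minus a sum of constantly many products of entries of $\mat T^{-1}$ (from rows $r$, $i$, $j$ and columns $s$, $i$, $j$) with entries of $\mat K^{-1}$. Such a sum is \FOar-definable modulo $p$.

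\textbf{Step 4: updating the products.} This is the main obstacle. The columns of the Laplacian change as well: $\mat L' = \mat L \pm \mat a\mat a^\transpose$, so only columns $i$ and $j$ of $\mat L$ change, each by a constant-support vector. For an unchanged column $\mat c$, the new product is $\mat T'^{-1}\mat c = \mat T^{-1}\mat c - \mat T^{-1}\mat U\mat K^{-1}\mat V^\transpose(\mat T^{-1}\mat c)$. Here $\mat V^\transpose\mat T^{-1}\mat c$ consists of constantly many entries of the stored vector $\mat T^{-1}\mat c$, so the formula is \FOar-definable. The same holds for $\mat c^\transpose\mat T'^{-1}$ and for the scalars $\mat c_1^\transpose\mat T'^{-1}\mat c_2$: these use the stored quantities $\mat c_1^\transpose\mat T^{-1}\mathbf e_i = (\mat c_1^\transpose\mat T^{-1})_i$ and $(\mat T^{-1}\mat c_2)_j$. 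For a changed column $\mat c' = \mat c + \mat d$ with $\mat d$ of constant support, we expand by linearity: $\mat T'^{-1}\mat c' = \mat T'^{-1}\mat c + \mat T'^{-1}\mat d$. The second term is a constant combination of columns of the already defined $\mat T'^{-1}$. The bilinear terms $\mat c_1'^\transpose\mat T'^{-1}\mat c_2'$ expand into four pieces, each either an already updated product or a constant-size combination of entries of $\mat T'^{-1}$ and of the updated vectors.

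The difficulty in this step is bookkeeping rather than any new idea. Each expansion has to reduce to constantly many stored values. All arithmetic stays modulo a prime of $\bigO(\log n)$ bits and involves only constantly many terms per entry, so the resulting formulas are \FOar. Deletions are handled identically with the sign flipped, so the two cases do not need separate treatment.
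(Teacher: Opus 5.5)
Your proposal is correct and follows the paper's proof. Like the paper, you write the edge change as a low-rank Sherman--Morrison--Woodbury update, test invertibility via $\det(\mat I+\mat V^\transpose\mat T^{-1}\mat U)\not\equiv 0 \pmod p$, and expand the products with the at most two changed Laplacian columns by linearity into stored terms plus constant-support corrections. Your explicit update $\mat T' = \mat T \pm \mat{e}_i\mat a^\transpose$ for a single pinned endpoint is more careful than the paper, which only treats the case that neither endpoint is pinned; note, however, that when both endpoints are pinned the Laplacian columns $i,j$ still change, so Step~4 is still needed there, and it is not true that ``nothing changes''.
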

\begin{proof}
Suppose that $(G, v_{i_1}, v_{i_2}, v_{i_3}, p)$ is modulo embedding-inducing and that $G'$ results from $G$ by inserting the edge $(v_i, v_j)$. Let $\mat T \df \mat T(G,v_{i_1}, v_{i_2}, v_{i_3}, p)$ and let $\mat a$ be the vector whose only non-zero entries are the entries $a_i = 1$ and $a_j = -1$.
We assume that $v_i, v_j$ are neither of $v_{i_1}, v_{i_2}, v_{i_3}$, as for these vertices the rows of $\mat T$ do not change. Then $\mat T + \mat a \mat a^\transpose = \mat T' = \mat T(G',v_{i_1}, v_{i_2}, v_{i_3}, p)$.
According to the SMW formula it holds
\[(\mat T+ \mat a \mat a^\transpose)^{-1} = \mat T^{-1} - \mat T^{-1}\mat a{(\mat I+\mat a^\transpose \mat T^{-1}\mat a)}^{-1}\mat a^\transpose \mat T^{-1} \]
and $(G', v_{i_1}, v_{i_2}, v_{i_3}, p)$ is modulo embedding-inducing, that is, $\mat T'$ is invertible over $\Z_p$, exactly if $\mat I+\mat a^\transpose \mat T^{-1}\mat a$ is invertible, that is, has a non-zero determinant. As $\mat a$ has only two non-zero entries, computing the scalar $\mat a^\transpose \mat T^{-1}\mat a \in \Z_p$ involves only a constant number of additions and multiplications of constantly many numbers, and therefore is possible in \FOar. Then, it can be determined whether $1 + \mat a^\transpose \mat T^{-1}\mat a \equiv 0 \pmod p$ holds. Also, the column vector $\mat T^{-1}\mat a$ and the row vector $\mat a^\transpose \mat T^{-1}$ can be computed, and then the full right-hand side of the SMW formula can be evaluated.

It remains to explain how the products $\mat T'^{-1} \, \mat c, \mat c^\transpose \, \mat T'^{-1}$ and $\mat c_1^\transpose \, \mat T'^{-1} \, \mat c_2$ can be computed, for all columns $\mat c, \mat c_1, \mat c_2$ of the Laplacian $\mat L'$ of $G'$.
For the products of the form $\mat c_1^\transpose \, \mat T'^{-1} \, \mat c_2$, we observe
\[\mat c_1^\transpose \,(\mat T+ \mat a \mat a^\transpose)^{-1} \, \mat c_2= \mat c_1^\transpose \, \mat T^{-1} \, \mat c_2 -  \mat c_1^\transpose \, \mat T^{-1}\mat a{(\mat I+\mat a^\transpose \mat T^{-1}\mat a)}^{-1}\mat a^\transpose \mat T^{-1} \, \mat c_2 \]
If $\mat c_1$ and $\mat c_2$ are columns of $\mat L'$ that are equal in $\mat L$, the additionally required partial products are contained in $\M(G,v_{i_1}, v_{i_2}, v_{i_3}, p)$. If the columns are different from the respective columns $\mat d_1, \mat d_2$ of $\mat L$ then these columns differ in at most two entries, so $\mat c_k = \mat d_k + \mat \Delta_k$, where $\mat \Delta_k$ has at most two non-zero entries, for each $k = 1, 2$.
So, we can compute the products as
\begin{align*}
 \mat c_1^\transpose (\mat T+ \mat a \mat a^\transpose)^{-1}  \mat c_2 &=  (\mat d_1^\transpose + \mat \Delta_1^\transpose) \, \mat T^{-1} \, (\mat d_2 + \mat \Delta_2) \\
    & \qquad - (\mat d_1^\transpose + \mat \Delta_1^\transpose) \, \mat T^{-1}\mat a{(\mat I+\mat a^\transpose \mat T^{-1}\mat a)}^{-1}\mat a^\transpose \mat T^{-1} \, (\mat d_2 + \mat \Delta_2) \\
    &=  \mat d_1^\transpose \mat T^{-1} \mat d_2 + \mat \Delta_1^\transpose \mat T^{-1}  \mat d_2 + \mat d_1^\transpose \mat T^{-1} \mat  \Delta_2  + \mat \Delta_1^\transpose \mat T^{-1} \mat  \Delta_2 \\
    & \qquad - \mat d_1^\transpose \, \mat T^{-1}\mat a{(\mat I+\mat a^\transpose \mat T^{-1}\mat a)}^{-1}\mat a^\transpose \mat T^{-1} \, \mat d_2 \\
    &  \qquad - \mat \Delta_1^\transpose \, \mat T^{-1}\mat a{(\mat I+\mat a^\transpose \mat T^{-1}\mat a)}^{-1}\mat a^\transpose \mat T^{-1} \, \mat d_2 \\
    & \qquad - \mat d_1^\transpose \, \mat T^{-1}\mat a{(\mat I+\mat a^\transpose \mat T^{-1}\mat a)}^{-1}\mat a^\transpose \mat T^{-1} \, \mat  \Delta_2 \\
    & \qquad -  \mat \Delta_1^\transpose \, \mat T^{-1}\mat a{(\mat I+\mat a^\transpose \mat T^{-1}\mat a)}^{-1}\mat a^\transpose \mat T^{-1} \, \mat  \Delta_2
\end{align*}
where the products containing  $\mat d_k$  are contained in $\M(G,v_{i_1}, v_{i_2}, v_{i_3}, p)$ and the products involving $\mat \Delta_k$ can be computed in \FOar, as these vectors have a constant number of non-zero entries and thus the matrix product requires a constant number of additions.

The remaining products can be computed analogously.

In case of an edge deletion, the new matrix $\mat T'$ can be expressed as $\mat T - \mat a \mat a^\transpose$ and the approach is analogous.
\end{proof}

Other operations on $\M(G,v_{i_1}, v_{i_2}, v_{i_3}, p)$ can be implemented similarly, see Lemmas~\ref{lem:lina-pins} to~\ref{lem:lina-removepair}, for instance,
\begin{itemize}
 \item exchanging pinned vertices, i.e., given $\M(G,v_{i_1}, v_{i_2}, v_{i_3}, p)$ for some pinned vertices $v_{i_1}, v_{i_2}, v_{i_3}$, one can determine $\M(G,v'_{i_1}, v'_{i_2}, v'_{i_3}, p)$ (if it exists) for arbitrary other pinned vertices $v'_{i_1}, v'_{i_2}, v'_{i_3}$ (Lemma~\ref{lem:lina-pins}),
 \item merging of components, i.e., given $\M(G_1,v_{i_1}, v_{i_2}, v_{i_3}, p)$  and $\M(G_2,v_{i_1}, v_{i_2}, v_{i_4}, p)$ for graphs $G_1$ and $G_2$ sharing a separating pair $\{v_{i_1}, v_{i_2}\}$, for the graph $G$ obtained from the union of $G_1$ and $G_2$ by adding the edge $(v_{i_3}, v_{i_4})$ one can determine $\M(G,v_{i_1}, v_{i_3}, v_{i_4}, p)$ (if it exists)  (Lemma~\ref{lem:lina-merging}), and
 \item splitting of components, i.e., the reverse of the previous operation (Lemma~\ref{lem:lina-splitting}).
\end{itemize}

\subsubsection{Maintaining Tutte information for coherent paths}\label{sec:tutte:final}

We now show that the information $\M(H,v_{i_1}, v_{i_2}, v_{i_3}, p)$ can be maintained for $3$-connected components $H$ of planar graphs, as long as the involved inverses exist modulo the prime $p$.
To do so, we also need this information for 3-connected components induced by coherent paths in triconnected component trees, since edge insertions may merge paths of such trees into one $3$-connected component.

Let $\rho = \rho(C_1,C_2,(a_1,a_2))$ be a ($(a_1,a_2)$-) coherent path. By $G[\rho]$ we denote the $3$-connected component that is created by the $\es[+]{2}{3}$ insertion of the edge $(a_1, a_2)$.

\begin{restatable}{lemma}{theoremTutteCoherent}
\label{lem:tutte:coherent}
	Let $G$ be a planar graph with $n$ vertices and let $p$ be a prime of magnitude $\bigO(n^c)$, for some constant $c$. Assume that $\M(H,v_{i_1}, v_{i_2}, v_{i_3}, p)$ is available for every $H, v_{i_1}, v_{i_2}, v_{i_3}$ where $H$ is a
	\begin{itemize}
		\item $3$-connected components of $G$, or
		\item a graph $ G[\rho]$ induced by a coherent path $\rho$ in $\triTree(G)$
	\end{itemize}
	and $v_{i_1}, v_{i_2}, v_{i_3}$ are vertices on the same face of $H$.

	Let $G'$ be a planar graph that results from $G$ by inserting or deleting some edge. Suppose $H'$ is any $3$-connected component of $G'$ or equal to $G'[\rho']$ for any coherent path $\rho'$ in $\triTree(G')$, and $v'_{i_1}, v'_{i_2}, v'_{i_3}$ are any of its vertices that are on the same face.

	There is a constant number of embedding-inducing $(H_j, v^j_1, v^j_2, v^j_3)_{j \leq d}$, where the $H_j$ are $3$-connected planar graphs and $v^j_1, v^j_2, v^j_3$ are vertices that are on the same face of $H_j$, such that one can determine in \FOar whether all of $(H', v'_{i_1}, v'_{i_2}, v'_{i_3}, p)$ and $(H_j, v^j_1, v^j_2, v^j_3, p)_{j \leq d}$ are modulo embedding-inducing and if so, $\M(H',v'_{i_1}, v'_{i_2}, v'_{i_3}, p)$ can be defined (using $\M(H_j, v^j_1, v^j_2, v^j_3, p)$).
\end{restatable}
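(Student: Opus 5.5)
The plan is a case distinction over the change types of Section~\ref{section:changes}, combined with the algebraic tools of Section~\ref{sec:tutte:algebra}. In each case we express $H'$ (a $3$-connected component of $G'$, or $G'[\rho']$ for a coherent path $\rho'$ of $\triTree(G')$) as the result of a constant number of the following operations applied to graphs $H_j$ for which $\M$ is available by assumption: single edge insertions or deletions (Lemma~\ref{lem:lina-edges}), exchange of pinned vertices (Lemma~\ref{lem:lina-pins}), merging along a separating pair (Lemma~\ref{lem:lina-merging}), and splitting (Lemma~\ref{lem:lina-splitting}). The intermediate graphs of such a chain are the graphs $H_j$ of the statement.

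Each intermediate graph is chosen to be $3$-connected planar with its pins on a common face. Then Lemma~\ref{lem:lina-invertable} guarantees that it is embedding-inducing over $\Q$, so only invertibility modulo $p$ has to be tested. Each of the cited lemmas tests exactly this, namely that a constant-size matrix $\mat I+\mat V^\transpose\mat T^{-1}\mat U$ is non-singular modulo $p$, and this is done in \FOar. Since the chain has constant length, the composition of the constantly many update steps is again \FOar-definable.

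First, for every $H'$ we fix pins $v^1,v^2,v^3$ on a common face that come from a graph $H$ with known $\M$. At the very end we move to the requested pins $v'_{i_1},v'_{i_2},v'_{i_3}$ with Lemma~\ref{lem:lina-pins}. This reduces the problem to producing $\M$ for \emph{some} valid pinning of $H'$. The cases are then as follows.

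\begin{enumerate}
\item If $H'$ is not affected by the change, meaning it is a component or coherent-path graph that is identical in $G$, we reuse $\M(H',\cdot)$ directly.

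\item For type $\es[+]{3}{3}$ or $\es[-]{3}{3}$, $H'$ is the old component, or the old path graph $G[\rho]$ containing it, with one edge changed. We apply Lemma~\ref{lem:lina-edges} once, after repinning so that the endpoints of the edge are not pinned.

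\item For type $\es[+]{2}{3}$, the new component $R_{uv}$ equals $G[\rho]$ for the coherent path $\rho=\rho(R_u,R_v,(u,v))$, whose $\M$ is given. A coherent path $\rho'$ in $\triTree(G')$ passing through $R_{uv}$ corresponds to a longer coherent path in $\triTree(G)$ that contains $\rho$, plus the edge $(u,v)$. Cycle components that are split by the insertion only affect which virtual edges are present; this costs constantly many applications of Lemma~\ref{lem:lina-edges}. The case $\es[-]{3}{2}$ is the reverse: the new components on the unfurled path are themselves $G[\rho]$-type graphs, or are obtained from the old component by the splitting Lemma~\ref{lem:lina-splitting}.

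\item For types $\es[\pm]{2}{2}$, $\es[+]{1}{2}$ and $\es[-]{2}{1}$, a coherent path $\rho'$ in $\triTree(G')$ is described as a constant number of coherent paths of $\triTree(G)$ (or of the old biconnected components) glued at constantly many new separating pairs or cycle nodes. We combine them with Lemma~\ref{lem:lina-merging}, using one merge per gluing point.
\end{enumerate}

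The main obstacle is the combinatorial claim behind the last two cases: every coherent path of $\triTree(G')$ decomposes into only \emph{constantly many} coherent paths of $\triTree(G)$. Moreover, the gluing must follow the merging pattern of Lemma~\ref{lem:lina-merging}, that is, a shared separating pair plus one added virtual edge between the third pins, and the pins must lie on a common face in each piece. To establish this, we use that a change modifies $\triTree$ only along a single path, or at a single node, around the changed edge. A coherent path of the new tree therefore meets this region in at most one contiguous segment, and splits into at most three pieces: before the segment, inside it, and after it. Coherence, i.e.\ planarity of $G'+(a_1,a_2)$, transfers to each piece, with the relevant separating-pair vertices as new endpoints. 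This is where the embedding facts from \cite{DattaKM23} are needed. The fact that the common-face pins exist and match across pieces is what we expect to require the most care.
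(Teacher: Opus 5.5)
There is a genuine gap. Your framework matches the paper's: a case split by change type, constant-length chains of Lemmas~\ref{lem:lina-edges}, \ref{lem:lina-pins}, \ref{lem:lina-merging}, \ref{lem:lina-splitting} through $3$-connected intermediate graphs, and repinning at the end. But the central case is argued incorrectly.

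In case $\es[+]{2}{3}$, take a coherent path $\rho'=\rho(C_1,C_2,(a_1,a_2))$ of $\triTree(G')$ through $R_{uv}$. It does \emph{not} correspond to a path of $\triTree(G)$ containing $\rho=\rho(D_1,D_2,(u,v))$ (your $\rho(R_u,R_v,(u,v))$). The reason is that $\rho'$ enters and leaves $R_{uv}$ through separating pairs that, in $\triTree(G)$, are attached to nodes of $\rho$ that need not be its ends.
\begin{itemize}
\item The old tree path from $C_1$ to $C_2$ is coherent in $G$, since $G+(a_1,a_2)\subseteq G'+(a_1,a_2)$. So $G_2=G[\rho(C_1,C_2,(a_1,a_2))]$ is stored, but it shares only a segment $I_1\ldots I_2$ with $\rho$.
\item The tails $D_1\ldots B_1$ and $B_2\ldots D_2$ of $\rho$ branch off via pairs $\{s_1,t_1\}$ and $\{s_2,t_2\}$. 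They contain vertices of $G'[\rho']$ that lie in no graph of your chain, and edge changes cannot introduce vertices.
\end{itemize}
The missing idea is that these tails, re-anchored at a vertex of their branching pair, are themselves stored coherent-path graphs $G_1=G[\rho(D_1,B_1,(u,s_1))]$ and $G_3=G[\rho(D_2,B_2,(v,t_2))]$. The paper merges $G_1$ with $G_2$ along $\{s_1,t_1\}$ adding $(u,s_2)$, repins, and merges the result with $G_3$ along $\{s_2,t_2\}$ adding $(u,v)$. Finally it deletes $(u,s_1),(u,s_2),(v,t_2)$ and obsolete virtual edges.

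Case $\es[-]{3}{2}$ is the reverse. The new graphs are not stored, because the unfurling component is a single node of $\triTree(G)$. One first adds these auxiliary edges to the old $G[\rho]$ so that it has the shape required by Lemma~\ref{lem:lina-splitting}, then splits off both tails.

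Also, split cycle components do not ``only affect virtual edges''. If a tail is a single cycle component, $u$ newly enters the $3$-connected graph, which needs Lemma~\ref{lem:lina-addvertex} (resp.\ \ref{lem:lina-removevertex}). The three-piece cut of $\rho'$ in your last paragraph is a different decomposition, but you neither carry it out nor reconcile it with item~3.

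Second, ``one merge per gluing point'' between stored coherent-path graphs does not apply in the remaining cases. Lemma~\ref{lem:lina-merging} requires two graphs sharing exactly two vertices.
\begin{itemize}
\item In $\es[+]{1}{2}$, the stored pieces $G[\rho(C_1,D_1,(a_1,s_1))]$ and $G[\rho(C_2,D_2,(a_2,t_1))]$ attach to the new cycle component at different pairs $\{s_1,s_2\}$ and $\{t_1,t_2\}$, and are vertex-disjoint. The paper builds their union plus the edges $(a_1,a_2),(s_1,t_1),(s_2,t_2)$ with Lemma~\ref{lem:lina-union} (three merges using two auxiliary triangles), then deletes $(a_1,s_1),(a_2,t_1)$.
\item In $\es[\pm]{2}{2}$, nothing is glued at all. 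The chord endpoints are new vertices of $G'[\rho']$ subdividing a (virtual) edge of the old $G[\rho]$. They are handled by Lemmas~\ref{lem:lina-addvertex}, \ref{lem:lina-addpair} and their reverses, whose merge partners are constant-size triangles rather than stored graphs.
\end{itemize}
Such gadgets are exactly the extra $(H_j,v^j_1,v^j_2,v^j_3)$ in the statement, and your plan never provides them. Finally, $\es[-]{2}{1}$ needs no gluing, since every coherent path of $\triTree(G')$ is already one of $\triTree(G)$.
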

\begin{proof}
	The proof distinguishes all edge type changes  from Section~\ref{section:changes} where the triconnected component tree is affected. We sketch Case $\es[+]{2}{3}$ here, more details are in the appendix.

	Consider an edge change of type $\es[+]{2}{3}$ and suppose $G'$ results from $G$ by insertion of the edge $(u,v)$, where $u$ and $v$ are in different triconnected components $D_1$, $D_2$ on a coherent path $\rho(D_1,D_2,(u,v))$ in the triconnected component tree $\triTree(G)$ of $G$. Accordingly, the components of that path coalesce into a single $3$-connected component in the triconnected component tree of $G'$.

	We show how to maintain the auxiliary information for a coherent path $\rho(C_1, C_2, (a_1,a_2))$ in $\triTree(G')$ between some components $C_1$ and $C_2$ that include the vertices $a_1, a_2$, respectively. We assume that the coalesced component lies on $\rho(C_1, C_2, (a_1,a_2))$ in $\triTree(G')$, as otherwise no update is necessary.

	We only consider the general case that $C_1, C_2$ are not the newly coalesced component, otherwise the following explanations can easily be adapted. It follows that $\rho(C_1, C_2, (a_1,a_2))$ is also a coherent path in $\triTree(G)$. See Figure~\ref{lm:tutte2conn2and3} for a sketch.
	
		\begin{figure}[t]\centering

	\scalebox{0.7}{
		\begin{tikzpicture}[
		xscale=0.7,
		yscale=0.7,
		every node/.style={inner sep=0pt},
		vertex/.style={circle,fill=black,inner sep=1.6pt},
		hollow/.style={circle,draw, fill=white,inner sep=1.6pt},
		region/.style={draw=black,fill=gray!35},
		edge/.style={line width=1pt}
		]

		\def\sepdist{0.2} %
		\def\sepdistoncircle{0.1} %
		\def\innerdist{1} %
		\def\outerdist{3} %
		\def\innerangle{0} %
		\def\outerangle{70} %

		\filldraw[region] (0,0) arc (360:0:1)
		node[pos=7/12-\sepdistoncircle/2] (ul1a) {}
		node[pos=7/12+\sepdistoncircle/2] (ul1b) {}
		node[pos=0-\sepdistoncircle/2] (c1a) {}
		node[pos=0+\sepdistoncircle/2] (c1b) {}
		node[pos=5/12-\sepdistoncircle/2] (ll1a) {}
		node[pos=5/12+\sepdistoncircle/2] (ll1b) {}
		;

		\def\dir{150} %
		\filldraw[region] (ul1a) .. controls ++(\dir+\outerangle/2:\outerdist) and ++(\dir-\outerangle/2:\outerdist) ..
		node[pos=0.5 - \sepdist/2] (ul2a) {}
		node[pos=0.5 + \sepdist/2] (ul2b) {}
		(ul1b)
		..controls ++(\dir+\innerangle/2:\innerdist) and ++(\dir-\innerangle/2:\innerdist) .. (ul1a)
		;

		\filldraw[region] (ul2a) .. controls ++(\dir+\outerangle/2:\outerdist) and ++(\dir-\outerangle/2:\outerdist) ..
		node[pos=0.5 - \sepdist/2] (ul3a) {}
		node[ pos=0.5 + \sepdist/2] (ul3b) {}
		(ul2b)
		..controls ++(\dir+\innerangle/2:\innerdist) and ++(\dir-\innerangle/2:\innerdist) .. (ul2a)
		;

		\filldraw[region] (ul3a) .. controls ++(\dir+\outerangle/2:\outerdist) and ++(\dir-\outerangle/2:\outerdist) ..
		node[pos=0.67] (u) {}
		(ul3b)
		..controls ++(\dir+\innerangle/2:\innerdist) and ++(\dir-\innerangle/2:\innerdist) .. (ul3a)
		;

		\def\dir{0}
		\filldraw[region] (c1a) .. controls ++(\dir+\outerangle/2:\outerdist) and ++(\dir-\outerangle/2:\outerdist) ..
		node[pos=0.5 - \sepdist/2] (c2a) {}
		node[pos=0.5 + \sepdist/2] (c2b) {}
		(c1b)
		..controls ++(\dir+\innerangle/2:\innerdist) and ++(\dir-\innerangle/2:\innerdist) .. (c1a)
		;

		\filldraw[region] (c2a) .. controls ++(\dir+\outerangle/2:\outerdist) and ++(\dir-\outerangle/2:\outerdist) ..
		node[pos=0.5 - \sepdist/2] (c3a) {}
		node[pos=0.5 + \sepdist/2] (c3b) {}
		(c2b)
		..controls ++(\dir+\innerangle/2:\innerdist) and ++(\dir-\innerangle/2:\innerdist) .. (c2a)
		;

		\filldraw[region] (c3a) .. controls ++(\dir+\outerangle/2:\outerdist) and ++(\dir-\outerangle/2:\outerdist) ..
		node[pos=0.33 - \sepdist/2] (ur1a) {}
		node[pos=0.33 + \sepdist/2] (ur1b) {}
		node[pos=0.67 - \sepdist/2] (lr1a) {}
		node[pos=0.67 + \sepdist/2] (lr1b) {}
		(c3b)
		..controls ++(\dir+\innerangle/2:\innerdist) and ++(\dir-\innerangle/2:\innerdist) .. (c3a)
		;

		\def\dir{210} %
		\filldraw[region] (ll1a) .. controls ++(\dir+\outerangle/2:\outerdist) and ++(\dir-\outerangle/2:\outerdist) ..
		node[pos=0.5 - \sepdist/2] (ll2a) {}
		node[pos=0.5 + \sepdist/2] (ll2b) {}
		(ll1b)
		..controls ++(\dir+\innerangle/2:\innerdist) and ++(\dir-\innerangle/2:\innerdist) .. (ll1a)
		;

		\filldraw[region] (ll2a) .. controls ++(\dir+\outerangle/2:\outerdist) and ++(\dir-\outerangle/2:\outerdist) ..
		node[pos=0.5 - \sepdist/2] (ll3a) {}
		node[ pos=0.5 + \sepdist/2] (ll3b) {}
		(ll2b)
		..controls ++(\dir+\innerangle/2:\innerdist) and ++(\dir-\innerangle/2:\innerdist) .. (ll2a)
		;

		\filldraw[region] (ll3a) .. controls ++(\dir+\outerangle/2:\outerdist) and ++(\dir-\outerangle/2:\outerdist) ..
		node[pos=0.33] (a1) {}
		(ll3b)
		..controls ++(\dir+\innerangle/2:\innerdist) and ++(\dir-\innerangle/2:\innerdist) .. (ll3a)
		;

		\def\dir{30}
		\filldraw[region] (ur1a) .. controls ++(\dir+\outerangle/2:\outerdist) and ++(\dir-\outerangle/2:\outerdist) ..
		node[pos=0.5 - \sepdist/2] (ur2a) {}
		node[pos=0.5 + \sepdist/2] (ur2b) {}
		(ur1b)
		..controls ++(\dir+\innerangle/2:\innerdist) and ++(\dir-\innerangle/2:\innerdist) .. (ur1a)
		;

		\filldraw[region] (ur2a) .. controls ++(\dir+\outerangle/2:\outerdist) and ++(\dir-\outerangle/2:\outerdist) ..
		node[pos=0.5 - \sepdist/2] (ur3a) {}
		node[ pos=0.5 + \sepdist/2] (ur3b) {}
		(ur2b)
		..controls ++(\dir+\innerangle/2:\innerdist) and ++(\dir-\innerangle/2:\innerdist) .. (ur2a)
		;

		\filldraw[region] (ur3a) .. controls ++(\dir+\outerangle/2:\outerdist) and ++(\dir-\outerangle/2:\outerdist) ..
		node[pos=0.33] (v) {}
		(ur3b)
		..controls ++(\dir+\innerangle/2:\innerdist) and ++(\dir-\innerangle/2:\innerdist) .. (ur3a)
		;

		\def\dir{330}
		\filldraw[region] (lr1a) .. controls ++(\dir+\outerangle/2:\outerdist) and ++(\dir-\outerangle/2:\outerdist) ..
		node[pos=0.5 - \sepdist/2] (lr2a) {}
		node[pos=0.5 + \sepdist/2] (lr2b) {}
		(lr1b)
		..controls ++(\dir+\innerangle/2:\innerdist) and ++(\dir-\innerangle/2:\innerdist) .. (lr1a)
		;

		\filldraw[region] (lr2a) .. controls ++(\dir+\outerangle/2:\outerdist) and ++(\dir-\outerangle/2:\outerdist) ..
		node[pos=0.5 - \sepdist/2] (lr3a) {}
		node[ pos=0.5 + \sepdist/2] (lr3b) {}
		(lr2b)
		..controls ++(\dir+\innerangle/2:\innerdist) and ++(\dir-\innerangle/2:\innerdist) .. (lr2a)
		;

		\filldraw[region] (lr3a) .. controls ++(\dir+\outerangle/2:\outerdist) and ++(\dir-\outerangle/2:\outerdist) ..
		node[pos=0.67] (a2) {}
		(lr3b)
		..controls ++(\dir+\innerangle/2:\innerdist) and ++(\dir-\innerangle/2:\innerdist) .. (lr3a)
		;

		\draw [dotted,color=red, line width=1.5pt]
		(u) to[out=0,in=110] (ur1a)
		;

		\draw [dotted,color=red, line width=1.5pt]
		(v) ..controls ++(10:3) and ++(-10:3) .. (ur1b)
		;

		\draw [dotted,color=red, line width=1.5pt]
		(u) ..controls ++(170:3) and ++(-170:3) .. (ul1a)
		;

		\draw [dashed,color=blue, line width=1.5pt]
		(u) to[out=10,in=170] (v)
		;

		\draw [dashed,color=purple, line width=1.5pt]
		(a1) to[out=-10,in=-170] (a2)
		;

		\foreach \p in {u,v,a1,a2,ul1a,ul1b,ur1a,ur1b}{
			\node[vertex] at (\p) {};
		}

		\def\labeldist{0.3}
		\node[below= \labeldist of u] {$D_1$};
		\node[below= \labeldist of v] {$D_2$};
		\node[below= \labeldist of ul2b] {$B_1$};
		\node[below= \labeldist of ur2a] {$B_2$};
		\node at (-1,0) {$I_1$};
		\node[below= \labeldist of ur1a] {$I_2$};
		\node[above= \labeldist of a1] {$C_1$};
		\node[above= \labeldist of a2] {$C_2$};

		\node[above=\labeldist] at (u) {$u$};
		\node[above=\labeldist] at (v) {$v$};
		\node[below left=\labeldist/2] at (ul1a) {$s_1$};
		\node[above right=\labeldist] at (ul1b) {$t_1$};
		\node[above left=\labeldist] at (ur1a) {$s_2$};
		\node[below right=\labeldist-0.2 and \labeldist] at (ur1b) {$t_2$};
		\node[below=\labeldist] at (a1) {$a_1$};
		\node[below=\labeldist] at (a2) {$a_2$};
	\end{tikzpicture}
}
		\caption{Illustration of Case $\es[+]{2}{3}$ in Lemma \ref{lem:tutte:coherent}'s proof.%
			\label{lm:tutte2conn2and3}}
	\end{figure}

	The coherent path $\rho(D_1,D_2,(u,v))$ in $G$ can be divided into three subpaths: the intersection with the coherent path $\rho(C_1,C_2,(a_1,a_2))$ and the parts before and after the intersection. In Figure~\ref{lm:tutte2conn2and3}, these are (1) the path between the components $D_1$ and~$B_1$, (2) the path between $I_1$ and $I_2$ and (3) the path between $B_2$ and $D_2$.

	We assume here that the paths from $D_1$ to $B_1$ and from $D_2$ to $B_2$ are both not a single cycle component but either a single $3$-connected component or a non-trivial path consisting of triconnected components. %
	Let $\{s_1, t_1\}$ be the separating pair between the components $B_1$ and $I_1$ and let $\{s_2, t_2\}$ be the separating pair between the components $B_2$ and $I_2$.
	Let $G_1 = G[\rho(D_1,B_1,(u,s_1))]$ and $G_3= G[\rho(D_2,B_2,(v,t_2))]$ be the graphs induced by the coherent paths before and after the intersection. Let $G_2 = G[\rho(C_1,C_2,(a_1,a_2))]$ be the $3$-connected component associated with the coherent path under consideration in the unchanged graph.
	According to the assumption of the lemma, $\M(G_1,s_1, t_1, u, p)$, $\M(G_2,s_1, t_1, s_2, p)$ and $\M(G_3,s_2, t_2, v, p)$ are available.

	We now explain how to compute $\M(H',v'_{i_1}, v'_{i_2}, v'_{i_3}, p)$, where $H' = G'[\rho(C_1, C_2, (a_1,a_2))]$ and $v'_{i_1},v'_{i_2}, v'_{i_3}$ are three of its vertices that are on the same face. Notice that $H'$ differs from the union of the three graphs $G_1, G_2, G_3$ only by a constant number of edges:
	Let $H_1$ be the graph that is obtained from the union of $G_1$ and $G_2$ by adding the edge $(u,s_2)$ and let $H_2$ be the graph that is obtained from the union of $H_1$ and $G_3$ by adding the edge $(u,v)$. Then $H'$ is obtained from $H_2$ by removing the edges $(u,s_1), (u,s_2)$ and $(v, t_2)$ (and possibly virtual edges $(s_1,t_1)$ and $(s_2,t_2)$).
	According to the Lemmas~\ref{lem:lina-edges}, \ref{lem:lina-pins} and \ref{lem:lina-merging}, we can determine in \FOar for each graph of that sequence whether (together with suitably chosen vertices) it is modulo embedding-inducing and, if they all are, $\M(H',v'_{i_1}, v'_{i_2}, v'_{i_3}, p)$ can be obtained.
	
	We give more details.
	Assuming that every intermediate structure is modulo embedding-inducing, from $\M(G_1,s_1, t_1, u, p)$ and $\M(G_2,s_1, t_1, s_2, p)$ one can obtain $\M(H_1,t_1, u, s_2, p)$ via Lemma~\ref{lem:lina-merging} and $\M(H_1,s_2, t_2, u, p)$ via Lemma~\ref{lem:lina-pins}; then using $\M(G_3,s_2, t_2, v, p)$ one can obtain $\M(H_2,s_2, u, v, p)$ again via Lemma~\ref{lem:lina-merging}. Applying Lemma~\ref{lem:lina-edges} three to five times, corresponding to the deletions of the edges $(u,s_1), (u,s_2)$ and $(v, t_2)$ and possibly of virtual edges $(s_1,t_1)$ or $(s_2,t_2)$ if the corresponding vertex set is not a $3$-connected separating pair any more in $H'$, one can obtain $\M(H',s_2, u, v, p)$.
	Finally, via Lemma~\ref{lem:lina-pins}, one can obtain $\M(H',v'_{i_1}, v'_{i_2}, v'_{i_3}, p)$.
	
	Note that all intermediate graphs are $3$-connected planar graphs and their three distinguished vertices are on the same face, so all intermediate structures are embedding-inducing by Lemma~\ref{lem:lina-invertable}.
	
	It remains to explain the approach if at least one of the path from $D_1$ to $B_1$ or the path from $D_2$ to $B_2$ consists only of a single cycle component. We detail the case that the path from $D_1$ to $B_1$ is a single cycle component. Then only the vertices $u, s_1, t_1$ of this cycle are part of the coalesced $3$-connected component in $G'$. In the explanation above we can therefore replace $H_1$ by the graph that results from $G_2$ by adding the vertex $u$ and the three edges $(u,s_1), (u,t_1), (u,s_2)$. Using Lemma~\ref{lem:lina-addvertex}. we can obtain the required information $\M(H_1,t_1, u, s_2, p)$ also for this graph.
\end{proof}

\subsubsection{Maintaining Isomorphisms between Triconnected Components}
We have seen that information for the Tutte embedding can be maintained modulo primes, as long as the involved inverses exist modulo these primes. To prove Proposition~\ref{prop:3iso}, we show how to extract an isomorphism between isomorphic $3$-connected components.

\begin{proofof}{Proposition~\ref{prop:3iso}}
	If $C, C^*$ are cycle components, then they are isomorphic if and only if their lengths are equal. There is an isomorphism that maps the given vertices as given in the proposition statement if the distances between the vertices are pairwise equal. This information is available according to Lemma~\ref{lemma:cycle-distances}.
	
	Suppose $C, C^*$ are $3$-connected components.
	As explained in Section~\ref{sec:tutte:overview}, for a $3$-connected planar graph $H$, we can obtain its Tutte embedding with respect to three pinned vertices $v_{i_1}, v_{i_2}, v_{i_3}$ from the inverse of the matrix $\mat T = \mat T(G, v_{i_1}, v_{i_2}, v_{i_3})$.
	To determine whether two $3$-connected planar graphs are isomorphic, given three vertices in one graph and the assumed isomorphic copies in the other graph, we compare the obtained embeddings to find matching pairs of vertices in the two graphs.
	
	We face two problems: (1) we do not have $\mat T^{-1}$ available directly but only $\mat T^{-1} \bmod p$ for many primes $p$; and (2) we permanently ``lose'' the information $\mat T^{-1} \bmod p$ for a prime $p$ if $(H,v_{i_1}, v_{i_2}, v_{i_3}, p)$ is not modulo embedding-inducing at some point, that is, if the encoded matrix inverses do not exist modulo the prime $p$.
	
	To solve (1), observe that the entries of $\mat T$ are bounded by the number $n$ of vertices. As the numbers in any entry $\frac{a}{b}$ in $\mat T^{-1}$ are bounded by the determinant of $\mat T$, the numbers in $\mat T^{-1}$ are at most  $n! n^n$, which for large enough~$n$ is less than $2^{n^2}$. This means that if $\mat T^{-1} \bmod p$ is available for $n^2$ many different primes $p$, we can determine for two embeddings whether they are equal, by checking whether they are equal modulo all $n^2$ different primes.
	
	For (2), assume that the inverse $\mat T(H, v_{i_1}, v_{i_2}, v_{i_3})^{-1} \bmod p$ is available for all primes $p \in P$ for some set $P$, i.e., $\M(H,v_{i_1}, v_{i_2}, v_{i_3}, p)$ is stored as auxiliary information for each $p \in P$.
	According to Lemma~\ref{lem:tutte:coherent}, this information can be updated using first-order formulas, provided that (a) for the changed graph $H'$ the structure $(H',v_{i_1}, v_{i_2}, v_{i_3}, p)$ is modulo embedding-inducing, that is, the inverse $\mat T(H', v_{i_1}, v_{i_2}, v_{i_3})^{-1} \bmod p$ exists, and (b) also a constant number of further matrix inverses exist modulo $p$.
	As the respective structures are embedding-inducing, that is, the inverses exist when evaluated over the rationals instead of modulo some prime, the only reason for non-existence of an inverse is that $p$ divides the determinant of some of the involved matrices. %
	As again the determinants are bounded by $2^{n^2}$, less than $n^2$ many primes can divide each determinant, so only for $\bigO(n^2)$ many primes $p \in P$ the auxiliary information cannot be obtained for the new graph $G'$.
	In total we maintain the auxiliary information for $\bigO(n^6)$ many graphs $H$ (one for each coherent path), in every step the auxiliary information might not be available any more for $\bigO(n^8)$ many primes.

To address that at some point we ``run out'' of primes, we use that whenever a graph problem %
can be maintained for $\log^i(n)$ many steps using first-order formulas, starting from auxiliary relations computed by $\AC^i$ circuits, then the problem is in \DynFO \cite[Theorem~4.2]{DMSVZ19}.
Here, $\AC^i$ circuits are uniform circuit families of polynomial size and depth $\log^i(n)$, and $i \in \N$ is arbitrary.

The $3$-connected components can be identified in $\LOGSPACE \subseteq \AC^1$ \cite[Lemma~4.3]{DattaLNTW22} and matrix inverses can be computed in $\NC^2 \subseteq \AC^2$ (cf.~\cite{Cook85}). Therefore, one can compute in $\AC^2$ the auxiliary information $\M(\cdot)$ for $n^{11}$ many primes, which by the prime number theorem can be found among the first $n^{12}$ many numbers.

According to Lemma~\ref{lem:tutte:coherent}, we can maintain this information for $\log^2(n)$ many steps, losing $\bigO(n^8)$ many primes in every step, but still having the information available for $n^2$ primes in the end.

As reasoned above, this is enough to express the relation $\ISOt$ for the case that $a,b,c$ and $a^*, b^*, c^*$ are on the same face in their respective $3$-connected components. Otherwise, one can existentially quantify two triples of vertices that are on the same face in their respective components such that an inferred isomorphism maps $a$ to $a^*$, $b$ to $b^*$ and $c$ to $c^*$, if such an isomorphism exists.
\end{proofof}
 
\subsection{Maintaining Isomorphism Information for Biconnected Components}\label{section:proof-details:biconnected}

In this section, we show how to maintain which biconnected components of a planar graph are isomorphic. Actually, slightly more generally,  we show this result for \emph{coloured graphs}. This will later be helpful for maintaining isomorphic connected components. Vertices in coloured graphs can be coloured with one of polynomially many colours. Formally, we use an additional relation $F$ that for a vertex~$v$ may contain a tuple $(v, \tpl f)$, for some tuple $\tpl f$ of vertices, with the meaning that $v$ has colour $\tpl f$.

\begin{restatable}{proposition}{theoremIsotwo}
\label{prop:2iso}
	The relation $\spqrIso$ that for a coloured planar graph $G$ contains all tuples $(a, b, a^*, b^*)$ such that:
	\begin{itemize}
		\item $a, b$ and $a^*, b^*$ are vertices of biconnected components $B$ and $B^*$ of $G$, respectively, and
		\item $B$ and $B^*$ are isomorphic via an isomorphism $\pi$ with $\pi(a) = a^*$ and $\pi(b) = b^*$
	\end{itemize}
	can be maintained in \DynFO under insertions and deletions of edges and changes of vertex colours, provided that $G$ stays planar. Additionally, (i) for changes of type $\es[+]{1}{2}$ and $\es[-]{2}{1}$, all vertices of an emerging or vanishing cycle component may change their colour, and (ii) all vertices of a colour may be recoloured with an unused colour.
\end{restatable}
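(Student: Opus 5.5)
The plan is to reduce isomorphism of coloured biconnected components to isomorphism of rooted triconnected component trees whose nodes carry extra information, following the Etessami-style dynamic tree isomorphism of Datta et al. Using Lemma~\ref{theorem:maintainability-decomposition}, we have $\triTree(B)$ for every biconnected component $B$. We root $\triTree(B)$ at a separating pair or triconnected component determined by $(a,b)$, i.e.\ at the unique triconnected component node containing the (possibly virtual) pair $\{a,b\}$, or at the separating pair node $\{a,b\}$ itself.

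We then maintain, for the triconnected component trees, the analogues of the tree-isomorphism relations.
\begin{itemize}
\item $\ContextIso$ holds for pairs of contexts $X,X^*$ whose root separating pairs and hole separating pairs are mapped in a specified orientation (e.g.\ $(s,t)\mapsto(s^*,t^*)$). It is defined to hold iff $\graph(X)\simeq\graph(X^*)$ via an isomorphism respecting colours, root orientation and hole orientation.
\item $\SubtreeIso$ is the analogous relation for subtrees.
\item $\#\textsc{iso-siblings}$ records the number of isomorphic siblings in the oriented, coloured sense.
\item $\dist$ records distances in $\triTree$.
\end{itemize}
Given these, $\spqrIso(a,b,a^*,b^*)$ is first-order definable. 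Root both trees as above, then check that the root components are isomorphic via Proposition~\ref{prop:3iso} by an isomorphism sending $a\mapsto a^*$, $b\mapsto b^*$. This isomorphism is determined up to the third pinned vertex, which we existentially quantify. We further check that it preserves colours, which is FO using the access to $\pi_{abc}$ through $\ISOt$. Finally, for each child separating pair $\{u,v\}$, the image $\{\pi(u),\pi(v)\}$ must be a child separating pair with $\SubtreeIso$ in the induced orientation.

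Because each separating pair has at most two orientations and the isomorphism of a $3$-connected component is fixed by three vertices, no counting over exponentially many choices is needed. The one exception is cycle and P-nodes (bonds of several parallel components). For P-nodes we use the sibling counts: the children of a separating pair can be permuted freely among oriented-isomorphic classes, and matching multisets is expressed by comparing $\#\textsc{iso-siblings}$ counts class by class.

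The updates are handled by case distinction over the change types of Section~\ref{section:changes}.
\begin{itemize}
\item For $\es[\pm]{3}{3}$, $\es[\pm]{2}{2}$ and colour changes, only constantly many tree nodes change locally. Contexts are then updated as in the tree algorithm: a changed context is split at the affected node into constantly many unchanged contexts and subtrees, plus one modified node compared via Proposition~\ref{prop:3iso}.
\item For $\es[+]{2}{3}$ a path collapses into one node $R_{uv}$. The new component's isomorphism data is available from Proposition~\ref{prop:3iso}, because Lemma~\ref{lem:tutte:coherent} maintains Tutte data for $G[\rho]$ for all coherent paths. The hanging subtrees are unchanged, so their isomorphism types remain stored.
\item The colour-change clauses (i) and (ii) are local, respectively a global renaming. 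For (ii) we rename the colour tuple in all stored relations, which is FO.
\end{itemize}

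The main obstacle is the deletions $\es[-]{3}{2}$ and $\es[-]{2}{1}$, where one node unfurls into a path $R_1P_1\cdots P_kR_k$. Here we must decide $\ContextIso(X,X^*)$ for contexts meeting this fresh path without having distances or stored data for it. I would first maintain, for each edge $e$ and each $3$-connected component, the sequence of separating pairs that would arise by deleting $e$, as a relation ordered along the face-disk around $e$ (cf.\ Figure~\ref{figure:biconnected:distances}). This gives FO-definable positions and hence distances immediately after the deletion. Matching then proceeds as in Etessami: align the unfurled path with the corresponding path in $X^*$ by equal distances, compare pieces $R_i$ with their counterparts via $\ISOt$, and compare hanging subtrees via stored $\SubtreeIso$.

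The orientation of each $P_i$ is propagated along the path. Coherence of the path means the orientation of $P_1$ determines all later ones, which I would make FO-definable using the coherent-path information of \cite{DattaKM23}. Hence only two global choices need to be quantified. For $\es[-]{2}{1}$ the unfurled cycle becomes a path of biconnected components, and the triconnected trees of the parts are simply obtained by removing the cycle node. Recomputing $\#\textsc{iso-siblings}$ after these changes requires counting isomorphic siblings along the new path, which can be done in FO with $+,\times$ since there are only polynomially many counts to add.
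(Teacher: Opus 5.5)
Your architecture matches the paper's. You maintain contexts, subtrees and sibling counts of the triconnected component trees and compose them in \FO at component nodes via $\ISOt$ and at separating-pair nodes via sibling counts. You align Etessami-style along an unfurled path and propagate orientations along it by coherence using \cite{DattaKM23}.

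The genuine gap is the distance information needed when a $3$-connected component unfurls. You propose to maintain, for each edge $e$, the ordered sequence of separating pairs on the disk around $e$, and say this ``gives FO-definable positions''. But positions are counts: they cannot be recomputed in \FOar and must themselves be updated, and you do not say how. Moreover, information indexed by single edges is not closed under updates. Suppose an edge $f$ bordering a face $F_2$ at $e$ and another face $F_3$ is deleted (type $\es[-]{3}{3}$). Then $F_2$ and $F_3$ merge, and the new disk of $e$ may contain pairs $(x,y)$ with $x$ on the other face $F_1$ at $e$ and $y$ on $F_3$. Since $F_1$ and $F_3$ need not share an edge, neither the order nor the number of these pairs is stored anywhere. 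This is precisely the delayed unfurling of Figure~\ref{fig:delayed-unfurl}, and the glue faces created by $\es[+]{2}{3}$ merges cause the same problem.

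The missing idea is to define $\mathsf{Disk}(C,F_1,F_2)$ for \emph{arbitrary}, not necessarily adjacent, pairs of combinatorial faces. One then maintains $\textsf{DiskDistance}$ for all of them, together with the combinatorial embeddings of \cite{DattaKM23}. With this, a face split only filters an old cycle, and a face merge splices constantly many old cyclic orders. Lemma~\ref{lem:spqrDist} then turns the disk of the two faces at the deleted edge into distances on the unfurled path.

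Several smaller points also need fixing.
\begin{itemize}
\item Your definition of $\spqrIso$ roots the tree at a node containing $\{a,b\}$, which exists only if $a$ and $b$ share a triconnected component. For arbitrary $a,b\in B$ you need, as in the paper, a context rooted at a component containing $a$ whose hole is a component containing $b$, both pinned pointwise. This is combined with $\SubtreeIso_2$ at the hole.
\item Your remark that \FOar can add polynomially many counts is false, since that would compute majority. It is also unnecessary: the new sibling counts are obtained from stored counts of unchanged classes by constant corrections.
\item The case $\es[+]{1}{2}$ is missing. It involves a new cycle node of unbounded length and simultaneous virtual-edge insertions in all merged components. It also involves the recolouring of clause (i), which changes unboundedly many colours at once and so is not local.
\item In $\es[-]{2}{1}$, removing the cycle node also deletes virtual edges in neighbouring components, possibly unfurling them.
\item In $\es[+]{2}{3}$, split cycles create new separating-pair and cycle nodes whose subtree types must be computed before $R_{uv}$ can be compared.
\end{itemize}
The last three cases are handled in parallel per component with the two composition steps, as in the paper.
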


As described in the proof overview in Section \ref{section:proof-overview}, we adapt the context-based dynamic tree isomorphism algorithm by Datta et al.~\cite{DattaK0TVZ24}. In triconnected component trees, we specify contexts by tuples $\tpl t, \tpl r, \tpl h$ of two or three graph vertices. Each tuple uniquely represents a triconnected component (if its three vertices are contained in the component) or a separating pair (if its two vertices are  the vertices of the pair). For a context $\X(\tpl t, \tpl r, \tpl h)$ and a tuple $\tpl f = (\tpl f_{\tpl r}, \tpl f_{\tpl h})$ of colours, the \emph{recoloured context} $\rcX(\tpl t, \tpl r, \tpl h, \tpl f)$ is obtained from the (coloured) graph of $\X(\tpl t, \tpl r, \tpl h)$ by recolouring the vertices in $\tpl r$ and $\tpl h$ as given by $\tpl f$. Two recoloured contexts $X = \rcX(\tpl t, \tpl r, \tpl h, \tpl f)$ and $X^* = \rcX(\tpl t^*, \tpl r^*, \tpl h^*, \tpl f^*)$ of the triconnected component trees of $G$ are \emph{fully isomorphic} if the graphs $\graph(X)$ and $\graph(X^*)$ are isomorphic via an isomorphism~$\pi$ such that $\pi(\tpl r) = \tpl r^*$ and $\pi(\tpl h) = \tpl h^*$. Note that fully isomorphic (recoloured) contexts $X$ and $X^*$ are in particular isomorphic, so, there is a root and hole preserving isomorphism between the (recoloured) contexts. \emph{Recoloured subtrees} $\rcST(\tpl t, \tpl r, \tpl f)$ are  defined similarly by recolouring the root $\tpl r$ of a subtree. We often refer to recoloured contexts and subtrees simply as contexts and subtrees.

Our dynamic algorithm for $\spqrIso$ maintains the relations

\begin{itemize}
 \item $\textsc{x-iso}_2$ that stores tuples $(X, X^*) \df (\tpl x,  \tpl r, \tpl f, \tpl h, \tpl x^*, \tpl r^*, \tpl h^*, \tpl f^*)$ such that the recoloured contexts $X \df \X(\tpl x, \tpl r, \tpl h, \tpl f)$ and $X^* \df \X(\tpl x^*, \tpl r^*, \tpl h^*, \tpl f^*)$ are fully isomorphic,
  \item $\#\textsc{iso-siblings}_2$ that stores tuples $(\tpl x, \tpl r, \tpl y, \tpl f, m)$ such that (i) the triconnected component node $\tpl y$ is a child of the separating pair node $\tpl r$, and (ii) the subtree rooted at $\tpl y$ with nodes from $\tpl r$ recoloured by $\tpl f$ has exactly $m$ fully isomorphic siblings,
 \item $\textsc{dist}_2$ that stores tuples $(\tpl x, \tpl y, d)$ such that the distance between nodes $\tpl x, \tpl y$ of the triconnected component tree is~$d$, and
 \item $\textsc{same-face}$ that stores tuples $(\tpl c_1, \tpl c_2, a_1, a_2, s_1, s_2)$ such that $\rho(\tpl c_1, \tpl c_2, (a_1, a_2))$ is a coherent path and the vertices $s_1$ and $s_2$ that appear in two distinct separating pairs on this path are on the same face after inserting the edge $(a_1, a_2)$. The maintenance of this relation is described in \cite{DattaKM23}.
\end{itemize}

This implies Proposition~\ref{prop:2iso}, as $\spqrIso$ is easily definable from $\textsc{x-iso}_2$. We next show that $\textsc{x-iso}_2$ and $\#\textsc{iso-siblings}_2$ (Section \ref{section:biconnected:x-iso}) as well as $\textsc{dist}_2$ (Section \ref{section:biconnected:dist}) can be maintained.

\subsubsection{Maintaining Isomorphic Contexts and Sibling Counts}\label{section:biconnected:x-iso} 
We discuss how $\textsc{x-iso}_2$ and $\#\textsc{iso-siblings}_2$ can be maintained. 

\begin{restatable}{lemma}{lemmatwoisocontexts}\label{lem:2isocontexts}
	The relations $\ContextIso_2$ and $\#\textsc{iso-siblings}_2$ can be maintained in \DynFO under insertions and deletions of edges and changes of vertex colours, provided that $G$ stays planar. Additionally, (i) for changes of type $\es[+]{1}{2}$ and $\es[-]{2}{1}$, all vertices of an emerging or vanishing cycle component may change their colour, and (ii) all vertices of a colour may be recoloured with an unused colour.
\end{restatable}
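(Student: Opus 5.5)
We will prove the lemma by a case distinction over the change types of Section~\ref{section:changes}, following the context-based tree isomorphism algorithm of Datta et al.~\cite{DattaK0TVZ24}. The relation $\textsc{dist}_2$ is assumed to be maintained (Section~\ref{section:biconnected:dist}), as are $\textsc{same-face}$ and $\ISOt$ (Proposition~\ref{prop:3iso}). The invariant used throughout is the following recursive characterisation of full isomorphism of recoloured contexts.

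Two recoloured contexts $X=\rcX(\tpl x,\tpl r,\tpl h,\tpl f)$ and $X^*$ are fully isomorphic if and only if the following hold:
\begin{enumerate}
\item $\textsc{dist}_2(\tpl r,\tpl h)=\textsc{dist}_2(\tpl r^*,\tpl h^*)$.
\item Along the root-to-hole paths, the $i$-th nodes $\tpl z_i,\tpl z_i^*$ correspond. For triconnected component nodes, $\ISOt$ supplies an isomorphism of the components that respects colours and maps the two incident path separating pairs correctly.
\item Corresponding off-path children have fully isomorphic subtrees with equal multiplicities. This is checked using $\#\textsc{iso-siblings}_2$ together with a counting argument in \FOar, since the multiplicities are polynomially bounded.
\end{enumerate}
A path can be long, so this cannot be checked node by node. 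Instead, $X$ is split at a constant number of nodes into subcontexts that are unaffected by the change, and $\textsc{x-iso}_2$ is used on these pieces. The split points of $X^*$ are matched at equal distances via $\textsc{dist}_2$, exactly as in Etessami's approach. The sibling counts are updated from the new $\textsc{x-iso}_2$ by counting, in \FOar, the siblings whose subtrees, viewed as contexts with an empty hole, are fully isomorphic.

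The cases are handled as follows.
\begin{itemize}
\item \textbf{Changes that keep the tree shape}, namely $\es[\pm]{3}{3}$, colour changes, and recolouring all vertices of a colour with an unused colour. Here only the label of one node changes. A context is affected only if that node lies on it. In that case $X$ is decomposed into the part above the node, the node itself, and the part below; the outer parts are compared via the old $\textsc{x-iso}_2$ and the node via the updated $\ISOt$.
\item \textbf{Recolouring with a fresh colour.} This merely renames a colour, so it is handled by substituting the colour in the stored tuples.
\item \textbf{Merging changes}, $\es[+]{2}{3}$ and $\es[+]{2}{2}$. A path collapses into a node $R_{uv}$, or a cycle splits. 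Every new context decomposes into old contexts plus a constant number of new nodes, and these new nodes are compared using $\ISOt$ on the new 3-connected component.
\item \textbf{Changes that unfurl a node into a coherent path}, namely $\es[-]{3}{2}$, $\es[+]{1}{2}$ and $\es[-]{2}{1}$ (including new or vanishing cycle components whose colours change). This is the hard case. A context $X$ may contain a long, newly created path whose nodes carry no stored context information.
\end{itemize}

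For the unfurling case, take a context $X$ whose root-to-hole path traverses the new path $P$, and let $X^*$ be arbitrary. We guess the image of the first separating pair of $P$ in $X^*$. By coherence, the orientation of each subsequent pair is then determined, and we read it off with $\textsc{same-face}$ applied before the deletion or after the insertion. Each node of $P$ is an old subcomponent, and its hanging subtrees are old subtrees. Full isomorphism therefore reduces to three conditions, checked for all positions $i$ simultaneously by a universal quantifier:
\begin{itemize}
\item the distance-$i$ node of $X$ matches the distance-$i$ node of $X^*$ under the induced separating-pair orientation, via $\ISOt$;
\item the off-path subtrees match, via the old $\textsc{x-iso}_2$ and sibling counts;
\item the contexts above and below $P$ match, via $\textsc{x-iso}_2$ on the pieces.
\end{itemize}
If both $X$ and $X^*$ contain new paths, the same argument is applied symmetrically. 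This step needs distances along $P$ to be available immediately after the change, which is provided by $\textsc{dist}_2$.

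The main obstacle is making the separating-pair orientations along $P$ first-order definable and consistent on both sides. The first attempt will be to express the orientation of the $i$-th pair relative to the first one through face membership, using $\textsc{same-face}$. If that is insufficient, the fallback is to consult the stored $\ISOt$ tuples for consecutive components.
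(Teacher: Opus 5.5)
Your skeleton matches the paper's: split a context at constantly many nodes, locate the split points of $X^*$ via $\textsc{dist}_2$, and orient separating pairs along an unfurled coherent path with $\textsc{same-face}$. However, several cases rest on false structural claims.

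\emph{Colour changes.} A colour change of $v$ does not alter ``the label of one node'': the nodes of the triconnected forest containing $v$ form a subtree of unbounded size, so reusing the old $\ContextIso_2$ for ``the part below'' is unsound. The missing idea is that $\ContextIso_2$ is indexed by recolourings of root and hole. Let $\tpl x$ be the node of $X$ closest to the root that contains $v$. The sub-context rooted at $\tpl x$ (or at the child separating pairs of $\tpl x$ containing $v$, if $\tpl x$ is a component), with the new colour of $v$ passed as recolouring parameter, is an \emph{old} entry. Only the part above $\tpl x$ then needs the FO claims, via a least-common-ancestor split if $\tpl x$ is off the root-to-hole path. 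The same mechanism is what makes requirement (i), with unboundedly many simultaneous recolourings, tractable.

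\emph{The types $\es[-]{2}{1}$ and $\es[+]{1}{2}$.} In the triconnected forest neither creates a long new path; such paths arise for these types only in $\biTree$. $\es[-]{2}{1}$ deletes a cycle node, so every new context was an old one, up to virtual-edge deletions and at most two recoloured vertices per resulting component. $\es[+]{1}{2}$ creates a cycle node of unbounded degree whose (possibly new) separating-pair neighbours $\{c_{i-1},c_i\}$ head old trees $\triTree(B_i)$, each of which receives a virtual edge at the same time. Your orientation propagation is irrelevant here. You need $\ISOt$ data for the new cycle (its length from the $\biTree$-distance of $c_0$ and $c_k$, cf.\ Lemma~\ref{lemma:cycle-distances}), the single-change cases applied in parallel to all $B_i$, and the observation that the recoloured $c_{i-1},c_i$ form the root of the subtree hanging off the cycle. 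Similarly, $\es[+]{2}{3}$ can create unboundedly many new cycle and separating-pair nodes, one pair per cycle on the merged path. They sit at bounded depth below $R_{uv}$ and must be processed in parallel, not as ``a constant number of new nodes''.

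\emph{The bent unfurled path in $\es[-]{3}{2}$.} The unfurled path $\rho$ need not run along the root-to-hole path of $X$. The node $C$ of $\rho$ closest to the root of $X$ is in general interior, so $\rho$ descends from $C$ in up to two branches $K_1,K_2$, at most one of which contains the hole. A branch without the hole is an off-path subtree made of new nodes. Your condition that off-path subtrees match ``via the old $\ContextIso_2$ and sibling counts'' fails exactly there, and the new $\#\textsc{iso-siblings}_2$ values along $\rho$ cannot be obtained by counting until this subtree's type is known. You need a subtree version of the path argument, as in the paper: existentially choose the bottom node of the candidate image of the branch at the right depth in $X^*$, check in parallel for all $\ell$ that the segment from $P_{\ell-1}$ to $P_\ell$ matches its counterpart, and only then lift to $C$ with Claims~\ref{clm:subtree-iso-tri-fo} and~\ref{clm:subtree-iso-sp-fo}.

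\emph{Your fallback for orientations.} Discard the idea of deriving orientations from $\ISOt$ on consecutive components. Locally, $\ISOt$ only tells you how the orientation of the image of $P_{\ell+1}$ depends on that of $P_\ell$. Composing these dependencies along a path of unbounded length amounts to a parity computation, which is not first-order. The orientations must be read off globally from $\textsc{same-face}$, after fixing the orderings of both end pairs of the path and of their candidate images in $X^*$.
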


Without loss of generality, we assume that a relation $\SubtreeIso_2$ is available which contains all pairs $(\tpl t, \tpl r, \tpl f, \tpl t^*, \tpl r^*, \tpl f^*)$ representing fully isomorphic recoloured subtrees, as it can be easily defined from $\textsc{x-iso}_2$.

An important building block for updating the relations $\textsc{x-iso}_2$ and $\#\textsc{iso-siblings}_2$ is that first-order formulas can express whether two contexts rooted at $\tpl r$ and $\tpl r^*$ are fully isomorphic, given (i) isomorphisms between the triconnected components of  $\tpl r$ and $\tpl r^*$ and (ii) the information whether the contexts rooted at the children of  $\tpl r$ and $\tpl r^*$ are pairwise fully isomorphic. The first information is available thanks to Proposition~\ref{prop:3iso}; the latter information either because these contexts were not affected by the change and therefore the corresponding parts of $\ContextIso_2$ are still valid, or because it was constructed by a previous application of these first-order formulas.

\begin{claim}\label{clm:subtree-iso-tri-fo}
	Let $X = \rcX(\tpl t, \tpl r, \tpl h, \tpl f)$ and $X^* = \rcX(\tpl t^*, \tpl r^*, \tpl h^*, \tpl f^*)$ be recoloured contexts rooted at the triconnected components $R$ and $R^*$ of  $\tpl r$ and $\tpl r^*$, respectively. If
	\begin{itemize}
		\item the relation $\ISOt$ and
		\item the information for each child $C$ of $R$ and each child $C^*$ of $R^*$ whether the subcontext of $X$ rooted at $C$ is fully isomorphic to the subcontext of $X^*$ rooted at $C^*$
	\end{itemize}
	 are available, then some \FO formula expresses whether $X$ and $X^*$ are fully isomorphic.
\end{claim}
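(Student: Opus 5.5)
The plan is to express full isomorphism of $X$ and $X^*$ as the existence of a \emph{local} witness at the root, which is a bounded object, together with a check on children that the second hypothesis already supplies.

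\textbf{Root isomorphism.} By Proposition~\ref{prop:3iso}, every isomorphism $\pi$ between the components $R$ and $R^*$ is determined by the images of three vertices. So the formula existentially quantifies $a,b,c \in R$ and $a^*,b^*,c^*\in R^*$. It then uses $\ISOt$ to access $\pi=\pi_{abc}$ as the relation $\{(d,d^*) \mid (a,b,c,d,a^*,b^*,c^*,d^*)\in \ISOt\}$. For cycle components, the same relation already encodes the maps determined by distances (Lemma~\ref{lemma:cycle-distances}).

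\textbf{Local conditions on $\pi$.} The formula requires:
\begin{enumerate}
\item $\pi$ is an isomorphism of $R$ onto $R^*$ that respects real edges versus virtual edges. A real edge is an edge of $G$; a virtual edge is a separating pair.
\item $\pi$ maps the vertices shared with the parent separating pair $\tpl r$ to $\tpl r^*$, with the prescribed orientation.
\item The recoloured vertices of $\tpl r$ and $\tpl r^*$ receive matching colours under $\tpl f,\tpl f^*$.
\item Every other vertex $v$ of $R$ has the same colour as $\pi(v)$.
\end{enumerate}
If $\tpl r$ is itself a separating pair node rather than $R$, the same conditions are placed on the unique triconnected child.

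\textbf{Children.} Each child $C$ of $R$ is a separating pair node $\{u,w\}$ occurring as a virtual edge of $R$. Since $\pi$ is a bijection on virtual edges, it induces a bijection from the children of $R$ to the children of $R^*$, and it also fixes the orientation $u\mapsto\pi(u)$, $w\mapsto\pi(w)$. The formula then demands, for every child $C=\{u,w\}$, that the subcontext at $C$ is fully isomorphic to the subcontext at $\pi(C)$, with root vertices mapped as $\pi$ prescribes. This is given by the second hypothesis, read with the root-orientation encoded in the tuple ordering (choosing $\tpl r$-tuples $(u,w)$ and $(\pi(u),\pi(w))$). If the hole $\tpl h$ lies below $C$, the formula additionally requires that $\pi(C)$ is the child containing $\tpl h^*$; this is first-order expressible using the tree relations of Lemma~\ref{theorem:maintainability-decomposition}.

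\textbf{Correctness.} Soundness holds because the local isomorphism $\pi$ and the child isomorphisms agree on the shared pairs $\{u,w\}$, which are exactly the overlap of $R$ with the graph of the subcontext. Their union is therefore a well-defined bijection on $\graph(X)$. It preserves edges, since every edge of $G$ lies in exactly one triconnected component or is represented once; virtual edges are ignored, because a virtual edge corresponds to a real edge exactly when the real edge is present, which condition 1 checks. It preserves colours and $\tpl r,\tpl h$. For completeness, any full isomorphism of $X$ and $X^*$ maps the triconnected decomposition to itself, by uniqueness of the tree. Hence it restricts to an isomorphism $R\to R^*$ that is determined by three points and so appears in $\ISOt$, and it restricts to full isomorphisms of the child subcontexts.

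\textbf{Main obstacle.} The delicate step is orientation bookkeeping at separating pairs. A child subcontext may be fully isomorphic to its image only with $u,w$ swapped. The second hypothesis must therefore be queried for the specific orientation induced by $\pi$, rather than for plain isomorphism of the subcontexts.
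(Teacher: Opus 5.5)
Your route is the paper's. You obtain $\pi$ on $R$ from $\ISOt$ with $\pi(\tpl r)=\tpl r^*$, check colours, and require that every child separating pair $\{u,w\}$ of $R$ goes to a child of $R^*$ whose subcontext is fully isomorphic with roots $(u,w)\mapsto(\pi(u),\pi(w))$. Your care about orientation and about real versus virtual edges refines what the paper leaves implicit.

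One step fails as stated: ``since $\pi$ is a bijection on virtual edges, it induces a bijection from the children of $R$ to the children of $R^*$.'' When $\tpl t\neq\tpl r$, $R$ also has a parent separating pair, its neighbour towards $\tpl t$. That pair is a virtual edge of $R$ but not a node of $X$, and nothing in your conditions prevents $\pi$ from sending it to a \emph{child} of $R^*$.

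Note that $\tpl r$ is a triple of vertices representing $R$ itself, not a ``parent separating pair'' as your condition~2 suggests. So the root condition $\pi(\tpl r)=\tpl r^*$ says nothing about that parent pair. (Also, a separating-pair root generally has many triconnected children, not a unique one; this is why the companion claim needs sibling counts.)

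Concretely, let $e_1,e_2$ be virtual edges of $R$, take $\tpl t$ beyond $e_1$ and $\tpl h$ below $e_2$. Compare $X=\X(\tpl t,\tpl r,\tpl h)$ with $X^*=\X(\tpl r,\tpl r,\tpl h)$ via the identity. Every child of $R$ in $X$ is a child of $R$ in $X^*$ with the same subcontext, so all your conditions hold. Yet $\graph(X^*)$ also contains everything beyond $e_1$. Hence the contexts are not fully isomorphic, and your union map is not onto $\graph(X^*)$.

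The fix is exactly the paper's final conjunct: also check that $R^*$ has no child separating pair outside the image under $\pi$ of the children of $R$ (i.e.\ run the child check for $\pi^{-1}$ as well). With that added, your soundness argument goes through. Your completeness remark (``by uniqueness of the tree'') is as rigorous as the paper's own unproved note that fully isomorphic contexts are isomorphic as contexts, and no more.
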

\begin{proof}
	The first-order formula checks that $R$ and $R^*$ are isomorphic via an isomorphism $\pi$ that maps $\tpl r$ to $\tpl r^*$ via $\ISOt$. In particular, it checks whether $\pi$ respects the colouring of the vertices.
	Then, for every child separating pair $\{s_1, s_2\}$ of $R$ it checks that $\{\pi(s_1), \pi(s_2)\}$ is a child separating pair of $R^*$ and that the two subcontexts rooted at these separating pairs are fully isomorphic. Finally, it checks that $R^*$ has no further child separating pairs.
\end{proof}

\begin{claim}\label{clm:subtree-iso-sp-fo}
	Let $X = \rcX(\tpl t, \tpl r, \tpl h, \tpl f)$ and $X^* = \rcX(\tpl t^*, \tpl r^*, \tpl h^*, \tpl f^*)$ be recoloured contexts that are rooted at separating pairs $S$ and $S^*$ described by $\tpl r$ and $\tpl r^*$, respectively. If
	\begin{itemize}
		\item the number of isomorphic siblings of each child of $S$ and $S^*$, and
		\item the information whether the subcontext of $X$ rooted at $C$ is fully isomorphic to the subcontext of $X^*$ rooted at $C^*$ ,for each child $C$ of $S$ and each child $C^*$ of $S^*$,
	\end{itemize}
	are available, then some \FO formula expresses whether $X$ and $X^*$ are fully isomorphic.
\end{claim}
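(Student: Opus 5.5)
The approach mirrors Claim~\ref{clm:subtree-iso-tri-fo}, with one difference. The children of a separating pair node $S=\{s_1,s_2\}$ are triconnected component nodes, and they may be permuted arbitrarily among themselves; they need not be mapped via a fixed isomorphism of a single component. The isomorphism therefore has to be established by a counting argument over isomorphism classes of children, and this is exactly what the sibling counts provide.

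The first step is the root. The formula checks that the recolouring $\tpl f$ of $\tpl r$ agrees with $\tpl f^*$ on $\tpl r^*$ under the correspondence $\tpl r \mapsto \tpl r^*$ prescribed by the tuple order. This fixes $\pi(s_1)=s_1^*$ and $\pi(s_2)=s_2^*$, which is the required root-preserving condition. It also handles the orientation issue, since full isomorphism of the subcontexts is defined with respect to the ordered tuples. Next, the formula identifies the distinguished child $C_h$ of $S$ on the path to the hole $\tpl h$, if the hole lies strictly below $S$; it does the same for $C^*_h$ in $X^*$. Such a child is definable from $\textsc{dist}_2$. The formula requires that the subcontext of $X$ rooted at $C_h$ and the subcontext of $X^*$ rooted at $C^*_h$ are fully isomorphic, which is given by the second hypothesis.

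The remaining children are ordinary recoloured subtrees. For these, $X\simeq X^*$ holds if and only if, for every isomorphism class, the number of non-hole children of $S$ in that class equals the corresponding number for $S^*$. The formula expresses this as follows. For each non-hole child $C$ of $S$, it asserts the existence of a non-hole child $C^*$ of $S^*$ with fully isomorphic subtrees such that the multiplicities match, and symmetrically for each child of $S^*$. The multiplicity of the class of $C$ is read off the given sibling count $m$ for $C$, i.e., $m+1$ counting $C$ itself. One correction is needed when $C_h$ belongs to the same class as $C$ in the subtree sense, because the sibling count then includes the hole child. So the formula checks, via the given subcontext/subtree isomorphism information between $C_h$ and $C$, whether $C_h$ is counted, and subtracts $1$ in that case. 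Comparing and subtracting small numbers is possible in \FOar.

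Correctness amounts to showing that matching multiplicities allow us to glue the children's full isomorphisms into an isomorphism of $\graph(X)$ and $\graph(X^*)$. Distinct children of $S$ share only the vertices $s_1,s_2$, and every child isomorphism maps $s_i\mapsto s_i^*$ consistently. Hence the union of these maps is a well-defined bijection, and it preserves edges (including a possible real edge $s_1s_2$, which the child components encode via their virtual edges). The converse direction is immediate, since an isomorphism maps children of $S$ to children of $S^*$ bijectively, preserving classes. I expect the main obstacle to be the bookkeeping around the hole child: the sibling counts, which are stated for subtrees, must be adjusted correctly to the context setting, and the case in which the hole is $S$ itself must be handled separately (there are then no children, so the check reduces to comparing the roots).
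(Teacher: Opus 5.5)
Your proposal is correct and uses the same counting argument as the paper: match each child of $S$ to a child of $S^*$ whose subcontext is fully isomorphic and whose sibling count is equal, and rule out unmatched children of $S^*$. Your version is more careful than the paper's short sketch, which treats all children uniformly and does not single out the hole child. Note, though, that your hole-child correction needs within-$X$ information on whether the full subtree of $C_h$ is isomorphic to that of $C$. This is not literally among the claim's hypotheses, which only compare children of $S$ with children of $S^*$, but it is available where the claim is applied (via $\SubtreeIso_2$ for unaffected subtrees, or recomputed for the changed child).
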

\begin{proof}
	The first-order formula checks for each child triconnected component of $S$ whether there is a child triconnected component of $S^*$ such that the respective subcontexts rooted at these components are fully isomorphic, that the number of isomorphic siblings in the respective triconnected component trees are the same, and that $S^*$ has no child triconnected component whose subcontext is not fully isomorphic to the subcontext of a child of~$S$.
\end{proof}

\begin{proofof}{Lemma~\ref{lem:2isocontexts}}
	We use the two facts above to express whether two contexts are fully isomorphic after a change. To this end, all types of changes have to be explored. Here, we only discuss changes of types $\es[+/-]{3}{3}$ and $\es[-]{3}{2}$ that affect only one of the contexts. The remaining cases are discussed in the appendix.

\subparagraph*{Case $\es[+/-]{3}{3}$.}
Let $C$ be the changed $3$-connected component in the context~$X$. Let $\{c_1, c_2\}$ be the parent separating pair of $C$ in $X$ and let $c_3$ be another graph vertex in~$C$.
We distinguish two cases, depending on the position of $C$ relative to $\tpl h$ (see Figure~\ref{fig:iso2:3t3}).

First suppose that $C$ is on the path from $\tpl h$ to $\tpl r$ in $X$.
The  information in $\#\textsc{iso-siblings}_2$ for $\{c_1, c_2\}$ can be updated by a first-order formula as follows. If $C$ was isomorphic to any other child triconnected component of $\{c_1, c_2\}$ before the change, the counts for these children can be decremented by one. Because the isomorphism information for children of $C$ is not affected by the change, according to Claim~\ref{clm:subtree-iso-tri-fo}, a first-order formula can check whether any other child of $\{c_1, c_2\}$ is isomorphic to $C$ after the change, increment the counts for the corresponding children by one and copy the resulting count for $C$. Any further update of the relation $\#\textsc{iso-siblings}_2$ can be done along the same lines as for the children of $\{c_1, c_2\}$.

To update $\textsc{x-iso}_2$ for $X$ and $X^*$,  a first-order formula can test whether $X$ and $X^*$ are fully isomorphic after the change by first existentially quantifying vertices $c^*_1, c^*_2, c^*_3 = \tpl c^*$, where $\{c^*_1, c^*_2\}$ is the parent separating pair of the component $C^*$ of $\tpl c^*$ that is the isomorphic copy of $C$, and verifying that the contexts $\rcX(\tpl t, \tpl c, \tpl h, \tpl f_1)$ and $\rcX(\tpl t^*, \tpl c^*, \tpl h^*, \tpl f^*_1)$ are fully isomorphic, which is again possible according to Claim~\ref{clm:subtree-iso-tri-fo}.
Also, the formula can check whether the contexts $\rcX(\tpl t, (c_1,c_2), \tpl h, \tpl f_2)$ and $\rcX(\tpl t^*, (c^*_1,c^*_2), \tpl h^*, \tpl f^*_2)$ are fully isomorphic, following Claim~\ref{clm:subtree-iso-sp-fo}.
It remains to check that also the contexts $\rcX(\tpl t, \tpl r, (c_1,c_2), \tpl f_3)$ and $\rcX(\tpl t, \tpl r, (c^*_1, c^*_2), \tpl f_3)$ with hole $(c_1,c_2)$ and $(c^*_1,c^*_2)$ respectively are fully isomorphic, which can be read from $\ContextIso_2$, as these contexts are unaffected by the change.
Here, $\tpl f_i, \tpl f^*_i$ encode the colours of the corresponding vertices in the recoloured graphs $\graph(X)$ and $\graph(X^*)$, respectively.

Now suppose that $C$ is not on the path from $\tpl h$ to $\tpl r$ in $X$. Then let $A$ be the least common ancestor of $\tpl h$ and $C$ in the context $X$, which can be identified using the distance information on the tree, and let $\tpl a$ be a tuple of vertices representing it. Analogously to the explanation above for the contexts $\rcX(\tpl t, \tpl r, \tpl h, \tpl f)$ and $\rcX(\tpl t^*, \tpl r^*, \tpl h^*, \tpl f^*)$, a first-order formula can determine whether the subtrees $\rcST(\tpl t, \tpl b, \tpl f_4)$ and $\rcST(\tpl t^*, \tpl b^*, \tpl f^*_4)$ are fully isomorphic, where $\tpl b$ represents the child of $A$ on the path to $C$ and $\tpl b^*$ is an existentially quantified tuple of vertices representing its isomorphic copy.
The context rooted at the child $\tpl d$ of $A$ that includes the hole is not affected by the change, so using either Claim~\ref{clm:subtree-iso-tri-fo} or Claim~\ref{clm:subtree-iso-sp-fo}, one can determine whether $\rcX(\tpl t, \tpl a, \tpl h, \tpl f_5)$ and $\rcX(\tpl t^*, \tpl a^*, \tpl h^*, \tpl f^*_5)$ are fully isomorphic, where $\tpl a^*$ is the quantified tuple that represents the isomorphic copy of $A$.

If also the contexts $\rcX(\tpl t, \tpl r, \tpl a, \tpl f_6)$ and $X(\tpl t^*, \tpl r^*, \tpl a^*, \tpl f_6)$ are fully isomorphic, so are $X$ and $X^*$.
Again, $\tpl f_i, \tpl f^*_i$ encode the colours of the corresponding vertices in the respective recoloured graphs.

\subparagraph*{Case $\es[-]{3}{2}$.}
In the case that a $3$-connected component unfurls into a path $\rho$ of alternating separating pairs and triconnected components after an edge deletion, let $C$ be the component of this path that is closest to the root in the context $X$. See also Figure~\ref{fig:iso2:3t2}.
At most two children $K_1$ and $K_2$ of $C$ are on the path $\rho$; for the other children, the old isomorphism information is still valid. Our goal is to compute the isomorphism information for $K_1$ and $K_2$, then the information for $C$ can be computed according to Claim~\ref{clm:subtree-iso-tri-fo} or Claim~\ref{clm:subtree-iso-sp-fo}, depending on the type of $C$. Isomorphisms of the full context $X$ can then be computed as detailed in the previous cases.

We explain how the isomorphism information for $K_1$ is computed, the computation for $K_2$ is analogous. We assume that $K_1$ is a separating pair $\{a_1, b_1\}$; in case $K_1$ is a triconnected component, we apply the following to its child separating pair on the path $\rho$ and apply Claim~\ref{clm:subtree-iso-tri-fo} to get the isomorphism information of $K_1$.

Let $(P_1 = K_1) D_1 P_2 \cdots P_k D_k$ be the subpath of $\rho$ that is below $K_1$, where the $P_i$ are separating pair vertices and the $D_i$ are triconnected component vertices. Note that the old isomorphism information for the subtrees below these vertices that do not intersect $\rho$ are still valid. However, we cannot iteratively use Claim~\ref{clm:subtree-iso-tri-fo} and Claim~\ref{clm:subtree-iso-sp-fo} to lift this information to the subtree of $K_1$, as the length of the path is not bounded by a constant.
Instead, a first-order formula will test in parallel for each $\ell$ from $2$ to $k$ whether the context with root $P_{\ell-1}$ and hole $P_\ell$ is fully isomorphic to its counterpart.

Putting this information together requires some care, as for each separating pair $P_\ell = \{a_\ell, b_\ell\}$ and its candidate isomorphic copy $P^*_\ell = \{a^*_\ell, b^*_\ell\}$ there is the possibility to map $(a_\ell, b_\ell)$ to $(a^*_\ell, b^*_\ell)$ or to $(b^*_\ell, a^*_\ell)$.
Note however that an ordering of $P_1$ and $P_k$ as well as of the candidate isomorphic copies $P^*_1$ and $P^*_k$ fixes an ordering of the other separating pairs: if the ordering of $P_1$ is $(a_1, b_1)$ and the ordering of $P_k$ is $(a_k, b_k)$, after inserting an edge $(a_1, a_k)$, every separating pair $P_2, \ldots, P_{k-1}$ has exactly one vertex that is on the same face as $a_1$ and $a_k$ and this vertex can be identified using the auxiliary relation $\textsc{same-face}$.
Any isomorphism has to respect this property, which allows a first-order formula to globally fix a mapping between the vertices of the separating pairs and then to check whether this mapping can be extended to an isomorphism of the whole context.
\end{proofof}

\subsubsection{Maintaining Distances}\label{section:biconnected:dist}

 \newcommand{\Disk}{\ensuremath{\mathsf{Disk}}}

 Distances in triconnected component trees are not affected by changes of type $\es[+/-]{3}{3}$ and can be easily updated after changes of type $\es[+]{2}{3}$.

 Changes of type $\es[-]{3}{2}$ are more difficult, as a node for a 3-connected component $C$ can unfurl into a long path. We exploit the unique combinatorial embedding of $C$ to determine the length of that path. A key observation is that the separating pairs $\{a_1,b_1\}, \ldots, \{a_k, b_k\}$ in the unfurled path appear in this order in the \emph{disk} formed by the two faces $F_1$ and $F_2$ adjacent to the deleted edge in $C$ (see Figure \ref{figure:biconnected:distances}). Within $C$, for each such pair $\{a_i, b_i\}$, one of the vertices lies on $F_1$ and the other on $F_2$.

 \begin{figure}[t]
	\centering
	\scalebox{0.7}{%

\tikzset{every picture/.style={line width=0.75pt}} %

\begin{tikzpicture}[x=0.75pt,y=0.75pt,yscale=-.8,xscale=.8]
	\draw [color={rgb, 255:red, 255; green, 0; blue, 0 }  ,draw opacity=1 ][line width=2.25]    (144.01,70) -- (245.99,70) ;
	\draw  [fill={rgb, 255:red, 184; green, 233; blue, 134 }  ,fill opacity=1 ][dash pattern={on 3.75pt off 3pt on 7.5pt off 1.5pt}] (360,225) -- (328.49,320.8) -- (245.99,380) -- (144.01,380) -- (61.51,320.8) -- (30,225) -- (61.51,129.2) -- (144.01,70) -- (138.43,169.13) -- (115,225) -- (138.43,280.87) -- (195,304.02) -- (251.57,280.87) -- (275,225) -- (251.57,169.13) -- (245.99,70) -- (328.49,129.2) -- (360,225) ;
	\draw  [fill={rgb, 255:red, 255; green, 255; blue, 255 }  ,fill opacity=1 ][dash pattern={on 3.75pt off 3pt on 7.5pt off 1.5pt}][line width=0.75]  (130,130) -- (138.43,169.13) -- (100,180) -- (70,160) -- (61.51,129.2) -- (90,120) -- cycle ;
	\draw  [fill={rgb, 255:red, 255; green, 255; blue, 255 }  ,fill opacity=1 ][dash pattern={on 3.75pt off 3pt on 7.5pt off 1.5pt}][line width=0.75]  (288.31,184.99) -- (251.57,169.13) -- (265.04,131.52) -- (298.74,118.72) -- (328.49,129.2) -- (319.69,158.24) -- cycle ;
	\draw  [fill={rgb, 255:red, 255; green, 255; blue, 255 }  ,fill opacity=1 ][dash pattern={on 3.75pt off 3pt on 7.5pt off 1.5pt}][line width=0.75]  (90,210) -- (115,225) -- (90,240) -- (60,240) -- (30,225) -- (60,210) -- cycle ;
	\draw  [fill={rgb, 255:red, 255; green, 255; blue, 255 }  ,fill opacity=1 ][dash pattern={on 3.75pt off 3pt on 7.5pt off 1.5pt}][line width=0.75]  (300,240) -- (275,225) -- (300,210) -- (330,210) -- (360,225) -- (330,240) -- cycle ;
	\draw  [fill={rgb, 255:red, 255; green, 255; blue, 255 }  ,fill opacity=1 ][dash pattern={on 3.75pt off 3pt on 7.5pt off 1.5pt}][line width=0.75]  (300,280) -- (328.49,320.8) -- (290,310) -- (251.57,280.87) -- (270,270) -- cycle ;
	\draw  [fill={rgb, 255:red, 255; green, 255; blue, 255 }  ,fill opacity=1 ][dash pattern={on 3.75pt off 3pt on 7.5pt off 1.5pt}][line width=0.75]  (100,310) -- (61.51,320.8) -- (90,280) -- (120,270) -- (138.43,280.87) -- cycle ;
	\draw  [fill={rgb, 255:red, 255; green, 255; blue, 255 }  ,fill opacity=1 ][dash pattern={on 3.75pt off 3pt on 7.5pt off 1.5pt}][line width=0.75]  (195,304.02) -- (180,350) -- (144.01,380) -- (140,350) -- (150,320) -- cycle ;
	\draw  [fill={rgb, 255:red, 255; green, 255; blue, 255 }  ,fill opacity=1 ][dash pattern={on 3.75pt off 3pt on 7.5pt off 1.5pt}][line width=0.75]  (195,304.02) -- (240,320) -- (250,350) -- (245.99,380) -- (210,350) -- cycle ;
	\draw  [fill={rgb, 255:red, 0; green, 0; blue, 0 }  ,fill opacity=1 ] (56.51,129.2) .. controls (56.51,126.44) and (58.75,124.2) .. (61.51,124.2) .. controls (64.27,124.2) and (66.51,126.44) .. (66.51,129.2) .. controls (66.51,131.97) and (64.27,134.2) .. (61.51,134.2) .. controls (58.75,134.2) and (56.51,131.97) .. (56.51,129.2) -- cycle ;
	\draw  [fill={rgb, 255:red, 0; green, 0; blue, 0 }  ,fill opacity=1 ] (133.43,169.13) .. controls (133.43,166.36) and (135.67,164.13) .. (138.43,164.13) .. controls (141.19,164.13) and (143.43,166.36) .. (143.43,169.13) .. controls (143.43,171.89) and (141.19,174.13) .. (138.43,174.13) .. controls (135.67,174.13) and (133.43,171.89) .. (133.43,169.13) -- cycle ;
	\draw  [fill={rgb, 255:red, 0; green, 0; blue, 0 }  ,fill opacity=1 ] (110,225) .. controls (110,222.24) and (112.24,220) .. (115,220) .. controls (117.76,220) and (120,222.24) .. (120,225) .. controls (120,227.76) and (117.76,230) .. (115,230) .. controls (112.24,230) and (110,227.76) .. (110,225) -- cycle ;
	\draw  [fill={rgb, 255:red, 0; green, 0; blue, 0 }  ,fill opacity=1 ] (25,225) .. controls (25,222.24) and (27.24,220) .. (30,220) .. controls (32.76,220) and (35,222.24) .. (35,225) .. controls (35,227.76) and (32.76,230) .. (30,230) .. controls (27.24,230) and (25,227.76) .. (25,225) -- cycle ;
	\draw  [fill={rgb, 255:red, 0; green, 0; blue, 0 }  ,fill opacity=1 ] (56.51,320.8) .. controls (56.51,318.03) and (58.75,315.8) .. (61.51,315.8) .. controls (64.27,315.8) and (66.51,318.03) .. (66.51,320.8) .. controls (66.51,323.56) and (64.27,325.8) .. (61.51,325.8) .. controls (58.75,325.8) and (56.51,323.56) .. (56.51,320.8) -- cycle ;
	\draw  [fill={rgb, 255:red, 0; green, 0; blue, 0 }  ,fill opacity=1 ] (133.43,280.87) .. controls (133.43,278.11) and (135.67,275.87) .. (138.43,275.87) .. controls (141.19,275.87) and (143.43,278.11) .. (143.43,280.87) .. controls (143.43,283.64) and (141.19,285.87) .. (138.43,285.87) .. controls (135.67,285.87) and (133.43,283.64) .. (133.43,280.87) -- cycle ;
	\draw  [fill={rgb, 255:red, 0; green, 0; blue, 0 }  ,fill opacity=1 ] (139.01,380) .. controls (139.01,377.24) and (141.25,375) .. (144.01,375) .. controls (146.77,375) and (149.01,377.24) .. (149.01,380) .. controls (149.01,382.76) and (146.77,385) .. (144.01,385) .. controls (141.25,385) and (139.01,382.76) .. (139.01,380) -- cycle ;
	\draw  [fill={rgb, 255:red, 0; green, 0; blue, 0 }  ,fill opacity=1 ] (190,304.02) .. controls (190,301.26) and (192.24,299.02) .. (195,299.02) .. controls (197.76,299.02) and (200,301.26) .. (200,304.02) .. controls (200,306.78) and (197.76,309.02) .. (195,309.02) .. controls (192.24,309.02) and (190,306.78) .. (190,304.02) -- cycle ;
	\draw  [fill={rgb, 255:red, 0; green, 0; blue, 0 }  ,fill opacity=1 ] (240.99,380) .. controls (240.99,377.24) and (243.23,375) .. (245.99,375) .. controls (248.75,375) and (250.99,377.24) .. (250.99,380) .. controls (250.99,382.76) and (248.75,385) .. (245.99,385) .. controls (243.23,385) and (240.99,382.76) .. (240.99,380) -- cycle ;
	\draw  [fill={rgb, 255:red, 0; green, 0; blue, 0 }  ,fill opacity=1 ] (323.49,320.8) .. controls (323.49,318.03) and (325.73,315.8) .. (328.49,315.8) .. controls (331.25,315.8) and (333.49,318.03) .. (333.49,320.8) .. controls (333.49,323.56) and (331.25,325.8) .. (328.49,325.8) .. controls (325.73,325.8) and (323.49,323.56) .. (323.49,320.8) -- cycle ;
	\draw  [fill={rgb, 255:red, 0; green, 0; blue, 0 }  ,fill opacity=1 ] (246.57,280.87) .. controls (246.57,278.11) and (248.81,275.87) .. (251.57,275.87) .. controls (254.33,275.87) and (256.57,278.11) .. (256.57,280.87) .. controls (256.57,283.64) and (254.33,285.87) .. (251.57,285.87) .. controls (248.81,285.87) and (246.57,283.64) .. (246.57,280.87) -- cycle ;
	\draw  [fill={rgb, 255:red, 0; green, 0; blue, 0 }  ,fill opacity=1 ] (355,225) .. controls (355,222.24) and (357.24,220) .. (360,220) .. controls (362.76,220) and (365,222.24) .. (365,225) .. controls (365,227.76) and (362.76,230) .. (360,230) .. controls (357.24,230) and (355,227.76) .. (355,225) -- cycle ;
	\draw  [fill={rgb, 255:red, 0; green, 0; blue, 0 }  ,fill opacity=1 ] (270,225) .. controls (270,222.24) and (272.24,220) .. (275,220) .. controls (277.76,220) and (280,222.24) .. (280,225) .. controls (280,227.76) and (277.76,230) .. (275,230) .. controls (272.24,230) and (270,227.76) .. (270,225) -- cycle ;
	\draw  [fill={rgb, 255:red, 0; green, 0; blue, 0 }  ,fill opacity=1 ] (246.57,169.13) .. controls (246.57,166.36) and (248.81,164.13) .. (251.57,164.13) .. controls (254.33,164.13) and (256.57,166.36) .. (256.57,169.13) .. controls (256.57,171.89) and (254.33,174.13) .. (251.57,174.13) .. controls (248.81,174.13) and (246.57,171.89) .. (246.57,169.13) -- cycle ;
	\draw  [fill={rgb, 255:red, 0; green, 0; blue, 0 }  ,fill opacity=1 ] (323.49,129.2) .. controls (323.49,126.44) and (325.73,124.2) .. (328.49,124.2) .. controls (331.25,124.2) and (333.49,126.44) .. (333.49,129.2) .. controls (333.49,131.97) and (331.25,134.2) .. (328.49,134.2) .. controls (325.73,134.2) and (323.49,131.97) .. (323.49,129.2) -- cycle ;
	\draw  [fill={rgb, 255:red, 0; green, 0; blue, 0 }  ,fill opacity=1 ] (139.01,70) .. controls (139.01,67.24) and (141.25,65) .. (144.01,65) .. controls (146.77,65) and (149.01,67.24) .. (149.01,70) .. controls (149.01,72.76) and (146.77,75) .. (144.01,75) .. controls (141.25,75) and (139.01,72.76) .. (139.01,70) -- cycle ;
	\draw  [fill={rgb, 255:red, 0; green, 0; blue, 0 }  ,fill opacity=1 ] (240.99,70) .. controls (240.99,67.24) and (243.23,65) .. (245.99,65) .. controls (248.75,65) and (250.99,67.24) .. (250.99,70) .. controls (250.99,72.76) and (248.75,75) .. (245.99,75) .. controls (243.23,75) and (240.99,72.76) .. (240.99,70) -- cycle ;

	\draw (231,172.4) node [anchor=north west][inner sep=0.75pt]    {$b_{1}$};
	\draw (246,209.4) node [anchor=north west][inner sep=0.75pt]    {$b_{2}$};
	\draw (231,249.4) node [anchor=north west][inner sep=0.75pt]    {$b_{3}$};
	\draw (189,273.4) node [anchor=north west][inner sep=0.75pt]    {$b_{4}$};
	\draw (144,260.4) node [anchor=north west][inner sep=0.75pt]    {$b_{5}$};
	\draw (124,218.4) node [anchor=north west][inner sep=0.75pt]    {$b_{6}$};
	\draw (145.43,172.53) node [anchor=north west][inner sep=0.75pt]    {$b_{7}$};
	\draw (121,45.4) node [anchor=north west][inner sep=0.75pt]    {$u$};
	\draw (251,45.4) node [anchor=north west][inner sep=0.75pt]    {$v$};
	\draw (339,105.4) node [anchor=north west][inner sep=0.75pt]    {$a_{3}$};
	\draw (369,215.4) node [anchor=north west][inner sep=0.75pt]    {$a_{4}$};
	\draw (339,315.4) node [anchor=north west][inner sep=0.75pt]    {$a_{5}$};
	\draw (255,379.4) node [anchor=north west][inner sep=0.75pt]    {$a_{6}$};
	\draw (125,379.4) node [anchor=north west][inner sep=0.75pt]    {$a_{7}$};
	\draw (42,319.4) node [anchor=north west][inner sep=0.75pt]    {$a_{8}$};
	\draw (2,219.4) node [anchor=north west][inner sep=0.75pt]    {$a_{9}$};
	\draw (34,109.4) node [anchor=north west][inner sep=0.75pt]    {$a_{10}$};
	\draw (182,201.4) node [anchor=north west][inner sep=0.75pt]  [font=\LARGE]  {$F_{2}$};
	\draw (-10,141.4) node [anchor=north west][inner sep=0.75pt]  [font=\LARGE]  {$F_{1}$};
	\draw (277,140.4) node [anchor=north west][inner sep=0.75pt]  [font=\small]  {$F_{a_{3} ,b_{1}}$};
	\draw (302,213.4) node [anchor=north west][inner sep=0.75pt]  [font=\small]  {$F_{a_{4} ,b_{2}}$};
	\draw (274,281.4) node [anchor=north west][inner sep=0.75pt]  [font=\small]  {$F_{a_{5} ,b_{3}}$};
	\draw (209,326.4) node [anchor=north west][inner sep=0.75pt]  [font=\small]  {$F_{a_{6} ,b_{4}}$};
	\draw (147,329.4) node [anchor=north west][inner sep=0.75pt]  [font=\small]  {$F_{a_{7} ,b_{4}}$};
	\draw (84,283.4) node [anchor=north west][inner sep=0.75pt]  [font=\small]  {$F_{a_{7} ,b_{4}}$};
	\draw (58,217.4) node [anchor=north west][inner sep=0.75pt]  [font=\small]  {$F_{a_{9} ,b_{6}}$};
	\draw (80,141.4) node [anchor=north west][inner sep=0.75pt]  [font=\small]  {$F_{a_{10} ,b_{7}}$};

\end{tikzpicture} %
}
\caption{A $3$-connected graph $C$ that after deletion of edge $(u,v)$ unfurls into a path of triconnected components with separating pairs $\{a_3,b_1\},\allowbreak \{a_4,b_2\},\allowbreak \{a_5,b_3\},\allowbreak \{a_6,b_4\},\allowbreak \{a_7,b_4\},\allowbreak \{a_8,b_5\},\allowbreak \{a_9,b_6\},\allowbreak \{a_{10},b_7\}$. These pairs are also the nodes of the disk graph $\Disk(C, F_1, F_2)$.\label{figure:biconnected:distances}}
\end{figure}
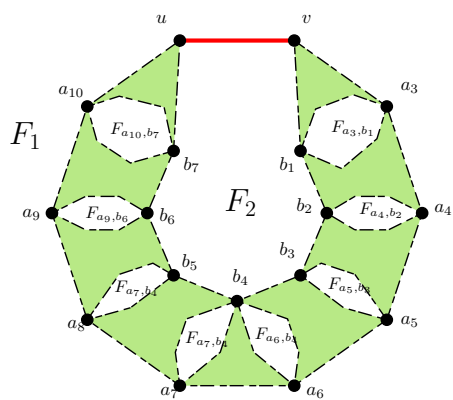
 We next define such disks for arbitrary (not necessarily adjacent) faces $F_1$ and $F_2$, and show how to use them to maintain distances. A \emph{combinatorial face} $F$ of a $3$-connected graph $C$ is a face of $C$ together with  one of the two orders of its vertices as they appear around the face. We use that each combinatorial face already determines a unique (combinatorial) embedding $\mathsf{Emb}(C, F)$ of $C$ with $F$ as outer face \cite{Diestel12}.

\begin{definition}[Disk graph]\label{def:potdist}
	Let $C$ be a $3$-connected graph, $\mathsf{Emb}(C, F_1)$ an embedding of $C$ for a combinatorial face $F_1$, and $F_2$ another combinatorial face of $\mathsf{Emb}(C, F_1)$. The directed graph $\Disk(C, F_1, F_2)$ has nodes $(a,b)$ where $a$, $b$ are distinct vertices of $G$ such that
	\begin{enumerate}[(a)]
		\item $a$ appears on $F_1$ and $b$ appears on $F_2$,
		\item there exists a face $F_{a,b}$ distinct from $F_1$ and $F_2$ such that both $a$ and $b$ appear on $F_{a,b}$,
		\item there does not exist another pair $(a',b')$ with (i) $a\neq a'$ and $b\neq b'$, (ii) satisfying (a) and (b), and (iii) the vertices $a,a',b,b'$ appear on another face of $\mathsf{Emb}(C, F_1)$ in that cyclic order.
	\end{enumerate}
	The graph $\Disk(C,F_1,F_2)$ has an edge from $(a,b)$ to $(a',b')$ if for any other node $(a'',b'')$ of $\Disk(C, F_1, F_2)$ the vertices $a,a',a''$ and the vertices $b,b',b''$ appear in that (clockwise) cyclic order on the face $F_1$ and $F_2$ respectively.
\end{definition}

We observe that $\Disk(C, F_1, F_2)$ is a directed cycle. Its distances correspond to distances in an unfurled path of the triconnected component tree, if an edge deletion merges the faces $F_1$ and $F_2$ and destroys a $3$-connected component. %

\begin{restatable}{lemma}{lemmaDiskVsDistances}
\label{lem:spqrDist}
	Suppose $C$ is a $3$-connected component, $e$ is an edge of $C$ such that $C-\{e\}$ is not $3$-connected, and  $F_1$ and $F_2$ are the faces of $C$ adjacent to $e$. %
	Then the following holds:
	\begin{enumerate}[(i)]
		\item $\{a,b\}$ is a $3$-connected separating pair of $C-\{e\}$ if and only if $(a,b)$ or $(b,a)$
		is a node of $\Disk(C, F_1, F_2)$.
		\item The triconnected component tree of $C-\{e\}$ is a path of length $2d$, where $d$ is the number of nodes of $\Disk(C, F_1, F_2)$, in which the $3$-connected separating pairs of $C-\{e\}$ appear in the same order as their corresponding nodes in $\Disk(C, F_1, F_2)$.
	\end{enumerate}

\end{restatable}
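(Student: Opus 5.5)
We work in the unique embedding $\mathsf{Emb}(C,F_1)$ of the $3$-connected planar graph $C$. Let $e=(u,v)$, so $F_1$ and $F_2$ are the two faces on either side of $e$. Deleting $e$ merges them into a single face $F$ of $C-\{e\}$. Since $C$ is $3$-connected, $C-\{e\}$ is still biconnected. The proof of (i) relates $2$-separators of $C-\{e\}$ to faces of $C$, via the standard fact that $\{a,b\}$ separates a biconnected plane graph exactly when some pair of distinct faces both contain $a$ and $b$. For (ii) we then order these separators along the boundaries of $F_1$ and $F_2$.

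For the ``only if'' direction of (i), let $\{a,b\}$ be a $3$-connected separating pair of $C-\{e\}$. Because $C$ has no $2$-separator, $u$ and $v$ must lie in different components of $(C-\{e\})-\{a,b\}$. We therefore get a closed curve through $a$ and $b$ that crosses $e$ and otherwise meets no vertices or edges. This curve passes through exactly two faces of $C-\{e\}$: one of them is the merged face $F$, and the other is some face $F_{a,b}$ of $C$ different from $F_1$ and $F_2$. The curve enters $F$ at $a$, crosses $e$, and leaves at $b$. Hence one of $a,b$ lies on $F_1$ and the other on $F_2$, which gives conditions (a) and (b) after possibly swapping the names. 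Condition (c) uses the third path required by the $3$-connectedness of the pair. Suppose there were $(a',b')$ with $a,a',b,b'$ interleaved on some face. Then the curve for $\{a',b'\}$ would cross the curve for $\{a,b\}$ inside that face. In that case $u$ and $v$ are cut off by a curve whose two sides together contain only two of the three disjoint $a$--$b$ paths, a contradiction. I would formalise this through the requirement that both sides of the separation contain nontrivial parts of $C$.

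For the ``if'' direction, take a node $(a,b)$ of $\Disk$. We draw a Jordan curve through $a$, the face $F_{a,b}$, $b$, and then $F_2$, across $e$, and through $F_1$ back to $a$. This curve separates $u$ from $v$ in $C-\{e\}-\{a,b\}$, so $\{a,b\}$ is a separating pair. We still need three disjoint $a$--$b$ paths. Two of them come from running along $F_1$ and $F_2$ on either side of the curve, using that $C$ is $3$-connected. Condition (c) rules out the case where $\{a,b\}$ is a non-$3$-connected pair, that is, the case where the separation produces only a degree-two chain inside a cycle component. This happens exactly when there is an interleaving pair on a common face. I expect this bookkeeping to be the main obstacle. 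The paper's example in Figure~\ref{figure:biconnected:distances}, where $b_4$ is shared by two consecutive pairs, shows how delicate it is. My first attempt would be to classify when consecutive separators share a vertex and show that those configurations are exactly the ones cut out by (c).

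For (ii), the triconnected component tree of $C-\{e\}$ is a path whose endpoints contain $u$ and $v$, because $e$ re-merges it into one component. Each separator meets the boundary of $F_1$ in one vertex and the boundary of $F_2$ in one vertex. Since the separating curves are pairwise non-crossing, the separators are totally ordered by nesting between $u$ and $v$. This nesting order coincides with the cyclic orders of their endpoints along $F_1$ and along $F_2$, and this is exactly the successor relation defining the edges of $\Disk(C,F_1,F_2)$. The cycle closes through $e$ itself. Removing that closing step gives $d$ separating-pair nodes, and they alternate with $d+1$ component nodes. Counting then gives a path whose length we compare with $2d$, adjusting for the convention on how the endpoints are counted.
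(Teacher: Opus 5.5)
Your ``only if'' argument for conditions (a) and (b) is correct. It is also more direct than the paper's, which obtains the common face $F_{a,b}$ by gluing the embeddings of the triconnected components of $C-\{e\}$.

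The genuine gap is in the ``if'' direction of (i). You show that $\{a,b\}$ separates $C-\{e\}$, but you never exhibit a third internally disjoint $a$--$b$ path. You only assert that condition (c) excludes non-$3$-connected pairs, and you defer this as the main obstacle. This step is where (c) does its work in this direction, because pairs satisfying only (a) and (b) can fail to be $3$-connected. Suppose $F_{a,b}$ shares an edge $aw$ with $F_1$ and an edge $w'b$ with $F_2$, with $w$ and $w'$ on opposite sides of your Jordan curve. Then $aw$ is the only edge from $a$ into the side containing $w$, and $w'b$ is the only edge into $b$ from the side containing $w'$. So every $a$--$b$ path in $C-\{e\}$ meets $\{w,w'\}$. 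Your proposed remedy, classifying configurations where consecutive separators share a vertex (as $b_4$ does), targets the wrong configurations. Condition (c) explicitly requires $a\neq a'$ and $b\neq b'$, and pairs such as $\{a_6,b_4\}$ and $\{a_7,b_4\}$ are genuine $3$-connected separating pairs.

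The missing idea is that two distinct faces of a $3$-connected plane graph meet in at most a vertex or a single edge. Hence $F_{a,b}\cap F_1$ is $\{a\}$ or an edge $aw$, and $F_{a,b}\cap F_2$ is $\{b\}$ or an edge $bw'$. Let $P_1,P_2$ be the two boundary arcs of $F_{a,b}$ from $a$ to $b$, and let $Q_1,Q_2$ be those of the merged face (through $u$, resp.\ $v$). Index them so that $P_i$ and $Q_i$ lie on the same side of your curve. The only internal vertices of $P_i$ that can lie on $F_1\cup F_2$ are $w$ and $w'$. So on a side containing neither of them, $P_i$ and $Q_i$ are internally disjoint. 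Applied to the pair $(w,w')$, condition (c) says exactly that $w$ and $w'$ do not lie on opposite sides, since that would be the interleaving $a,w,b,w'$ on $F_{a,b}$. So such a clean side $i$ exists, and $Q_1,Q_2,P_i$ are three internally disjoint paths; in particular your two paths can be kept. This is essentially the paper's argument. It takes both boundary arcs of $F_{a,b}$, plus a walk along $F_1$ and $F_2$ through the endpoint of $e$ on the side away from any second disk node on $F_{a,b}$.

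Your sketch of (c) in the ``only if'' direction should be tightened in the same spirit. Take the face on which $a,a',b,b'$ interleave. The curve for $(a',b')$ through that face crosses once any arc joining $a$ to $b$ inside it. Hence it separates $a$ from $b$ while meeting $C-\{e\}$ only in $a',b'$, which leaves room for at most two internally disjoint $a$--$b$ paths.

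Finally, the ``standard fact'' you open with is false as stated: the ends of any edge lie on two common faces, e.g.\ in $K_4$. Your curve arguments never actually use it.
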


Thus distances between separating pair nodes of the triconnected component tree $\triTree(C-\{e\})$ can be inferred from the distances between the nodes of $\Disk(C, F_1, F_2)$.

Therefore, apart from maintaining distances on triconnected decomposition trees, we also maintain the distances on the cycles $\Disk(C, F_1, F_2)$, for every $3$-connected component $C$ and all combinatorial faces $F_1, F_2$. More formally, our dynamic algorithm additionally maintains the relation $\textsf{DiskDistance}$ that contains tuples $(\tpl f_1,\tpl f_2,\tpl a,\tpl b, k)$ such that $\tpl f_1,\tpl f_2$ specify combinatorial faces $F_1, F_2$ of a 3-connected component $C$; $\tpl a,\tpl b$ specify nodes of $\Disk(C, F_1, F_2)$; and $k$ is the distance of $\tpl a$ and $\tpl b$ in $\Disk(C, F_1, F_2)$. To this end, it also maintains combinatorial embeddings with respect to each combinatorial face using the dynamic algorithm from \cite{DattaKM23}.

\subsection{Maintaining Isomorphism Information for Connected Components}\label{section:proof-details:connected}
\newcommand{\cgraph}{\text{col-graph}}

The techniques for maintaining isomorphic biconnected components can be lifted to connected components.

\begin{restatable}{proposition}{theoremIsoOne}
\label{prop:1iso}
	The relation $\bcIso$ that for a planar graph $G$ contains all tuples $(a,a^*)$ such that:
	\begin{itemize}
		\item $a$ and $a^*$ are vertices of connected components $A$ and $A^*$ of $G$, respectively, and
		\item $A$ and $A^*$ are isomorphic via an isomorphism $\pi$ with \mbox{$\pi(a) = a^*$}
	\end{itemize}
	can be maintained in \DynFO under insertions and deletions of edges, provided that $G$ stays planar.
\end{restatable}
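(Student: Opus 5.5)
We mirror the approach for Proposition~\ref{prop:2iso}, now one level up: we work on biconnected component trees (maintained by Lemma~\ref{theorem:maintainability-decomposition}) instead of triconnected component trees, and we use $\spqrIso$ on coloured graphs as the "local" isomorphism test in place of $\ISOt$. Concretely, we maintain
\begin{itemize}
\item $\ContextIso_1$, containing pairs of isomorphic contexts of biconnected component trees, where isomorphism means a root- and hole-preserving isomorphism of the induced graphs;
\item $\#\textsc{iso-siblings}_1$, the sibling counts; and
\item $\textsc{dist}_1$, the distances in the biconnected component trees.
\end{itemize}
Nodes are represented by single vertices for cut vertices and by pairs of vertices for biconnected components. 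The relation $\bcIso$ is then definable: $(a,a^*)\in\bcIso$ iff, after rooting the biconnected component trees at (the node of) $a$ and $a^*$, the resulting subtrees are isomorphic, which can be read off $\ContextIso_1$.

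The key device is a colouring that is fed into the dynamic program of Proposition~\ref{prop:2iso}. Run that program on the graph where each cut vertex $c$ of a biconnected component $B$ is coloured by the isomorphism type of the subtree below $c$ in the biconnected component tree rooted through $B$. Since the isomorphism type is not a vertex tuple, we represent it by a canonical representative, e.g.\ the lexicographically least pair $(x,c')$ with an isomorphic subtree. Because each subtree hanging at $c$ away from $B$ depends on the root direction, we maintain one coloured copy per relevant rooting. Alternatively, we colour with the type of the whole ``branch'' at $c$ not containing $B$, which is direction-free given $B$.

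The analogues of Claims~\ref{clm:subtree-iso-tri-fo} and~\ref{clm:subtree-iso-sp-fo} become simple. At a biconnected node $B$, two contexts are isomorphic iff the coloured $B$ and $B^*$ are $\spqrIso$-isomorphic with the root and hole mapped correctly; the colours guarantee that the cut-vertex subtrees match. At a cut vertex node, we compare the multisets of children using the sibling counts, exactly as in~\cite{DattaK0TVZ24}. Changes of type $\es[+]{0}{2}$, $\es[-]{2}{0}$ and those not touching biconnectivity ($\es[\pm]{2}{2}$, $\es[\pm]{2}{3}$, $\es[\pm]{3}{3}$) only modify a single biconnected node or add or remove a bridge node. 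They are handled like the $\es[\pm]{3}{3}$ case of Lemma~\ref{lem:2isocontexts}: we split the context at the least common ancestor using $\textsc{dist}_1$ and existentially quantify the image nodes.

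The main obstacle is keeping the colouring consistent, since one change can alter the isomorphism types of subtrees along a whole root path, and hence the colours of many cut vertices in many biconnected components. This is exactly why Proposition~\ref{prop:2iso} permits recolouring all vertices of one colour with an unused colour, and wholesale recolouring of the emerging or vanishing cycle component for $\es[+]{1}{2}$ and $\es[-]{2}{1}$. I would argue that after each change only constantly many colour classes must be reassigned, namely those of types containing the changed node. A type class that merges into an existing class would be handled by existential quantification over the old representatives rather than by an explicit recolouring.

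For $\es[+]{1}{2}$ and $\es[-]{2}{1}$, a path of biconnected components and cut vertices merges into, or unfurls from, a single biconnected component $B_{uv}$. The treatment follows the $\es[-]{3}{2}$ case of Lemma~\ref{lem:2isocontexts}, but is easier because each cut vertex has a unique image. In parallel, for every segment between consecutive path cut vertices $c_{\ell-1}, c_\ell$, we test the isomorphism of the context with root $c_{\ell-1}$ and hole $c_\ell$ against its counterpart using the old data, and we combine these tests with a single existential choice of the image path via $\textsc{dist}_1$. Distances after unfurling follow from the cut-vertex order on the new cycle component, which is available via the cycle distances of Lemma~\ref{lemma:cycle-distances} applied to the merged component.
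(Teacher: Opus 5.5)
You have the paper's overall architecture: colour cut vertices by the isomorphism type of their subtrees, feed the coloured components to Proposition~\ref{prop:2iso}, and maintain context isomorphism, sibling counts and distances in the style of Etessami. The gap is in the step that carries the whole difficulty, keeping the colouring correct.

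Your claim that only constantly many colour classes must be reassigned is false. An edge change inside a biconnected component $B$ changes the number of edges, hence the isomorphism type, of the subtree of \emph{every} cut vertex on the path from $B$ to the root. This happens in each of your per-rooting copies, and likewise in your direction-free variant. These subtrees are strictly nested, so they lie in pairwise distinct classes. Each such vertex typically has to leave a class that keeps other, unaffected members and join another one. Renaming a whole class to an unused colour, or recolouring an emerging or vanishing cycle, does not implement this. Existential quantification over old representatives does not repair it either: $\spqrIso$ is only correct with respect to the colours actually stored. So every component above $B$ whose cut vertex towards $B$ keeps a stale colour will give wrong answers at the next change there.

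Independently, one copy per rooting does not give the right colours for contexts. In a context, the cut vertex of a component towards the hole must be compared by its subcontext, which depends on the hole and the hole's colour, not by its full subtree. So your local test (coloured $B\cong B^*$ with root and hole mapped) can reject isomorphic contexts.

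The missing idea is the paper's family of coloured copies $\text{col-graph}(X,\tpl f)$. There is one for \emph{every} context $X=\X(\tpl t,\tpl r,\tpl h)$ of $\biTree(G)$ and every colouring $\tpl f$ of its hole, with colours given by representative tuples $(\tpl t,\tpl r,\tpl h,\tpl f,v)$. To update $\ContextIso_1$ after a change inside $B$ with parent cut vertex $b_1$, only the new colour of $b_1$ is needed. It is obtained from three ingredients:
\begin{itemize}
\item the coloured $\spqrIso$ test on the copy in which $B$ is the root, where all other cut vertices of $B$ have unaffected subtrees and the one towards the hole carries its subcontext type;
\item the sibling-count claim with the root decoloured;
\item a choice of representative, where class renaming is used exactly when $b_1$ represented its old class.
\end{itemize}
Then $(X,\tpl f)$ and $(X^*,\tpl f^*)$ are fully isomorphic iff some $b_1^*$ has a fully isomorphic lower part and the same new colour $\tpl f_{b_1}$, and $(\X(\tpl t,\tpl r,b_1),\tpl f_{b_1})$ and $(\X(\tpl t^*,\tpl r^*,b_1^*),\tpl f_{b_1})$ are fully isomorphic. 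The last condition is an \emph{old} entry of $\ContextIso_1$, because these copies, whose holes already carry the new colour, are unaffected by the change. Storing the upper parts for all possible hole colours is what absorbs the unbounded propagation up the tree.

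The cycle-recolouring allowance of Proposition~\ref{prop:2iso} serves a different purpose from the one you give it. It handles the former cut vertices on the merged or unfurled path in the cases $\es[+]{1}{2}$ and $\es[-]{2}{1}$.
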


For maintaining isomorphisms of triconnected component trees, we used the isomorphisms of triconnected components to identify isomorphic copies of separating pairs of those components and checked whether the isomorphism extended to the subtrees of the triconnected component tree below these separating pairs. For isomorphic biconnected components, however, we only know that an isomorphism exists, but cannot map cut vertices of these components to their isomorphic copies and then check that their subtrees in the biconnected component tree are isomorphic too.

We solve this problem by colouring cut vertices such that two cut vertices have the same colour if and only if their subtrees are isomorphic. If an isomorphism between biconnected components exists whose cut vertices are coloured this way (and by Proposition~\ref{prop:2iso}, isomorphism of coloured biconnected components can be maintained), we know that such an isomorphism extends to the subtree below these components in the biconnected component tree. As the isomorphism type of a cut vertex depends on the specific context, we maintain, for every possible context $X$ of the biconnected component tree and recolouring $\tpl f$ of its hole, a coloured copy $\cgraph(X,\tpl f)$ of the graph that is described by that context.

Our dynamic algorithm maintains an auxiliary relation $\ContextIso_1$ that contains tuples $((X, \tpl f),(X^*, \tpl f^*)) \df (\tpl x, \tpl r, \tpl h, \tpl f,\tpl x^*, \tpl r^*, \tpl h^*, \tpl f^*)$ such that the coloured graphs $\cgraph(X,\tpl f)$ and $\cgraph(X^*,\tpl f^*)$ are isomorphic via an isomorphism that maps $\tpl r$ to $\tpl r^*$.
For proving Proposition \ref{prop:1iso}, we then show:

\begin{restatable}{lemma}{lemmaXIsoOne}
\label{lem:1isocontexts}
	The following can be maintained in \DynFO under insertions and deletions of edges to a planar graph $G$, provided that $G$ stays planar:
	 \begin{enumerate}[(1)]
	 	\item a coloured graph $\cgraph(X, \tpl f)$ for every context $X = (\tpl t, \tpl r, \tpl h)$ of $\biTree(G)$ and colouring $\tpl f$ of the vertices in $\tpl h$, such that
	 	 \begin{enumerate}[(a)]
	 	 	\item $\cgraph(X, \tpl f)$ has the same vertices and edges as $\graph(X)$,
	 	 	\item exactly the vertices of $\tpl h$ and the cut vertices of $\cgraph(X, \tpl f)$ are coloured,
	 	 	\item the vertices of $\tpl h$ that are not cut vertices in $\cgraph(X, \tpl f)$ are coloured as given by~$\tpl f$, and
	 	 	\item for two contexts $X_1 = \X(\tpl t_1, \tpl r_1, \tpl h_1)$ and $X_2 = \X(\tpl t_2, \tpl r_2, \tpl h_2)$ and colourings $\tpl f_1, \tpl f_2$, two cut vertices $v_1$ in $\cgraph(X_1, \tpl f_1)$ and $v_2$ in $\cgraph(X_2, \tpl f_2)$ have the same colour if and only if the subtrees  $\ST(\tpl r_1, v_1)$ and  $\ST(\tpl r_2, v_2)$ of the biconnected component trees of the respective coloured graphs are fully isomorphic.
	 	 \end{enumerate}
	 	\item the relation $\ContextIso_1$ with respect to these coloured graphs.
	 \end{enumerate}
\end{restatable}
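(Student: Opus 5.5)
We maintain the colourings and $\ContextIso_1$ together, mirroring the proof of Lemma~\ref{lem:2isocontexts} but one level up, on $\biTree(G)$ instead of $\triTree(B)$. The colour of a cut vertex $v$ in $\cgraph(X,\tpl f)$ with root $\tpl r$ is a canonical name for the isomorphism class of $\ST(\tpl r, v)$. Concretely, we take the lexicographically smallest tuple $(\tpl t', v')$ such that $\ST(\tpl t', v')$ is fully isomorphic to $\ST(\tpl r, v)$. This tuple is \FO-definable from the subtree version $\SubtreeIso_1$ of $\ContextIso_1$, which is analogous to $\SubtreeIso_2$. Since colours are then just \FO-definable from $\ContextIso_1$, properties (a)--(d) of item (1) hold automatically once $\ContextIso_1$ is correct. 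The real content of the lemma is therefore maintaining $\ContextIso_1$. For this we additionally keep, for each pair $(X,\tpl f)$, a copy of the $\spqrIso$ structure of Proposition~\ref{prop:2iso} for the coloured biconnected components of $\cgraph(X,\tpl f)$. We also maintain a sibling-count relation $\#\textsc{iso-siblings}_1$ and $\textsc{dist}_1$ on $\biTree(G)$.

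The first step is the analogue of Claims~\ref{clm:subtree-iso-tri-fo} and~\ref{clm:subtree-iso-sp-fo} for $\biTree$. At a cut vertex root, two contexts are isomorphic if and only if the children biconnected components can be matched with equal multiplicities, using $\#\textsc{iso-siblings}_1$ and the children's context information. At a biconnected component root $B$, the contexts are isomorphic if and only if the coloured components $B$ and $B^*$ are isomorphic via $\spqrIso$ with root vertex mapped to root vertex. For this to suffice, the colour of a cut vertex of $B$ (hole-containing child included) must encode the isomorphism type of the subcontext below it. This is exactly why each context carries its own colouring, and why the hole receives the colour $\tpl f$.

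Next we go through the change types of Section~\ref{section:changes}. For changes local to one biconnected component ($\es[\pm]{3}{3}$, $\es[+]{2}{3}$, $\es[-]{3}{2}$, $\es[\pm]{2}{2}$), $\biTree$ is unchanged. Only the component $B$ containing the edge changes, and Proposition~\ref{prop:2iso} already updates $\spqrIso$ for it. We then recompute the isomorphism type of the subtree at $B$ and propagate upward by the least-common-ancestor decomposition used in Case $\es[\pm]{3}{3}$ of Lemma~\ref{lem:2isocontexts}. Recolouring every cut vertex whose subtree type changed is an instance of the allowed colour changes: all vertices of a colour receive a fresh unused colour, namely the new canonical name. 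For $\es[\pm]{0}{2}$ the tree gains or loses a single bridge component, and updating $\textsc{dist}_1$ and the contexts is routine, as in Etessami-style tree isomorphism.

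The main obstacle is $\es[+]{1}{2}$ and $\es[-]{2}{1}$, where a path of biconnected components collapses into $B_{uv}$ or unfurls from it. For the unfurling, we follow Case $\es[-]{3}{2}$ of Lemma~\ref{lem:2isocontexts}. In parallel, for each consecutive segment $c_{\ell-1}B_{\ell-1}c_\ell$ we test full isomorphism of the small context with its candidate image, and then combine the results along the path using distances. This is easier than in the triconnected case because cut vertices have a unique image. The distances along the new path are obtained from \DynFO reachability on $B_{uv}-\{(u,v)\}$: cut vertices on the unfurled path are exactly the vertices separating $u$ from $v$ after the deletion, ordered by distance. For the merge, the new component $B_{uv}$ gains a cycle component in its triconnected tree. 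Its vertices $c_i$ keep or lose cut-vertex status, so their colours change. This is precisely the extra freedom (i) of Proposition~\ref{prop:2iso}, which is why that proposition was stated with it. The delicate point is that the colours of $\cgraph(X,\tpl f)$ depend on the root. Hence a single change forces recolourings in polynomially many copies simultaneously, and we must verify that each copy receives only changes of the permitted forms per step: one edge change, recolouring of the emerging or vanishing cycle, and colour-class renamings. I would check this carefully, possibly splitting one step into a constant number of simulated sub-steps.
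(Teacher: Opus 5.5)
The concrete gap is the distance information for $\es[-]{2}{1}$. To match the unfurled path position by position against $X^*$, you need the $\biTree$-distances between its nodes right after the deletion. You propose to obtain them from \DynFO reachability on $B_{uv}-\{(u,v)\}$. Connectivity after removing a vertex does let an \FO formula define \emph{which} vertices become cut vertices and their order along the path ($c$ precedes $c'$ iff $c$ separates $u$ from $c'$). It does not give their \emph{positions}. Turning a definable linear order of unbounded length into ranks is list ranking, which requires counting (it subsumes parity) and is not expressible in \FOar. So these distances must already be present as auxiliary data before the change. The paper maintains $\textsf{CycleDistance}$, the oriented distances inside every cycle component of every triconnected component tree (Lemmas~\ref{lemma:bc-distances} and~\ref{lemma:cycle-distances}). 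It then uses Lemma~\ref{lem:distBC}: the deleted edge lies in a unique cycle component $S$, the new cut vertices are exactly the vertices of $S$ other than its endpoints, in the cyclic order of $S$, and the new path in $\biTree$ has length $2|V(S)|-4$. Maintaining $\textsf{CycleDistance}$ is itself a separate argument. New cycles arise under $\es[+]{2}{3}$, $\es[-]{3}{2}$, $\es[+/-]{2}{2}$ and $\es[+]{1}{2}$, and are either glued from old ones or of constant length. Without this idea your $\textsc{dist}_1$ cannot be maintained.

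Second, you claim that recolouring every cut vertex whose subtree type changed is an instance of ``all vertices of a colour receive a fresh unused colour''. This is false. Only the cut vertices whose subtree contains the changed component leave their class; the unaffected members keep their type, and an affected vertex may join an already existing class. With lexicographically minimal names it is worse: an unaffected class can be forced to adopt a name currently used by another class, namely when an affected tuple smaller than its minimum joins it. That is a permutation of names, not a renaming to an unused colour. The coloured copies are the input of the program from Proposition~\ref{prop:2iso}, so this is exactly the point that must be verified, and you leave it open.

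The paper instead stores the colours with the invariant that a colour is a tuple $(\tpl t,\tpl r,\tpl h,\tpl f,v)$ naming a representative $v$. The hole and its colouring must be part of the name, since the subtree of $v$ in $\cgraph(X,\tpl f)$ may contain the coloured hole; your names $(\tpl t',v')$ do not cover that case. A class keeps its name unless its representative is affected. In that case its unaffected members are renamed to the smallest unaffected member $F^\bullet$, which is unused by the invariant, so this is exactly change (ii) of Proposition~\ref{prop:2iso}. The affected cut vertex gets the colour of an unaffected isomorphic vertex if one exists, and a new representative otherwise, which is an ordinary single-vertex colour change.
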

 
\section{Conclusion}\label{section:conclusion}
We have shown that \DynFO is powerful enough to maintain whether two planar graphs are isomorphic.
For isomorphic $3$-connected components of a planar graph, one can also maintain a witnessing isomorphism, which is an important ingredient of our proof. 
We believe that such a witness can also be maintained for arbitrary isomorphic planar graphs, but verifying this conjecture is future work.
Whether one can maintain a canon of a planar graph, that is, a unique representative of the isomorphism type of the input graph, stays an open problem.

As for planar graphs, the isomorphism problem for bounded-treewidth graphs can be solved with logarithmic space \cite{ElberfeldS17}.
At the same time, many properties of graphs with bounded treewidth can be maintained in \DynFO \cite{DMSVZ19}.
It is interesting to study whether isomorphism of bounded-treewidth graphs can be maintained in \DynFO as well. A major difficulty towards such a result is the problem of maintaining isomorphism-invariant tree decompositions of graphs: if two graphs with bounded treewidth are isomorphic, we would like to have tree decompositions of these graphs that are also isomorphic. 
The (approximations of) tree decompositions utilized in \cite{DMSVZ19} do not have this property, as they depend on the order of edge changes to the input graph.
 
\bibliography{biblio}

\begin{thebibliography}{10}

\bibitem{Babai16}
L\'{a}szl\'{o} Babai.
\newblock Graph isomorphism in quasipolynomial time [extended abstract].
\newblock In {\em Proceedings of the Forty-Eighth Annual ACM Symposium on
  Theory of Computing}, STOC '16, page 684–697, New York, NY, USA, 2016.
  Association for Computing Machinery.
\newblock \href {https://doi.org/10.1145/2897518.2897542}
  {\path{doi:10.1145/2897518.2897542}}.

\bibitem{BarringtonIS90}
David A.~Mix Barrington, Neil Immerman, and Howard Straubing.
\newblock On uniformity within {NC}{\({^1}\)}.
\newblock {\em J. Comput. Syst. Sci.}, 41(3):274--306, 1990.
\newblock \href {https://doi.org/10.1016/0022-0000(90)90022-D}
  {\path{doi:10.1016/0022-0000(90)90022-D}}.

\bibitem{BattistaT96}
G.~Battista and R.~Tamassia.
\newblock On-line maintenance of triconnected components with spqr-trees.
\newblock {\em Algorithmica}, 15(4):302–318, April 1996.
\newblock \href {https://doi.org/10.1007/BF01961541}
  {\path{doi:10.1007/BF01961541}}.

\bibitem{Bodlaender90}
Hans~L Bodlaender.
\newblock Polynomial algorithms for graph isomorphism and chromatic index on
  partial k-trees.
\newblock {\em Journal of Algorithms}, 11(4):631--643, 1990.
\newblock \href {https://doi.org/https://doi.org/10.1016/0196-6774(90)90013-5}
  {\path{doi:https://doi.org/10.1016/0196-6774(90)90013-5}}.

\bibitem{Cook85}
Stephen~A. Cook.
\newblock A taxonomy of problems with fast parallel algorithms.
\newblock {\em Information and Control}, 64(1-3):2--21, 1985.
\newblock \href {https://doi.org/10.1016/S0019-9958(85)80041-3}
  {\path{doi:10.1016/S0019-9958(85)80041-3}}.

\bibitem{DasTW12}
Bireswar Das, Jacobo Torán, and Fabian Wagner.
\newblock Restricted space algorithms for isomorphism on bounded treewidth
  graphs.
\newblock {\em Information and Computation}, 217:71--83, 2012.
\newblock \href {https://doi.org/https://doi.org/10.1016/j.ic.2012.05.003}
  {\path{doi:https://doi.org/10.1016/j.ic.2012.05.003}}.

\bibitem{DattaKM23}
Samir Datta, Asif Khan, and Anish Mukherjee.
\newblock Dynamic planar embedding is in {DynFO}.
\newblock In J\'{e}r\^{o}me Leroux, Sylvain Lombardy, and David Peleg, editors,
  {\em 48th International Symposium on Mathematical Foundations of Computer
  Science (MFCS 2023)}, volume 272 of {\em Leibniz International Proceedings in
  Informatics (LIPIcs)}, pages 39:1--39:15, Dagstuhl, Germany, 2023. Schloss
  Dagstuhl -- Leibniz-Zentrum f{\"u}r Informatik.
\newblock \href {https://doi.org/10.4230/LIPIcs.MFCS.2023.39}
  {\path{doi:10.4230/LIPIcs.MFCS.2023.39}}.

\bibitem{DattaK0TVZ24}
Samir Datta, Asif Khan, Anish Mukherjee, Felix Tschirbs, Nils Vortmeier, and
  Thomas Zeume.
\newblock Query maintenance under batch changes with small-depth circuits.
\newblock In Rastislav Kr{\'{a}}lovic and Anton{\'{\i}}n Kucera, editors, {\em
  49th International Symposium on Mathematical Foundations of Computer Science,
  {MFCS} 2024, August 26-30, 2024, Bratislava, Slovakia}, volume 306 of {\em
  LIPIcs}, pages 46:1--46:16. Schloss Dagstuhl - Leibniz-Zentrum f{\"{u}}r
  Informatik, 2024.
\newblock \href {https://doi.org/10.4230/LIPICS.MFCS.2024.46}
  {\path{doi:10.4230/LIPICS.MFCS.2024.46}}.

\bibitem{DKMSZ18}
Samir Datta, Raghav Kulkarni, Anish Mukherjee, Thomas Schwentick, and Thomas
  Zeume.
\newblock Reachability is in {DynFO}.
\newblock {\em J. {ACM}}, 65(5):33:1--33:24, 2018.
\newblock \href {https://doi.org/10.1145/3212685} {\path{doi:10.1145/3212685}}.

\bibitem{DattaLNTW22}
Samir Datta, Nutan Limaye, Prajakta Nimbhorkar, Thomas Thierauf, and Fabian
  Wagner.
\newblock Planar graph isomorphism is in log-space.
\newblock {\em ACM Trans. Comput. Theory}, 14(2), sep 2022.
\newblock \href {https://doi.org/10.1145/3543686} {\path{doi:10.1145/3543686}}.

\bibitem{DMSVZ19}
Samir Datta, Anish Mukherjee, Thomas Schwentick, Nils Vortmeier, and Thomas
  Zeume.
\newblock A strategy for dynamic programs: Start over and muddle through.
\newblock {\em Log. Methods Comput. Sci.}, 15(2), 2019.
\newblock \href {https://doi.org/10.23638/LMCS-15(2:12)2019}
  {\path{doi:10.23638/LMCS-15(2:12)2019}}.

\bibitem{Diestel12}
Reinhard Diestel.
\newblock {\em Graph Theory, 4th Edition}, volume 173 of {\em Graduate texts in
  mathematics}.
\newblock Springer, 2012.

\bibitem{ElberfeldK14}
Michael Elberfeld and Ken-ichi Kawarabayashi.
\newblock Embedding and canonizing graphs of bounded genus in logspace.
\newblock In {\em Proceedings of the Forty-Sixth Annual ACM Symposium on Theory
  of Computing}, STOC '14, page 383–392, New York, NY, USA, 2014. Association
  for Computing Machinery.
\newblock \href {https://doi.org/10.1145/2591796.2591865}
  {\path{doi:10.1145/2591796.2591865}}.

\bibitem{ElberfeldS17}
Michael Elberfeld and Pascal Schweitzer.
\newblock Canonizing graphs of bounded tree width in logspace.
\newblock {\em {ACM} Trans. Comput. Theory}, 9(3):12:1--12:29, 2017.
\newblock \href {https://doi.org/10.1145/3132720} {\path{doi:10.1145/3132720}}.

\bibitem{Etessami98}
Kousha Etessami.
\newblock Dynamic tree isomorphism via first-order updates.
\newblock In Alberto~O. Mendelzon and Jan Paredaens, editors, {\em Proceedings
  of the Seventeenth {ACM} {SIGACT-SIGMOD-SIGART} Symposium on Principles of
  Database Systems, June 1-3, 1998, Seattle, Washington, {USA}}, pages
  235--243. {ACM} Press, 1998.
\newblock \href {https://doi.org/10.1145/275487.275514}
  {\path{doi:10.1145/275487.275514}}.

\bibitem{GroheV06}
Martin Grohe and Oleg Verbitsky.
\newblock Testing graph isomorphism in parallel by playing a game.
\newblock In Michele Bugliesi, Bart Preneel, Vladimiro Sassone, and Ingo
  Wegener, editors, {\em Automata, Languages and Programming, 33rd
  International Colloquium, {ICALP} 2006, Venice, Italy, July 10-14, 2006,
  Proceedings, Part {I}}, volume 4051 of {\em Lecture Notes in Computer
  Science}, pages 3--14. Springer, 2006.
\newblock \href {https://doi.org/10.1007/11786986\_2}
  {\path{doi:10.1007/11786986\_2}}.

\bibitem{GutwengerM01}
Carsten Gutwenger and Petra Mutzel.
\newblock A linear time implementation of spqr-trees.
\newblock In Joe Marks, editor, {\em Graph Drawing, 8th International
  Symposium, {GD} 2000, Colonial Williamsburg, VA, USA, September 20-23, 2000,
  Proceedings}, Lecture Notes in Computer Science, pages 77--90. Springer,
  2000.
\newblock \href {https://doi.org/10.1007/3-540-44541-2\_8}
  {\path{doi:10.1007/3-540-44541-2\_8}}.

\bibitem{HendersonS81}
H.~V. Henderson and S.~R. Searle.
\newblock On deriving the inverse of a sum of matrices.
\newblock {\em SIAM Review}, 23(1):53--60, 1981.
\newblock \href {https://doi.org/10.1137/1023004} {\path{doi:10.1137/1023004}}.

\bibitem{HolmIKLR18}
Jacob Holm, Giuseppe~F. Italiano, Adam Karczmarz, Jakub Lacki, and Eva
  Rotenberg.
\newblock Decremental spqr-trees for planar graphs.
\newblock In Yossi Azar, Hannah Bast, and Grzegorz Herman, editors, {\em 26th
  Annual European Symposium on Algorithms (ESA 2018)}, volume 112 of {\em
  Leibniz International Proceedings in Informatics (LIPIcs)}, pages
  46:1--46:16, Dagstuhl, Germany, 2018. Schloss Dagstuhl -- Leibniz-Zentrum
  f{\"u}r Informatik.
\newblock \href {https://doi.org/10.4230/LIPIcs.ESA.2018.46}
  {\path{doi:10.4230/LIPIcs.ESA.2018.46}}.

\bibitem{HolmIKLR18arx}
Jacob Holm, Giuseppe~F. Italiano, Adam Karczmarz, Jakub Lacki, and Eva
  Rotenberg.
\newblock Decremental spqr-trees for planar graphs.
\newblock {\em CoRR}, abs/1806.10772, 2018.
\newblock \href {https://arxiv.org/abs/1806.10772} {\path{arXiv:1806.10772}}.

\bibitem{HopcroftW74}
J.~E. Hopcroft and J.~K. Wong.
\newblock Linear time algorithm for isomorphism of planar graphs (preliminary
  report).
\newblock In {\em Proceedings of the Sixth Annual ACM Symposium on Theory of
  Computing}, STOC '74, page 172–184, New York, NY, USA, 1974. Association
  for Computing Machinery.
\newblock \href {https://doi.org/10.1145/800119.803896}
  {\path{doi:10.1145/800119.803896}}.

\bibitem{HopcroftT73}
John~E. Hopcroft and Robert~Endre Tarjan.
\newblock Dividing a graph into triconnected components.
\newblock {\em {SIAM} J. Comput.}, 2(3):135--158, 1973.
\newblock \href {https://doi.org/10.1137/0202012} {\path{doi:10.1137/0202012}}.

\bibitem{KieferPS19}
Sandra Kiefer, Ilia Ponomarenko, and Pascal Schweitzer.
\newblock The {W}eisfeiler-{L}eman dimension of planar graphs is at most 3.
\newblock {\em J. {ACM}}, 66(6):44:1--44:31, 2019.
\newblock \href {https://doi.org/10.1145/3333003} {\path{doi:10.1145/3333003}}.

\bibitem{Lindell92}
Steven Lindell.
\newblock A logspace algorithm for tree canonization (extended abstract).
\newblock In {\em Proceedings of the 24th Annual {ACM} Symposium on Theory of
  Computing, May 4-6, 1992, Victoria, British Columbia, Canada}, pages
  400--404, 1992.
\newblock \href {https://doi.org/10.1145/129712.129750}
  {\path{doi:10.1145/129712.129750}}.

\bibitem{Mehta14}
Jenish~C. Mehta.
\newblock Dynamic complexity of planar 3-connected graph isomorphism.
\newblock In Edward~A. Hirsch, Sergei~O. Kuznetsov, Jean{-}{\'{E}}ric Pin, and
  Nikolay~K. Vereshchagin, editors, {\em Computer Science - Theory and
  Applications - 9th International Computer Science Symposium in Russia, {CSR}
  2014, Moscow, Russia, June 7-11, 2014. Proceedings}, Lecture Notes in
  Computer Science, pages 273--286. Springer, 2014.
\newblock \href {https://doi.org/10.1007/978-3-319-06686-8\_21}
  {\path{doi:10.1007/978-3-319-06686-8\_21}}.

\bibitem{PatnaikI94}
Sushant Patnaik and Neil Immerman.
\newblock Dyn-{FO}: {A} parallel, dynamic complexity class.
\newblock In Victor Vianu, editor, {\em Proceedings of the Thirteenth {ACM}
  {SIGACT-SIGMOD-SIGART} Symposium on Principles of Database Systems, May
  24-26, 1994, Minneapolis, Minnesota, {USA}}, pages 210--221. {ACM} Press,
  1994.
\newblock \href {https://doi.org/10.1145/182591.182614}
  {\path{doi:10.1145/182591.182614}}.

\bibitem{PatnaikI97}
Sushant Patnaik and Neil Immerman.
\newblock Dyn-{FO}: {A} parallel, dynamic complexity class.
\newblock {\em J. Comput. Syst. Sci.}, 55(2):199--209, 1997.
\newblock \href {https://doi.org/10.1006/JCSS.1997.1520}
  {\path{doi:10.1006/JCSS.1997.1520}}.

\bibitem{ShivakumarC74}
P.~N. Shivakumar and Kim~Ho Chew.
\newblock A sufficient condition for nonvanishing of determinants.
\newblock {\em Proc. Amer. Math. Soc.}, 43:63--66, 1974.
\newblock \href {https://doi.org/10.2307/2039326} {\path{doi:10.2307/2039326}}.

\bibitem{ThieraufW10}
Thomas Thierauf and Fabian Wagner.
\newblock The isomorphism problem for planar 3-connected graphs is in
  unambiguous logspace.
\newblock {\em Theory Comput. Syst.}, 47(3):655--673, 2010.
\newblock \href {https://doi.org/10.1007/S00224-009-9188-4}
  {\path{doi:10.1007/S00224-009-9188-4}}.

\bibitem{Toran04}
Jacobo Tor\'{a}n.
\newblock On the hardness of graph isomorphism.
\newblock {\em SIAM Journal on Computing}, 33(5):1093--1108, 2004.
\newblock \href {https://doi.org/10.1137/S009753970241096X}
  {\path{doi:10.1137/S009753970241096X}}.

\bibitem{Tutte63}
W.~T. Tutte.
\newblock How to draw a graph.
\newblock {\em Proceedings of the London Mathematical Society},
  s3-13(1):743--767, 1963.
\newblock \href {https://doi.org/https://doi.org/10.1112/plms/s3-13.1.743}
  {\path{doi:https://doi.org/10.1112/plms/s3-13.1.743}}.

\bibitem{VortmeierZ21}
Nils Vortmeier and Thomas Zeume.
\newblock Dynamic complexity of parity exists queries.
\newblock {\em Log. Methods Comput. Sci.}, 17(4), 2021.
\newblock \href {https://doi.org/10.46298/LMCS-17(4:9)2021}
  {\path{doi:10.46298/LMCS-17(4:9)2021}}.

\bibitem{Weinberg66}
L.~Weinberg.
\newblock A simple and efficient algorithm for determining isomorphism of
  planar triply connected graphs.
\newblock {\em IEEE Transactions on Circuit Theory}, 13(2):142--148, 1966.
\newblock \href {https://doi.org/10.1109/TCT.1966.1082573}
  {\path{doi:10.1109/TCT.1966.1082573}}.

\bibitem{Whitney33}
Hassler Whitney.
\newblock 2-isomorphic graphs.
\newblock {\em American Journal of Mathematics}, 55(1):245--254, 1933.
\newblock URL: \url{http://www.jstor.org/stable/2371127}.

\end{thebibliography}

\appendix

\section{Additional Preliminaries}
We give some definitions regarding combinatorial embeddings of planar graphs.

A \emph{topological embedding} of a planar graph $G=(V,E)$ maps each vertex of $G$ to a distinct point in $\mathbb{R}^2$, and each edge $e=(u,v)\in E$ to a simple curve in $\mathbb{R}^2$ with endpoints at the images of $u$ and $v$, such that (i) the interior of every edge-arc contains no vertex image, and (ii) no point of an edge-arc lies on any other edge-arc except possibly at common endpoints. Let $X$ denote the union of all vertex images and edge-arcs in the embedding. The connected components of $\mathbb{R}^2\setminus X$ are called \emph{faces}. The unique unbounded face is the \emph{outer face}, and all other faces are \emph{interior faces}. For a face $F$, a vertex (respectively, edge) is said to be \emph{incident} to $F$ if its image lies on the boundary (equivalently, in the closure) of $F$. Let $V(F)$ denote the set of vertices incident to $F$.

Given a topological planar embedding $\mathcal{T}$ of $G$, the associated \emph{combinatorial embedding} is specified by the cyclic order of neighbours around each vertex. Formally, for each $v\in V(G)$, let $\sigma_v:N(v)\to N(v)$ be the cyclic permutation that lists the neighbours of $v$ in clockwise order in the embedding. Then the set $\{(v,\sigma_v)\mid v\in V(G)\}$ is the combinatorial embedding of $G$ induced by the planar embedding.
We will also use an equivalent description in terms of faces. For each face $F$, let $\tau_F:V(F)\to V(F)$ be the cyclic permutation giving the clockwise order of the vertices along the boundary of $F$. We also call the cyclic sequence of $\tau_F$ as the \emph{combinatorial face} $F$. The tuple $\mathsf{Emb}(G)\df(F_o,\{F\mid F\ \text{is a combinatorial face of}\ \mathcal{T}\})$ is also called the combinatorial embedding of $G$, where $F_o$ indicates the outer (combinatorial) face of $\mathcal{T}$.
From now on, we use the term \emph{planar embedding} to mean this last combinatorial embedding.

Since $3$-connected planar graphs have a unique embedding in the plane (up to the choice of outer face and reflection)~\cite{Whitney33}, just specifying the combinatorial outer face of any embedding suffices to give the whole embedding.
For a $3$-connected planar graph $G$, we denote by $\mathsf{Emb}(G,F)$ the combinatorial embedding of $G$ where $F$ is the combinatorial outer face.

\section{Additional material for Effects of Changes on Planar Graph Decompositions (Section \ref{section:changes})}
We provide two more figures to illustrate the effects of edge changes on the biconnected and triconnected component trees.

Figure~\ref{fig:BCpath} illustrates the effect of a $\es[+]{1}{2}$ edge insertion on the biconnected and triconnected component trees, compare with the explanation in Section~\ref{section:changes}. 
For some connected graph, Figure~\ref{fig:BCpath:B} implicitly describes its decomposition into biconnected components: the biconnected components are depicted as grey blobs, which are separated by the cut vertices. The blue edge indicates the edge $(u,v)$ that is to be inserted; the red edges depict virtual edges that are introduced after the insertion of $(u,v)$ (if the corresponding graph edges do not already exist). %
In Figure~\ref{fig:BCpath:3tree}, the triconnected component tree of the merged graph $B_{uv}$ is shown. It is obtained by joining together the triconnected component trees of the old biconnected components, after the changes required by the insertions of virtual edges are applied, at the cycle component node that contains the inserted edge.
\begin{figure*}[t]
	\centering
	\begin{subfigure}{.53\textwidth}\centering
	\scalebox{0.7}{
		\begin{tikzpicture}[
		scale=1.1,
		label distance= 2mm,
		vertex/.style={circle,fill=black,inner sep=1.6pt},
		hollow/.style={circle,draw, fill=white,inner sep=1.6pt},
		region/.style={draw=black,fill=gray!35},
		edge/.style={line width=1pt},
		bluepath/.style={blue!60,line width=1pt},
		redpath/.style={red!60,line width=1pt}
		]
		
		\newcommand{\ellipse}[3]{
			\filldraw[region] #1 ellipse ({#2} and {#3})
		};

		\newcommand{\cone}[3]{
			\def\ratio{0.3} %
			\def\myangle{34.9} %
			\draw[region]
			({#1}) -- ++({#2+\myangle/2}:{#3}) arc ({#2+90+\myangle /2}:{#2-90-\myangle /2}:{\ratio*#3}) -- ({#1})
		};

		\def\length{1}
		\def\height{0.5}
		\ellipse{(0,0)}{\length}{\height}
		node (B1) {$B_1$}
		node[coordinate, label={below:$c_1$}] (c1) at ++(-\length,0) {}
		node[coordinate, label={below:$c_2$}] (c2) at ++(\length,0) {}
		node[coordinate] (B1lower) at ++(0,-\height) {}
		;
		\ellipse{(B1) ++ (-2*\length,0)}{\length}{\height}
		node (Ba) {$B_u$}
		node[coordinate, label={[label distance=0mm]below right:$u$}] (a) at ++(-0.8*\length,0){}
		node[coordinate] (Balower) at ++(0,-\height){}
		;
		\ellipse{(B1) ++ (2*\length,0)}{\length}{\height}
		node (B2) {$B_2$}
		node[coordinate, label={below:$c_3$}] (c3) at ++(\length,0) {}
		node[coordinate] (B2upper) at ++(0,\height) {}
		;
		\ellipse{(B2) ++(2*\length,0)}{\length}{\height}
		node (Bb) {$B_v$}
		node[coordinate, label={[label distance=0mm]below left:$v$}] (b) at ++(0.8*\length,0) {}
		node[coordinate] (Bblower) at ++(0,-\height) {}
		;

		\cone{c1}{90}{1};
		\cone{B1lower}{-90}{0.7};
		\cone{Balower}{-135}{0.7};
		\cone{Balower}{-45}{0.7};
		\cone{Bblower}{-45}{1};
		\cone{Bblower}{-135}{0.5};

		\def\one{B2upper}
		\def\two{45}
		\def\three{0.5}
		\def\ratio{0.3} %
		\def\myangle{34.9} %
		\draw[region]
		(\one) -- ++({\two+\myangle/2}:{\three}) arc 
		({\two+90+\myangle /2}:{\two-90-\myangle /2}:{\ratio*\three})
		node[pos=0] (B2-1) {}
		-- ({\one});
		
		\filldraw[region] (B2-1) ++ (135:0.3) circle (0.3)
		node[coordinate] (B2-2) at ++(180:0.3) {}
		;
		\cone{B2-2}{135}{0.5};

		\draw[redpath] (a) .. controls ++(65:1) and ++(115:1) .. (c1)
		.. controls ++(65:1) and ++(115:1) ..(c2)
		.. controls ++(65:1) and ++(115:1) .. (c3) 
		.. controls ++(65:1) and ++(115:1) ..(b);
		
		\draw[bluepath]
		(a) .. controls ++(80:2.5) and ++(110:2.5) .. (b)
		;
		
		\foreach \p in {Balower,B1lower,B2upper,Bblower,c1,c2,c3,B2-1,B2-2}{
			\node[vertex] at (\p) {};
		}
		
		\foreach \p in {a,b}{
			\node[hollow] at (\p) {};
		}
		\end{tikzpicture}
	}
  \scalebox{0.6}{
		\begin{tikzpicture}[
			scale=1,
			every node/.style={node distance=.7cm},
			circ/.style={
				draw,
				fill=gray!40,
				circle,
				minimum size=7mm
			},
			dot/.style={
				draw,
				fill=black,
				circle,
				minimum size=2mm
			},
			edge/.style={line width=2pt},
			highlight/.style={yellow!45,opacity=0.7}
			]
			\node[circ, label=below:$B_2$] (B2) {};

			\node[dot,left=of B2, label=below:$c_2$] (c2){};
			\node[circ,left=of c2, label=above:$B_1$] (B1){};
			\node[dot,left=of B1, label=below:$c_1$] (c1){};
			\node[circ,left=of c1, label=left:$B_u$] (Ba){};

			\node[dot, right=of B2, label=below:$c_3$] (c3) {};
			\node[circ, right=of c3, label=right:$B_v$] (Bb) {};
			
			\draw[edge]  (Ba) -- (c1) -- (B1) -- (c2) -- (B2) -- (c3) -- (Bb);

			\node[dot, below=of Ba] (Ba-1) {};
			\node[circ, below left=of Ba-1] (Ba-2) {};
			\node[circ, below right=of Ba-1] (Ba-3) {};
			\draw[edge] (Ba) -- (Ba-1) -- (Ba-2);
			\draw[edge] (Ba-1) -- (Ba-3);

			\node[circ, above=of c1] (c1-1) {};
			\draw[edge] (c1) -- (c1-1);

			\node[dot, below=  of B1] (B1-1) {};
			\node[circ, below= of B1-1] (B1-2) {};
			\draw[edge] (B1) -- (B1-1) -- (B1-2);

			\node[dot, above= 0.5cm of B2] (B2-1) {};
			\node[circ] (B2-2) at ([shift=(45:1cm)]B2-1) {};
			\node[dot] (B2-3) at ([shift=(160:0.7cm)]B2-2) {};
			\node[circ] (B2-4) at ([shift=(160:0.7cm)]B2-3) {};
			\node[dot] (B2-5) at ([shift=(160:0.7cm)]B2-4) {};
			\node[circ] (B2-6) at ([shift=(160:0.7cm)]B2-5) {};
			\draw[edge] (B2) -- (B2-1) -- (B2-2) -- (B2-3) -- (B2-4) -- (B2-5) -- (B2-6);

			\node[dot,below=of Bb] (Bb-1) {};
			\node[circ,below left=0.5 of Bb-1] (Bb-2) {};
			\node[circ,below right=of Bb-1] (Bb-3) {};
			\draw[edge] (Bb) -- (Bb-1) -- (Bb-2);
			\draw[edge] (Bb-1) -- (Bb-3);

			\begin{scope}[on background layer]
				\coordinate[left= 0.5cm of Ba] (left);
				\coordinate[right= 0.5cm of Bb] (right);
				\draw[highlight, line width=20mm]
				(left) --(right);
			\end{scope}
		\end{tikzpicture}
  }
		\caption{The graph and its biconnected component tree before the insertion.}
		\label{fig:BCpath:B}
	\end{subfigure} \quad
	\begin{subfigure}{.42\textwidth}\centering
	\scalebox{0.7}{
		\begin{tikzpicture}[
			scale=1.4,
			vertex/.style={circle,fill=black,inner sep=1.6pt},
			hollow/.style={circle,draw, fill=white,inner sep=1.6pt},
			region/.style={draw=black,fill=gray!35},
			edge/.style={line width=1pt},
			bluepath/.style={blue!60,line width=1pt},
			redpath/.style={red!60,line width=1pt}
			]

			\newcommand{\cone}[3]{
				\def\ratio{0.3} %
				\def\myangle{34.9} %
				\draw[region]
				({#1}) -- ++({#2+\myangle/2}:{#3}) arc ({#2+90+\myangle /2}:{#2-90-\myangle /2}:{\ratio*#3}) -- ({#1})
			};
			
			\def\length{1.5}
			\def\height{0.6}
			\filldraw[draw=black,fill=gray!15,name path=ellipse, line width=1pt] (0,0) ellipse ({\length} and {\height})
			node (Bab) at ++(0,-0.5*\height) {$B_{uv}$}
			node[coordinate,label={left:$u$}] (a) at ++(-0.5*\length,0.4*\height) {}
			node[coordinate,label={right:$v$}] (b) at ++(0.5*\length,0.4*\height) {}
			node[coordinate] (r1) at (-0.3*\length,0.2*\height) {}
			node[coordinate] (r2) at (-0.1*\length,0.2*\height) {}
			node[coordinate] (r3) at (0.1*\length,0.2*\height) {}
			node[coordinate] (r4) at (0.3*\length,0.2*\height) {}
			;
			
			\path[name path=p-1] (-0.8*\length,0) -- ++(0,-1.1*\height);
			\path[name path=p-2] (-0.5*\length,0) -- ++(0,1.1*\height);
			\path[name path=p-3] (0*\length,0) -- ++(0,-1.1*\height);
			\path[name path=p-4] (0.7*\length,0) -- ++(0,-1.1*\height);
			\path[name path=p-5] (0.8*\length,0) -- ++(0,1.1*\height);
			
			\path[name intersections={of= ellipse and p-1, by=1}];
			\path[name intersections={of= ellipse and p-2, by=2}];
			\path[name intersections={of= ellipse and p-3, by=3}];
			\path[name intersections={of= ellipse and p-4, by=4}];
			\path[name intersections={of= ellipse and p-5, by=5}];
			
			\cone{1}{-180}{0.7};
			\cone{1}{-90}{0.7};
			
			\cone{2}{90}{1};
			
			\cone{3}{-90}{0.7};
			
			\cone{4}{-20}{1};
			\cone{4}{-100}{0.5};

			\def\one{5}
			\def\two{45}
			\def\three{0.5}
			\def\ratio{0.3} %
			\def\myangle{34.9} %
			\draw[region]
			(\one) -- ++({\two+\myangle/2}:{\three}) arc 
			({\two+90+\myangle /2}:{\two-90-\myangle /2}:{\ratio*\three})
			node[pos=0] (B2-1) {}
			-- (\one);
			
			\filldraw[region] (B2-1) ++ (135:0.3) circle (0.3)
			node[coordinate] (B2-2) at ++(180:0.3) {}
			;
			\cone{B2-2}{135}{0.5};
			
			\draw[bluepath]
			(a) to[bend left=20] (b)
			;
			\draw[redpath]
			(a) -- (r1) -- (r2) -- (r3) -- (r4) -- (b)
			;
			
			\foreach \p in {r1,r2,r3,r4}{
				\node[fill=red, circle, inner sep=0.5mm] at (\p) {};
			}

			\foreach \p in {1,2,3,4,5,B2-1,B2-2}{
				\node[vertex] at (\p) {};
			}
			\foreach \p in {a,b}{
				\node[hollow] at (\p) {};
			}
			
			\node[label={[label distance=-1.5mm]above right:$c_1$}] at (2) {};
			
		\end{tikzpicture}
  }
  \scalebox{0.6}{
		\begin{tikzpicture}[
			scale=1,
			every node/.style={node distance=.7cm},
			circ/.style={
				draw,
				fill=gray!40,
				circle,
				minimum size=7mm
			},
			dot/.style={
				draw,
				fill=black,
				circle,
				minimum size=2mm
			},
			edge/.style={line width=2pt},
			highlight/.style={yellow!45,opacity=0.7}
			]

			\node[circ,line width=1.3pt, label=right:$B_{uv}$] (Bab) {};

			\node[above=of Bab] (upper) {};
			\node[dot,left=2cm of upper, label=left:$c_1$] (c1) {};
			\node[dot,right=2cm of upper] (upper-right) {};
			\node[dot,below=of Bab] (lower) {};
			\node[dot, left=2cm of lower] (lower-left) {};
			\node[dot, right=2cm of lower] (lower-right) {};
			\foreach \p in {c1,upper-right,lower-left,lower,lower-right}{
				\draw[edge] (Bab) -- (\p);
			}

			\node[circ, below left=of lower-left] (Ba-2) {};
			\node[circ, below right=of lower-left] (Ba-3) {};
			\draw[edge] (lower-left) -- (Ba-2);
			\draw[edge] (lower-left) -- (Ba-3);

			\node[circ, above=of c1] (c1-1) {};
			\draw[edge] (c1) -- (c1-1);

			\node[circ, below= of lower] (B1-2) {};
			\draw[edge] (lower) -- (B1-2);

			\node[circ] (B2-2) at ([shift=(45:1cm)]upper-right) {};
			\node[dot] (B2-3) at ([shift=(160:0.7cm)]B2-2) {};
			\node[circ] (B2-4) at ([shift=(160:0.7cm)]B2-3) {};
			\node[dot] (B2-5) at ([shift=(160:0.7cm)]B2-4) {};
			\node[circ] (B2-6) at ([shift=(160:0.7cm)]B2-5) {};
			\draw[edge] (upper-right) -- (B2-2) -- (B2-3) -- (B2-4) -- (B2-5) -- (B2-6);

			\node[circ,below left=0.5cm of lower-right] (Bb-2) {};
			\node[circ,below right=of lower-right] (Bb-3) {};
			\draw[edge] (lower-right) -- (Bb-2);
			\draw[edge] (lower-right) -- (Bb-3);

			\begin{scope}[on background layer]
				\coordinate[left= 0.5cm of Bab] (left);
				\coordinate[right= 0.5cm of Bab] (right);
				\draw[highlight, line width=15mm]
				(left) --(right);
			\end{scope}
		\end{tikzpicture}
  }
		\caption{The graph and its biconnected component tree after the insertion.}
		\label{fig:BCpath:A}
	\end{subfigure}

	\centering

	\begin{subfigure}{.90\textwidth}
		\centering
		\scalebox{1}{
		\begin{tikzpicture}[
			scale=1,
			every node/.style={inner sep=0pt, label distance=1mm},
			vertex/.style={circle,fill=black,inner sep=1.2pt},
			C/.style={fill=white, inner sep=1mm},
			hollow/.style={circle,draw, fill=white,inner sep=1.6pt},
			region/.style={fill=gray!35, draw=black},
			edge/.style={line width=1pt}
			]

			\draw[edge] (0,0) -- (-2,-1)
			coordinate (1-1)
			-- (-3,-1.5)
			coordinate(1-2)
			;
			\draw[edge] (0,0) -- (-2/3,-1)
			coordinate(2-1)
			-- (-1,-1.5)
			coordinate(2-2)
			; 
			\draw[edge] (0,0) -- (2/3,-1)
			coordinate(3-1)
			-- (1,-1.5)
			coordinate(3-2)
			; 
			
			\draw[edge] (0,0) -- (2,-1)
			coordinate(4-1)
			-- (3,-1.5)
			coordinate (4-2)
			;
			
			\node[fill=black,minimum height=4mm,minimum width=2mm,black,label=right:\small$uc_1$] at (1-1) {};
			\node[fill=black,minimum height=4mm,minimum width=2mm,label=right:\small$c_1c_2$] at (2-1) {};
			\node[fill=black,minimum height=4mm,minimum width=2mm,label=right:\small$c_2c_3$] at (3-1) {};
			\node[fill=black,minimum height=4mm,minimum width=2mm,label=right:\small$c_3v$] at (4-1) {};
			
			\newcommand{\tri}[1]{
				\filldraw[region] (#1) -- ++(-60:1) -- ++(180:1) -- (#1);
			}
			
			\tri{1-2};
			\tri{2-2};
			\tri{3-2};
			\tri{4-2};
			
			\node[label={[label distance=1cm]below:\small$\triTree(B'_u)$}] at (1-2) {};
			\node[label={[label distance=1cm]below:\small$\triTree(B'_1)$}] at (2-2) {};
			\node[label={[label distance=1cm]below:\small$\triTree(B'_2)$}] at (3-2) {};
			\node[label={[label distance=1cm]below:\small$\triTree(B'_v)$}] at (4-2) {};
			
			\filldraw[region] (0,0) circle (0.5);
			\draw[blue] (-0.375,0.1) to[bend left=50] (0.375,0.1);
			\draw[red] (0.375,0.1) to[bend right=20] (0.225,-0.1)
			to[bend right=60] (0.075,-0.1)
			to[bend right=60] (-0.075,-0.1)
			to[bend right=60] (-0.225,-0.1)
			to[bend right=20] (-0.375,0.1)
			;

		\end{tikzpicture}		
		}
		\caption{The triconnected component tree $\triTree(B_{uv})$ after insertion. }
		\label{fig:BCpath:3tree}
	\end{subfigure}
	\caption{
	Insertion of a $\es[+]{1}{2}$ edge $(u,v)$ in a connected component. The biconnected component tree of the graph, before and after the change, and the triconnected component tree of the merged block.  
	} 	\label{fig:BCpath}
\end{figure*}

Figure~\ref{fig:2-23treeChanges} details both the effects of a $\es[+]{2}{3}$ change where the coalescing path contains a cycle component (extending Figure~\ref{fig:SPQRpath}) and a $\es[+]{2}{2}$ change on the triconnected component trees. 
Figure~\ref{fig:22-23graph} shows the graph before the change, they grey areas depict $3$-connected components.
The triconnected component tree of the graph is sketched in Figure~\ref{fig:22-23tri-treeB}. Its center is a cycle component, adjacent vertices of this cycle form $3$-connected separating pairs.

The insertion of the edge $(u,v)$ is of type $\es[+]{2}{3}$. The resulting $3$-connected component is in the center of Figure~\ref{fig:23-tri-treeA}. From the cycle component, only $a_1, a_2, b_1, b_2$ are included. The pairs $\{a_1, a_2\}$ and $\{b_1, b_2\}$ are now $3$-connected separating pairs, the cycle component is split into two parts.

The insertion of the edge $(a_1, b_2)$ is of type $\es[+]{2}{3}$, the resulting triconnected component tree is sketched in Figure~\ref{fig:22-tri-treeA}.
This new chord forms a $3$-connected separating pair and splits the cycle component.
\begin{figure*}
	\centering
	\begin{subfigure}{.48\textwidth}
		\centering
		\scalebox{0.7}{
		\begin{tikzpicture}[
			scale=1,
			every node/.style={inner sep=0pt, label distance=1mm},
			vertex/.style={circle,fill=black,inner sep=1.6pt},
			hollow/.style={circle,draw, fill=white,inner sep=1.6pt},
			region/.style={draw=black,fill=gray!35},
			edge/.style={line width=1pt}
			]

			\def\rad{.9}
			\filldraw[region] (0,0) circle (\rad)
			node (C2) {}
			node[coordinate,label=below left:$a_1$] (2-3) at ++(45:\rad) {}
			node[coordinate] (2-1a) at ++(150:\rad) {}
			node[coordinate] (2-1b) at ++(210:\rad) {}
			node[coordinate,label=above left:$b_1$] (2-7) at ++(-45:\rad) {}
			;
			
			\filldraw[region] (2-3) ++ (45:\rad) circle (\rad)
			node {}
			node[coordinate,label=right:$c$] (3-4) at ++(0:\rad) {}
			;

			\filldraw[region] (3-4) ++ (0:\rad) circle (\rad)
			node {}
			node[coordinate,label=below right:$a_2$] (4-5) at ++(-45:\rad) {}
			;

			\node[label=right:$c$] at (3-4){};
			
			\filldraw[region] (4-5) ++ (-45:\rad) circle (\rad)
			node {}
			node[coordinate,label=above right:$b_2$] (5-6) at ++(-135:\rad) {}
			node[coordinate] (5-8a) at ++(-30:\rad) {}
			node[coordinate] (5-8b) at ++(30:\rad) {}
			;
			
			\node[coordinate,label=below right:$a_2$] at (4-5) {};			
			\filldraw[region] (5-6) ++ (-135:\rad) circle (\rad)
			node {}
			node[coordinate,label= right:$d$] (6-7) at ++(180:\rad) {}
			;
			
			\filldraw[region] (6-7) ++ (180:\rad) circle (\rad)
			node {}
			;
			
			\def\innerdist{0.7} %
			\def\outerdist{1.5} %
			\def\innerangle{0} %
			\def\outerangle{-40} %
			\def\dir{180}
			\filldraw[region] (2-1a) .. controls ++(\dir+\outerangle/2:\outerdist) and ++(\dir-\outerangle/2:\outerdist) ..
			node[pos=0.5] (a) {}
			(2-1b)
			..controls ++(\dir+\innerangle/2:\innerdist) and ++(\dir-\innerangle/2:\innerdist) .. (2-1a)
			;
			
			\def\dir{0}
			\filldraw[region] (5-8a) .. controls ++(\dir+\outerangle/2:\outerdist) and ++(\dir-\outerangle/2:\outerdist) ..
			node[pos=0.5] (b) {}
			(5-8b)
			..controls ++(\dir+\innerangle/2:\innerdist) and ++(\dir-\innerangle/2:\innerdist) .. (5-8a)
			;

			\node[label=left:$u$] at (a) {};
			\node[label=right:$v$] at (b) {};
			\foreach \p in {2-1a,2-1b,5-8a,5-8b,2-3,3-4,4-5,5-6,6-7,2-7,a,b}{
				\node[vertex] at (\p) {};
			}
			
		\end{tikzpicture}
		}
		\caption{A biconnected graph before the change.\label{fig:22-23graph}}
	\end{subfigure}\quad
	\begin{subfigure}{.48\textwidth}
		\centering
		\scalebox{0.7}{
		\begin{tikzpicture}[
			scale=1,
			every node/.style={inner sep=0pt, label distance=1mm},
			vertex/.style={circle,fill=black,inner sep=1.6pt},
			C/.style={fill=white, inner sep=1mm},
			hollow/.style={circle,draw, fill=white,inner sep=1.6pt},
			region/.style={fill=gray!35},
			edge/.style={line width=1pt},
			tri/.style={
				draw,
				line width=0.5pt,
				fill=gray!40,
				regular polygon,
				regular polygon sides=3,
				minimum size=6mm
			},
			rect/.style={
				draw,
				fill=black,
				minimum width=3mm,
				minimum height=5mm
			},
			]
			
			\fill[region] (0,0) circle (1.5);
			
			\newcommand{\seppair}[2]{
				\node[rect,rotate=#2] at (#1) {};
			}

			\def\dist{.7}
			
			\def\dir{0}
			\draw[line width=1pt]
			(\dir:1.5) -- ++(\dir:\dist)
			node[pos=1] (8-1) {}
			-- ++(\dir:\dist)
			node[pos=1, tri] (8-2) {}
			-- ++(\dir:\dist)
			node[pos=1] (8-3) {}
			-- ++(\dir:\dist)
			node[pos=1, tri] (8-4) {}
			;
			
			\seppair{8-1}{\dir};
			\seppair{8-3}{\dir};

			\def\dir{60}
			\draw[line width=1pt]
			(\dir:1.5) -- ++(\dir:\dist)
			node[pos=1] (4-1) {}
			-- ++(\dir:\dist)
			node[pos=1, tri] (4-2) {}
			;
			
			\seppair{4-1}{\dir};
			
			\def\dir{120}
			\draw[line width=1pt]
			(\dir:1.5) -- ++(\dir:\dist)
			node[pos=1] (3-1) {}
			-- ++(\dir:\dist)
			node[pos=1, tri] (3-2) {}
			;
			
			\seppair{3-1}{\dir};
			
			\def\dir{180}
			\draw[line width=1pt]
			(\dir:1.5) -- ++(\dir:\dist)
			node[pos=1] (1-1) {}
			-- ++(\dir:\dist)
			node[pos=1, tri] (1-2) {}
			-- ++(\dir:\dist)
			node[pos=1,] (1-3) {}
			-- ++(\dir:\dist)
			node[pos=1, tri] (1-4) {}
			;
			
			\seppair{1-1}{\dir};
			\seppair{1-3}{\dir};
			
			\def\dir{-120}
			\draw[line width=1pt]
			(\dir:1.5) -- ++(\dir:\dist)
			node[pos=1] (7-1) {}
			-- ++(\dir:\dist)
			node[pos=1, tri] (7-2) {}
			;
			
			\seppair{7-1}{\dir};
			
			\def\dir{-60}
			\draw[line width=1pt]
			(\dir:1.5) -- ++(\dir:\dist)
			node[pos=1] (6-1) {}
			-- ++(\dir:\dist)
			node[pos=1, tri] (6-2) {}
			;
			
			\seppair{6-1}{\dir};

			\foreach \p in {1,2,3,4,5,6}{
				\coordinate (\p) at (-60*\p-150:1);
			}
			\draw[blue, line width=1pt] 
			(1)
			to[bend right=30] (2)
			to[bend right=30] (3)
			to[bend right=30] (4)
			to[bend right=30] (5)
			to[bend right=30] (6)
			to[bend right=30] (1)
			;
			\foreach \p in {1,2,3,4,5,6}{
				\node[vertex,label] at(\p){};
			}
			
			\node[label=above:$a_1$] at (1) {};
			\node[label=above:$c$] at (2) {};
			\node[label=above:$a_2$] at (3) {};
			\node[label=below:$b_2$] at (4) {};
			\node[label=below:$d$] at (5) {};
			\node[label=below:$b_1$] at (6) {};
			
		\end{tikzpicture}
	}
		\caption{The triconnected component tree before the change. \label{fig:22-23tri-treeB}}
	\end{subfigure}
	
	\begin{subfigure}{.42\textwidth}
		\centering
		\scalebox{0.7}{
		\begin{tikzpicture}[
			scale=1,
			every node/.style={inner sep=0pt, label distance=2mm},
			vertex/.style={circle,fill=black,inner sep=1.2pt},
			C/.style={fill=white, inner sep=1mm},
			hollow/.style={circle,draw, fill=white,inner sep=1.6pt},
			region/.style={fill=gray!35},
			edge/.style={line width=1pt},
			tri/.style={
				draw,
				line width=0.5pt,
				fill=gray!40,
				regular polygon,
				regular polygon sides=3,
				minimum size=6mm
			},
			rect/.style={
				draw,
				fill=black,
				minimum width=3mm,
				minimum height=5mm
			},
			]
			
			\newcommand{\seppair}[2]{
				\node[rect, rotate=#2] at (#1) {};
			}
			
			\def\dist{1}
			
			\def\rad{0.5}
			\draw (0,0) circle (\rad)
			node (C2) {}
			node[coordinate] (2a) at ++(45:\rad) {}
			node[coordinate] (2-1a) at ++(150:\rad) {}
			node[coordinate] (2-1b) at ++(210:\rad) {}
			node[coordinate] (2b) at ++(-45:\rad) {}
			;
			
			\draw[blue, line width=1pt] (2a) -- ++(0:1)
			coordinate (5a);
			\draw[blue, line width=1pt] (2b) -- ++(0:1)
			coordinate (5b);
			
			\draw (5a) ++(-45:\rad) circle (\rad)
			node (C5) {}
			node[coordinate] (5-8a) at ++(-30:\rad) {}
			node[coordinate] (5-8b) at ++(30:\rad) {}
			;
			
			\def\innerdist{0.3} %
			\def\outerdist{1} %
			\def\innerangle{0} %
			\def\outerangle{-40} %
			\def\dir{180}
			\draw (2-1a) .. controls ++(\dir+\outerangle/2:\outerdist) and ++(\dir-\outerangle/2:\outerdist) ..
			node[pos=0.5] (a) {}
			(2-1b)
			..controls ++(\dir+\innerangle/2:\innerdist) and ++(\dir-\innerangle/2:\innerdist) .. (2-1a)
			;

			\def\dir{0}
			\draw (5-8a) .. controls ++(\dir+\outerangle/2:\outerdist) and ++(\dir-\outerangle/2:\outerdist) ..
			node[pos=0.5] (b) {}
			(5-8b)
			..controls ++(\dir+\innerangle/2:\innerdist) and ++(\dir-\innerangle/2:\innerdist) .. (5-8a)
			;

			\draw[line width=1pt]
			(a) .. controls ++(120:1.3) and ++(60:1.3) .. (b)
			;
			
			\foreach \p in {a,2-1a,2-1b,2a,2b,5a,5b,5-8a,5-8b,b}{
				\node[vertex] at (\p) {};
			}
			
			\begin{scope}[on background layer]
				\fill[region] ($(C2)!0.5!(C5)$) ellipse (2.5 and 1)
				node (center) {};
			\end{scope}

			\def\dir{90}
			\draw[line width=1pt] (center) ++(0,1) -- ++(\dir:\dist)
			coordinate (up)
			-- ++(\dir:\dist)
			coordinate (uppertri-root)
			;
			
			\seppair{up}{\dir};
			
			\fill[region] (uppertri-root) ++(\dir:\rad) circle (\rad)
			coordinate (upper)
			;
			\path (upper) ++ (-30:0.3)
			coordinate[label=below right:$a_2$] (upper1);
			\path (upper) ++ (-150:0.3)
			coordinate[label=below left:$a_1$] (upper2);
			\path (upper) ++ (90:0.3)
			coordinate[label=above:$c$] (upper3);
			\draw[blue,line width=1pt]
			(upper1)
			to[bend right=20] (upper2)
			to[bend right=20] (upper3)
			to[bend right=20] (upper1)
			;
			\node[vertex] at (upper1) {};
			\node[vertex] at (upper2) {};
			\node[vertex] at (upper3) {};

			\def\dir{45}
			\draw[line width=1pt] (upper) ++(\dir:\rad) -- ++(\dir:\dist)
			coordinate (ur-1)
			-- ++(\dir:\dist)
			node[tri] {}
			;
			\seppair{ur-1}{\dir};
			
			\def\dir{135}
			\draw[line width=1pt] (upper) ++(\dir:\rad) -- ++(\dir:\dist)
			coordinate (ul-1)
			-- ++(\dir:\dist)
			node[tri] {}
			;
			\seppair{ul-1}{\dir};

			\def\dir{-90}
			\draw[line width=1pt] (center) ++(0,-1) -- ++(\dir:\dist)
			coordinate (down)
			-- ++(\dir:\dist)
			coordinate (lowertri-root)
			;
			
			\seppair{down}{\dir};
			
			\fill[region] (lowertri-root) ++(\dir:\rad) circle (\rad)
			coordinate (lower)
			;
			
			\path (lower) ++ (30:0.3)
			coordinate[label=above right:$b_2$] (lower1);
			\path (lower) ++ (-90:0.3)
			coordinate[label= below:$d$] (lower2);
			\path (lower) ++ (150:0.3)
			coordinate[label=above left:$b_1$] (lower3);
			\draw[blue,line width=1pt]
			(lower1)
			to[bend right=20] (lower2)
			to[bend right=20] (lower3)
			to[bend right=20] (lower1)
			;
			\node[vertex] at (lower1) {};
			\node[vertex] at (lower2) {};
			\node[vertex] at (lower3) {};
			
			\def\dir{-45}
			\draw[line width=1pt] (lower) ++(\dir:\rad) -- ++(\dir:\dist)
			coordinate (lr-1)
			-- ++(\dir:\dist)
			node[tri] {}
			;
			\seppair{lr-1}{\dir};
			
			\def\dir{-135}
			\draw[line width=1pt] (lower) ++(\dir:\rad) -- ++(\dir:\dist)
			coordinate (ll-1)
			-- ++(\dir:\dist)
			node[tri] {}
			;
			\seppair{ll-1}{\dir};

		\end{tikzpicture}
		}
		\caption{The triconnected component tree after the $\es[+]{2}{3}$ insertion of $(u,v)$. \label{fig:23-tri-treeA}}
	\end{subfigure}\:
	\begin{subfigure}{.56\textwidth}
	\centering
	\scalebox{0.7}{
	\begin{tikzpicture}[
		scale=1,
		every node/.style={inner sep=0pt, label distance=1mm},
		vertex/.style={circle,fill=black,inner sep=1.2pt},
		C/.style={fill=white, inner sep=1mm},
		hollow/.style={circle,draw, fill=white,inner sep=1.6pt},
		region/.style={fill=gray!35},
		edge/.style={line width=1pt},
		tri/.style={
			draw,
			line width=0.5pt,
			fill=gray!40,
			regular polygon,
			regular polygon sides=3,
			minimum size=6mm
		},
		rect/.style={
			draw,
			fill=black,
			minimum width=3mm,
			minimum height=5mm
		},
		]
		
		\newcommand{\seppair}[2]{
\node[rect,rotate=#2] at (#1) {};
		}
		
		\def\elllength{1.2}
		\def\ellheight{1}

		\begin{scope}[on background layer]
			\fill[region] (0,0) ellipse ({\elllength} and \ellheight)
			node (upper) {};
		\end{scope}		
		
		\def\dir{-90}
		\def\dist{0.5}
		\draw[line width=1pt] (upper) ++(0,-\ellheight) -- ++(\dir:\dist)
		coordinate (mid)
		-- ++(\dir:\dist)
		coordinate (lower-root)
		;
		
		\seppair{mid}{\dir};
		\path (mid) ++ (180:0.7)
		node {};
		\path (mid) ++ (0:0.7)
		node {};
		
		\begin{scope}[on background layer]
			\fill[region] (lower-root) ++(\dir:1) ellipse ({\elllength} and \ellheight)
			node (lower) {};
		\end{scope}
		
		\path (upper) ++(-0.6*\elllength,-0.25*\ellheight)
		coordinate (u-c);
		\path (upper) ++(0.6*\elllength,-0.25*\ellheight)
		coordinate (u-d);
		\path (upper) ++(-0.2*\elllength,0.5*\ellheight)
		coordinate (u-1);
		\path (upper) ++(0.2*\elllength,0.5*\ellheight)
		coordinate (u-2);
		
		\draw[blue,edge] (u-c)
		to[bend right=30] (u-1)
		to[bend right=30] (u-2)
		to[bend right=30] (u-d)
		;
		\draw[edge](u-d) to[bend left=10] (u-c);
		
		\path (lower) ++(-0.6*\elllength,0.25*\ellheight)
		coordinate (l-c);
		\path (lower) ++(0.6*\elllength,0.25*\ellheight)
		coordinate (l-d);
		\path (lower) ++(-0.2*\elllength,-0.5*\ellheight)
		coordinate (l-1);
		\path (lower) ++(0.2*\elllength,-0.5*\ellheight)
		coordinate (l-2);
		
		\draw[blue,edge] (l-c)
		to[bend left=30] (l-1)
		to[bend left=30] (l-2)
		to[bend left=30] (l-d)
		;
		\draw[edge](l-d) to[bend right=10] (l-c);

		\foreach \p in {u-c,u-d,u-1,u-2,l-c,l-d,l-1,l-2}{
			\node[vertex] at (\p) {};
		}
		\node[label=above:$a_1$] at (u-c) {};
		\node[label=below:$a_1$] at (l-c) {};
		\node[label=above:$b_2$] at (u-d) {};
		\node[label=below:$b_2$] at (l-d) {};

		\node[label=above:$c$] at (u-1) {};
		\node[label=above:$a_2$] at (u-2) {};
		\node[label=below:$b_1$] at (l-1) {};
		\node[label=below:$d$] at (l-2) {};
		
		\def\dist{1}
		\def\dir{90}
		\draw[line width=1pt] (upper) ++ (\dir:\ellheight)
		-- ++(\dir:\dist)
		coordinate (u)
		-- ++(\dir:\dist)
		node[tri] (C4) {}
		;
		\seppair{u}{\dir};
		
		\def\dir{0}
		\draw[line width=1pt] (upper) ++ (\dir:\elllength)
		-- ++(\dir:\dist)
		coordinate (ur1)
		-- ++(\dir:\dist)
		node[tri] (C5) {}
		-- ++(\dir:\dist)
		coordinate (ur2)
		-- ++(\dir:\dist)
		node[tri] (C8) {}
		;
		\seppair{ur1}{\dir};
		\seppair{ur2}{\dir};
		
		\def\dir{180}
		\draw[line width=1pt] (upper) ++ (\dir:\elllength)
		-- ++(\dir:\dist)
		coordinate (ul)
		-- ++(\dir:\dist)
		node[tri] (C3) {}
		;
		\seppair{ul}{\dir};

		\def\dir{-90}
		\draw[line width=1pt] (lower) ++ (\dir:\ellheight)
		-- ++(\dir:\dist)
		coordinate (l)
		-- ++(\dir:\dist)
		node[tri] (C7) {}
		;
		\seppair{l}{\dir};
		
		\def\dir{180}
		\draw[line width=1pt] (lower) ++ (\dir:\elllength)
		-- ++(\dir:\dist)
		coordinate (ul1)
		-- ++(\dir:\dist)
		node[tri] (C2) {}
		-- ++(\dir:\dist)
		coordinate (ul2)
		-- ++(\dir:\dist)
		node[tri] (C1) {}
		;
		\seppair{ul1}{\dir};
		\seppair{ul2}{\dir};
		
		\def\dir{0}
		\draw[line width=1pt] (lower) ++ (\dir:\elllength)
		-- ++(\dir:\dist)
		coordinate (ur)
		-- ++(\dir:\dist)
		node[tri] (C6) {}
		;
		\seppair{ur}{\dir};
		
	\end{tikzpicture}
	}
	\caption{The triconnected component tree after the $\es[+]{2}{2}$ insertion of $(a_1,b_2)$.\label{fig:22-tri-treeA}}
	\end{subfigure}
	\caption{\label{fig:2-23treeChanges} Effects of $\es[+]{2}{3}$ and $\es[+]{2}{2}$ changes on the triconnected component tree.}
\end{figure*} %

 \FloatBarrier
\section{Proof Details for Maintaining Planar Graph Isomorphism (Section \ref{section:proof-details})}

\subsection{Maintaining Isomorphisms between Triconnected Components (Section \ref{section:proof-details:triconnected})}

\subsubsection{Additional Proofs for Tools for Maintaining Tutte auxiliary information}

\paragraph*{Exchange of pinned vertices}

We show in this subsection that given an embedding based on three pinned vertices $v_1, v_2, v_3$ we can easily obtain an embedding based on three different pinned vertices $v'_1, v'_2, v'_3$.

\begin{lemma}\label{lem:lina-pins}
Fix a domain of size $n$. Let $G$ be a graph with at most $n$ vertices, $v_{i_1}, v_{i_2}, v_{i_3}$ three of its vertices, and $p$ a prime of magnitude $\bigO(n^c)$, for some constant $c$, such that $(G, v_{i_1}, v_{i_2}, v_{i_3}, p)$ is modulo embedding-inducing.
Then given $\M(G,v_{i_1}, v_{i_2}, v_{i_3}, p)$ one can determine in \FOar for any three vertices $v'_{i_1}, v'_{i_2}, v'_{i_3}$ whether $(G, v'_{i_1}, v'_{i_2}, v'_{i_3}, p)$ is also modulo embedding-inducing and if so, $\M(G,v'_{i_1}, v'_{i_2}, v'_{i_3}, p)$ can be defined.
\end{lemma}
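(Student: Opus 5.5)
The plan is to view the change of pinned vertices as a low-rank update of $\mat T$ and apply the SMW formula~(\ref{eq:smw}), exactly as in Lemma~\ref{lem:lina-edges}. Changing the pins from $\{v_{i_1},v_{i_2},v_{i_3}\}$ to $\{v'_{i_1},v'_{i_2},v'_{i_3}\}$ alters only the rows of the at most six vertices in the symmetric difference. A vertex that stops being pinned has its row changed from the unit row $\mat e_i^\transpose$ to the Laplacian row $\mat L_{i,\cdot}$. A vertex that becomes pinned has its row changed from $\mat L_{i,\cdot}$ to $\mat e_i^\transpose$. So $\mat T' = \mat T + \mat U \mat V^\transpose$, where $\mat U$ has as columns the unit vectors $\mat e_i$ of the affected vertices (at most six of them). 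The corresponding columns of $\mat V$ are $\pm(\mat L_{i,\cdot}-\mat e_i^\transpose)^\transpose$. Since $\mat L$ is symmetric, each such column equals $\pm(\mat c_i - \mat e_i)$, where $\mat c_i$ is the $i$-th column of the Laplacian. It is convenient to do this in one step of rank $k\le 6$, which is constant. Alternatively one can do a sequence of at most six rank-one steps, but then the intermediate matrices might be singular even when $\mat T'$ is not, so I prefer the single step.

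The key steps, in order, are as follows.

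\begin{enumerate}
\item Compute the $k\times k$ matrix $\mat I + \mat V^\transpose \mat T^{-1}\mat U$. Its entries are $\pm(\mat c_j - \mat e_j)^\transpose \mat T^{-1}\mat e_i$, i.e. an entry of the stored row vector $\mat c_j^\transpose \mat T^{-1}$ minus an entry of $\mat T^{-1}$. Both are available in $\M(G,v_{i_1},v_{i_2},v_{i_3},p)$.
\item Decide in \FOar whether this constant-size matrix is invertible over $\Z_p$, and if so invert it. Its determinant is a constant-size polynomial expression in $\bigO(\log n)$-bit numbers, and the inverse is obtained via the adjugate together with one modular inverse, which is \FOar-definable for polynomially bounded $p$. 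By the matrix determinant lemma, $\det \mat T' = \det \mat T\cdot\det(\mat I+\mat V^\transpose \mat T^{-1}\mat U)$, so this test decides exactly whether $(G,v'_{i_1},v'_{i_2},v'_{i_3},p)$ is modulo embedding-inducing.
\item Evaluate the right-hand side of SMW entrywise. $\mat T^{-1}\mat U$ consists of columns of $\mat T^{-1}$. $\mat V^\transpose \mat T^{-1}$ consists of the stored rows $\mat c_j^\transpose\mat T^{-1}$ minus rows of $\mat T^{-1}$. Each entry of $\mat T'^{-1}$ is then a sum of constantly many products.
\end{enumerate}

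The remaining work is to update the second- and third-order data in $\M$, namely $\mat T'^{-1}\mat c$, $\mat c^\transpose\mat T'^{-1}$ and $\mat c_1^\transpose\mat T'^{-1}\mat c_2$. The Laplacian, and hence the columns $\mat c$, do not change, which simplifies matters compared to Lemma~\ref{lem:lina-edges}. Multiplying SMW on the right by $\mat c$ gives $\mat T^{-1}\mat c - \mat T^{-1}\mat U\,\mat M^{-1}\,\mat V^\transpose\mat T^{-1}\mat c$, where $\mat M = \mat I+\mat V^\transpose\mat T^{-1}\mat U$. Each factor is available:
\begin{itemize}
\item $\mat T^{-1}\mat c$ is stored;
\item $\mat V^\transpose \mat T^{-1}\mat c$ has entries $\pm(\mat c_j^\transpose\mat T^{-1}\mat c - \mat e_j^\transpose \mat T^{-1}\mat c)$, which are a stored bilinear value and an entry of a stored vector.
\end{itemize}
The products $\mat c^\transpose \mat T'^{-1}$ are handled symmetrically. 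For $\mat c_1^\transpose\mat T'^{-1}\mat c_2$ one additionally needs $\mat c_1^\transpose \mat T^{-1}\mat U$, whose entries are entries of stored row vectors. All sums have constantly many terms.

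I expect the main obstacle to be bookkeeping rather than conceptual. One must handle the overlap cases where some new pin coincides with an old pin, so that row is unchanged and should be omitted from $\mat U,\mathbf V$. One must also get the signs and the transpose right, since $\mat T$ is not symmetric and the update acts on rows. This is why $\mat U$ carries the unit vectors and $\mat V$ carries the Laplacian columns. The correctness of the invertibility test via the matrix determinant lemma also needs a brief justification modulo $p$, but it holds over any field.
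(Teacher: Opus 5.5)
Your proposal is correct and follows essentially the same route as the paper: a single rank-$\le 6$ SMW update $\mat T' = \mat T + \mat U\mat V^\transpose$, with unit vectors as the columns of $\mat U$ and $\pm(\mat c_i - \mat e_i)$ as the columns of $\mat V$ (using the symmetry of the Laplacian), where all required products are read off from $\M(G,v_{i_1},v_{i_2},v_{i_3},p)$. Your explicit treatment of overlapping pins and your use of the matrix determinant lemma to justify that the invertibility test is exact are small refinements; the paper handles the first by a w.l.o.g.\ assumption and states the second only as an ``if''.
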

\begin{proof}
We assume, w.l.o.g, that the vertices $v_{i_1}, v_{i_2}, v_{i_3}$ and $v'_{i_1}, v'_{i_2}, v'_{i_3}$ are distinct and for notational simplicity that $v_{i_1}, v_{i_2}, v_{i_3} = v_1, v_2, v_3$ and $v'_{i_1}, v'_{i_2}, v'_{i_3} = v_4, v_5, v_6$.
The matrices $\mat T = \mat T(G,v_1,v_2,v_3,p)$ and $\mat T' = \mat T(G,v_4,v_5,v_6,p)$ differ only in the first six rows. More precisely, $\mat T' = \mat T + \mat U \mat V^\transpose$, where $\mat U$ is the $n \times 6$ matrix which has $1$ as diagonal entries and $0$ otherwise; $\mat V$ is the $n \times 6$ matrix which is obtained by taking the first six columns of the Laplacian $\mat L$ of $G$, decreasing each value in the diagonal by $1$, and then inverting the sign of the entries in the columns $4$ to $6$.
It holds
\[(\mat T+ \mat U \mat V^\transpose)^{-1} = \mat T^{-1} - \mat T^{-1}\mat U{(\mat I+\mat V^\transpose \mat T^{-1}\mat U)}^{-1}\mat V^\transpose \mat T^{-1} \]
The inverse exists in $\Z_p$ if $\mat I+\mat V^\transpose \mat T^{-1}\mat U$ is invertible over $\Z_p$. This can be checked in \FOar, as the product $\mat V^\transpose \mat T^{-1}$ can be computed using $\M(G,v_1, v_2, v_3, p)$: the $i$-th row of $\mat V^\transpose \mat T^{-1}$ can be written as $\mat v_i^\transpose \mat T^{-1}$, where the $i$-th column $\mat v_i$ of $\mat V$ is either equal to $\mat c_i - \mat e_i$ or to $- \mat c_i + \mat e_i$, where $\mat c_i$ is the $i$-th column of the Laplacian $\mat L$ and $\mat e_i$ is the $i$-th unit vector.
The product $\mat c_i^\transpose \mat T^{-1}$ is available in $\M(G,v_1, v_2, v_3, p)$, the product $\mat e_i^\transpose \mat T^{-1}$ is trivial to compute. So, $\mat V^\transpose \mat T^{-1}$ can be computed in \FOar and $\mat V^\transpose \mat T^{-1}\mat U$ is just the first six columns of the matrix.
The matrix $\mat I+\mat V^\transpose \mat T^{-1}\mat U$ has constant size $6 \times 6$, therefore one can compute its determinant in \FOar as well and check whether it is $0$. If not, the inverse ${(\mat I+\mat V^\transpose \mat T^{-1}\mat U)}^{-1}$ can be computed.
Computing the remaining matrix products also only involves summing at most six numbers that are the product of two numbers, which is possible in $\FOar$.

We explain how to compute the products of the form $\mat c_1^\transpose \, \mat T'^{-1} \, \mat c_2$, where $\mat c_1$ and $\mat c_2$ are columns of the Laplacian of $G$. The other necessary products are computed analogously. We use again the SMW formula:
\[\mat c_1^\transpose \,(\mat T+ \mat U \mat V^\transpose)^{-1} \, \mat c_2= \mat c_1^\transpose \, \mat T^{-1} \, \mat c_2 -  \mat c_1^\transpose \,  \mat T^{-1}\mat U{(\mat I+\mat V^\transpose \mat T^{-1}\mat U)}^{-1}\mat V^\transpose \mat T^{-1} \, \mat c_2 \]
The scalar  $\mat c_1^\transpose  \mat T^{-1}  \mat c_2$ and the vector $\mat c_1^\transpose \mat T^{-1}$ are given in $\M(G,v_1, v_2, v_3, p)$,  and $\mat V^\transpose \mat T^{-1}  \mat c_2$ can be computed analogously as explained above for the product $\mat V^\transpose \mat T^{-1}$, using the available results for $\mat c_i^\transpose \mat T^{-1} \mat c_2$ from $\M(G,v_1, v_2, v_3, p)$.
\end{proof}

\paragraph*{Merging and splitting of components}

Now we explain how to compute $\M(G,v_{i_1}, v_{i_2}, v_{i_3}, p)$ for a graph $G$ that is the union of two subgraphs $G_1, G_2$ that have exactly two vertices in common plus an additional edge between the subgraphs. We will apply this operation in the case that an edge insertion turns two three-connected components with a common separating pair into one three-connected component. See Figure~\ref{fig:lina-merge-tutte} for an illustrative example.

\begin{figure}[t]
	\begin{subfigure}{\textwidth}
		\centering
		\includegraphics[scale=0.6]{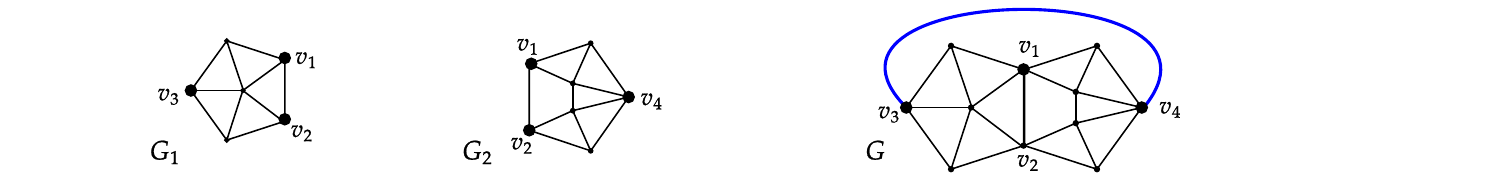}
		\caption{Two $3$-connected planar graphs $G_1$ and $G_2$ that share a separating pair $\{v_1,v_2\}$. The graph $G$ is obtained by inserting the edge $(v_3,v_4)$ into the union of $G_1$ and $G_2$.}
		\label{fig:sfig:lina-tutte-merge}
	\end{subfigure}
	
	\begin{subfigure}{\textwidth}
		\centering
		\includegraphics[scale=0.6]{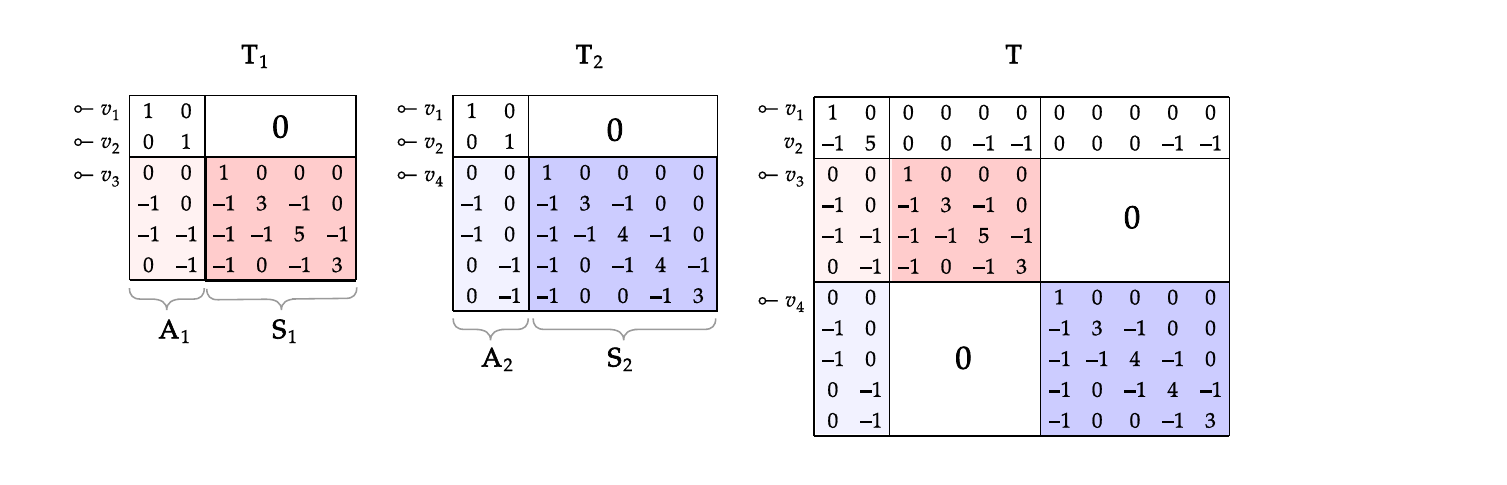}
		\caption{The Tutte matrices of $G_1$ and $G_2$ before the insertion, and the Tutte matrix of $G$ where the matrices of $G_1$ and $G_2$ are diagonally overlapped as blocks about $v_1$ and $v_2$.}
		\label{fig:sfig:lina-tutte-merged}
	\end{subfigure}

	\caption{Illustration for the proof of Lemma~\ref{lem:lina-merging}.}
	\label{fig:lina-merge-tutte}
\end{figure}

\begin{lemma}\label{lem:lina-merging}
Fix a domain of size $n$. Let $G_1, G_2$ be subgraphs of a graph with at most $n$ vertices such that they have exactly two vertices $v_{i_1}, v_{i_2}$ in common. Let $v_{i_3}$ be another vertex of $G_1$ and $v_{i_4}$ another vertex of $G_2$ and $p$ a prime of magnitude $\bigO(n^c)$, for some constant $c$, such that $(G_1, v_{i_1}, v_{i_2}, v_{i_3}, p)$ and $(G_2, v_{i_1}, v_{i_2}, v_{i_4}, p)$ are modulo embedding-inducing.
Let $G$ be the graph that results from the union of $G_1$ and $G_2$ by adding the edge $(v_{i_3}, v_{i_4})$. Then given $\M(G_1,v_{i_1}, v_{i_2}, v_{i_3}, p)$  and $\M(G_2,v_{i_1}, v_{i_2}, v_{i_4}, p)$ one can determine in \FOar whether $(G, v_{i_1}, v_{i_3}, v_{i_4}, p)$ and $(G, v_{i_2}, v_{i_3}, v_{i_4}, p)$ are also modulo embedding-inducing and if so, $\M(G,v_{i_1}, v_{i_3}, v_{i_4}, p)$ and $\M(G,v_{i_2}, v_{i_3}, v_{i_4}, p)$ can be defined.
\end{lemma}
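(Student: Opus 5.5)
We plan to express the Tutte matrix of $G$ with pins $v_{i_1}, v_{i_3}, v_{i_4}$ as a low-rank perturbation of a block matrix built from $\mat T_1 \df \mat T(G_1,v_{i_1},v_{i_2},v_{i_3},p)$ and $\mat T_2 \df \mat T(G_2,v_{i_1},v_{i_2},v_{i_4},p)$, and then apply the SMW formula~\eqref{eq:smw} as in Lemmas~\ref{lem:lina-edges} and~\ref{lem:lina-pins}.

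\textbf{Block matrix.} Order the vertices as $V(G_1)\setminus\{v_{i_1},v_{i_2}\}$, then $\{v_{i_1},v_{i_2}\}$, then $V(G_2)\setminus\{v_{i_1},v_{i_2}\}$. In both $\mat T_1$ and $\mat T_2$ the rows of $v_{i_1}$ and $v_{i_2}$ are unit rows. We therefore define $\mat T_0$ as the overlapped block matrix that equals $\mat T_1$ on the rows of $G_1$-vertices and $\mat T_2$ on the rows of $G_2$-vertices. The two shared rows agree, since both are unit rows, so $\mat T_0$ is well defined.

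Because the shared rows are unit rows, $\mat T_0$ is block triangular after permutation: the shared variables are fixed, and each side depends only on its own variables and the shared ones. Its inverse is then explicit in terms of $\mat T_1^{-1}$ and $\mat T_2^{-1}$:
\begin{itemize}
\item for entries within $G_1$ (including the shared columns), use $\mat T_1^{-1}$;
\item for entries within $G_2$, use $\mat T_2^{-1}$;
\item entries from a row of $G_1\setminus\{v_{i_1},v_{i_2}\}$ to a column of $G_2\setminus\{v_{i_1},v_{i_2}\}$ are zero, and symmetrically.
\end{itemize}
I would verify this by checking $\mat T_0\mat T_0^{-1}=\mat I$ blockwise: solving $\mat T_0\mat x=\mat b$ amounts to fixing $x_{i_1},x_{i_2}$ from $\mat b$ and then solving each side independently. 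This is \FOar-definable from the given $\M$'s, and $\mat T_0$ is invertible mod $p$ whenever $\mat T_1$ and $\mat T_2$ are.

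\textbf{Low-rank correction.} The target $\mat T\df\mat T(G,v_{i_1},v_{i_3},v_{i_4},p)$ differs from $\mat T_0$ in only constantly many rows:
\begin{itemize}
\item the row of $v_{i_2}$ becomes the Laplacian row of $v_{i_2}$ in $G_1\cup G_2$, which is the sum of its $G_1$ and $G_2$ Laplacian rows;
\item the rows of $v_{i_3}$ and $v_{i_4}$ become unit rows instead of Laplacian rows;
\item the new edge $(v_{i_3},v_{i_4})$ affects only these already-replaced rows.
\end{itemize}
Hence $\mat T=\mat T_0+\mat U\mat V^\transpose$ with $\mat U$ consisting of unit columns $\mat e_{i_2},\mat e_{i_3},\mat e_{i_4}$ and $\mat V$ of the corresponding row differences. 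Each such difference is a Laplacian column of $G_1$ or $G_2$ (padded by zeros), plus or minus unit vectors.

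The SMW criterion reduces invertibility to the $3\times3$ matrix $\mat I+\mat V^\transpose\mat T_0^{-1}\mat U$. Its entries are of the form $\mat c^\transpose\mat T_j^{-1}$ evaluated at a few coordinates, which are available in item~2 of $\M(G_j,\ldots)$. Its determinant and inverse are computable in \FOar. This settles whether $(G,v_{i_1},v_{i_3},v_{i_4},p)$ is modulo embedding-inducing and yields $\mat T^{-1}$. The case with $v_{i_2}$ kept pinned and $v_{i_1}$ released is symmetric; alternatively, apply Lemma~\ref{lem:lina-pins} afterwards.

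\textbf{Obstacle: the remaining parts of $\M(G,\ldots)$.} We also need the products $\mat c^\transpose\mat T^{-1}$, $\mat T^{-1}\mat c$ and $\mat c_1^\transpose\mat T^{-1}\mat c_2$ for Laplacian columns $\mat c$ of $G$. A column of $\mat L(G)$ for a non-shared vertex of $G_1$ is a column of $\mat L(G_1)$, possibly plus a constant-support correction for $v_{i_3},v_{i_4}$. A shared vertex's column is the sum of its $G_1$ and $G_2$ columns. Expanding via SMW as in Lemma~\ref{lem:lina-edges}, every term is one of the following:
\begin{itemize}
\item a product of $\mat c^\transpose$ with $\mat T_0^{-1}$, which splits into $G_1$ and $G_2$ parts;
\item a product $\mat c^\transpose \mat T_0^{-1} \mat c'$ for columns $\mat c,\mat c'$ from the same side, which is stored;
\item a cross-side product, which is zero except through the two shared coordinates, and therefore expressible by constantly many stored entries of $\mat T_j^{-1}\mat c$;
\item a term involving a vector of constant support, which is handled directly.
\end{itemize}
Tracking this block bookkeeping carefully is the laborious step, especially the cross terms through the shared coordinates. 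I expect it to go through because every interaction between the two sides passes through those constantly many shared rows and columns.
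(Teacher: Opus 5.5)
Your proposal is correct, and it takes a somewhat different and more economical route than the paper.

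\textbf{Comparison with the paper.} The paper deletes the rows and columns of $v_{i_1},v_{i_2}$ from $\mat T_1,\mat T_2$ to get $\mat S_1,\mat S_2$. It then obtains $\mat S_j^{-1}$ and the products $\mat c^\transpose\mat S_j^{-1}\mat c'$ from $\M(G_j,\ldots)$ by an auxiliary SMW step per side. Next it forms the block-diagonal matrix $\mat R$ with blocks $\mat I,\mat S_1,\mat S_2$, and finally applies SMW to a rank-three correction $\mat T=\mat R+\mat U\mat V^\transpose$.

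You instead keep the shared unit rows and observe that the overlapped matrix $\mat T_0$ is block triangular. Hence $\mat T_0^{-1}$ can be read off entrywise from $\mat T_1^{-1}$ and $\mat T_2^{-1}$. This also shows that the paper's auxiliary step is not really needed, because $\mat S_j^{-1}$ is simply a submatrix of $\mat T_j^{-1}$.

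The price is that $\mat T_0^{-1}$ has nonzero blocks between $V(G_j)\setminus\{v_{i_1},v_{i_2}\}$ and the shared pair. The bookkeeping you were worried about still closes cleanly. Write $\tilde{\mat c},\hat{\mat c},\mat c_S$ for the restrictions of $\mat c$ to $V(G_1)$, $V(G_2)$ and $\{v_{i_1},v_{i_2}\}$. Then $\mat T_0^{-1}\mat c$ restricted to $V(G_j)$ equals $\mat T_j^{-1}$ applied to the restriction of $\mat c$ to $V(G_j)$, and
\[
\mat c_1^\transpose\mat T_0^{-1}\mat c_2=\tilde{\mat c}_1^\transpose\mat T_1^{-1}\tilde{\mat c}_2+\hat{\mat c}_1^\transpose\mat T_2^{-1}\hat{\mat c}_2-\mat c_{1,S}^\transpose\mat c_{2,S}.
\]
The restriction of a Laplacian column of $G$ to $V(G_j)$ is either a Laplacian column of $G_j$ up to a constant-support correction, or has constant support altogether. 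So every term is stored in $\M(G_j,\ldots)$ or is a constant-size combination of stored entries.

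\textbf{Two slips to fix.} Neither breaks the argument.

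First, the rows of $v_{i_3}$ and $v_{i_4}$ in $\mat T_0$ are already unit rows, since $v_{i_3}$ is pinned in $\mat T_1$ and $v_{i_4}$ is pinned in $\mat T_2$. The new edge $(v_{i_3},v_{i_4})$ only changes Laplacian entries in these pinned rows. So the only row that changes is that of $v_{i_2}$, and
\[
\mat T=\mat T_0+\mat e_{i_2}(\mat c-\mat e_{i_2})^\transpose,
\]
where $\mat c$ is the $v_{i_2}$-column of the Laplacian of $G$. This is a rank-one update, and its invertibility test is the single scalar $1+(\mat c-\mat e_{i_2})^\transpose\mat T_0^{-1}\mat e_{i_2}$. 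If $\mat V$ is defined as the true row differences, your $\mat V$-columns for $v_{i_3},v_{i_4}$ are simply zero. With your explicit description (Laplacian column plus or minus unit vectors), however, one would get $\mat T_0+\mat U\mat V^\transpose\neq\mat T$.

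Second, the $v_{i_2}$-row of the Laplacian of $G$ is the sum of the $G_1$- and $G_2$-rows only if the edge $(v_{i_1},v_{i_2})$ does not lie in both subgraphs. That case can occur, for example with a (virtual) edge on the shared pair. Otherwise you need a constant-support correction at the coordinates of $v_{i_1},v_{i_2}$.
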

\begin{proof}
Suppose $(G_1, v_{i_1}, v_{i_2}, v_{i_3}, p)$ and $(G_2, v_{i_1}, v_{i_2}, v_{i_4}, p)$ are modulo embedding-inducing, where $G_1$ is over the vertex set $V_1$, $G_2$ is over the vertex set $V_2$, and $V_1 \cap V_2 = \{v_{i_1}, v_{i_2}\}$. We explain how to compute $\M(G,v_{i_1}, v_{i_3}, v_{i_4}, p)$ if it is defined, computing $\M(G, v_{i_2}, v_{i_3}, v_{i_4}, p)$ is analogous.

For ease of notation we assume $v_{i_1}, v_{i_2}, v_{i_3}, v_{i_4} = v_1, v_2, v_3, v_4$.
Let $\mat T_1 = \mat T(G_1, v_1, v_2, v_3, p)$ and let $\mat T_2 = \mat T(G_2, v_1, v_2, v_4)$. Let $\mat S_1, \mat S_2$ be the submatrices of $\mat T_1, \mat T_2$ that do not include the rows and columns for $v_1, v_2$.
The matrix $\mat T = \mat T(G, v_1, v_3, v_4)$ has the following non-zero entries: 
 \begin{enumerate}
   \item the non-zero entries of $\mat S_1, \mat S_2$,
   \item the entry $1$ in the diagonal for the vertex $v_1$ and as diagonal entry for $v_2$ its degree in $G$,
   \item in the column for $v_1$ (apart from the diagonal): the remaining non-zero entries from the matrices $\mat T_1$ and $\mat T_2$,
   \item in the row and column for $v_2$ (apart from the diagonal): the remaining non-zero entries from the Laplacians of $G_1$ and $G_2$.
 \end{enumerate}
 Let $\mat R$ be the matrix that only contains the entries from item (1) and diagonal entries $1$ for $v_1$ and $v_2$, that is $\mat R$ is equal\footnote{In general $\mat R$ is a permutation of the given matrix, but we ignore this technicality from now on.} to $\left[
	\begin{array}{c |c| c }
		\mat I & \mat 0 & \mat 0\\
		\hline
		\mat 0 & \mat S_1 & \mat 0\\
		\hline
		\mat 0 & \mat 0 & \mat S_2
	\end{array}\right]$.

We next show how to compute the inverse $\mat R^{-1}$ and further necessary products with columns of the Laplacian of $G$.  After that we show how to obtain $\mat T$ as $\mat R + \mat U \mat V^\transpose$ for low-rank matrices $\mat U, \mat V$ and then use the SMW formula to compute $\M(G,v_1, v_3, v_4, p)$.

\smallskip
\subparagraph*{The  inverse of the matrix $\mat R$:}

 The matrix $\mat R$ as defined above  is a (permutation of a) block diagonal matrix, where each block is invertible modulo $p$. This is clear for the block for $v_1, v_2$, as it is an identity matrix; the matrices $\mat S_1, \mat S_2$ are invertible because they are square submatrices of $\mat T_1, \mat T_2$ that are invertible modulo $p$ by assumption.  So, the inverse of $\mat R$ is also a (permutation of a) block diagonal matrix that consists of the inverses of $\mat S_1, \mat S_2$ and the $2 \times 2$ identity matrix, that is,  $\mat R^{-1}$ is equal to $\left[
	\begin{array}{c c c }
		\mat I & \mat 0 & \mat 0\\
		\mat 0 & \mat S_1^{-1} & \mat 0\\
		\mat 0 & \mat 0 & \mat S_2^{-1}
	\end{array}
	\right]$.

	We show how to obtain the inverses $\mat S_1^{-1}$ and $\mat S_2^{-1}$. The matrix $\mat T_i$ has the form $\mat T_i = \left[
	\begin{array}{c c}
		\mat I & \mat 0\\
		\mat A_i & \mat S_i
	\end{array}
	\right]$, where $\mat I$ is the $2 \times 2$ identity matrix and $\mat A_i$ coincides with the columns of $v_1$ and $v_2$ in the Laplacian of $G_i$ with the first three rows removed.
For $\mat U_i = \left[
	\begin{array}{c}
		\mat 0 \\
		\mat A_i
	\end{array}
	\right]$ and $\mat V_i = \left[
	\begin{array}{c}
		\mat I \\
		\mat 0
	\end{array}
	\right]$, it holds $\mat T_i - \mat U_i \mat V_i^\transpose =  \left[
	\begin{array}{c c}
		\mat I & \mat 0\\
		\mat 0 & \mat S_i
	\end{array}
	\right]$. The inverse of this matrix exists and is equal to
	\[ (\mat T_i - \mat U_i \mat V_i^\transpose)^{-1} = \mat T_i^{-1} + \mat T_i^{-1}\mat U_i{(\mat I-\mat V_i^\transpose \mat T_i^{-1}\mat U_i)}^{-1}\mat V_i^\transpose \mat T_i^{-1},\]
	which can be computed in \FOar: the only non-obvious step is the computation of  $\mat T_i^{-1}\mat U_i$, but each column $\mat u$ of $\mat U_i$ differs from some column $\mat c$ of the Laplacian of $G_i$ only by a constant number of entries, so $\mat T_i^{-1} \mat u$ can be computed from the available $\mat T_i^{-1} \mat c$ with only a constant number of parallel multiplications and additions.

	The inverse $(\mat T_i - \mat U_i \mat V_i^\transpose)^{-1}$ has the form $\left[
	\begin{array}{c c}
		\mat I & \mat 0\\
		\mat 0 & \mat S_i^{-1}
	\end{array}
	\right]$, so the inverse $\mat S_i^{-1}$ of $\mat S_i$ can be read off from it.

	We also need later on the products of the form $\mat S_i^{-1} \mat c, \mat c^\transpose \mat S_i^{-1}$ and $\mat c_1^\transpose \mat S_i^{-1} \mat c_2$, where $\mat c, \mat c_1, \mat c_2$ are obtained from the columns of the Laplacian of $G_i$ by removing the first two entries, corresponding to the rows of $v_1$ and $v_2$.
	For example, the scalar $\mat c_1^\transpose \mat S_1^{-1} \mat c_2$ can be computed from
	\[ \mat d_1^\transpose (\mat T_1 - \mat U_1 \mat V_1^\transpose)^{-1} \mat d_2 = \mat d_1^\transpose \mat T_1^{-1} \mat d_2 + \mat d_1^\transpose \mat T_1^{-1}\mat U_1{(\mat I-\mat V_1^\transpose \mat T_1^{-1}\mat U_1)}^{-1}\mat V_1^\transpose \mat T_1^{-1} \mat d_2,\]
	where $\mat d_1, \mat d_2$ are the columns from the Laplacian that $\mat c_1, \mat c_2$ are obtained from by removing the entries $d_{11}$ and $d_{12}$ as well as $d_{21}$ and $d_{22}$, as $\mat c_1^\transpose \mat S_1^{-1} \mat c_2 = \mat d_1^\transpose (\mat T_1 - \mat U_1 \mat V_1^\transpose)^{-1} \mat d_2 - (d_{11}d_{21} + d_{12}d_{22})$.
	The value $\mat d_1^\transpose (\mat T_1 - \mat U_1 \mat V_1^\transpose)^{-1} \mat d_2$ can be computed analogous to $(\mat T_1 - \mat U_1 \mat V_1^\transpose)^{-1}$ as explained above, using the information available from $\M(G_1,v_1, v_2, v_3, p)$.

\smallskip
\subparagraph*{Obtaining the inverse of $\mat T$:}
We define the matrix $\mat U$ as the matrix with three columns that has
\begin{itemize}
	\item as first column the column of $v_1$ from $\mat T$, with the first entry set to $0$,
	\item as second column the column of $v_2$ from the Laplacian of $G$, with the second entry set to the degree of $v_2$ in $G$ minus $1$,
	\item as only non-zero entry in the third column a $1$ in the second entry.
\end{itemize}
The matrix $\mat V$ has the form  $\left[
\begin{array}{c c c }
	1 &  0 &  0\\
	0 & 1 &  0\\
	\mat 0 & \mat 0 & \mat c
\end{array}
\right]$ , where $\mat c$ is the column of $v_2$ in the Laplacian of $G$ without the first two entries.
It holds $\mat T = \mat R + \mat U \mat V^\transpose$ and we obtain the inverse $\mat T^{-1}$ via the SMW formula:
\[ (\mat R + \mat U \mat V^\transpose)^{-1} = \mat R^{-1} - \mat R^{-1}\mat U{(\mat I+\mat V^\transpose \mat R^{-1}\mat U)}^{-1}\mat V^\transpose \mat R^{-1}\]

The right-hand side expression can be evaluated in \FOar using available intermediate results: because $\mat R^{-1}$ is block diagonal, products involving this matrix can be computed based on its three blocks 	$\left[ \begin{array}{c c }
	1 &  0\\
	0 & 1
\end{array}\right]$, $\mat S_1^{-1}$ and $\mat S_2^{-1}$.
The computation requires as only non-trivial steps computing the products of $\mat S_1^{-1}$ and $\mat S_2^{-1}$ with (slightly adjusted) partial rows of the Laplacians of the subgraphs $G_1$ and $G_2$, which can be obtained as explained above.

Analogously, one can compute the other parts of $\M(G,v_{i_1}, v_{i_3}, v_{i_4}, p)$.
\end{proof}

When the merging of two subgraphs is undone, the information $\M(\cdot)$ for the subgraphs can be recovered by reversing the operations performed in the proof of Lemma~\ref{lem:lina-merging}.

\begin{lemma}\label{lem:lina-splitting}
	Fix a domain of size $n$. Let $G$ be a graph with at most $n$ vertices and $G_1, G_2$ be subgraphs of $G$ that have exactly two vertices $v_{i_1}, v_{i_2}$ in common such that $G$ equals the union of $G_1$ and $G_2$ and some edge $(v_{i_3}, v_{i_4})$. Let $p$ a prime of magnitude $\bigO(n^c)$, for some constant $c$, and let $(G, v_{i_1}, v_{i_3}, v_{i_4}, p)$ (or: $(G, v_{i_2}, v_{i_3}, v_{i_4}, p)$) be modulo embedding-inducing.

	Given  $\M(G,v_{i_1}, v_{i_3}, v_{i_4}, p)$ (or: $\M(G,v_{i_2}, v_{i_3}, v_{i_4}, p)$) one can determine in \FOar whether $(G_1, v_{i_1}, v_{i_2}, v_{i_3}, p)$ and $(G_2, v_{i_1}, v_{i_2}, v_{i_4}, p)$ are also modulo embedding-inducing and if so, $\M(G_1,v_{i_1}, v_{i_2}, v_{i_3}, p)$ and $\M(G_2,v_{i_1}, v_{i_2}, v_{i_4}, p)$ can be defined.
\end{lemma}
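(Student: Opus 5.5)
We reverse the steps in the proof of Lemma~\ref{lem:lina-merging}. Write $v_{i_1},\dots,v_{i_4}=v_1,\dots,v_4$ and treat the case where $\M(G,v_1,v_3,v_4,p)$ is given; the other case is symmetric. As in that proof, $\mat T=\mat T(G,v_1,v_3,v_4)$ can be written as $\mat R+\mat U\mat V^\transpose$. Here $\mat R$ is the (permuted) block diagonal matrix with blocks $\mat I_2$, $\mat S_1$, $\mat S_2$, and $\mat U,\mat V$ have three columns. Hence $\mat R=\mat T+(-\mat U)\mat V^\transpose$. We apply the SMW formula to this rank-$3$ update of $\mat T$:
\[\mat R^{-1}=\mat T^{-1}+\mat T^{-1}\mat U(\mat I-\mat V^\transpose\mat T^{-1}\mat U)^{-1}\mat V^\transpose\mat T^{-1}.\]
The columns of $\mat U$ and $\mat V$ are unit vectors or columns of the Laplacian of $G$ changed in constantly many entries. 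So $\mat T^{-1}\mat U$, $\mat V^\transpose\mat T^{-1}$ and the $3\times 3$ matrix $\mat V^\transpose\mat T^{-1}\mat U$ are available from $\M(G,v_1,v_3,v_4,p)$ after constantly many corrections. Its determinant modulo $p$ is computable in \FOar. If the determinant is nonzero, we read off $\mat S_1^{-1}$ and $\mat S_2^{-1}$ as the diagonal blocks of $\mat R^{-1}$. If it is zero, $\mat R$, and hence some $\mat S_i$, is singular modulo $p$.

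Next we rebuild $\mat T_i=\mat T(G_i,v_1,v_2,v_{i+2})$ from $\operatorname{diag}(\mat I_2,\mat S_i)$. This is the step from the merging proof read backwards: $\mat T_i=\operatorname{diag}(\mat I_2,\mat S_i)+\mat U_i\mat V_i^\transpose$, where $\mat U_i$ holds the columns of $v_1,v_2$ of the Laplacian of $G_i$ restricted to the non-pinned rows. A second SMW step with a $2\times 2$ core gives $\mat T_i^{-1}$. Here $\mat T_i$ is block lower triangular with identity top-left block, so $\det\mat T_i=\det\mat S_i$. Therefore $(G_i,v_1,v_2,v_{i+2},p)$ is modulo embedding-inducing if and only if $\mat S_i$ is invertible modulo $p$. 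The invertibility of the $3\times 3$ core is thus exactly the test we need, since $\det\mat R=\det\mat S_1\det\mat S_2$.

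The main obstacle is producing the remaining parts of $\M(G_i,\dots)$: the products $\mat T_i^{-1}\mat c$, $\mat c^\transpose\mat T_i^{-1}$ and $\mat c_1^\transpose\mat T_i^{-1}\mat c_2$ for columns $\mat c$ of the Laplacian $\mat L_i$ of $G_i$. For this we write each such column as the restriction to $V_i$ of a column of the Laplacian of $G$, plus a correction with constantly many nonzero entries. Only the entries at $v_1,v_2$ and the endpoints of $(v_3,v_4)$ differ. Because $\mat R^{-1}$ is block diagonal, the restriction to $V_i$ commutes with multiplication by $\mat R^{-1}$ on the $\mat S_i$ block. We then expand every product through both SMW identities, exactly as in the proof of Lemma~\ref{lem:lina-edges}. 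This gives sums of constantly many terms. Each term is a stored product from $\M(G,\dots)$, or a product of $\mat T^{-1}$ with a vector of constant support, or a constant-size matrix. All of these are computable modulo $p$ in \FOar, which completes the definition of $\M(G_1,v_1,v_2,v_3,p)$ and $\M(G_2,v_1,v_2,v_4,p)$.
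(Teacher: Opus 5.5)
Your first steps are the reversal the paper intends; its proof is only the remark that one reverses the operations of Lemma~\ref{lem:lina-merging}. These steps are sound:
\begin{itemize}
\item applying SMW to $\mat R=\mat T-\mat U\mat V^\transpose$ and reading off $\mat S_1^{-1},\mat S_2^{-1}$;
\item the invertibility test via $\det\mat R=\det\mat S_1\det\mat S_2$ and $\det\mat T_i=\det\mat S_i$;
\item the second SMW step giving $\mat T_i^{-1}$;
\item all \emph{vector} products, since $\mat S_i^{-1}(\mat x|_{V_i'})=(\mat R^{-1}\mat x)|_{V_i'}$ for $V_i'=V_i\setminus\{v_1,v_2\}$.
\end{itemize}
The gap is in your last paragraph, at the \emph{scalars} of $\M(G_i,\dots)$.

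Let $\mat c_{v_1},\mat c_{v_2}$ be the columns of $v_1,v_2$ in the Laplacian of $G$, and let $\mat a_i,\mat b_i$ be their parts on $V_i'$. Write $\mat A_i=[\mat a_i\ \mat b_i]$; this is the nonzero block of your $\mat U_i$, up to the pinned row. The entries at $v_1,v_2$ of $\mat c^\transpose\mat T_i^{-1}$ involve $(\mat c|_{V_i'})^\transpose\mat S_i^{-1}\mat A_i$. Likewise, $\mat c_1^\transpose\mat T_i^{-1}\mat c_2$ involves $(\mat c_1|_{V_i'})^\transpose\mat S_i^{-1}\mat c_2|_{V_i'}$. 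When these vectors are the columns of $v_1,v_2$ in $\mat L_i$, you need $\mat a_i^\transpose\mat S_i^{-1}\mat a_i$, $\mat a_i^\transpose\mat S_i^{-1}\mat b_i$, and so on.

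Block-diagonality moves only \emph{one} restriction through $\mat R^{-1}$: $\mat a_1^\transpose\mat S_1^{-1}\mat a_1=\hat{\mat a}_1^\transpose\mat R^{-1}\mat c_{v_1}$, where $\hat{\mat a}_1$ is the zero-padding of $\mat a_1$. But $\hat{\mat a}_1$ is \emph{not} a constant-support correction of $\mat c_{v_1}$. They differ by $\hat{\mat a}_2$, which has about $\deg_{G_2}(v_1)$ nonzero entries. What SMW actually gives you is
$\mat c_{v_1}^\transpose\mat R^{-1}\mat c_{v_1}=(\text{constant-size term})+\mat a_1^\transpose\mat S_1^{-1}\mat a_1+\mat a_2^\transpose\mat S_2^{-1}\mat a_2$,
i.e.\ only the sum over the two sides, and likewise for the mixed terms. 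Extracting one summand from the computable vector $\mat S_1^{-1}\mat a_1$ is an inner product with about $\deg_{G_1}(v_1)$ terms. That is an unbounded iterated sum modulo $p$, which is not available in $\FOar$. So your claim that every term is stored, has constant support, or has constant size fails exactly here.

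This is where merging and splitting are not symmetric. Merging only needs these sums, each summand supplied by $\M(G_1,\dots)$ resp.\ $\M(G_2,\dots)$; splitting must undo them. Identities such as $\mat L\mathbf 1=\mat 0$ recover combinations like $\mat a_1^\transpose\mat S_1^{-1}(\mat a_1+\mat b_1)$. They do not recover how the effective coupling between $v_1$ and $v_2$ divides between $G_1$ and $G_2$ (electrically, the $G_1$-share of the effective $v_1$--$v_2$ conductance). That quantity is genuinely required: it is encoded in the $v_2$-entry of $(\mat c^{(1)}_{v_1})^\transpose\mat T_1^{-1}$, where $\mat c^{(1)}_{v_1}$ is the column of $v_1$ in $\mat L_1$. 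Closing the gap needs an additional idea or additional stored information. The paper's one-line ``reverse the merging'' does not address this point either.
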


\paragraph*{Union of two graphs}

A similar operation as merging two graphs along two vertices is the union of two graphs with no vertices in common.

\begin{lemma}\label{lem:lina-union}
	Fix a domain of size $n$. Let $G_1, G_2$ be disjoint $3$-connected planar graphs with at most $n$ vertices combined. Let $v_{i_1},v_{i_2},v_{i_3}$ be vertices of $G_1$ and $v_{i_4}, v_{i_5}, v_{i_6}$ vertices of $G_2$ that each are on the same face of the respective graphs. 
	Let $p$ be a prime of magnitude $\bigO(n^c)$, for some constant $c$, such that $(G_1, v_{i_1}, v_{i_2}, v_{i_3}, p)$ and $(G_2, v_{i_4}, v_{i_5}, v_{i_6}, p)$ are modulo embedding-inducing.
	Let $G$ be the graph that results from the disjoint union of $G_1$ and $G_2$ by adding the edges $(v_{i_1}, v_{i_4}), (v_{i_2}, v_{i_5}),(v_{i_3}, v_{i_6})$. 
	Then the following holds.
	\begin{enumerate}[(a)]
		\item $G$ is a $3$-connected planar graph and $v_{i_1}, v_{i_4}, v_{i_6}$ are on a common face, and
		\item there is a constant number of embedding-inducing $(H_j, v^j_1, v^j_2, v^j_3)_{j \leq d}$, where the $H_j$ are $3$-connected planar graphs and $v^j_1, v^j_2, v^j_3$ are vertices that are on the same face of $H_j$, such that given $\M(G_1,v_{i_1}, v_{i_2}, v_{i_3}, p)$  and $\M(G_2,v_{i_4}, v_{i_5}, v_{i_6}, p)$ one can determine in \FOar whether all of $(G, v_{i_1}, v_{i_4}, v_{i_6}, p)$ and $(H_j, v^j_1, v^j_2, v^j_3, p)_{j \leq d}$ are modulo embedding-inducing and if so, $\M(G,v_{i_1}, v_{i_4}, v_{i_6}, p)$ can be defined.
	\end{enumerate}
\end{lemma}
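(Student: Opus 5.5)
We prove (a) directly and then reduce (b) to the operations already available, namely Lemmas~\ref{lem:lina-edges}, \ref{lem:lina-pins}, \ref{lem:lina-merging} and \ref{lem:lina-addvertex}.

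\textbf{Part (a).} Embed $G_1$ with the common face of $v_{i_1},v_{i_2},v_{i_3}$ as outer face. Embed $G_2$ inside a face of the first embedding so that the common face of $v_{i_4},v_{i_5},v_{i_6}$ becomes the region between the two graphs, with the two cyclic orders matched (reflecting $G_2$ if needed). The three connecting edges can then be drawn without crossings, so $G$ is planar.

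For $3$-connectivity, remove any two vertices. If both lie in one $G_i$, then $G_i$ minus them is still connected. The other $G_j$ is intact, and at least one of the three connecting edges survives to join the two sides. If one vertex is removed from each side, then each $G_i$ minus one vertex is connected and one connecting edge survives. Finally, $v_{i_1}$ and $v_{i_4}$ are adjacent, and $v_{i_6}$ lies on a face bounded by the edges $(v_{i_1},v_{i_4})$ and $(v_{i_3},v_{i_6})$. We would pick the labelling so that this face contains $v_{i_1},v_{i_4},v_{i_6}$, and possibly permute indices if needed.

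\textbf{Part (b).} The plan is to avoid a genuinely new matrix identity and instead build $G$ by a constant-length sequence of steps, each leaving a $3$-connected planar graph with three pinned vertices on a common face.

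\begin{enumerate}
\item Attach $v_{i_4}$ to $G_1$ as a new vertex adjacent to $v_{i_1},v_{i_2},v_{i_3}$, using Lemma~\ref{lem:lina-addvertex}. Call the result $H_1$; it is a wheel-like extension of $G_1$ and remains $3$-connected.
\item Symmetrically, attach $v_{i_1}$ to $G_2$ adjacent to $v_{i_4},v_{i_5},v_{i_6}$, using Lemma~\ref{lem:lina-addvertex}. Call the result $H_2$.
\item View $G$ plus the auxiliary edges as a merge of $H_1$ and $H_2$ along the common pair $\{v_{i_1},v_{i_4}\}$, adding one edge between them (say $(v_{i_3},v_{i_6})$). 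Then apply Lemma~\ref{lem:lina-merging}. Multi-edges coming from the shared edge $(v_{i_1},v_{i_4})$ are handled by deleting a duplicate via Lemma~\ref{lem:lina-edges}.
\item Use Lemma~\ref{lem:lina-edges} to delete the constantly many auxiliary edges and fix degrees so that the edge set is exactly that of $G$.
\item Use Lemma~\ref{lem:lina-pins} to move the pins to $v_{i_1},v_{i_4},v_{i_6}$.
\end{enumerate}

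\textbf{Bookkeeping.} The $(H_j,v^j_1,v^j_2,v^j_3)$ in the statement are exactly these intermediate graphs with their pins. Each must be verified $3$-connected planar with the pins on a common face, so that it is embedding-inducing by Lemma~\ref{lem:lina-invertable}. There are constantly many of them, and each step is \FOar, so the whole construction is \FOar.

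\textbf{Main obstacle.} The hard step is step~3. Lemma~\ref{lem:lina-merging} requires the two parts to share exactly two vertices, while $H_1$ and $H_2$ share $v_{i_1},v_{i_4}$ together with the edge between them. I would first try to make the shared edge belong to only one side, omitting it from $H_2$, and check that $H_2$ stays $3$-connected. If that fails, I would fall back to a direct SMW computation: $\mat T$ equals the block-diagonal matrix built from $\mat T_1$ and $\mat T_2$ plus a rank-$O(1)$ correction (three edges and the pin changes). Its inverse then follows as in the proof of Lemma~\ref{lem:lina-merging}, with the block inverses read off from $\M(G_1,\cdot)$ and $\M(G_2,\cdot)$.
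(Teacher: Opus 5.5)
Your part (a) is fine. No relabelling is needed: the face of $G$ between the edges $(v_{i_3},v_{i_6})$ and $(v_{i_1},v_{i_4})$ always contains $v_{i_1},v_{i_4},v_{i_6}$. Part (b), however, fails at steps 3--4, and not where you expected.

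\textbf{The gap.} The merged graph $H=H_1\cup H_2+(v_{i_3},v_{i_6})$ has the cross edges $(v_{i_1},v_{i_4})$, $(v_{i_4},v_{i_2})$, $(v_{i_4},v_{i_3})$, $(v_{i_1},v_{i_5})$, $(v_{i_1},v_{i_6})$ and $(v_{i_3},v_{i_6})$. It does not contain $(v_{i_2},v_{i_5})$. So $G$ plus auxiliary edges is \emph{not} a merge of $H_1$ and $H_2$, and step~4, which only deletes edges, never reaches $G$.

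Nor can you insert the missing edge while keeping your bookkeeping promise. $H$ is $3$-connected, and in its unique embedding $G_2$ sits mirrored compared with $G$. Going once around the region between $F_1$ and $F_2$, you meet $v_{i_1},v_{i_2},v_{i_3}$ and $v_{i_4},v_{i_5},v_{i_6}$ in opposite cyclic senses in $H$, but in the same sense in $G$. Concretely, $v_{i_2}$ and $v_{i_5}$ share no face of $H$.

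Consider any chain of single-edge changes through $3$-connected planar graphs containing $G_1\cup G_2$. Each graph in the chain has a unique embedding, and consecutive embeddings restrict to one another. So this relative orientation is invariant along the chain, while $G$ has the other one. Hence some intermediate must fail what (b) requires of the $H_j$:
\begin{itemize}
\item inserting $(v_{i_2},v_{i_5})$ into $H$ gives a non-planar graph;
\item deleting the four hub edges first gives $G_1\cup G_2+(v_{i_1},v_{i_4})+(v_{i_3},v_{i_6})$, which has the $2$-cut $\{v_{i_1},v_{i_3}\}$.
\end{itemize}
The obstacle you flagged is not one. Lemma~\ref{lem:lina-merging} only asks for exactly two common vertices, an edge between them is allowed, and a union of simple graphs has no duplicate edge to delete.

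\textbf{How to repair it.} The mirror is forced because your hub pair $\{v_{i_1},v_{i_4}\}$ is itself one of the three target pairs. Use a non-matching pair instead:
\begin{enumerate}
\item attach $v_{i_5}$ to $G_1$ at $v_{i_1},v_{i_2},v_{i_3}$, and $v_{i_1}$ to $G_2$ at $v_{i_4},v_{i_5},v_{i_6}$ (Lemma~\ref{lem:lina-addvertex});
\item merge along $\{v_{i_1},v_{i_5}\}$, adding $(v_{i_3},v_{i_6})$ (Lemma~\ref{lem:lina-merging});
\item delete $(v_{i_1},v_{i_6})$, $(v_{i_3},v_{i_5})$ and $(v_{i_1},v_{i_5})$ (Lemma~\ref{lem:lina-edges});
\item finish with Lemma~\ref{lem:lina-pins}.
\end{enumerate}
Every intermediate is then a planar subgraph of the merged graph and keeps $v_{i_1},v_{i_3},v_{i_6}$ on a common face. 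Each also contains the three disjoint cross edges $(v_{i_1},v_{i_4}),(v_{i_2},v_{i_5}),(v_{i_3},v_{i_6})$, so it is $3$-connected by your own argument from (a).

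Alternatively, your fallback works outright and needs no intermediate graphs. $\mat T(G,v_{i_1},v_{i_4},v_{i_6})$ differs from $\mathrm{diag}(\mat T(G_1,v_{i_1},v_{i_2},v_{i_3}),\mat T(G_2,v_{i_4},v_{i_5},v_{i_6}))$ only in the rows of $v_{i_2},v_{i_3},v_{i_5}$. This is a rank-$3$ update whose SMW terms can be read off from $\M(G_1,\cdot)$ and $\M(G_2,\cdot)$.

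For comparison, the paper follows your template: hubs $v_{i_6}$ and $v_{i_3}$ built from triangles, a merge along $\{v_{i_3},v_{i_6}\}$ with the extra edge $(v_{i_1},v_{i_4})$, then deletions. Its final step has the same defect, since the stated deletions leave the mirrored cross edges $(v_{i_1},v_{i_4}),(v_{i_2},v_{i_6}),(v_{i_3},v_{i_5})$.
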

\begin{proof}
	For Part (a), observe that choosing the face of $G_1$ that contains $v_{i_1},v_{i_2},v_{i_3}$ as an internal face and the face of $G_2$ that contains $v_{i_4}, v_{i_5}, v_{i_6}$ as the outer face such that the nodes occur in this order on their respective faces yields a planar embedding for $G$ when drawing $G_2$ inside the face defined by $v_{i_1},v_{i_2},v_{i_3}$. Also observe that $v_{i_1}, v_{i_4}, v_{i_6}$ are on a common face in this embedding.
	
	We show Part (b).
	Let $H_1$ be the triangle that consists of the vertices $v_{i_2}, v_{i_3}$ and $v_{i_6}$ and $H_2$ the triangle that consists of $v_{i_3}, v_{i_5}$ and $v_{i_6}$. These graphs are clearly $3$-connected, planar and embedding-inducing, with their three vertices chosen as the pinned vertices. 
	As the graphs have constant size, $\M(H_1,v_{i_2}, v_{i_3}, v_{i_6}, p)$ and $\M(H_2,v_{i_3}, v_{i_5}, v_{i_6}, p)$ can be defined in $\FOar$, if they exist.
	
	Let $H_3$ be the graph that results from merging $G_1$ with $H_1$, adding the edge $(v_{i_1}, v_{i_6})$; and let $H_4$ be the graph that results from merging $G_2$ with $H_2$, adding the edge $(v_{i_3}, v_{i_4})$.
  Further, let $H_5$ be the graph obtained by merging $H_3$ and $H_4$ while adding the edge $(v_{i_1}, v_{i_4})$.
  Note that $G$ results from $H_5$ by deleting the edges $(v_{i_1}, v_{i_6}), (v_{i_3}, v_{i_4})$ and $(v_{i_3}, v_{i_6})$.

So, assuming that every intermediate structure is modulo embedding-inducing, from $\M(H_1,v_{i_2}, v_{i_3}, v_{i_6}, p)$ and $\M(G_1,v_{i_1}, v_{i_2}, v_{i_3}, p)$ one can obtain $\M(H_3,v_{i_1}, v_{i_3}, v_{i_6}, p)$ using Lemma~\ref{lem:lina-merging}. Also, $\M(H_4,v_{i_3}, v_{i_4}, v_{i_6}, p)$ can be obtained from $\M(H_2,v_{i_3}, v_{i_5}, v_{i_6}, p)$ and $\M(G_2,v_{i_4}, v_{i_5}, v_{i_6}, p)$ via Lemma~\ref{lem:lina-merging}. 
From these, one can obtain $\M(H_5,v_{i_1}, v_{i_4}, v_{i_6}, p)$ again via Lemma~\ref{lem:lina-merging} and then $\M(G,v_{i_1}, v_{i_4}, v_{i_6}, p)$ by applying Lemma~\ref{lem:lina-edges} three times.

Note that all intermediate graphs are $3$-connected and planar and that the three distinguished vertices are on the same face in the respective graphs, so all intermediate structures are embedding-inducing, see Lemma~\ref{lem:lina-invertable}.
\end{proof}

\paragraph*{Adding and removing a vertex}

The following lemma is basically a special case of Lemma~\ref{lem:lina-union}, where one graph is a single vertex.

\begin{lemma}\label{lem:lina-addvertex}
		Fix a domain of size $n$. Let $G$ be a $3$-connected planar graph with at most $n-1$ vertices and let $v_{i_1},v_{i_2},v_{i_3}$ be vertices that are on the same face. 
		Let $p$ be a prime of magnitude $\bigO(n^c)$, for some constant $c$, such that $(G, v_{i_1}, v_{i_2}, v_{i_3}, p)$ is modulo embedding-inducing.
		Let $G'$ be the graph that results from $G$ by adding a new vertex $v$ and the edges $(v, v_{i_1}), (v, v_{i_2}), (v, v_{i_3})$
		Then the following holds.
		\begin{enumerate}[(a)]
			\item $G'$ is a $3$-connected planar graph and $v_{i_1}, v_{i_3}, v$ are on a common face, and
			\item there is a constant number of embedding-inducing $(H_j, v^j_1, v^j_2, v^j_3)_{j \leq d}$, where the $H_j$ are $3$-connected planar graphs and $v^j_1, v^j_2, v^j_3$ are vertices that are on the same face of $H_j$, such that given $\M(G,v_{i_1}, v_{i_2}, v_{i_3}, p)$ one can determine in \FOar whether all of $(G', v_{i_1}, v_{i_3}, v, p)$ and $(H_j, v^j_1, v^j_2, v^j_3, p)_{j \leq d}$ are modulo embedding-inducing and if so, $\M(G',v_{i_1}, v_{i_3}, v, p)$ can be defined.
		\end{enumerate}
\end{lemma}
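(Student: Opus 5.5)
The plan is to reduce Lemma~\ref{lem:lina-addvertex} to Lemma~\ref{lem:lina-union}, in the same way that lemma was itself built from Lemmas~\ref{lem:lina-merging} and~\ref{lem:lina-edges}.

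For Part (a), fix the embedding of $G$ in which the face $F$ containing $v_{i_1}, v_{i_2}, v_{i_3}$ is the outer face. Place $v$ inside $F$ and join it to the three vertices; this is planar because all three lie on $F$. The graph $G'$ is $3$-connected: removing any two vertices leaves $G$ minus at most two vertices, which is connected, and $v$ still has a remaining neighbour. Finally, $v$ sits in the region of $F$ bounded by the boundary walk from $v_{i_1}$ to $v_{i_3}$ avoiding $v_{i_2}$, together with the edges $(v,v_{i_1})$ and $(v,v_{i_3})$. That region is a face containing $v_{i_1}, v_{i_3}, v$.

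For Part (b), avoid treating $v$ as a one-vertex graph, since Lemma~\ref{lem:lina-union} needs a $3$-connected $G_2$. Instead build $G'$ from pieces of constant size that are merged in along separating pairs.
\begin{itemize}
\item Let $H_1$ be the triangle on $v_{i_2}, v_{i_3}, v$. Its information $\M(H_1, v_{i_2}, v_{i_3}, v, p)$ has constant size and is \FOar-definable, provided the $3\times 3$ Tutte matrix is invertible modulo $p$ (which holds trivially, as it is the identity).
\item Merge $G$ and $H_1$ along the common pair $\{v_{i_2}, v_{i_3}\}$, adding the edge $(v_{i_1}, v)$, using Lemma~\ref{lem:lina-merging}. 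This yields $\M(H_3, \cdot, \cdot, \cdot, p)$ for $H_3 = G \cup H_1 + (v_{i_1}, v)$, with pins $v_{i_2}$ (or $v_{i_3}$), $v_{i_1}$ and $v$.
\item Compare $H_3$ with $G'$. It contains exactly the edges of $G'$ plus the triangle edge $(v_{i_2}, v_{i_3})$, if that edge is not already in $G$. Delete it with Lemma~\ref{lem:lina-edges}.
\item Move the pins to $v_{i_1}, v_{i_3}, v$ with Lemma~\ref{lem:lina-pins}.
\end{itemize}
If $(v_{i_2}, v_{i_3})$ already belongs to $G$, first delete it from $G$'s information via Lemma~\ref{lem:lina-edges} so that the two graphs share only the two vertices, or simply keep one copy, since the Laplacian sees simple edges.

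All intermediate graphs ($H_1$, $H_3$, and $G'$ with either pin choice) are $3$-connected and planar, and in each the pinned vertices lie on a common face; this is checked as in Part (a). By Lemma~\ref{lem:lina-invertable} they are therefore embedding-inducing. These intermediate graphs form the constant-size family $(H_j, v^j_1, v^j_2, v^j_3)_{j \leq d}$. Each step of the chain checks in \FOar whether the needed inverse exists modulo $p$, so the whole chain is \FOar.

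The main obstacle I expect is bookkeeping in Lemma~\ref{lem:lina-merging}. Its hypothesis requires the pins of $G$ to be $v_{i_2}, v_{i_3}$ plus a third vertex, and the pins of $H_1$ to be $v_{i_2}, v_{i_3}$ plus the endpoint $v$ of the added edge. It also requires that $v_{i_1}$ and $v$ are the endpoints of the connecting edge. The given pins $v_{i_1}, v_{i_2}, v_{i_3}$ match this exactly, so no preliminary pin exchange should be necessary. Verifying the face condition for the pins of $H_3$ (namely $v_{i_2}, v_{i_1}, v$) needs care, but it follows from the same embedding argument as in Part (a).
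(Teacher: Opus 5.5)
Your proof is correct and follows the paper's route. The paper also merges the triangle $H_1$ on $v_{i_2}, v_{i_3}, v$ with $G$ along $\{v_{i_2}, v_{i_3}\}$ while adding the edge $(v_{i_1}, v)$ via Lemma~\ref{lem:lina-merging}; despite your opening sentence, neither you nor the paper actually uses Lemma~\ref{lem:lina-union}. Your extra deletion of the triangle edge $(v_{i_2}, v_{i_3})$ via Lemma~\ref{lem:lina-edges}, when it is not already in $G$, is a valid refinement that the paper's one-line argument glosses over. If the edge is already in $G$, take your second option and keep one copy rather than deleting it from $G$ first, since $G - (v_{i_2}, v_{i_3})$ need not be $3$-connected.
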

\begin{proof}
	Part (a) is obvious. For Part (b), we argue similarly as in the proof of Lemma~\ref{lem:lina-union}.
	
	Let $H_1$ be the ($3$-connected and planar) triangle that consists of the vertices $v_{i_2}, v_{i_3}$ and $v$.
	As $H_1$ has constant size, $\M(H_1,v_{i_2}, v_{i_3}, v, p)$ can be defined in $\FOar$ if it exists.
	$G'$ results from merging $G$ with $H_1$ while adding the edge $(v_{i_1}, v)$.
	Using Lemma~\ref{lem:lina-merging}, $\M(G',v_{i_1}, v_{i_3}, v, p)$ can be obtained, if it (and $\M(H_1,v_{i_2}, v_{i_3}, v, p)$) exists.
\end{proof}

By reversing the operations in the proof of Lemma~\ref{lem:lina-addvertex} and using Lemma~\ref{lem:lina-splitting} instead of Lemma~\ref{lem:lina-merging}, one can prove the following lemma.

\begin{lemma}\label{lem:lina-removevertex}
	Fix a domain of size $n$. Let $G$ be a $3$-connected planar graph with at most $n$ vertices, among them the vertices $v_{i_1}, v_{i_2}, v_{i_3}, v$ such that $v$ has only the edges $(v, v_{i_1}), (v, v_{i_2}), (v, v_{i_3})$ and such that the graph $G'$ that is obtained from $G$ by removing $v$ and its edges is still $3$-connected and $v_{i_1}, v_{i_2}, v_{i_3}$ are on the same face.
	Let $p$ be a prime of magnitude $\bigO(n^c)$, for some constant $c$, such that $(G, v_{i_1}, v_{i_3}, v, p)$ is modulo embedding-inducing.

	Then there is a constant number of embedding-inducing $(H_j, v^j_1, v^j_2, v^j_3)_{j \leq d}$, where the $H_j$ are $3$-connected planar graphs and $v^j_1, v^j_2, v^j_3$ are vertices that are on the same face of $H_j$, such that given $\M(G,v_{i_1}, v_{i_3}, v, p)$ one can determine in \FOar whether all of $(G', v_{i_1}, v_{i_2}, v_{i_3}, p)$ and $(H_j, v^j_1, v^j_2, v^j_3, p)_{j \leq d}$ are modulo embedding-inducing and if so, $\M(G',v_{i_1}, v_{i_2}, v_{i_3}, p)$ can be defined.
\end{lemma}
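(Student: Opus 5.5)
The plan is to reverse the construction in the proof of Lemma~\ref{lem:lina-addvertex}. Let $H_1$ be the triangle on $v_{i_2}, v_{i_3}, v$. Then $G$ is exactly the graph obtained from $G'$ and $H_1$ as in Lemma~\ref{lem:lina-merging}. The two graphs share precisely the two vertices $v_{i_2}, v_{i_3}$, since $v$ is not in $G'$ and $v_{i_1}$ is not in $H_1$. The additional edge is $(v_{i_1}, v)$, with $v_{i_1}$ playing the role of the third vertex of $G'$ and $v$ the third vertex of $H_1$.

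In the notation of Lemma~\ref{lem:lina-splitting}, the data are $G_1 = G'$, $G_2 = H_1$, shared pair $\{v_{i_2}, v_{i_3}\}$ and connecting edge $(v_{i_1}, v)$. Applying Lemma~\ref{lem:lina-splitting} to $\M(G, \cdot, v_{i_1}, v, p)$ yields the information for $G'$ pinned at $v_{i_2}, v_{i_3}, v_{i_1}$ and for $H_1$ pinned at $v_{i_2}, v_{i_3}, v$. The only mismatch is the pinning: Lemma~\ref{lem:lina-splitting} needs $\M(G, v_{i_2}, v_{i_1}, v, p)$ or $\M(G, v_{i_3}, v_{i_1}, v, p)$. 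We are given $\M(G, v_{i_1}, v_{i_3}, v, p)$, which is the second option up to reordering the pinned triple. Reordering does not change $\mat T$, so no repinning is needed. If it were, Lemma~\ref{lem:lina-pins} would handle it in \FOar.

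Lemma~\ref{lem:lina-splitting} also decides in \FOar whether $(G', v_{i_1}, v_{i_2}, v_{i_3}, p)$ and $(H_1, v_{i_2}, v_{i_3}, v, p)$ are modulo embedding-inducing. If they are, it defines $\M(G', v_{i_1}, v_{i_2}, v_{i_3}, p)$, which we output. We discard $\M(H_1, \ldots)$. We take $d = 1$ with $(H_1, v_{i_2}, v_{i_3}, v)$ as the auxiliary tuple. It is embedding-inducing by Lemma~\ref{lem:lina-invertable}, since a triangle is $3$-connected, planar and has all three vertices on one face. Strictly a triangle is a cycle, but its Tutte matrix is the identity, so invertibility is immediate.

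The main obstacle is that Lemma~\ref{lem:lina-splitting} is only asserted, as the reversal of Lemma~\ref{lem:lina-merging}. We must check that its reversal goes through in \FOar. Concretely, write $\mat T(G) = \mat R + \mat U\mat V^\transpose$ with rank-$3$ correction, so that $\mat R = \mat T(G) - \mat U\mat V^\transpose$. Then $\mat R^{-1}$ follows from SMW using only products of $\mat T(G)^{-1}$ with columns differing from Laplacian columns of $G$ in constantly many entries, all available from $\M(G, \ldots)$. Reading off the block $\mat S_1^{-1}$ and adding back the rank-$2$ correction $\mat U_1 \mat V_1^\transpose$ recovers $\mat T(G')^{-1}$. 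Each step inverts only a constant-size matrix, so the needed determinants can be tested for being nonzero mod $p$ in \FOar. Products with Laplacian columns of $G'$ differ from those for $G$ in constantly many entries, namely around $v_{i_1}, v_{i_2}, v_{i_3}$, and are corrected as in Lemma~\ref{lem:lina-edges}.
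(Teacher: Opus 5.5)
Your proposal is correct and is exactly the paper's argument: the paper proves this lemma by reversing the proof of Lemma~\ref{lem:lina-addvertex}, i.e., splitting $G$ along $\{v_{i_2},v_{i_3}\}$ into $G'$ and the triangle $H_1$ on $v_{i_2},v_{i_3},v$ via Lemma~\ref{lem:lina-splitting}, and the given pinning $\{v_{i_1},v_{i_3},v\}$ is already of the form that lemma requires. One small point, which the paper's proof of Lemma~\ref{lem:lina-addvertex} also glosses over: $G'\cup H_1+(v_{i_1},v)$ equals $G$ only if $(v_{i_2},v_{i_3})\in E(G)$. If it is not, either insert this edge before splitting and delete it afterwards via Lemma~\ref{lem:lina-edges} (the deletion is harmless because $v_{i_2},v_{i_3}$ are pinned in $G'$), or take $H_1$ to be the path $v_{i_2}\,v\,v_{i_3}$, whose Tutte matrix with all three vertices pinned is also the identity.
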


\paragraph*{Adding and removing two connected vertices}
At last we explain how to maintain the auxiliary information if the graph changes by adding or removing two vertices that are connected by an edge, each with two further edges to the existing graph.

\begin{lemma}\label{lem:lina-addpair}
		Fix a domain of size $n$. Let $G$ be a $3$-connected planar graph with at most $n-2$ vertices and let $v_{i_1},v_{i_2},v_{i_3},v_{i_4}$ be vertices that in this order form an induced cycle in $G$. 
Let $p$ be a prime of magnitude $\bigO(n^c)$, for some constant $c$, such that $(G, v_{i_1}, v_{i_2}, v_{i_3}, p)$ is modulo embedding-inducing.
	Let $G'$ be the graph that results from $G$ by adding new vertices $v, v'$ and the edges $(v,v'),(v, v_{i_1}), (v, v_{i_2}), (v', v_{i_3}), \allowbreak (v', v_{i_4})$.
Then the following holds, for each $1 \leq m \leq 3$:
\begin{enumerate}[(a)]
	\item $G'$ is a $3$-connected planar graph and $v_{i_m}, v, v'$ are on a common face, and
	\item there is a constant number of embedding-inducing $(H_j, v^j_1, v^j_2, v^j_3)_{j \leq d}$, where the $H_j$ are $3$-connected planar graphs and $v^j_1, v^j_2, v^j_3$ are vertices that are on the same face of $H_j$, such that given $\M(G,v_{i_1}, v_{i_2}, v_{i_3}, p)$ one can determine in \FOar whether all of $(G', v_{i_m}, v, v', p)$ and $(H_j, v^j_1, v^j_2, v^j_3, p)_{j \leq d}$ are modulo embedding-inducing and if so, $\M(G',v_{i_m}, v, v', p)$ can be defined.
\end{enumerate}	
\end{lemma}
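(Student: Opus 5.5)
Following the proofs of Lemma~\ref{lem:lina-union} and Lemma~\ref{lem:lina-addvertex}, I would build $G'$ from $G$ by a constant-length chain of operations. Each operation is either a merge (Lemma~\ref{lem:lina-merging}), an edge deletion (Lemma~\ref{lem:lina-edges}), or a change of pins (Lemma~\ref{lem:lina-pins}). Every intermediate graph will be a $3$-connected planar graph whose three pinned vertices lie on a common face. By Lemma~\ref{lem:lina-invertable}, these intermediate structures are then embedding-inducing over $\Q$; they serve as the $(H_j, v^j_1, v^j_2, v^j_3)$ of the statement.

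\emph{Part (a).} The induced $4$-cycle $v_{i_1}v_{i_2}v_{i_3}v_{i_4}$ bounds a face, since the pair $v, v'$ must be drawn inside it for $G'$ to be planar; I would assume this is the intended reading. I then draw $v$ and $v'$ inside that face. The face is split into three faces:
\[
(v_{i_1}, v_{i_2}, v),\qquad (v_{i_2}, v_{i_3}, v', v),\qquad (v_{i_3}, v_{i_4}, v_{i_1}, v, v').
\]
Hence for each $m \le 3$ the vertices $v_{i_m}, v, v'$ lie on a common face. To see that $G'$ is $3$-connected, note that $v$ and $v'$ each have degree $3$ with neighbours in distinct places. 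Removing any two vertices leaves $G$ minus at most two vertices, which is connected. Each of $v, v'$ that survives still has a neighbour in that remainder: $v$ has three neighbours $v', v_{i_1}, v_{i_2}$, and two removals cannot kill all of them unless $v'$ is also removed. If $v'$ is removed, then $v$ still keeps $v_{i_1}$ or $v_{i_2}$.

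\emph{Part (b).} I would introduce the vertices one at a time.
\begin{enumerate}
\item Apply Lemma~\ref{lem:lina-addvertex} to add $v$ with edges to $v_{i_1}, v_{i_2}, v_{i_3}$. This is legitimate because these three lie on the $4$-cycle face. After first switching pins with Lemma~\ref{lem:lina-pins}, this gives $\M(G_1, v_{i_1}, v_{i_3}, v, p)$ for $G_1 = G + v$.
\item Apply Lemma~\ref{lem:lina-addvertex} to $G_1$ to add $v'$ with edges to $v, v_{i_3}, v_{i_4}$. These three lie on the face $(v, v_{i_3}, v_{i_4}, v_{i_1})$ of $G_1$. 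Call the result $G_2$.
\item Obtain $G'$ from $G_2$ by deleting the edge $(v, v_{i_3})$ with Lemma~\ref{lem:lina-edges}.
\item Use Lemma~\ref{lem:lina-pins} to move the pins to $v_{i_m}, v, v'$.
\end{enumerate}

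\emph{Checking the invariant.} Both $G_1$ and $G_2$ are $3$-connected planar graphs, since adding a degree-$3$ vertex into a face preserves both properties. The pinned triples in each step lie on common faces. Embedding-inducingness of all intermediates then follows from Lemma~\ref{lem:lina-invertable}. Their modulo-$p$ versions are exactly what the \FOar{} check in each invoked lemma tests. Since only a constant number of steps are used, the conclusion follows.

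\emph{Main obstacle.} Keeping every pinned triple on a common face throughout requires choosing pins carefully before each Lemma~\ref{lem:lina-addvertex} application. Lemma~\ref{lem:lina-pins} permits arbitrary pins, but invertibility over $\Q$ is guaranteed only for co-facial triples. I expect this to be manageable because all relevant vertices sit on the faces created inside the original $4$-cycle.
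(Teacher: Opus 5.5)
Your proof is correct and is essentially the paper's. The paper merges $G$ with the triangles on $v_{i_1},v_{i_2},v$ and on $v_{i_3},v_{i_4},v'$ via Lemma~\ref{lem:lina-merging}, adding $(v,v_{i_3})$ and $(v,v')$ respectively; this yields exactly your $G_1$ and $G_2$, so the paper simply inlines Lemma~\ref{lem:lina-addvertex}, then deletes $(v,v_{i_3})$ and re-pins, and it tacitly assumes, as you do, that the $4$-cycle is a face.

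One slip in (a): inside the $4$-cycle, $G'$ has four faces, $(v_{i_1},v_{i_2},v)$, $(v_{i_2},v_{i_3},v',v)$, $(v_{i_3},v_{i_4},v')$ and $(v_{i_4},v_{i_1},v,v')$; your third face ignores the edge $(v',v_{i_4})$. The co-faciality of $v_{i_m},v,v'$ for each $m\le 3$ still holds.
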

\begin{proof}
Part (a) is again immediate.

Let $H_1$ and $H_2$ be the ($3$-connected and planar) triangles that consist of the vertices $v_{i_1}, v_{i_2}, v$ and $v_{i_3}, v_{i_4}, v'$, respectively.
Observe that $G'$ is obtained from $G$ through the following steps:
\begin{enumerate}
	\item obtain $H_3$ by merging $G$ with $H_1$ while adding the edge $(v,v_{i_3})$,
	\item obtain $H_4$ by merging $H_3$ with $H_2$ while adding the edge $(v,v')$, and
	\item removing the edge $(v,v_{i_3})$.
\end{enumerate}

So, assuming that every intermediate structure is modulo embedding-inducing, from $\M(H_1,v_{i_1}, v_{i_2}, v, p)$ (which can be expressed in $\FO$ since the graph has constant size) and $\M(G,v_{i_1}, v_{i_2}, v_{i_3}, p)$ one can obtain $\M(H_3,v_{i_1}, v_{i_3}, v, p)$ using Lemma~\ref{lem:lina-merging}, as well as $\M(H_3,v_{i_3}, v_{i_4}, v, p)$ via Lemma~\ref{lem:lina-pins}. 
Then, also form the available  $\M(H_2,v_{i_3}, v_{i_4}, v', p)$, one obtains $\M(H_4,v_{i_3}, v, v', p)$ via Lemma~\ref{lem:lina-merging} and finally $\M(G',v_{i_3}, v, v', p)$ via Lemma~\ref{lem:lina-edges}.
If necessary, Lemma~\ref{lem:lina-pins} can be applied again to choose a different pinned first vertex.

All intermediate graphs are $3$-connected and planar and the three distinguished vertices are on the same face in the respective graphs, so all intermediate structures are embedding-inducing following Lemma~\ref{lem:lina-invertable}.
\end{proof}

By reversing the operations in the proof of Lemma~\ref{lem:lina-addpair} one can prove the following lemma.

\begin{lemma}\label{lem:lina-removepair}
	Fix a domain of size $n$. Let $G$ be a $3$-connected planar graph with at most $n$ vertices, among them the vertices $v_{i_1}, v_{i_2}, v_{i_3} , v_{i_4}, v, v'$ such that $v, v'$ only have the edges $(v,v'),(v, v_{i_1}), (v, v_{i_2}), (v', v_{i_3}), \allowbreak (v', v_{i_4})$ and such that the graph $G'$ that results from $G$ by removing $v, v'$ and its edges is still $3$-connected and $v_{i_1}, v_{i_2}, v_{i_3} , v_{i_4}$ in this order form an induced cycle.
	Let $p$ be a prime of magnitude $\bigO(n^c)$, for some constant $c$, such that $(G, v_{i_m}, v, v', p)$ is modulo embedding-inducing, for some $1 \leq m \leq 3$.
	
	Then there is a constant number of embedding-inducing $(H_j, v^j_1, v^j_2, v^j_3)_{j \leq d}$, where the $H_j$ are $3$-connected planar graphs and $v^j_1, v^j_2, v^j_3$ are vertices that are on the same face of $H_j$, such that given $\M(G,v_{i_M}, v, v', p)$ one can determine in \FOar whether all of $(G', v_{i_1}, v_{i_2}, v_{i_3}, p)$ and $(H_j, v^j_1, v^j_2, v^j_3, p)_{j \leq d}$ are modulo embedding-inducing and if so, $\M(G',v_{i_1}, v_{i_2}, v_{i_3}, p)$ can be defined.
\end{lemma}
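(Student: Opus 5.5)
We prove Lemma~\ref{lem:lina-removepair} by running the construction from the proof of Lemma~\ref{lem:lina-addpair} backwards. Every step there is one of the operations from Lemmas~\ref{lem:lina-edges}, \ref{lem:lina-pins} or \ref{lem:lina-merging}. Each of these has an \FOar-computable inverse: Lemma~\ref{lem:lina-edges} handles both insertions and deletions, Lemma~\ref{lem:lina-pins} applies symmetrically, and Lemma~\ref{lem:lina-splitting} reverses Lemma~\ref{lem:lina-merging}. As in Lemma~\ref{lem:lina-addpair}, let $H_1$ be the triangle on $v_{i_1},v_{i_2},v$ and $H_2$ the triangle on $v_{i_3},v_{i_4},v'$. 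Let $H_3$ be $G'\cup H_1$ plus the edge $(v,v_{i_3})$, and let $H_4 = G + (v,v_{i_3})$; note that $H_4$ is $H_3\cup H_2$ plus the edge $(v,v')$. These graphs, together with $H_1,H_2$, form the constant-size family $(H_j,v^j_1,v^j_2,v^j_3)_{j\le d}$ required by the statement.

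The steps are as follows.
\begin{enumerate}
\item Starting from $\M(G,v_{i_m},v,v',p)$, apply Lemma~\ref{lem:lina-pins} to obtain $\M(G,v_{i_3},v,v',p)$.
\item Apply Lemma~\ref{lem:lina-edges} (insertion of $(v,v_{i_3})$) to obtain $\M(H_4,v_{i_3},v,v',p)$.
\item $H_4$ is the union of $H_3$ and $H_2$, which share exactly $\{v_{i_3},v_{i_4}\}$, plus the edge $(v,v')$. Apply Lemma~\ref{lem:lina-pins} to reach pins $(v_{i_3},v,v')$ of the form required in Lemma~\ref{lem:lina-splitting}.
\item Apply Lemma~\ref{lem:lina-splitting} to obtain $\M(H_3,v_{i_3},v_{i_4},v,p)$, discarding the $H_2$ part.
\item Apply Lemma~\ref{lem:lina-pins} to switch to $\M(H_3,v_{i_1},v_{i_3},v,p)$.
\item $H_3$ is the union of $G'$ and $H_1$, which share exactly $\{v_{i_1},v_{i_2}\}$, plus the edge $(v,v_{i_3})$. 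Split once more with Lemma~\ref{lem:lina-splitting} to obtain $\M(G',v_{i_1},v_{i_2},v_{i_3},p)$.
\end{enumerate}
Each step is a first-order definition applied to the output of the previous one, and there are constantly many steps, so their composition is an \FOar formula. At each step the corresponding lemma also decides in \FOar whether the new structure is modulo embedding-inducing. The conjunction of these tests is exactly the required check that $(G',v_{i_1},v_{i_2},v_{i_3},p)$ and all the $(H_j,\dots,p)$ are modulo embedding-inducing.

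It remains to show that every intermediate $(H_j,v^j_1,v^j_2,v^j_3)$ is embedding-inducing over the rationals. By Lemma~\ref{lem:lina-invertable} it suffices that each $H_j$ is $3$-connected and planar and that its pinned triple lies on a common face. Planarity and $3$-connectivity of $G'$, $H_3$ and $H_4$ follow from the hypotheses: $G'$ is $3$-connected, $v_{i_1},\dots,v_{i_4}$ bound an induced cycle (hence, in $G$, a face into which $v,v'$ are drawn), and the triangles are attached along edges with an extra connecting edge, exactly as argued in part~(a) of Lemma~\ref{lem:lina-addpair}. The main obstacle is checking that the pinned triples used in steps~3--6 lie on common faces; this requires choosing the order of the pin exchanges so that each triple is co-facial in the current graph. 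I would verify this directly from the embedding of $G'$ with $v,v'$ inside the face $v_{i_1}v_{i_2}v_{i_3}v_{i_4}$, mirroring the face claims in Lemma~\ref{lem:lina-addpair}(a).
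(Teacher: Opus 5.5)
Your proposal is correct and is essentially the paper's argument: the paper proves this lemma simply by reversing the chain of operations from the proof of Lemma~\ref{lem:lina-addpair}, which is what your steps 1--6 do. Step~3 is in fact a no-op, since step~2 already yields the pins $(v_{i_3},v,v')$. The co-faciality checks you defer all go through: $v_{i_3},v,v'$ lie on the face $v\,v'\,v_{i_3}$ of $H_4$, and both $v_{i_3},v_{i_4},v$ and $v_{i_1},v_{i_3},v$ lie on the face $v\,v_{i_3}\,v_{i_4}\,v_{i_1}$ of $H_3$.
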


\subsubsection{Proofs for Maintaining Tutte information for coherent paths}

\theoremTutteCoherent*
\begin{proof}
	We prove the remaining cases for edge change types.
	\subparagraph*{Cases $\es[+]{3}{3}$ and $\es[-]{3}{3}$.} The statement directly follows from Lemma~\ref{lem:lina-edges}.

	\subparagraph*{Case $\es[-]{3}{2}$.}
	We now consider the case of the deletion of an edge $(u,v)$ such that $u$ and $v$ are in a common $3$-connected component in $G$ but in different $3$-connected components $D_1, D_2$ in the resulting graph $G'$. So, the original $3$-connected components unfurls into a path of triconnected component after the edge deletion. Those components and their separating pairs can be defined using Lemma \ref{theorem:maintainability-decomposition}. This case is the reverse of Case $\es[+]{2}{3}$.

	We again show how to maintain the auxiliary information for a coherent path $\rho' = \rho(C_1, C_2, (a_1,a_2))$ in $\triTree(G')$ between some components $C_1$ and $C_2$, assuming that the unfurling $3$-connected component is part of the corresponding coherent path in $G$. After that component unfurls, only a subpath of it is on the coherent path $\rho'$ in $G'$, see again Figure~\ref{lm:tutte2conn2and3}: after the edge $(u,v)$ is deleted only the subpath from component $I_1$ to component $I_2$ is relevant for $\rho'$.

	As in the previous case $\es[+]{2}{3}$ we first assume that the subpaths from component $D_1$ to the last triconnected component $B_1$ before $I_1$ and from $D_2$ to the last component $B_2$ before $I_2$ are both not single cycle components but either a $3$-connected component or a non-trivial path of more than one triconnected component.

	Let $H = G[\rho]$ be the graph induced by $\rho = \rho(C_1, C_2, (a_1,a_2))$ in the original graph. According to the assumption of the lemma, $\M(H,s_2, u, v, p)$ is available.
	Let $H_3$ be the graph that results from $H$ by adding the edges $(u,s_1), (u,s_2)$ and $(v, t_2)$, where $\{s_1, t_1\}$ is the separating pair between the components $B_1$ and $I_1$ and $\{s_2, t_2\}$ is the separating pair between $B_2$ and $I_2$ (see Figure~\ref{lm:tutte2conn2and3}).
	Assuming that every intermediate structure is modulo embedding-inducing, applying Lemma~\ref{lem:lina-edges} three times, one can obtain $\M(H_3,s_2, u, v, p)$.
	Using Lemma~\ref{lem:lina-splitting} we now aim to compute $\M(H',v'_{i_1}, v'_{i_2}, v'_{i_3}, p)$ for $H' = G'[\rho']$. Intuitively, from $H$ we will ``split off'' the subpaths from $D_1$ to $B_1$ and from $D_2$ to $B_2$ to obtain $H'$.

	Let $H'_1$ be the graph $G[\rho(D_2,B_2,(v,t_2))]$ and $H_4$ be the graph such that $H_3$ is the union of $H_4$ and $H'_1$ and the edge $(u,v)$. According to Lemma~\ref{lem:lina-splitting} we can obtain $\M(H_4,s_2, t_2, u, p)$ and then $\M(H_4,s_1, u, s_2, p)$ via Lemma~\ref{lem:lina-pins}, assuming both structures are modulo embedding-inducing.
	The graph $H_4$ is the union of the graph $H'$ and the graph $G[\rho(D_1,B_1,(u,s_2))]$ and the edge $(u,s_1)$. Again using Lemma~\ref{lem:lina-splitting}, we can obtain $\M(H',s_1, t_1, s_2, p)$ and finally $\M(H',v'_{i_1}, v'_{i_2}, v'_{i_3}, p)$ applying Lemma~\ref{lem:lina-pins}.

	Again, all intermediate graphs are $3$-connected planar graphs and the respective three distinguished vertices are on the same face, so all intermediate structures are embedding-inducing according to Lemma~\ref{lem:lina-invertable}.

	If the path from $D_1$ to $B_1$ or the path from $D_2$ to $B_2$ only consists of a single cycle component, the approach can be altered analogously as in the case $\es[+]{2}{3}$: if the path from $D_1$ to $B_1$ is only one cycle component, $H_4$ is the union of $H'$ and the single vertex $u$, together with the edges $(u,s_1),(u,t_1), (u,s_2)$. We can obtain $\M(H',s_1, t_1, s_2, p)$ using Lemma~\ref{lem:lina-removevertex}.

	\subparagraph*{Cases $\es[+]{2}{2}$ and $\es[-]{2}{2}$.}
	The insertion or deletion of an edge $(u,v)$ such that $u$ and $v$ are part of a common cycle component may be relevant for maintaining the auxiliary information for a coherent path, assuming that the cycle component lies on that path. In that case, $u$ and $v$ become part of the $3$-connected component associated with the coherent path, if they were not already included before. See Figure~\ref{fig:tuttecase2to2} for an illustration.
	If the vertex $v$ is added to the $3$-connected component, it subdivides an already existing edge: the (virtual) edge between the vertices $s_1, s_2$ of the two separating pairs $\{s_1, t_1\}$ and $\{s_2, t_2\}$ that are adjacent to the cycle component in the coherent path, such that $s_1, s_2$ are closer to $v$ in the cycle than $t_1$ and $t_2$, respectively.
	\begin{figure}[t]
		\centering
		\begin{subfigure}{.45\textwidth}
			\centering
			\scalebox{1}{
			\begin{tikzpicture}[
				scale=1,
				every node/.style={inner sep=0pt, label distance=0.5mm},
				vertex/.style={circle,fill=black,inner sep=1pt},
				hollow/.style={circle,draw, fill=white,inner sep=1.6pt},
				region/.style={draw=black,fill=gray!35},
				edge/.style={line width=1pt}
				]
				
				\coordinate (v) at (0,1) {};
				\coordinate (u) at (0,-1) {};
				\coordinate (s1) at (-1,0.5) {};
				\coordinate (s1p) at (1,0.5) {};
				\coordinate (s2) at (-1,-0.5) {};
				\coordinate (s2p) at (1,-0.5) {};
				\coordinate (node) at (0.5,-0.75) {};

				\draw[dashed] 
				(v)
				to[bend left=30] (s1)
				to[bend left=30] (s2)
				to[bend left=30] (u)
				to[bend left=30] (node)
				to[bend left=30] (s2p)
				to[bend left=30] (s1p)
				to[bend left=30] (v)
				;

				\def\sepdist{0.3} %
				\def\sepdistoncircle{0.1} %
				\def\innerdist{0.3} %
				\def\outerdist{1} %
				\def\innerangle{80} %
				\def\outerangle{30} %

				\def\dir{180} 
				\filldraw[region] (s2) .. controls ++(\dir+\outerangle/2:\outerdist) and ++(\dir-\outerangle/2:\outerdist) ..
				node[pos=0.5 - \sepdist/2] (s2-1) {}
				node[pos=0.5 + \sepdist/2] (s1-1) {}
				(s1)
				..controls ++(\dir+\innerangle/2:\innerdist) and ++(\dir-\innerangle/2:\innerdist) .. (s2)
				;
				
				\def\outerdist{0.7} %
				
				\filldraw[region] (s2-1) .. controls ++(\dir+\outerangle/2:\outerdist) and ++(\dir-\outerangle/2:\outerdist) ..
				node[pos=0.5] (a) {}
				(s1-1)
				..controls ++(\dir+\innerangle/2:\innerdist) and ++(\dir-\innerangle/2:\innerdist) .. (s2-1)
				;
				
				\def\dir{0} 
				\def\outerdist{1} %
				
				\filldraw[region] (s1p) .. controls ++(\dir+\outerangle/2:\outerdist) and ++(\dir-\outerangle/2:\outerdist) ..
				node[pos=0.5 - \sepdist/2] (s1p-1) {}
				node[pos=0.5 + \sepdist/2] (s2p-1) {}
				(s2p)
				..controls ++(\dir+\innerangle/2:\innerdist) and ++(\dir-\innerangle/2:\innerdist) .. (s1p)
				;
				
				\def\outerdist{0.7} %
				
				\filldraw[region] (s1p-1) .. controls ++(\dir+\outerangle/2:\outerdist) and ++(\dir-\outerangle/2:\outerdist) ..
				node[pos=0.5] (b) {}
				(s2p-1)
				..controls ++(\dir+\innerangle/2:\innerdist) and ++(\dir-\innerangle/2:\innerdist) .. (s1p-1)
				;
				
				\node[vertex,fill=black,label=above:$v$] at (v) {};
				\node[vertex,fill=black,label=below:$u$] at  (u)  {};
				
				\node[vertex,label=above:$s_1$]  at (s1)  {};
				\node[vertex,label=above:$s_2$]  at (s1p)  {};
				
				\node[vertex,label=below:$t_1$] at  (s2) {};
				\node[vertex,label=below:$t_2$]  at (s2p)  {};
				
				\node[vertex] at (node) {};
				\node[vertex] at (s1-1) {};
				\node[vertex] at (s2-1) {};
				\node[vertex] at (s1p-1) {};
				\node[vertex] at (s2p-1) {};

				\node[vertex, label=below left:$a_1$] at (a) {};
				\node[vertex, label=below right:$a_2$] at (b) {};
				
			\end{tikzpicture}
			}
			\caption{A $(a_1,a_2)$-coherent path $\rho$ that contains a cycle component.\label{fig:tutte2to2}}
		\end{subfigure}\quad
		\begin{subfigure}{.45\textwidth}
			\centering
			\scalebox{1}{
			\begin{tikzpicture}[
				scale=1,
				every node/.style={inner sep=0pt, label distance=0.5mm},
				vertex/.style={circle,fill=black,inner sep=1pt},
				hollow/.style={circle,draw, fill=white,inner sep=1.6pt},
				region/.style={draw=black,fill=gray!35},
				edge/.style={line width=1pt}
				]
				
				\coordinate (s1) at (0,0.5) {};
				\coordinate (s1p) at (1,0.5) {};
				\coordinate (s2) at (0,-0.5) {};
				\coordinate (s2p) at (1,-0.5) {};
				\coordinate (node) at (0.5,-0.75) {};

				\draw[dashed] 
				(s1)
				to (s2)
				to (s2p)
				to (s1p)
				to (s1)
				;

				\def\sepdist{0.3} %
				\def\sepdistoncircle{0.1} %
				\def\innerdist{0.3} %
				\def\outerdist{1} %
				\def\innerangle{80} %
				\def\outerangle{30} %

				\def\dir{180} 
				\filldraw[region] (s2) .. controls ++(\dir+\outerangle/2:\outerdist) and ++(\dir-\outerangle/2:\outerdist) ..
				node[pos=0.5 - \sepdist/2] (s2-1) {}
				node[pos=0.5 + \sepdist/2] (s1-1) {}
				(s1)
				..controls ++(\dir+\innerangle/2:\innerdist) and ++(\dir-\innerangle/2:\innerdist) .. (s2)
				;
				
				\def\outerdist{0.7} %
				
				\filldraw[region] (s2-1) .. controls ++(\dir+\outerangle/2:\outerdist) and ++(\dir-\outerangle/2:\outerdist) ..
				node[pos=0.5] (a) {}
				(s1-1)
				..controls ++(\dir+\innerangle/2:\innerdist) and ++(\dir-\innerangle/2:\innerdist) .. (s2-1)
				;
				
				\def\dir{0} 
				\def\outerdist{1} %
				
				\filldraw[region] (s1p) .. controls ++(\dir+\outerangle/2:\outerdist) and ++(\dir-\outerangle/2:\outerdist) ..
				node[pos=0.5 - \sepdist/2] (s1p-1) {}
				node[pos=0.5 + \sepdist/2] (s2p-1) {}
				(s2p)
				..controls ++(\dir+\innerangle/2:\innerdist) and ++(\dir-\innerangle/2:\innerdist) .. (s1p)
				;
				
				\def\outerdist{0.7} %
				
				\filldraw[region] (s1p-1) .. controls ++(\dir+\outerangle/2:\outerdist) and ++(\dir-\outerangle/2:\outerdist) ..
				node[pos=0.5] (b) {}
				(s2p-1)
				..controls ++(\dir+\innerangle/2:\innerdist) and ++(\dir-\innerangle/2:\innerdist) .. (s1p-1)
				;
				
				\node[vertex,label=above:$s_1$]  at (s1)  {};
				\node[vertex,label=above:$s_2$]  at (s1p)  {};
				
				\node[vertex,label=below:$t_1$] at  (s2) {};
				\node[vertex,label=below:$t_2$]  at (s2p)  {};
				
				\node[vertex] at (s1-1) {};
				\node[vertex] at (s2-1) {};
				\node[vertex] at (s1p-1) {};
				\node[vertex] at (s2p-1) {};
				
				\draw[black] (a) .. controls ++(100:2) and ++(80:2) .. (b);
				
				\node[vertex, label=below left:$a_1$] at (a) {};
				\node[vertex, label=below right:$a_2$] at (b) {};
				
			\end{tikzpicture}
			}
			\caption{The $3$-connected component induced by $\rho$. \label{fig:tutte2to2easy}}
		\end{subfigure}\quad
		\begin{subfigure}{.45\textwidth}
			\centering
			\scalebox{1}{
			\begin{tikzpicture}[
				scale=1,
				every node/.style={inner sep=0pt, label distance=0.5mm},
				vertex/.style={circle,fill=black,inner sep=1pt},
				hollow/.style={circle,draw, fill=white,inner sep=1.6pt},
				region/.style={draw=black,fill=gray!35},
				edge/.style={line width=1pt}
				]
				
				\coordinate (v) at (0,1) {};
				\coordinate (u) at (0,-1) {};
				\coordinate (s1) at (-1,0.5) {};
				\coordinate (s1p) at (1,0.5) {};
				\coordinate (s2) at (-1,-0.5) {};
				\coordinate (s2p) at (1,-0.5) {};

				\draw[dashed] 
				(v)
				to(s1)
				to (s2)
				to (u)
				to (s2p)
				to (s1p)
				to (v)
				;
				
				\draw[black, line width=1pt] (u) -- (v);

				\def\sepdist{0.3} %
				\def\sepdistoncircle{0.1} %
				\def\innerdist{0.3} %
				\def\outerdist{1} %
				\def\innerangle{80} %
				\def\outerangle{30} %

				\def\dir{180} 
				\filldraw[region] (s2) .. controls ++(\dir+\outerangle/2:\outerdist) and ++(\dir-\outerangle/2:\outerdist) ..
				node[pos=0.5 - \sepdist/2] (s2-1) {}
				node[pos=0.5 + \sepdist/2] (s1-1) {}
				(s1)
				..controls ++(\dir+\innerangle/2:\innerdist) and ++(\dir-\innerangle/2:\innerdist) .. (s2)
				;
				
				\def\outerdist{0.7} %
				
				\filldraw[region] (s2-1) .. controls ++(\dir+\outerangle/2:\outerdist) and ++(\dir-\outerangle/2:\outerdist) ..
				node[pos=0.5] (a) {}
				(s1-1)
				..controls ++(\dir+\innerangle/2:\innerdist) and ++(\dir-\innerangle/2:\innerdist) .. (s2-1)
				;
				
				\def\dir{0} 
				\def\outerdist{1} %
				
				\filldraw[region] (s1p) .. controls ++(\dir+\outerangle/2:\outerdist) and ++(\dir-\outerangle/2:\outerdist) ..
				node[pos=0.5 - \sepdist/2] (s1p-1) {}
				node[pos=0.5 + \sepdist/2] (s2p-1) {}
				(s2p)
				..controls ++(\dir+\innerangle/2:\innerdist) and ++(\dir-\innerangle/2:\innerdist) .. (s1p)
				;
				
				\def\outerdist{0.7} %
				
				\filldraw[region] (s1p-1) .. controls ++(\dir+\outerangle/2:\outerdist) and ++(\dir-\outerangle/2:\outerdist) ..
				node[pos=0.5] (b) {}
				(s2p-1)
				..controls ++(\dir+\innerangle/2:\innerdist) and ++(\dir-\innerangle/2:\innerdist) .. (s1p-1)
				;
				
				\node[vertex,label=above:$v$] at (v) {};
				\node[vertex,label=below:$u$] at  (u)  {};
				
				\node[vertex,label=above:$s_1$]  at (s1)  {};
				\node[vertex,label=above:$s_2$]  at (s1p)  {};
				
				\node[vertex,label=below:$t_1$] at  (s2) {};
				\node[vertex,label=below:$t_2$]  at (s2p)  {};
				
				\node[vertex] at (s1-1) {};
				\node[vertex] at (s2-1) {};
				\node[vertex] at (s1p-1) {};
				\node[vertex] at (s2p-1) {};
				
				\draw[black] (a) .. controls ++(100:2) and ++(80:2) .. (b);
				
				\node[vertex, label=below left:$a_1$] at (a) {};
				\node[vertex, label=below right:$a_2$] at (b) {};
				
			\end{tikzpicture}
			}
			\caption{The $3$-connected component induced by $\rho'$ (if $(u,v)$ has been inserted).\label{fig:tutte2to2insert}}
		\end{subfigure}
		\caption{Illustration for the case $\es[+]{2}{2}$ in the proof of Lemma~\ref{lem:tutte:coherent}. \label{fig:tuttecase2to2}}
	\end{figure}
	If one or both of $u, v$ are added to the $3$-connected component, this addition together with the corresponding edges can be processed using Lemma~\ref{lem:lina-addvertex} respectively Lemma~\ref{lem:lina-addpair}. The deletion of subdivided edges can be done by applying Lemma~\ref{lem:lina-edges} constantly often.

	If vertices are removed from the $3$-connected component following a change of the type $\es[-]{2}{2}$, Lemma~\ref{lem:lina-removevertex} respectively Lemma~\ref{lem:lina-removepair} are used instead.%

	\subparagraph*{Case $\es[+]{1}{2}$.} In this case, the triconnected component tree $\triTree(G')$ of $G'$ consists of multiple triconnected component trees $\triTree(B_1), \ldots, \triTree(B_k)$ for biconnected components $B_1, \ldots, B_k$ of $G$ that are joined along a new cycle component. See Figure~\ref{fig:BCpath:3tree} for an illustration.
	Accordingly, all $3$-connected components of $G'$ already exist in one of the components $B_i$ and the required information is already available, bar possible virtual edge insertions inside these components that can be handled as discussed in the cases above.
	
	If $H' = G'[\rho']$ for a coherent path $\rho'$ that did not already exist in one of the components $B_i$ then it can only intersect two of these components and the new cycle component. Let $\rho = \rho(C_1, C_2, (a_1,a_2))$, where $C_1, C_2$ are triconnected components in $B_1, B_2$, respectively. Let $\{s_1, s_2\}, \{t_1,t_2\}$ be the respective separating pairs between these components and the new cycle component and let $D_1, D_2$ be the respective triconnected components that contain $\{s_1, s_2\}$ and $\{t_1,t_2\}$.
	It follows that $\rho_1 = \rho(C_1,D_1,(a_1,s_1))$ and $\rho_2 = \rho(C_2,D_2,(a_2,t_1))$ are coherent paths in $G$ and therefore $\M(G[\rho_1],a_1, s_1, s_2, p)$ and $\M(G[\rho_2],a_2, t_1, t_2, p)$ are available.

	The graph $G'$ is obtained from the union of $G[\rho_1]$ and $G[\rho_2]$ by adding the edges $(a_1, a_2)$ and $(s_1, t_1), (s_2,t_2)$ (assuming the cyclic order in the new cycle component is $(s_1, t_1, t_2, s_2)$) and deleting the edges $(a_1, s_1)$ and $(a_2, t_1)$.
	The lemma statement follows from the applications of Lemma~\ref{lem:lina-union}, Lemma~\ref{lem:lina-edges} and Lemma~\ref{lem:lina-pins} in that order.

	\subparagraph*{Case $\es[-]{2}{1}$.} The triconnected component tree of $G'$ is obtained from the triconnected component tree of $G$ by removing a cycle component, which disconnects certain subtrees. This means that every coherent path in $\triTree(G')$ is already a coherent path in $\triTree(G)$ and every $3$-connected component of $G'$ already exists in $G$. Starting from the already available information, only the potential deletion of virtual edges has to be processed, which is discussed in the cases above.
\end{proof}

\subsection{Proofs for Maintaining Isomorphism Information for Biconnected Components (Section \ref{section:proof-details:biconnected})}

\theoremIsotwo*
\begin{proof}
	We show that the proposition follows from the maintainability of $\textsc{x-iso}_2$. Recall that $\SubtreeIso_2$ can be defined from $\textsc{x-iso}_2$.

	Let $B, B^*$ be the biconnected components that include $a, b$ and $a^*,b^*$, respectively. The tuple $(a,b,a^*,b^*)$ is in $\spqrIso$ if there are
	\begin{itemize}
	\item triconnected components $T_1, T_2, T^*_1$ and $T^*_2$ that include $a, b, a$ and $b^*$, respectively, and
	\item further vertices $c, d$ in $T_1$ and $c^*, d^*$ in $T^*_1$,
\end{itemize}
	such that for the $3$-connected separating pairs $\{s_1, s_2\}$ and $\{s^*_1, s^*_2\}$ that are the parents of $T_2$ and $T^*_2$ in $\triTree(B)$ and $\triTree(B^*)$, respectively, it holds that
	\begin{itemize}
		\item the recoloured contexts $\rcX((a,c,d),(a,c,d),(s_1,s_2,b),\tpl f_1)$ and $\rcX((a^*,c^*,d^*),(a^*,c^*,d^*),\allowbreak(s^*_1,s^*_2,b^*),\tpl f^*_1)$ are fully isomorphic, and
		\item the recoloured subtrees $\rcST((a,c,d),(s_1,s_2,b),\tpl f_2)$ and $\rcST((a^*,c^*,d^*),(s^*_1,s^*_2,b^*),\tpl f^*_2)$ are fully isomorphic,
	\end{itemize}
where $\tpl f_1, \tpl f^*_1, \tpl f_2, \tpl f^*_2$ describe the actual colours of the corresponding vertices.

The conditions can clearly be expressed in \FO using the relations $\ContextIso_2$ and $\SubtreeIso_2$.
\end{proof}

\subsubsection{Proofs for Maintaining Isomorphic Contexts and Sibling Counts (Section \ref{section:biconnected:x-iso})}
We discuss how $\ContextIso_2$ and $\#\textsc{iso-siblings}_2$ can be maintained.

\lemmatwoisocontexts*
\begin{proof}
	In the main part we showed that changes of types $\es[+/-]{3}{3}$ and $\es[-]{3}{2}$ (see also Figure~\ref{fig:iso2:3t3} and Figure~\ref{fig:iso2:3t2} for illustrations for these cases) that affect only one of the contexts can be dealt with. Now we prove the remaining cases.
	
 \begin{figure}[t]
	\centering
	\begin{subfigure}{.45\textwidth}
		\resizebox{.95\textwidth}{!}{%

\tikzset{every picture/.style={line width=0.75pt}} %

 %
}
		\caption{$C$ lies on the root to hole path.}
		\label{fig:iso2:3t3:isLCA}
	\end{subfigure}\hfill
	\begin{subfigure}{.45\textwidth}
		\resizebox{.95\textwidth}{!}{%

\tikzset{every picture/.style={line width=0.75pt}} %

%
 %
}
		\caption{$C$ does not lie on the root to hole path.}
		\label{fig:iso2:3t3:notLCA-SP}
	\end{subfigure}
	\caption{Update of $\ContextIso_2$ for edge changes of type $\es[+]{3}{3}$ or $\es[-]{3}{3}$ inside a triconnected component $C$.}	\label{fig:iso2:3t3}
\end{figure}	
	
\begin{figure}[t]
	\centering

\tikzset{every picture/.style={line width=0.75pt}} %

%
 	\caption{Update of $\ContextIso_2$ for edge changes of type $\es[-]{3}{2}$.}	
	\label{fig:iso2:3t2}
\end{figure}

\paragraph*{Changes that affect one context.}

Let $X = \rcX(\tpl t, \tpl r, \tpl h, \tpl f)$ and $X^* = \rcX(\tpl t^*, \tpl r^*, \tpl h^*, \tpl f^*)$ be recoloured contexts of the triconnected component forest. Assume that the change only affects $\graph(X)$. %

\subparagraph*{Case $\es[+]{2}{3}$.}

The procedure is very similar to the previous case $\es[+]{3}{3}$. Let $C$ be the newly-coalesced $3$-connected component. To update the isomorphism information for subcontexts with root $C$ via Claim~\ref{clm:subtree-iso-tri-fo}, we need to have the isomorphism information for all its child separating pairs available.

If a child separating pair of $C$ existed before the change, its old isomorphism information is still valid.
New separating pairs that are introduced by the change have a simple structure: they are children of $C$, have a new cycle component as only child, which in turn has only pre-existing separating pairs with unchanged subtrees as children.
So, via Claim~\ref{clm:subtree-iso-tri-fo}, the isomorphism information for the new cycle components can be computed, followed by the isomorphism information for the new separating pairs via Claim~\ref{clm:subtree-iso-sp-fo}.
The remaining procedure is exactly as described for the case $\es[+]{3}{3}$.

\subparagraph*{Cases $\es[+]{2}{2}$ and $\es[-]{2}{2}$.}
These cases are related to Case $\es[-]{3}{2}$ and Case $\es[+]{2}{3}$, respectively, although they are structurally much simpler. As the affected path in the triconnected component tree has constant length, applying Claim~\ref{clm:subtree-iso-tri-fo} and Claim~\ref{clm:subtree-iso-sp-fo} constantly often is sufficient here.

\subparagraph*{Case $\es[+]{1}{2}$.}
In this case, a new cycle component is formed. It may contain new virtual edges that also need to be added to adjacent biconnected components.
Also, we allow that all colours of the vertices in the cycle component change as well.

Let $C$ be the new cycle component and let $B_1, \ldots, B_k$ be the biconnected components that get attached to this cycle. For each component $B_i$, let $c_{i-1}, c_i$ be the two former cut vertices that it shares with $C$.
If $(c_{i-1}, c_i)$ is not an edge in $B_i$, it is added as a virtual edge, as described in the case from $\es[+]{2}{2}$, $\es[+]{2}{3}$ and $\es[+]{3}{3}$ that is applicable.
If not already present, a separating pair vertex $\{c_{i-1},c_i\}$ is added, then its only neighbour is the unique triconnected component that includes these two vertices. Using Claim~\ref{clm:subtree-iso-sp-fo}, the isomorphism information can be computed for all contexts that include this separating pair.
Then, the potential colour changes of $c_{i-1}$ and $c_i$ can be processed as discussed in the corresponding case.

All these steps can be done in parallel for the affected biconnected components.

The length of the cycle $C$ can be inferred from the distance between $c_0$ and $c_k$ in the biconnected component tree, so the isomorphism information for $C$ can be computed as well, cf.~Lemma~\ref{lemma:cycle-distances}.

The isomorphism information for any context that is rooted $C$ can then be expressed, again thanks to Claim~\ref{clm:subtree-iso-tri-fo}. %
The isomorphism information for arbitrary contexts with the new cycle component as inner vertex can then be obtained along the lines of Case~$\es[+]{3}{3}$.

\subparagraph*{Case $\es[-]{2}{1}$.}
As a cycle component dissolves, changes of this type lead to the decomposition of a biconnected component into multiple biconnected components. In each one of the resulting biconnected components, additionally one (virtual) edge and one separating pair vertex might be removed; the two vertices shared with the dissolved cycle may get a new colour.

The context $X$ of the changed graph was already a valid context before the change. To update its isomorphism information, we only have to process the potential deletion of the virtual edge as discussed in the applicable case from $\es[-]{2}{2}$, $\es[-]{3}{2}$ and $\es[-]{3}{3}$ and the colour changes as discussed in the corresponding case.

\subparagraph*{Changing the colour of a vertex.}

Suppose some vertex $v$ that is contained in $\graph(X)$ changes its colour. Let $\tpl x$ describe the vertex of the context $X$ that is closest to the root such that $\tpl x$ is either the vertex for a separating pair that contains $v$ or represents a triconnected component that includes $v$.

We first suppose that $\tpl x$ is a separating pair. To check whether the recoloured contexts $X = \rcX(\tpl t, \tpl r, \tpl h, \tpl f)$ and $X^* = \rcX(\tpl t^*, \tpl r^*, \tpl h^*, \tpl f^*)$ are fully isomorphic, the first order formula existentially quantifies a separating pair $\tpl x^*$ and checks using the existing auxiliary relations whether the contexts  $\rcX(\tpl t, \tpl x, \tpl h, \tpl f_1)$ and $ \rcX(\tpl t^*, \tpl x^*, \tpl h^*, \tpl f^*_x)$ as well as $\rcX(\tpl t, \tpl r, \tpl x, \tpl f_2)$ and $ \rcX(\tpl t^*, \tpl r^*, \tpl x^*, \tpl f^*_2)$ are fully isomorphic, where $\tpl f^*_1, \tpl f^*_2$ describe the actual colours of the respective vertices in the recoloured graph $\graph(X^*)$ and $\tpl f_1, \tpl f_2$ mention the new colour for the vertex $v$.

If $\tpl x$ is a triconnected component, the update works along the lines of the explanation for the case $\es[+]{3}{3}$, but for the isomorphism check of the component $\tpl x$ the new colour of $v$ is taken into account and if $v$ appears in any child separating pairs, also for the checks of their subcontexts the colouring parameter of the recoloured contexts is adapted.

\subparagraph*{Recolouring all vertices of some colour with an unused colour.}

For ``renaming'' a colour, that is, recolouring all vertices with some colour $\tpl f_1$ using some colour $\tpl f_2$ that no vertex was coloured with before the change, observe that colours are only compared for equality. So, we only have to swap the colours $\tpl f_1$ and $\tpl f_2$ in any tuple of any auxiliary relation.

\paragraph*{Changes that affect both contexts.}

So far, we assumed that only one of the two considered contexts is affected by the change. Of course, this is a simplification, as both contexts might include a changed component. This could be the same component if the contexts are not disjoint; or it could be different components, for example if the two contexts include different parts of an unfurled path.

The cases where a change affects both contexts do not introduce new technical challenges, but expand the already lengthy case distinction considerably.
We restrict ourselves to the discussion of the case $\es[+]{3}{3}$, the other cases can be generalized using the same approach.

Suppose the recoloured contexts $X = \rcX(\tpl t, \tpl r, \tpl h, \tpl f)$ and $X^* = \rcX(\tpl t^*, \tpl r^*, \tpl h^*, \tpl f^*)$ are fully isomorphic after an edge insertion to a $3$-connected component $C$ that is part of both contexts.
Let $\{c_1, c_2\}$ be the parent separating pair of $C$ in $X$ and let $\{d^*_1, d^*_2\}$ be the parent separating pair of $C$ in $X^*$.
Let $\{c^*_1, c^*_2\}$ respectively $\{d_1, d_2\}$ be their isomorphic copies in the other context.
We differentiate the cases of $\{c_1, c_2\}$ and $\{d_1, d_2\}$ being equal, one separating pair being a predecessor of the other, and the two separating pairs being in different subcontexts of $X$.

If $\{c_1, c_2\} = \{d_1, d_2\}$, an isomorphism can map $C$ to itself, although it does not need to map every vertex of $C$ to itself.
Intuitively, the same part of the two contexts is affected.
The existence of a corresponding isomorphism can be verified as discussed for the case $\es[+]{3}{3}$ above.

Suppose now that one separating pair is a predecessor of the other, without loss of generality we assume that $\{c_1, c_2\}$ is a predecessor of $\{d_1, d_2\}$.
As discussed for the case $\es[+]{3}{3}$ above, an update formula can verify that the subcontexts of $X$ and $X^*$ with roots $\{d_1, d_2\}$ and $\{d^*_1, d^*_2\}$ are fully isomorphic, as only the second subcontext is affected by the change.
Then it can analogously be checked in two steps that the subcontexts with roots $\{c_1, c_2\}$ and $\{c^*_1, c^*_2\}$ as well as the whole contexts $X$ and $X^*$ are fully isomorphic.

For the last remaining case, suppose that $\{c_1, c_2\}$ and $\{d_1, d_2\}$ are in different subcontexts of $X$.
Only one of the subcontexts with roots $\{c_1, c_2\}$ and $\{c^*_1, c^*_2\}$ is affected by the change, this holds also for the subcontexts with roots $\{d_1, d_2\}$ and $\{d^*_1, d^*_2\}$, so it can be checked as explained for case $\es[+]{3}{3}$ that the respective pairs of contexts are fully isomorphic.
Then, using the approach discussed for the case $\es[+]{3}{3}$ above, an update formula checks that also the subcontexts that have the least common ancestor of $\{c_1, c_2\}$ and $\{d_1, d_2\}$ and respectively of $\{c^*_1, c^*_2\}$ and $\{d^*_1, d^*_2\}$ as roots are fully isomorphic. Finally, it can then be checked that $X$ and $X^*$ are fully isomorphic.

\end{proof}

\subsubsection{Proofs for Maintaining Distances (Section \ref{section:biconnected:dist})}

\begin{lemma}\label{lemma:tritree-distances}
	The relation $\textsc{dist}_2$ storing distances in triconnected component trees can be maintained in $\DynFO$ under insertions and deletions of edges in the underlying graph $G$, provided that $G$ stays planar.
\end{lemma}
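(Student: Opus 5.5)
We maintain $\textsc{dist}_2$ together with the auxiliary relation $\textsf{DiskDistance}$ and the combinatorial embeddings from \cite{DattaKM23}. The proof is a case distinction over the change types of Section~\ref{section:changes}. In each case we exploit that distances in a tree are determined by path structure: for nodes $\tpl x,\tpl y$ whose tree path avoids the modified region, the old distance stays valid. If the path passes through the modified region, it enters it at a node $\tpl a$ and leaves it at a node $\tpl b$, and these boundary nodes are first-order definable from the old distances and the decomposition of Lemma~\ref{theorem:maintainability-decomposition}. The new distance is then $\dist(\tpl x,\tpl a)+\dist'(\tpl a,\tpl b)+\dist(\tpl b,\tpl y)$, where $\dist'$ is the distance inside the modified region. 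Additions of polynomially bounded numbers are available in \FOar.

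\textbf{Simple cases.}
\begin{itemize}
\item $\es[+/-]{3}{3}$: nothing changes.
\item $\es[+]{2}{3}$: a path $R_u\cdots R_v$ contracts to a single node $R_{uv}$. Its endpoints are identified via the old distances, since a node lies on the path iff $\dist(R_u,z)+\dist(z,R_v)=\dist(R_u,R_v)$. Degree-2 separating pairs on the path disappear, and the other path separating pairs become neighbours of $R_{uv}$. If cycle components are split, they yield a constant number of new nodes per split cycle, attached at distance $1$ or $2$ from $R_{uv}$. A tree distance through the path is then replaced by the distances to the nearest path nodes plus $2$, with the distance to $R_{uv}$ counted appropriately.
\item $\es[\pm]{2}{2}$: a node is replaced by a constant-length path, or the reverse, so the corrections are constant offsets.
\item $\es[+]{1}{2}$ and $\es[-]{2}{1}$: triconnected trees of different blocks are glued at a new cycle node via constantly many new separating pair nodes per block, or the reverse. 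Distances across blocks become $\dist(\tpl x,\{c_{i-1},c_i\})+2+\dist(\{c_{j-1},c_j\},\tpl y)$. Distances within a block are unchanged, apart from the potential new separating pair node, which is handled as in the $\es[+]{2}{2}$ case.
\end{itemize}

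\textbf{Main case $\es[-]{3}{2}$.} Deleting $e=(u,v)$ from a $3$-connected component $C$ unfurls $C$ into a path. By Lemma~\ref{lem:spqrDist}, the separating pair nodes on this path are exactly the pairs from the nodes of $\Disk(C,F_1,F_2)$, where $F_1,F_2$ are the faces adjacent to $e$. These faces are found in \FO from the maintained embedding. The pairs occur on the path in the cyclic order of $\Disk(C,F_1,F_2)$, cut open at the position of $e$. So the distance between separating pair nodes $P_i,P_j$ on the new path is $2$ times their $\textsf{DiskDistance}$, measured in the direction that does not cross the cut. Triconnected component nodes of the path sit between consecutive pairs, so their distances are determined as well.

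Each old neighbour separating pair of $C$ must now be attached to the unique new component containing both its vertices. That component is \FO-definable once the new components are known, and these are definable from the new separating pairs. Old distances $\dist(\tpl x,C)$ then combine with the path distances as described above.

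\textbf{Maintaining $\textsf{DiskDistance}$.} The main obstacle is keeping $\textsf{DiskDistance}$ up to date, since every $3$-connected component and every pair of faces needs it immediately. For $\es[\pm]{3}{3}$ changes inside $C$, only constantly many faces change. Most disk nodes persist, and the disk cycles are altered only locally: constantly many nodes are inserted or removed, or an order interval is cut out. I would argue this via condition (c) of Definition~\ref{def:potdist}, and then update cycle distances in \FO using position arithmetic, i.e.\ counting nodes between two positions, which is possible when node sets are definable and ordered along a cycle through the maintained order relation. For $\es[+]{2}{3}$ merges, disks of the merged component are assembled from disks and tree paths of the constituent components, concatenated along the shared separating pairs. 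For $\es[-]{3}{2}$, disks of the unfurled pieces are subsegments of the old disks.

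The step I expect to be hardest is proving that these local changes of disk graphs are really bounded or interval-shaped. My plan is to establish this with a planarity argument on $F_{a,b}$ faces.
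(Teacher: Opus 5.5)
Your overall route is the paper's: maintain \textsf{DiskDistance} and the combinatorial embeddings, reduce the hard case $\es[-]{3}{2}$ to Lemma~\ref{lem:spqrDist}, and treat the remaining cases as local surgery on the tree.

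\textbf{The gap: cases $\es[+]{1}{2}$ and $\es[-]{2}{1}$.} You assert that distances inside each old block are unchanged, apart from a possible new separating pair node handled as in $\es[+]{2}{2}$. This is false. The virtual edge $(c_{i-1},c_i)$ induced in a block $B_i$ acts inside $B_i$ like an edge change of whichever type applies. That type can be $\es[+]{2}{3}$, or $\es[-]{3}{2}$ when the virtual edge disappears.

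\emph{Insertion example.} Let $B_i$ consist of two copies of $K_4$ sharing an edge $(x,y)$, with $c_{i-1}$ and $c_i$ non-shared vertices in different copies. After inserting $(u,v)$, the pair $\{x,y\}$ no longer separates. So the path $K_4$--$\{x,y\}$--$K_4$ in $\triTree(B_i)$ collapses into a single $3$-connected node. In general, an arbitrarily long coherent path of $\triTree(B_i)$ coalesces.

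\emph{Deletion.} In $\es[-]{2}{1}$, removing the virtual edge $(c_{i-1},c_i)$ from a $3$-connected component of a new block can unfurl that component into a long path. This is precisely the hard case, and it can occur in unboundedly many blocks at once. Distances inside blocks therefore change by unbounded amounts and cannot be absorbed into constant offsets.

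\emph{Repair.} This is what the paper does. In parallel for every block on the merged path (resp.\ every block attached to the dissolving cycle), first apply the update for the induced virtual-edge change of the applicable type. This includes the Disk-based unfurling step and the corresponding \textsf{DiskDistance} updates for the affected $3$-connected components. Only then glue (resp.\ cut) at the cycle node. Your cross-block formula must use distances in these updated block trees.

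\textbf{A smaller inaccuracy in your $\es[-]{3}{2}$ case.} $\triTree(B-e)$ is not just $\triTree(C-e)$ with each old neighbouring separating pair reattached to ``the unique new component containing both its vertices''. Two effects are missing:
\begin{itemize}
\item An old separating pair may coincide with a new path separating pair. Then both vertices lie in two adjacent path components, and the two nodes must be identified.
\item A small cycle component of $C-e$ can merge with an old neighbouring cycle component across a former degree-$2$ separating pair, which disappears. This is the reverse of the cycle splitting you do account for in $\es[+]{2}{3}$.
\end{itemize}
These effects are local and give only constant corrections along any tree path, but they have to be built into the update formulas.
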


Clearly, the distances on $\triTree$s are not affected by changes of type $\es[+]{3}{3}$ or $\es[-]{3}{3}$, as they do not affect the triconnected decomposition. Also, under changes of type $\es[+]{2}{3}$, the distances can be easily updated using the old distances.

We extend the discussion of the main part and provide proof details for edge deletions of type $\es[-]{3}{2}$.

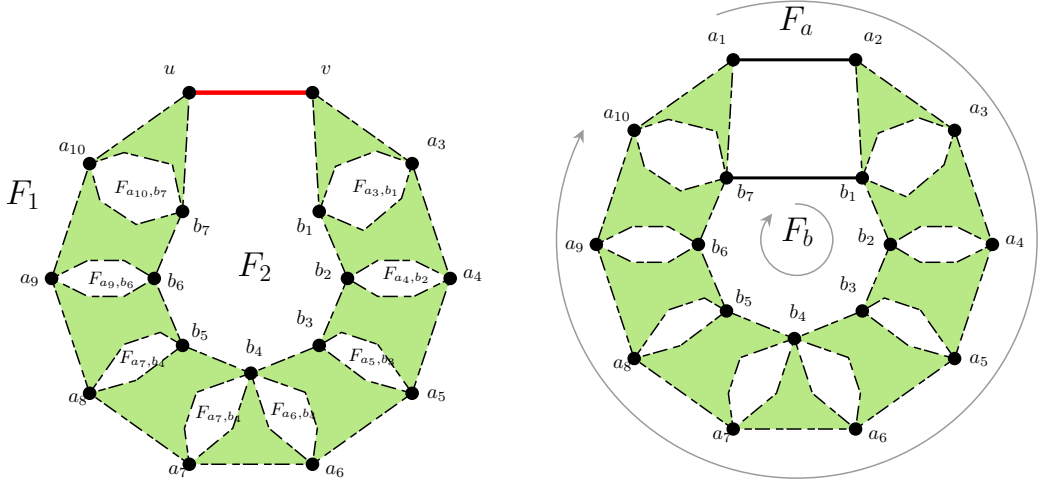
\begin{figure}[t]
	\centering
\begin{subfigure}{0.48\textwidth}
	\resizebox{.97\textwidth}{!}{%

\tikzset{every picture/.style={line width=0.75pt}} %

\begin{tikzpicture}[x=0.75pt,y=0.75pt,yscale=-.8,xscale=.8]
	\draw [color={rgb, 255:red, 255; green, 0; blue, 0 }  ,draw opacity=1 ][line width=2.25]    (144.01,70) -- (245.99,70) ;
	\draw  [fill={rgb, 255:red, 184; green, 233; blue, 134 }  ,fill opacity=1 ][dash pattern={on 3.75pt off 3pt on 7.5pt off 1.5pt}] (360,225) -- (328.49,320.8) -- (245.99,380) -- (144.01,380) -- (61.51,320.8) -- (30,225) -- (61.51,129.2) -- (144.01,70) -- (138.43,169.13) -- (115,225) -- (138.43,280.87) -- (195,304.02) -- (251.57,280.87) -- (275,225) -- (251.57,169.13) -- (245.99,70) -- (328.49,129.2) -- (360,225) ;
	\draw  [fill={rgb, 255:red, 255; green, 255; blue, 255 }  ,fill opacity=1 ][dash pattern={on 3.75pt off 3pt on 7.5pt off 1.5pt}][line width=0.75]  (130,130) -- (138.43,169.13) -- (100,180) -- (70,160) -- (61.51,129.2) -- (90,120) -- cycle ;
	\draw  [fill={rgb, 255:red, 255; green, 255; blue, 255 }  ,fill opacity=1 ][dash pattern={on 3.75pt off 3pt on 7.5pt off 1.5pt}][line width=0.75]  (288.31,184.99) -- (251.57,169.13) -- (265.04,131.52) -- (298.74,118.72) -- (328.49,129.2) -- (319.69,158.24) -- cycle ;
	\draw  [fill={rgb, 255:red, 255; green, 255; blue, 255 }  ,fill opacity=1 ][dash pattern={on 3.75pt off 3pt on 7.5pt off 1.5pt}][line width=0.75]  (90,210) -- (115,225) -- (90,240) -- (60,240) -- (30,225) -- (60,210) -- cycle ;
	\draw  [fill={rgb, 255:red, 255; green, 255; blue, 255 }  ,fill opacity=1 ][dash pattern={on 3.75pt off 3pt on 7.5pt off 1.5pt}][line width=0.75]  (300,240) -- (275,225) -- (300,210) -- (330,210) -- (360,225) -- (330,240) -- cycle ;
	\draw  [fill={rgb, 255:red, 255; green, 255; blue, 255 }  ,fill opacity=1 ][dash pattern={on 3.75pt off 3pt on 7.5pt off 1.5pt}][line width=0.75]  (300,280) -- (328.49,320.8) -- (290,310) -- (251.57,280.87) -- (270,270) -- cycle ;
	\draw  [fill={rgb, 255:red, 255; green, 255; blue, 255 }  ,fill opacity=1 ][dash pattern={on 3.75pt off 3pt on 7.5pt off 1.5pt}][line width=0.75]  (100,310) -- (61.51,320.8) -- (90,280) -- (120,270) -- (138.43,280.87) -- cycle ;
	\draw  [fill={rgb, 255:red, 255; green, 255; blue, 255 }  ,fill opacity=1 ][dash pattern={on 3.75pt off 3pt on 7.5pt off 1.5pt}][line width=0.75]  (195,304.02) -- (180,350) -- (144.01,380) -- (140,350) -- (150,320) -- cycle ;
	\draw  [fill={rgb, 255:red, 255; green, 255; blue, 255 }  ,fill opacity=1 ][dash pattern={on 3.75pt off 3pt on 7.5pt off 1.5pt}][line width=0.75]  (195,304.02) -- (240,320) -- (250,350) -- (245.99,380) -- (210,350) -- cycle ;
	\draw  [fill={rgb, 255:red, 0; green, 0; blue, 0 }  ,fill opacity=1 ] (56.51,129.2) .. controls (56.51,126.44) and (58.75,124.2) .. (61.51,124.2) .. controls (64.27,124.2) and (66.51,126.44) .. (66.51,129.2) .. controls (66.51,131.97) and (64.27,134.2) .. (61.51,134.2) .. controls (58.75,134.2) and (56.51,131.97) .. (56.51,129.2) -- cycle ;
	\draw  [fill={rgb, 255:red, 0; green, 0; blue, 0 }  ,fill opacity=1 ] (133.43,169.13) .. controls (133.43,166.36) and (135.67,164.13) .. (138.43,164.13) .. controls (141.19,164.13) and (143.43,166.36) .. (143.43,169.13) .. controls (143.43,171.89) and (141.19,174.13) .. (138.43,174.13) .. controls (135.67,174.13) and (133.43,171.89) .. (133.43,169.13) -- cycle ;
	\draw  [fill={rgb, 255:red, 0; green, 0; blue, 0 }  ,fill opacity=1 ] (110,225) .. controls (110,222.24) and (112.24,220) .. (115,220) .. controls (117.76,220) and (120,222.24) .. (120,225) .. controls (120,227.76) and (117.76,230) .. (115,230) .. controls (112.24,230) and (110,227.76) .. (110,225) -- cycle ;
	\draw  [fill={rgb, 255:red, 0; green, 0; blue, 0 }  ,fill opacity=1 ] (25,225) .. controls (25,222.24) and (27.24,220) .. (30,220) .. controls (32.76,220) and (35,222.24) .. (35,225) .. controls (35,227.76) and (32.76,230) .. (30,230) .. controls (27.24,230) and (25,227.76) .. (25,225) -- cycle ;
	\draw  [fill={rgb, 255:red, 0; green, 0; blue, 0 }  ,fill opacity=1 ] (56.51,320.8) .. controls (56.51,318.03) and (58.75,315.8) .. (61.51,315.8) .. controls (64.27,315.8) and (66.51,318.03) .. (66.51,320.8) .. controls (66.51,323.56) and (64.27,325.8) .. (61.51,325.8) .. controls (58.75,325.8) and (56.51,323.56) .. (56.51,320.8) -- cycle ;
	\draw  [fill={rgb, 255:red, 0; green, 0; blue, 0 }  ,fill opacity=1 ] (133.43,280.87) .. controls (133.43,278.11) and (135.67,275.87) .. (138.43,275.87) .. controls (141.19,275.87) and (143.43,278.11) .. (143.43,280.87) .. controls (143.43,283.64) and (141.19,285.87) .. (138.43,285.87) .. controls (135.67,285.87) and (133.43,283.64) .. (133.43,280.87) -- cycle ;
	\draw  [fill={rgb, 255:red, 0; green, 0; blue, 0 }  ,fill opacity=1 ] (139.01,380) .. controls (139.01,377.24) and (141.25,375) .. (144.01,375) .. controls (146.77,375) and (149.01,377.24) .. (149.01,380) .. controls (149.01,382.76) and (146.77,385) .. (144.01,385) .. controls (141.25,385) and (139.01,382.76) .. (139.01,380) -- cycle ;
	\draw  [fill={rgb, 255:red, 0; green, 0; blue, 0 }  ,fill opacity=1 ] (190,304.02) .. controls (190,301.26) and (192.24,299.02) .. (195,299.02) .. controls (197.76,299.02) and (200,301.26) .. (200,304.02) .. controls (200,306.78) and (197.76,309.02) .. (195,309.02) .. controls (192.24,309.02) and (190,306.78) .. (190,304.02) -- cycle ;
	\draw  [fill={rgb, 255:red, 0; green, 0; blue, 0 }  ,fill opacity=1 ] (240.99,380) .. controls (240.99,377.24) and (243.23,375) .. (245.99,375) .. controls (248.75,375) and (250.99,377.24) .. (250.99,380) .. controls (250.99,382.76) and (248.75,385) .. (245.99,385) .. controls (243.23,385) and (240.99,382.76) .. (240.99,380) -- cycle ;
	\draw  [fill={rgb, 255:red, 0; green, 0; blue, 0 }  ,fill opacity=1 ] (323.49,320.8) .. controls (323.49,318.03) and (325.73,315.8) .. (328.49,315.8) .. controls (331.25,315.8) and (333.49,318.03) .. (333.49,320.8) .. controls (333.49,323.56) and (331.25,325.8) .. (328.49,325.8) .. controls (325.73,325.8) and (323.49,323.56) .. (323.49,320.8) -- cycle ;
	\draw  [fill={rgb, 255:red, 0; green, 0; blue, 0 }  ,fill opacity=1 ] (246.57,280.87) .. controls (246.57,278.11) and (248.81,275.87) .. (251.57,275.87) .. controls (254.33,275.87) and (256.57,278.11) .. (256.57,280.87) .. controls (256.57,283.64) and (254.33,285.87) .. (251.57,285.87) .. controls (248.81,285.87) and (246.57,283.64) .. (246.57,280.87) -- cycle ;
	\draw  [fill={rgb, 255:red, 0; green, 0; blue, 0 }  ,fill opacity=1 ] (355,225) .. controls (355,222.24) and (357.24,220) .. (360,220) .. controls (362.76,220) and (365,222.24) .. (365,225) .. controls (365,227.76) and (362.76,230) .. (360,230) .. controls (357.24,230) and (355,227.76) .. (355,225) -- cycle ;
	\draw  [fill={rgb, 255:red, 0; green, 0; blue, 0 }  ,fill opacity=1 ] (270,225) .. controls (270,222.24) and (272.24,220) .. (275,220) .. controls (277.76,220) and (280,222.24) .. (280,225) .. controls (280,227.76) and (277.76,230) .. (275,230) .. controls (272.24,230) and (270,227.76) .. (270,225) -- cycle ;
	\draw  [fill={rgb, 255:red, 0; green, 0; blue, 0 }  ,fill opacity=1 ] (246.57,169.13) .. controls (246.57,166.36) and (248.81,164.13) .. (251.57,164.13) .. controls (254.33,164.13) and (256.57,166.36) .. (256.57,169.13) .. controls (256.57,171.89) and (254.33,174.13) .. (251.57,174.13) .. controls (248.81,174.13) and (246.57,171.89) .. (246.57,169.13) -- cycle ;
	\draw  [fill={rgb, 255:red, 0; green, 0; blue, 0 }  ,fill opacity=1 ] (323.49,129.2) .. controls (323.49,126.44) and (325.73,124.2) .. (328.49,124.2) .. controls (331.25,124.2) and (333.49,126.44) .. (333.49,129.2) .. controls (333.49,131.97) and (331.25,134.2) .. (328.49,134.2) .. controls (325.73,134.2) and (323.49,131.97) .. (323.49,129.2) -- cycle ;
	\draw  [fill={rgb, 255:red, 0; green, 0; blue, 0 }  ,fill opacity=1 ] (139.01,70) .. controls (139.01,67.24) and (141.25,65) .. (144.01,65) .. controls (146.77,65) and (149.01,67.24) .. (149.01,70) .. controls (149.01,72.76) and (146.77,75) .. (144.01,75) .. controls (141.25,75) and (139.01,72.76) .. (139.01,70) -- cycle ;
	\draw  [fill={rgb, 255:red, 0; green, 0; blue, 0 }  ,fill opacity=1 ] (240.99,70) .. controls (240.99,67.24) and (243.23,65) .. (245.99,65) .. controls (248.75,65) and (250.99,67.24) .. (250.99,70) .. controls (250.99,72.76) and (248.75,75) .. (245.99,75) .. controls (243.23,75) and (240.99,72.76) .. (240.99,70) -- cycle ;

	\draw (231,172.4) node [anchor=north west][inner sep=0.75pt]    {$b_{1}$};
	\draw (246,209.4) node [anchor=north west][inner sep=0.75pt]    {$b_{2}$};
	\draw (231,249.4) node [anchor=north west][inner sep=0.75pt]    {$b_{3}$};
	\draw (189,273.4) node [anchor=north west][inner sep=0.75pt]    {$b_{4}$};
	\draw (144,260.4) node [anchor=north west][inner sep=0.75pt]    {$b_{5}$};
	\draw (124,218.4) node [anchor=north west][inner sep=0.75pt]    {$b_{6}$};
	\draw (145.43,172.53) node [anchor=north west][inner sep=0.75pt]    {$b_{7}$};
	\draw (121,45.4) node [anchor=north west][inner sep=0.75pt]    {$u$};
	\draw (251,45.4) node [anchor=north west][inner sep=0.75pt]    {$v$};
	\draw (339,105.4) node [anchor=north west][inner sep=0.75pt]    {$a_{3}$};
	\draw (369,215.4) node [anchor=north west][inner sep=0.75pt]    {$a_{4}$};
	\draw (339,315.4) node [anchor=north west][inner sep=0.75pt]    {$a_{5}$};
	\draw (255,379.4) node [anchor=north west][inner sep=0.75pt]    {$a_{6}$};
	\draw (125,379.4) node [anchor=north west][inner sep=0.75pt]    {$a_{7}$};
	\draw (42,319.4) node [anchor=north west][inner sep=0.75pt]    {$a_{8}$};
	\draw (2,219.4) node [anchor=north west][inner sep=0.75pt]    {$a_{9}$};
	\draw (34,109.4) node [anchor=north west][inner sep=0.75pt]    {$a_{10}$};
	\draw (182,201.4) node [anchor=north west][inner sep=0.75pt]  [font=\LARGE]  {$F_{2}$};
	\draw (-10,141.4) node [anchor=north west][inner sep=0.75pt]  [font=\LARGE]  {$F_{1}$};
	\draw (277,140.4) node [anchor=north west][inner sep=0.75pt]  [font=\small]  {$F_{a_{3} ,b_{1}}$};
	\draw (302,213.4) node [anchor=north west][inner sep=0.75pt]  [font=\small]  {$F_{a_{4} ,b_{2}}$};
	\draw (274,281.4) node [anchor=north west][inner sep=0.75pt]  [font=\small]  {$F_{a_{5} ,b_{3}}$};
	\draw (209,326.4) node [anchor=north west][inner sep=0.75pt]  [font=\small]  {$F_{a_{6} ,b_{4}}$};
	\draw (147,329.4) node [anchor=north west][inner sep=0.75pt]  [font=\small]  {$F_{a_{7} ,b_{4}}$};
	\draw (84,283.4) node [anchor=north west][inner sep=0.75pt]  [font=\small]  {$F_{a_{7} ,b_{4}}$};
	\draw (58,217.4) node [anchor=north west][inner sep=0.75pt]  [font=\small]  {$F_{a_{9} ,b_{6}}$};
	\draw (80,141.4) node [anchor=north west][inner sep=0.75pt]  [font=\small]  {$F_{a_{10} ,b_{7}}$};

\end{tikzpicture} %
}
	\caption{After the edge $(a_1,a_2)$ is deleted, the $3$-connected component unfurls into a path that contains the separating pairs $\{a_3,b_1\},\allowbreak \{a_4,b_2\},\allowbreak \{a_5,b_3\},\allowbreak \{a_6,b_4\},\allowbreak \{a_7,b_4\},\allowbreak \{a_8,b_5\},\allowbreak \{a_9,b_6\},\allowbreak \{a_{10},b_7\}$.}   
	\label{fig:distSPQRdel}
\end{subfigure}	
\hfill
\begin{subfigure}{0.48\textwidth}
	\resizebox{.97\textwidth}{!}{%

\tikzset{every picture/.style={line width=0.75pt}} %

\begin{tikzpicture}[x=0.75pt,y=0.75pt,yscale=-.8,xscale=.8]
	\draw [color={rgb, 255:red, 0; green, 0; blue, 0 }  ,draw opacity=1 ][line width=1.5]    (184.01,70) -- (285.99,70) ;
	\draw  [fill={rgb, 255:red, 184; green, 233; blue, 134 }  ,fill opacity=1 ][dash pattern={on 3.75pt off 3pt on 7.5pt off 1.5pt}] (400,225) -- (368.49,320.8) -- (285.99,380) -- (184.01,380) -- (101.51,320.8) -- (70,225) -- (101.51,129.2) -- (184.01,70) -- (178.43,169.13) -- (155,225) -- (178.43,280.87) -- (235,304.02) -- (291.57,280.87) -- (315,225) -- (291.57,169.13) -- (285.99,70) -- (368.49,129.2) -- (400,225) ;
	\draw  [fill={rgb, 255:red, 255; green, 255; blue, 255 }  ,fill opacity=1 ][dash pattern={on 3.75pt off 3pt on 7.5pt off 1.5pt}][line width=0.75]  (170,130) -- (178.43,169.13) -- (140,180) -- (110,160) -- (101.51,129.2) -- (130,120) -- cycle ;
	\draw  [fill={rgb, 255:red, 255; green, 255; blue, 255 }  ,fill opacity=1 ][dash pattern={on 3.75pt off 3pt on 7.5pt off 1.5pt}][line width=0.75]  (328.31,184.99) -- (291.57,169.13) -- (305.04,131.52) -- (338.74,118.72) -- (368.49,129.2) -- (359.69,158.24) -- cycle ;
	\draw  [fill={rgb, 255:red, 255; green, 255; blue, 255 }  ,fill opacity=1 ][dash pattern={on 3.75pt off 3pt on 7.5pt off 1.5pt}][line width=0.75]  (130,210) -- (155,225) -- (130,240) -- (100,240) -- (70,225) -- (100,210) -- cycle ;
	\draw  [fill={rgb, 255:red, 255; green, 255; blue, 255 }  ,fill opacity=1 ][dash pattern={on 3.75pt off 3pt on 7.5pt off 1.5pt}][line width=0.75]  (340,240) -- (315,225) -- (340,210) -- (370,210) -- (400,225) -- (370,240) -- cycle ;
	\draw  [fill={rgb, 255:red, 255; green, 255; blue, 255 }  ,fill opacity=1 ][dash pattern={on 3.75pt off 3pt on 7.5pt off 1.5pt}][line width=0.75]  (340,280) -- (368.49,320.8) -- (330,310) -- (291.57,280.87) -- (310,270) -- cycle ;
	\draw  [fill={rgb, 255:red, 255; green, 255; blue, 255 }  ,fill opacity=1 ][dash pattern={on 3.75pt off 3pt on 7.5pt off 1.5pt}][line width=0.75]  (140,310) -- (101.51,320.8) -- (130,280) -- (160,270) -- (178.43,280.87) -- cycle ;
	\draw  [fill={rgb, 255:red, 255; green, 255; blue, 255 }  ,fill opacity=1 ][dash pattern={on 3.75pt off 3pt on 7.5pt off 1.5pt}][line width=0.75]  (235,304.02) -- (220,350) -- (184.01,380) -- (180,350) -- (190,320) -- cycle ;
	\draw  [fill={rgb, 255:red, 255; green, 255; blue, 255 }  ,fill opacity=1 ][dash pattern={on 3.75pt off 3pt on 7.5pt off 1.5pt}][line width=0.75]  (235,304.02) -- (280,320) -- (290,350) -- (285.99,380) -- (250,350) -- cycle ;
	\draw  [fill={rgb, 255:red, 0; green, 0; blue, 0 }  ,fill opacity=1 ] (96.51,129.2) .. controls (96.51,126.44) and (98.75,124.2) .. (101.51,124.2) .. controls (104.27,124.2) and (106.51,126.44) .. (106.51,129.2) .. controls (106.51,131.97) and (104.27,134.2) .. (101.51,134.2) .. controls (98.75,134.2) and (96.51,131.97) .. (96.51,129.2) -- cycle ;
	\draw  [fill={rgb, 255:red, 0; green, 0; blue, 0 }  ,fill opacity=1 ] (173.43,169.13) .. controls (173.43,166.36) and (175.67,164.13) .. (178.43,164.13) .. controls (181.19,164.13) and (183.43,166.36) .. (183.43,169.13) .. controls (183.43,171.89) and (181.19,174.13) .. (178.43,174.13) .. controls (175.67,174.13) and (173.43,171.89) .. (173.43,169.13) -- cycle ;
	\draw  [fill={rgb, 255:red, 0; green, 0; blue, 0 }  ,fill opacity=1 ] (150,225) .. controls (150,222.24) and (152.24,220) .. (155,220) .. controls (157.76,220) and (160,222.24) .. (160,225) .. controls (160,227.76) and (157.76,230) .. (155,230) .. controls (152.24,230) and (150,227.76) .. (150,225) -- cycle ;
	\draw  [fill={rgb, 255:red, 0; green, 0; blue, 0 }  ,fill opacity=1 ] (65,225) .. controls (65,222.24) and (67.24,220) .. (70,220) .. controls (72.76,220) and (75,222.24) .. (75,225) .. controls (75,227.76) and (72.76,230) .. (70,230) .. controls (67.24,230) and (65,227.76) .. (65,225) -- cycle ;
	\draw  [fill={rgb, 255:red, 0; green, 0; blue, 0 }  ,fill opacity=1 ] (96.51,320.8) .. controls (96.51,318.03) and (98.75,315.8) .. (101.51,315.8) .. controls (104.27,315.8) and (106.51,318.03) .. (106.51,320.8) .. controls (106.51,323.56) and (104.27,325.8) .. (101.51,325.8) .. controls (98.75,325.8) and (96.51,323.56) .. (96.51,320.8) -- cycle ;
	\draw  [fill={rgb, 255:red, 0; green, 0; blue, 0 }  ,fill opacity=1 ] (173.43,280.87) .. controls (173.43,278.11) and (175.67,275.87) .. (178.43,275.87) .. controls (181.19,275.87) and (183.43,278.11) .. (183.43,280.87) .. controls (183.43,283.64) and (181.19,285.87) .. (178.43,285.87) .. controls (175.67,285.87) and (173.43,283.64) .. (173.43,280.87) -- cycle ;
	\draw  [fill={rgb, 255:red, 0; green, 0; blue, 0 }  ,fill opacity=1 ] (179.01,380) .. controls (179.01,377.24) and (181.25,375) .. (184.01,375) .. controls (186.77,375) and (189.01,377.24) .. (189.01,380) .. controls (189.01,382.76) and (186.77,385) .. (184.01,385) .. controls (181.25,385) and (179.01,382.76) .. (179.01,380) -- cycle ;
	\draw  [fill={rgb, 255:red, 0; green, 0; blue, 0 }  ,fill opacity=1 ] (230,304.02) .. controls (230,301.26) and (232.24,299.02) .. (235,299.02) .. controls (237.76,299.02) and (240,301.26) .. (240,304.02) .. controls (240,306.78) and (237.76,309.02) .. (235,309.02) .. controls (232.24,309.02) and (230,306.78) .. (230,304.02) -- cycle ;
	\draw  [fill={rgb, 255:red, 0; green, 0; blue, 0 }  ,fill opacity=1 ] (280.99,380) .. controls (280.99,377.24) and (283.23,375) .. (285.99,375) .. controls (288.75,375) and (290.99,377.24) .. (290.99,380) .. controls (290.99,382.76) and (288.75,385) .. (285.99,385) .. controls (283.23,385) and (280.99,382.76) .. (280.99,380) -- cycle ;
	\draw  [fill={rgb, 255:red, 0; green, 0; blue, 0 }  ,fill opacity=1 ] (363.49,320.8) .. controls (363.49,318.03) and (365.73,315.8) .. (368.49,315.8) .. controls (371.25,315.8) and (373.49,318.03) .. (373.49,320.8) .. controls (373.49,323.56) and (371.25,325.8) .. (368.49,325.8) .. controls (365.73,325.8) and (363.49,323.56) .. (363.49,320.8) -- cycle ;
	\draw  [fill={rgb, 255:red, 0; green, 0; blue, 0 }  ,fill opacity=1 ] (286.57,280.87) .. controls (286.57,278.11) and (288.81,275.87) .. (291.57,275.87) .. controls (294.33,275.87) and (296.57,278.11) .. (296.57,280.87) .. controls (296.57,283.64) and (294.33,285.87) .. (291.57,285.87) .. controls (288.81,285.87) and (286.57,283.64) .. (286.57,280.87) -- cycle ;
	\draw  [fill={rgb, 255:red, 0; green, 0; blue, 0 }  ,fill opacity=1 ] (395,225) .. controls (395,222.24) and (397.24,220) .. (400,220) .. controls (402.76,220) and (405,222.24) .. (405,225) .. controls (405,227.76) and (402.76,230) .. (400,230) .. controls (397.24,230) and (395,227.76) .. (395,225) -- cycle ;
	\draw  [fill={rgb, 255:red, 0; green, 0; blue, 0 }  ,fill opacity=1 ] (310,225) .. controls (310,222.24) and (312.24,220) .. (315,220) .. controls (317.76,220) and (320,222.24) .. (320,225) .. controls (320,227.76) and (317.76,230) .. (315,230) .. controls (312.24,230) and (310,227.76) .. (310,225) -- cycle ;
	\draw  [fill={rgb, 255:red, 0; green, 0; blue, 0 }  ,fill opacity=1 ] (286.57,169.13) .. controls (286.57,166.36) and (288.81,164.13) .. (291.57,164.13) .. controls (294.33,164.13) and (296.57,166.36) .. (296.57,169.13) .. controls (296.57,171.89) and (294.33,174.13) .. (291.57,174.13) .. controls (288.81,174.13) and (286.57,171.89) .. (286.57,169.13) -- cycle ;
	\draw  [fill={rgb, 255:red, 0; green, 0; blue, 0 }  ,fill opacity=1 ] (363.49,129.2) .. controls (363.49,126.44) and (365.73,124.2) .. (368.49,124.2) .. controls (371.25,124.2) and (373.49,126.44) .. (373.49,129.2) .. controls (373.49,131.97) and (371.25,134.2) .. (368.49,134.2) .. controls (365.73,134.2) and (363.49,131.97) .. (363.49,129.2) -- cycle ;
	\draw  [fill={rgb, 255:red, 0; green, 0; blue, 0 }  ,fill opacity=1 ] (179.01,70) .. controls (179.01,67.24) and (181.25,65) .. (184.01,65) .. controls (186.77,65) and (189.01,67.24) .. (189.01,70) .. controls (189.01,72.76) and (186.77,75) .. (184.01,75) .. controls (181.25,75) and (179.01,72.76) .. (179.01,70) -- cycle ;
	\draw  [fill={rgb, 255:red, 0; green, 0; blue, 0 }  ,fill opacity=1 ] (280.99,70) .. controls (280.99,67.24) and (283.23,65) .. (285.99,65) .. controls (288.75,65) and (290.99,67.24) .. (290.99,70) .. controls (290.99,72.76) and (288.75,75) .. (285.99,75) .. controls (283.23,75) and (280.99,72.76) .. (280.99,70) -- cycle ;
	\draw [color={rgb, 255:red, 0; green, 0; blue, 0 }  ,draw opacity=1 ][line width=1.5]    (178.43,169.13) -- (291.57,169.13) ;
	\draw  [draw opacity=0] (170.15,32.22) .. controls (191.05,24.81) and (213.56,20.78) .. (237,20.78) .. controls (347.58,20.78) and (437.22,110.42) .. (437.22,221) .. controls (437.22,331.58) and (347.58,421.22) .. (237,421.22) .. controls (126.42,421.22) and (36.78,331.58) .. (36.78,221) .. controls (36.78,187.62) and (44.95,156.14) .. (59.4,128.47) -- (237,221) -- cycle ; \draw [color={rgb, 255:red, 155; green, 155; blue, 155 }  ,draw opacity=1 ]   (170.15,32.22) .. controls (191.05,24.81) and (213.56,20.78) .. (237,20.78) .. controls (347.58,20.78) and (437.22,110.42) .. (437.22,221) .. controls (437.22,331.58) and (347.58,421.22) .. (237,421.22) .. controls (126.42,421.22) and (36.78,331.58) .. (36.78,221) .. controls (36.78,188.62) and (44.47,158.04) .. (58.12,130.97) ; \draw [shift={(59.4,128.47)}, rotate = 115.82] [fill={rgb, 255:red, 155; green, 155; blue, 155 }  ,fill opacity=1 ][line width=0.08]  [draw opacity=0] (10.72,-5.15) -- (0,0) -- (10.72,5.15) -- (7.12,0) -- cycle    ; 
	\draw  [draw opacity=0] (235.51,191.04) .. controls (236,191.01) and (236.5,191) .. (237,191) .. controls (253.57,191) and (267,204.43) .. (267,221) .. controls (267,237.57) and (253.57,251) .. (237,251) .. controls (220.43,251) and (207,237.57) .. (207,221) .. controls (207,212.37) and (210.65,204.59) .. (216.48,199.11) -- (237,221) -- cycle ; \draw [color={rgb, 255:red, 155; green, 155; blue, 155 }  ,draw opacity=1 ]   (235.51,191.04) .. controls (236,191.01) and (236.5,191) .. (237,191) .. controls (253.57,191) and (267,204.43) .. (267,221) .. controls (267,237.57) and (253.57,251) .. (237,251) .. controls (220.43,251) and (207,237.57) .. (207,221) .. controls (207,213.4) and (209.82,206.47) .. (214.48,201.18) ; \draw [shift={(216.48,199.11)}, rotate = 125.89] [fill={rgb, 255:red, 155; green, 155; blue, 155 }  ,fill opacity=1 ][line width=0.08]  [draw opacity=0] (10.72,-5.15) -- (0,0) -- (10.72,5.15) -- (7.12,0) -- cycle    ; 

	\draw (271,172.4) node [anchor=north west][inner sep=0.75pt]    {$b_{1}$};
	\draw (286,209.4) node [anchor=north west][inner sep=0.75pt]    {$b_{2}$};
	\draw (271,249.4) node [anchor=north west][inner sep=0.75pt]    {$b_{3}$};
	\draw (229,273.4) node [anchor=north west][inner sep=0.75pt]    {$b_{4}$};
	\draw (184,260.4) node [anchor=north west][inner sep=0.75pt]    {$b_{5}$};
	\draw (164,218.4) node [anchor=north west][inner sep=0.75pt]    {$b_{6}$};
	\draw (185.43,172.53) node [anchor=north west][inner sep=0.75pt]    {$b_{7}$};
	\draw (161,45.4) node [anchor=north west][inner sep=0.75pt]    {$a_{1}$};
	\draw (291,45.4) node [anchor=north west][inner sep=0.75pt]    {$a_{2}$};
	\draw (379,105.4) node [anchor=north west][inner sep=0.75pt]    {$a_{3}$};
	\draw (409,215.4) node [anchor=north west][inner sep=0.75pt]    {$a_{4}$};
	\draw (379,315.4) node [anchor=north west][inner sep=0.75pt]    {$a_{5}$};
	\draw (295,379.4) node [anchor=north west][inner sep=0.75pt]    {$a_{6}$};
	\draw (165,379.4) node [anchor=north west][inner sep=0.75pt]    {$a_{7}$};
	\draw (82,319.4) node [anchor=north west][inner sep=0.75pt]    {$a_{8}$};
	\draw (42,219.4) node [anchor=north west][inner sep=0.75pt]    {$a_{9}$};
	\draw (74,109.4) node [anchor=north west][inner sep=0.75pt]    {$a_{10}$};
	\draw (222,200.4) node [anchor=north west][inner sep=0.75pt]  [font=\LARGE]  {$F_{b}$};
	\draw (220,23.4) node [anchor=north west][inner sep=0.75pt]  [font=\LARGE]  {$F_{a}$};

\end{tikzpicture} %
}
	\caption{Two faces $F_a$ and $F_b$ in an embedding of a $3$-connected component $R$. The cycle $\Disk(C,F_a,F_b)$ consists of the vertices $(a_1,b_7),\allowbreak (a_2,b_1),\allowbreak (a_3,b_1),\allowbreak (a_4,b_2),\allowbreak (a_5,b_3),\allowbreak (a_6,b_4),\allowbreak (a_7,b_4),\allowbreak (a_8,b_5),\allowbreak (a_9,b_6),\allowbreak (a_{10},b_7)$.}
	\label{fig:distSPQR}
\end{subfigure}	
\caption{A $3$-connected graph whole triconnected component tree unfurls into a path if an edge is deleted, and a similar $3$-connected graph to illustrate Definition~\ref{def:potdist}.}
\end{figure}

\lemmaDiskVsDistances*

\begin{proof}
	Let $e = (a, b)$. For notational convenience in the proof, we denote $\Disk(C,F_1,F_2)$ by $\mathcal{G}_e$.

	\emph{Proof that $\triTree(C-\{e\})$ is a path} (see also the proof of Lemma 10 in~\cite{HolmIKLR18,HolmIKLR18arx}): We first claim that $a$ and $b$ must be in different triconnected components of $C-\{e\}$. To prove this, let us assume towards a contradiction that $a$ and $b$ belong to the same triconnected component $D$ of $C-\{e\}$. Since, $C-\{e\}$ is not $3$-connected, there must be another triconnected component $D'$ of $C-\{e\}$ that is connected to $D$ by some separating pair $\{x,y\}$. If we remove the vertices $x$ and $y$ from $C-\{e\}$, we get at least two connected components such that the remaining vertices of $D$ and $D'$ are in distinct connected components. Since both the endpoints of $a$ and $b$ are in $C$, adding back the edge $e$ keeps the connected components as they are. Thus, we have that $\{x,y\}$ is a separating pair in $D$, a contradiction, $C$ is $3$-connected. 
	Let $C_a$ and $C_b$ be the distinct triconnected component that contain $a$ and $b$, respectively, in $C-\{e\}$. Now, if $\triTree(C-\{e\})$ is not a path then there exists a triconnected component that is not on the $C_a$ to $C_b$ path and is connected to a triconnected component on the $C_a$ to $C_b$ path via a separating pair. Using similar arguments as before, we can show that this separating pair remains a separating pair even in $C$ and reach a contradiction. This establishes that $\triTree(C-\{e\})$ is indeed a path. Let us call this path $\rho$.

	\emph{Proof of part (i):} We first prove the if direction, that the pairs in $\mathcal{G}_e$ are indeed $3$-connected separating pairs of $C-\{e\}$. Consider a pair $(u,v)\in V(\mathcal{G}_e)$. For $\{u,v\}$ to be a $3$-connected separating pair, we need to show that (1) $u$ and $v$ are $3$-connected in $C-\{e\}$, and (2) removing $u$ and $v$ from $C-\{e\}$ disconnects the graph. For (1), notice that there are two internally vertex disjoint paths between $u$ and $v$ in $C-\{e\}$ that together constitute the boundary of the face $F$ (which is distinct from $F_1$ and $F_2$) that contains both $u$ and $v$.
	Note that $e$ could not have been incident on $F$, as it is incident on both $F_1$ and $F_2$ and cannot be incident on a third face. A third path between $u$ and $v$ that is internally vertex disjoint from $F$ in $C-\{e\}$ can be constructed as follows.
	\begin{enumerate}[(i)]
		\item If the face $F$ does not have any other vertex pair $(u',v')\neq(u,v)$ such that $(u',v')$ is a node of $\mathcal{G}_e$, then we can take the third path to be the one composed of $u$ to $a$ path along the face $F_1$ (that does not include $e$) and then from $a$ to $v$ path along the face $F_2$ (that does not contain $e$). This is indeed a simple path because the only common vertices between $F_1$ and $F_2$ are $a$ and $b$ and this path takes only one of them.
		\item Otherwise, there is another node $(u',v')$ in $\mathcal{G}_e$ that is distinct from $(u,v)$ such that $u',v'$ lie together on $F$. Consider the case when $u,v,u'$ and $v'$ are all distinct ($\{u,u'\}$ and $\{v,v'\}$ must be edges in $C-\{e\}$ in that case). By part (iii) of Definition~\ref{def:potdist}, these vertices must appear on $F$ in non-interleaved order $u,u',v',v$. We take the third path from $u$ to $v$ that is internally vertex disjoint from $F$ as follows. Suppose that in the embedding $\mathsf{Emb}(C,\tpl f)$, vertices $a,u,u',b$ appear in that cyclic order on the face $F_1$. Also, vertices $a,v,v',b$ then appear in that cyclic order on the face $F_2$. Then, the third path starts from $u$ and reaches $a$ along the face $F_1$ (avoiding $e$),and then continues from $a$ to reach $v$ along the face $F_2$ (avoiding $\{a,b\}$). This path is indeed a simple path and internally vertex disjoint from $F$. The case when not all of $u,v,u',v'$ are distinct can be argued analogously.
	\end{enumerate}
	For (2), that is, $\{u,v\}$ is a separating pair of $C-\{e\}$, we argue that removing $u$ and $v$ form $C-\{e\}$ disconnects $a$ and $b$. Consider the embedding $\mathsf{Emb}(C,F_1)$ in the plane. Recall that $u$ lies on $F_1$ and $v$ lies on $F_2$, while they both together lie on $F$. Given the embedding, we can draw a closed simple curve staring from a point in the interior of face $F_2$ and enter face $F$ via the vertex $v$ (as $v$ lies on the boundary of $F_2$ as well) and exit it via the vertex $u$ to enter the face $F_1$ and finally cross the edge $\{a,b\}$ obliquely through its mid point to enter $F_2$ again to close the curve. By construction, this closed curve intersects with the embedding of $C-\{e\}$ (obtained from the embedding of $C$ by removing the edge $(a,b)$) only at $u$ and $v$, and contains $a$ on one side of the curve and $b$ on the other. But, this means any path between $a$ and $b$ in $C-\{e\}$ must cross this closed curve at either $u$ or $v$, and thus removing both $u$ and $v$ disconnects $a$ from $b$.
	
	We show the only-if direction, that is, for any $3$-connected separating pair $\{s,t\}$ of $C-\{e\}$, either $(s,t)$ or $(t,s)$ is a node of $\mathcal{G}_e$. Let us assume towards a contradiction that $(s,t)$ (and $(t,s)$) does not satisfy the conditions of Definition~\ref{def:potdist}. First, we rule out that $(s,t)$ violates condition (i) of Definition~\ref{def:potdist}. This is because otherwise $a$ and $b$ remain connected even after removing the vertices $s$ and $t$ from $C-\{e\}$, via the path that uses either the $F_1$ edges (except for $e$) or the $F_2$ edges (without $e$), since either $s$ is not on $F_1$ or $t$ is not $F_2$. But as shown before, we have that any separating pair of $C-\{e\}$ disconnects $a$ and $b$, a contradiction.
	So, we are left to rule out the following scenario: the vertex $s$ lies on $F_a$, vertex $b$ lies on $F_b$, but $s$ and $t$ do not lie on any common face of $G$ distinct from $F_1$ and $F_2$. Consider the triconnected components $C_1$ and $C_2$ that are adjacent to $\{s,t\}$ (there are exactly two of them as $\triTree(C-\{e\})$ is a path). Now, $(s,t)$ must be an edge in the $\graph(C_1)$ as well as $\graph(C_2)$, either as a virtual edge or as a graph edge, since $\{s,t\}$ is a $3$-connected separating pair. We can compose together the embedding of $C_1$ and $C_2$ (and all other constituent triconnected components) to get the embedding of $C$ (after possibly removing the virtual edges), and see that $s$ and $t$ lie together on a face of $C$ (see~\cite[Lemma 8]{DattaKM23}). Thus we have established that any $3$-connected separating pair of $G-\{e\}$ satisfies the first two conditions of~\cref{def:potdist}.
	
	Now, we show that $3$-connected separating pairs satisfy the third condition of ~\cref{def:potdist} as well. Firstly, we note that, for any face $F$ of $\mathsf{Emb}(C,F_1)$ there can be only at most two nodes $(u,v)$ and $(u',v')$ of $\Disk(C,F_1,F_2)$ 
	such that all the vertices $u,v,u',v'$ lie on $F$, since more than two nodes would imply that (applying the~\cref{def:potdist}) there are two faces of a $3$-connected graph that share more than two vertices amongst them, a contradiction to $3$-connectedness. So, we can assume that there are at most two $3$-connected separating pairs of $C-\{e\}$ such that all their vertices lie in a common face of $\mathsf{Emb}(C,F_1)$. Now, we just need to show that for such a face $F$, and two separating pairs $\{u,v\}$ and $\{u',v'\}$, the vertices do not appear in the cyclic order $u,u',v,v'$ in that cyclic order on $F$. But if they were to appear in that order on $F$, we can show that these separating pairs do not remain `$3$-connected' separating pairs of $C-\{e\}$, using the construction of a closed curve that passes through $u,v$, similar to the proof of part (i). A contradiction.

	\emph{Proof of part (ii):} Let $(u,v)$ and $(u',v')$ be the two nodes of $\mathcal{G}_e$ such that the vertices $u,a,b,u'$ and the vertices $v,a,b,v'$ appear in that cyclic order on $F_1$ and $F_2$ respectively in the embedding $\mathsf{Emb}(C,F_1)$, and that there is an arc from $(u,v)$ to $(u',v')$ in $\mathcal{G}_e$. Let $F$ be the face such that $u$ and $v$ lie together on it. Now, we can draw a closed simple curve in the plane over the embedding $\mathsf{Emb}(C,F_1)$ (without $e$) that intersects with it only at $u$ and $v$, similar to part (i) of the proof. We can observe that only the vertex $a$ is on one side the curve, while all the vertices of nodes in $\mathcal{G}_e$ (other than $(u,v)$), as well as $b$ lie on the other side of the curve, implying that $\{u,v\}$ is the first separating pair on $\rho$ that is adjacent to $C_a$. We can similarly argue that the order of separating pairs on $\rho$ follows the cyclic order of nodes in $\mathcal{G}_e$.

	Between every two consecutive separating pairs in $\rho$ there is a triconnected component, while in $\mathcal{G}_e$ there is an edge between the corresponding nodes, so the length of $\rho$, that is, the distance from $C_a$ to $C_b$, is $2|V(\mathcal{G}_e)|$.
\end{proof}

For convenience, we recall the relations maintained by the dynamic program. The program maintains the relations

\begin{itemize}
	\item $\textsc{dist}_2$ that stores tuples $(\tpl x, \tpl y, d)$ such that the distance between nodes $\tpl x, \tpl y$ of the triconnected component tree is~$d$;
	\item $\textsf{DiskDistance}$ that contains tuples $(\tpl f_1,\tpl f_2,\tpl a,\tpl b, k)$ such that $\tpl f_1,\tpl f_2$ specify combinatorial faces $F_1, F_2$ of a 3-connected component $C$; $\tpl a,\tpl b$ specify nodes of $\Disk(C, F_1, F_2)$; and $k$ is the distance of $\tpl a$ and $\tpl b$ in $\Disk(C, F_1, F_2)$; and
	\item $\textsc{CombEmbedding}$ that contains tuples $(\tpl f_1,\tpl f_2)$ such that $\tpl f_1,\tpl f_2$ specify combinatorial faces $F_1, F_2$ of a 3-connected component $C$ and such that $F_2$ is a combinatorial face occurring in the combinatorial embedding determined by $F_1$.
\end{itemize}

The combinatorial embedding relation $\textsc{CombEmbedding}$ can be maintained using the dynamic algorithm from
\cite{DattaKM23}.

\begin{lemma}
	The relations $\textsc{dist}_2$ and $\textsc{DiskDistance}$ can be maintained in $\DynFO$ under insertions and deletions of edges in the underlying graph $G$, provided that $G$ stays planar.
\end{lemma}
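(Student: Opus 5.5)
We prove the two relations together by a case distinction over the change types of Section~\ref{section:changes}. Both relations are updated by \FOar formulas that may use the old auxiliary relations, the decomposition information from Lemma~\ref{theorem:maintainability-decomposition}, and $\textsc{CombEmbedding}$ from \cite{DattaKM23}. The guiding principle is that $\textsc{DiskDistance}$ for the \emph{old} graph is exactly what is needed to update $\textsc{dist}_2$ under $\es[-]{3}{2}$ deletions. In turn, $\textsc{DiskDistance}$ itself only has to be updated when a $3$-connected component changes, and such changes are of bounded ``width'' in terms of faces.

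\textbf{Updating $\textsc{dist}_2$.} Changes of types $\es[+/-]{3}{3}$, $\es[+]{0}{2}$ and $\es[-]{2}{0}$ leave the triconnected component trees unchanged. For $\es[+]{2}{3}$, the $R_u$--$R_v$ path (identified via old distances) is contracted to one node $R_{uv}$, possibly with two split cycle pieces attached. The new distance between $\tpl x,\tpl y$ is the old one if their tree path avoids the contracted path. Otherwise it equals the old distance minus the length of the contracted portion, plus a constant correction; that length is computable from old distances to the entry and exit points. The $\es[\pm]{2}{2}$ and $\es[\pm]{1}{2}$, $\es[-]{2}{1}$ cases are analogous: the tree is modified along one path or at one cycle node by local gluing, and distances are obtained by arithmetic on old distances plus constants.

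The main step is $\es[-]{3}{2}$, deleting $e=(a,b)$ from a $3$-connected component $C$. First, read off the two faces $F_1,F_2$ of $C$ adjacent to $e$ from $\textsc{CombEmbedding}$. By Lemma~\ref{lem:spqrDist}, the separating pairs of the unfurled path are exactly the nodes of $\Disk(C,F_1,F_2)$, in cyclic order. The path is then obtained by cutting the cycle between the two nodes adjacent to $e$, and its length is $2d$. Consequently, the distance between two new separating-pair nodes is $2k$, where $k$ is their distance in the cycle measured on the side not crossing $e$; this is read from the old $\textsc{DiskDistance}$ tuples (both directions, choosing the one not passing the cut). The new triconnected component nodes are identified as the unique nodes between consecutive pairs, and their distances follow with $\pm1$. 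Distances to nodes outside the path are obtained by adding the old distance from $\tpl x$ to $C$'s neighbour separating pair $P$. To place $P$ on the unfurled path, locate which new component contains both vertices of $P$; this is first-order definable from Lemma~\ref{theorem:maintainability-decomposition}.

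\textbf{Updating $\textsc{DiskDistance}$, the expected obstacle.} For a $3$-connected component $C'$ of the new graph and faces $F_1',F_2'$, we must express distances on the cycle $\Disk(C',F_1',F_2')$. Under $\es[\pm]{3}{3}$, only the at most two faces incident to the changed edge change: they are split or merged. Hence the nodes of the disk graph change only locally, namely those involving these faces or pairs lying on them, and the cyclic order is preserved elsewhere. The plan is to express the new distance as the old distance corrected by the constant number of inserted or removed nodes, where membership of those nodes is decided by Definition~\ref{def:potdist}. Each condition in that definition is first-order given face incidences and the cyclic orders from $\textsc{CombEmbedding}$.

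When components merge ($\es[+]{2}{3}$) or split ($\es[-]{3}{2}$), a new disk graph may traverse several old components along a coherent path. Here I would maintain additionally $\textsc{DiskDistance}$ for the graphs $G[\rho]$ of coherent paths, mirroring Lemma~\ref{lem:tutte:coherent}. A merge then concatenates disk cycles of the pieces at shared separating pairs, and distances add up. I expect verifying that these concatenations and local corrections are uniformly first-order — in particular handling faces that pass through separating pairs, and cycle components — to be the main technical difficulty.
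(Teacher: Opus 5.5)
Your handling of $\textsc{dist}_2$ matches the paper, including the key step for $\es[-]{3}{2}$: reading the unfurled path off the old $\mathsf{Disk}(C,F_1,F_2)$ via Lemma~\ref{lem:spqrDist}. You omit one constant correction there. A short cycle piece of $C-e$ can merge with a cycle component attached to $C$ through a degree-$2$ separating pair, and that separating-pair node then disappears.

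The gap is in maintaining $\textsc{DiskDistance}$. Already for $\es[\pm]{3}{3}$, your plan of correcting old distances by a \emph{constant} number of inserted or removed nodes fails when the changed edge lies on one of the two faces of the pair. Inserting $e=(x,y)$ into $F_2$ splits it into $F_{2_1},F_{2_2}$. Then $\mathsf{Disk}(C',F_1,F_{2_1})$ consists of the nodes $(u,v)$ of $\mathsf{Disk}(C,F_1,F_2)$ with $v$ on $F_{2_1}$, so an unbounded arc of the old cycle is cut out. The update is still first-order, but only because that arc is contiguous: the disk order follows the cyclic order on $F_2$, so the arc's size can be read from old $\textsc{DiskDistance}$ values. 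This is the paper's case (a); a deletion dually splices two old cycles. Only an edge inserted into a witness face $F_{u,v}$ causes a constant-size change.

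For $\es[+]{2}{3}$ and $\es[-]{3}{2}$ you leave the argument open, and the extra data you propose does not close it. Keeping $\textsc{DiskDistance}$ for every $G[\rho]$ up to date requires, under every change type, the same splicing arguments you are postponing. For splits it is beside the point: each new $3$-connected component is a piece of the single old component $C$, and its disk cycles are obtained by cutting those of $C$.

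The missing observation is that no extra data is needed. In the merged component $C'$ the only new faces are of two kinds:
\begin{itemize}
\item the two \emph{straddling} faces $F_a,F_b$ adjacent to the inserted edge;
\item the \emph{glue} faces, each merged from one face in each of two consecutive old components along a vanished virtual edge.
\end{itemize}
Applying Lemma~\ref{lem:spqrDist} to $C'$ and the inserted edge shows that the nodes of $\mathsf{Disk}(C',F_a,F_b)$ are exactly the separating pairs of the old coherent path, in path order. So these distances come directly from the old $\textsc{dist}_2$. Every other new face pair touches only constantly many old components. Its disk cycle is spliced from constantly many old disk cycles, after discarding nodes whose vertex no longer lies on the relevant straddling face. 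The case $\es[-]{3}{2}$ reverses these operations.
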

This proves Lemma \ref{lemma:tritree-distances}.

\begin{proof}
	We distinguish the different cases of the changed edge and show how the updates required to these relations in each case are expressible in $\FO$.
	\subparagraph*{Case $\es[+]{3}{3}$.}
	Let the inserted edge $e$ be $(x,y)$.
	Clearly, $\textsf{\triTree Distance}$ does not require updating since the triconnected component tree from the previous step remains valid. So, we focus on updating $\textsf{DiskDistance}$. Let $C$ be the $3$-connected component where the edge $e$ is being inserted. Let $C'$ be the updated $3$-connected component. We consider updating tuples corresponding to a pair of faces $F_1$ and $F_2$ specified by $\tpl f_1$ and $\tpl f_2$ respectively, i.e., $\mathsf{DiskDistance}(\tpl f_1,\tpl f_2,\cdot,\cdot,\cdot)$. Let %
	We only need to update the $\mathsf{DiskDistance}$ relation w.r.t. tuples of the kind $(\tpl f_1,\tpl f_2,\cdot,\cdot,\cdot)$ if (a) the edge $e$ is inserted in either of the faces $F_1$ or $F_2$, or (b) the insertion of the edge causes some nodes to appear or vanish from $\Disk(C',F_1,F_2)$ (when $e$ splits a face $F_{u,v}$ that witnessed (part (ii) of~\cref{def:potdist}) $(u,v)$ being a node of $\Disk{C,F_1,F_2}$ such that $u$ and $v$ are on not together on any of the new faces).
	In either case, distances on the disc graph can be updated as follows.
	\begin{enumerate}[(a)]
		\item Suppose the edge $e$ is inserted in $F_2$. In this case, $F_2$ splits into two faces, say $F_{2_1}$ and $F_{2_2}$ that are adjacent to each other via $e$ in the embedding of $C'$, $\mathsf{Emb}(C',F_1)$ (which is obtained from the embedding of $C$, $\mathsf{Emb}(C,F_1)$, by adding the edge $e$ inside the face $F_2$). For an illustrative example, consider the insertion of the edge $(b_5,b_2)$ in the $3$-connected component $C$ in Figure~\ref{fig:distSPQRins}. 
		
		Let the vertices of $F_2$ appear in the cyclic order $(xP_1yP_2x)$, where $P_1,P_2$ are sequences of vertices. Then, let $F_{2_1}$ and $F_{2_2}$ be the faces of $\mathsf{Emb}(C',F_1)$ where the vertices appear in cyclic order $(xP_1yx)$ and $(yP_2xy)$ respectively. 
		Then, the cycle $\Disk(C',F_1,F_{2_1})$ contains a node $(u,v)$ of $\Disk(C,F_1,F_2)$, if $v$ lies on $F_{2_1}$. And the cyclic order of these nodes in $\Disk(C',F_1,F_{2_1})$ is inherited from $\Disk(C,F_1,F_2)$. Similarly, $\Disk(C',F_a,F_{2_2})$ contains a node $(u,v)$ of $\Disk(C,F_1,F_2)$ if $v$ appears on $F_{2_2}$, and the cyclic order of its nodes is again inherited from $\Disk(C,F_1,F_2)$. The correctness of this construction can be verified by applying Definition~\ref{def:potdist} and the fact that the only difference between the embedding of $C$ and $C'$ would be that $F_2$ is split into $F_{2_1}$ and $F_{2_2}$ (due to uniqueness of embedding of $3$-connected planar graphs). %
		
		The distances between nodes on these cycles can be inferred from the old ones as follows. For two nodes $\tpl p = (p_1, p_2)$ and $\tpl q = (q_1, q_2)$ of $\Disk(C',F_1,F_{2_1})$, the distance from $\tpl p$ to $\tpl q$ remains unchanged if the vertices $x$ and $y$ do not appear on the $p_2$ to $q_2$ path along the face $F_2$ in the clockwise direction. Otherwise, the distance is decremented by the number of nodes $\tpl r = (r_1, r_2)$ of $\Disk(C,F_1,F_2)$ for which $r_2$ lies strictly in between the $y$ to $x$ path along the face $F_2$ in the clockwise cyclic direction. The distances on $\Disk(C',F_a,F_{2_2})$ are analogously computed. We can compute these distances in $\FO$ by accessing the combinatorial embedding of $C$ corresponding to $\mathsf{Emb}(C,\tpl F_1)$ and the old $\mathsf{DiskDistance}$ relation. Furthermore, since $F_2$ is no longer a face of $C'$, we need to expunge all the tuples $(F_1,F_2,\cdot,\cdot,\cdot)$ from the $\mathsf{DiskDistance}$ relation.
		
		The case when $e$ is inserted in $F_a$ can be handled analogously.	   

		\item Suppose the edge $e$ is inserted in a face $F_{u,v}$ that contains a unique node $(u,v)$ of $\Disk(C,F_1,F_2)$, i.e., $u,v$ both lie on $F_{u,v}$. In this case, if both $u$ and $v$ stay together in one of the two faces that arise as a result of the insertion, $(u,v)$ continues to satisfy the conditions of Definition~\ref{def:potdist}, while no new pairs could have been created. Thus no updates to $\mathsf{DiskDistance}$ relations are required. For an illustrative example, consider the insertion of $(a_9,b_6)$ in Figure~\ref{fig:distSPQRins}.
		
		However, if $u$ and $v$ end up in different faces of $C'$, then $(u,v)$ no longer satisfies condition (iii) of Definition~\ref{def:potdist}. In that case, the node $(u,v)$ is removed from $\Disk(C,F_1,F_2)$, and instead an arc is inserted from the node preceding $(u,v)$ to the node immediately after $(u,v)$ to obtain $\Disk(C',F_1,F_2)$. For example, in Figure~\ref{fig:distSPQRins}, if the edge $(c,d)$ edge is inserted, then $\Disk(C',F_1,F_2)$ is obtained from $\Disk(C,F_1,F_2)$ by inserting an edge from $(a_2,b_1)$ to $(a_4,b_2)$ in $\Disk(C,F_a,F_b)$, and removing the node $(a_3,b_1)$ from it. 
		Then for any pair of nodes $\tpl p,\tpl q$ in $\Disk(C',F_1,F_2)$, the distance on the disc graph is either unchanged or is decremented by $1$, depending on whether the node $(u,v)$ appears on the $\tpl p$ to $\tpl q$ path in $\Disk(C,F_1,F_2)$. This can be expressed in $\FO$ given the combinatorial embedding of $C$ corresponding to $\mathsf{Emb}(C,F_1)$ available in $\textsc{CombEmbedding}$.
		
		There is a possibility that the face $F$, where $e$ is inserted, contains more than one node of $\Disk(C,F_1,F_2)$. Notice that in this case $F$ can contain only two nodes because otherwise, using that every edge can only be incident to two faces, we can show a contradiction to the $3$-connectedness of $C$. Let $(s,t)$ and $(u,v)$ be the two nodes, $(s,t)$ being the predecessor of $(u,v)$.  Then, if the edge $e$ insertion splits the face in such a manner that any or both of $(s,t),(u,v)$ violate condition (iii) of~\cref{def:potdist}, the distance on $\Disk(C',F_1,F_2)$ is updated as described above.
		
		Additionally, if the vertices in $(s,t)$ and $(u,v)$ are pairwise distinct and $(x,y) = (s,v)$ or $(x,y) = (u,t)$, a node $(x,y)$ is introduced in the disk graph, for the inserted edge that subdivides the arc from $(s,t)$ to $(u,v)$. This is because $(x,y)$ satisfies all the conditions of Definition~\ref{def:potdist}. For example, consider the insertion of the edge $(a_2,b_7)$ in the $3$-connected component $C$ of Figure~\ref{fig:distSPQRins}. The distances on $\Disk(C',F_1,F_2)$ can be updated easily in $\FO$ by accessing both the old distances and the planar embedding of $C$.
	\end{enumerate} 
	\begin{figure}[t]
		\centering
		\resizebox{0.35\textwidth}{!}{%

\tikzset{every picture/.style={line width=0.75pt}} %

\begin{tikzpicture}[x=0.75pt,y=0.75pt,yscale=-1,xscale=1]
	\draw [color={rgb, 255:red, 0; green, 0; blue, 255 }  ,draw opacity=1 ][line width=1.5]    (178.43,280.87) -- (315,225) ;
	\draw [color={rgb, 255:red, 0; green, 0; blue, 0 }  ,draw opacity=1 ][line width=1.5]    (184.01,70) -- (285.99,70) ;
	\draw  [fill={rgb, 255:red, 184; green, 233; blue, 134 }  ,fill opacity=1 ][dash pattern={on 3.75pt off 3pt on 7.5pt off 1.5pt}] (400,225) -- (368.49,320.8) -- (285.99,380) -- (184.01,380) -- (101.51,320.8) -- (70,225) -- (101.51,129.2) -- (184.01,70) -- (178.43,169.13) -- (155,225) -- (178.43,280.87) -- (235,304.02) -- (291.57,280.87) -- (315,225) -- (291.57,169.13) -- (285.99,70) -- (368.49,129.2) -- (400,225) ;
	\draw  [fill={rgb, 255:red, 255; green, 255; blue, 255 }  ,fill opacity=1 ][dash pattern={on 3.75pt off 3pt on 7.5pt off 1.5pt}][line width=0.75]  (170,130) -- (178.43,169.13) -- (140,180) -- (110,160) -- (101.51,129.2) -- (130,120) -- cycle ;
	\draw  [fill={rgb, 255:red, 255; green, 255; blue, 255 }  ,fill opacity=1 ][dash pattern={on 3.75pt off 3pt on 7.5pt off 1.5pt}][line width=0.75]  (328.31,184.99) -- (291.57,169.13) -- (305.04,131.52) -- (338.74,118.72) -- (368.49,129.2) -- (359.69,158.24) -- cycle ;
	\draw  [fill={rgb, 255:red, 255; green, 255; blue, 255 }  ,fill opacity=1 ][dash pattern={on 3.75pt off 3pt on 7.5pt off 1.5pt}][line width=0.75]  (130,210) -- (155,225) -- (130,240) -- (100,240) -- (70,225) -- (100,210) -- cycle ;
	\draw  [fill={rgb, 255:red, 255; green, 255; blue, 255 }  ,fill opacity=1 ][dash pattern={on 3.75pt off 3pt on 7.5pt off 1.5pt}][line width=0.75]  (340,240) -- (315,225) -- (340,210) -- (370,210) -- (400,225) -- (370,240) -- cycle ;
	\draw  [fill={rgb, 255:red, 255; green, 255; blue, 255 }  ,fill opacity=1 ][dash pattern={on 3.75pt off 3pt on 7.5pt off 1.5pt}][line width=0.75]  (340,280) -- (368.49,320.8) -- (330,310) -- (291.57,280.87) -- (310,270) -- cycle ;
	\draw  [fill={rgb, 255:red, 255; green, 255; blue, 255 }  ,fill opacity=1 ][dash pattern={on 3.75pt off 3pt on 7.5pt off 1.5pt}][line width=0.75]  (140,310) -- (101.51,320.8) -- (130,280) -- (160,270) -- (178.43,280.87) -- cycle ;
	\draw  [fill={rgb, 255:red, 255; green, 255; blue, 255 }  ,fill opacity=1 ][dash pattern={on 3.75pt off 3pt on 7.5pt off 1.5pt}][line width=0.75]  (235,304.02) -- (220,350) -- (184.01,380) -- (180,350) -- (190,320) -- cycle ;
	\draw  [fill={rgb, 255:red, 255; green, 255; blue, 255 }  ,fill opacity=1 ][dash pattern={on 3.75pt off 3pt on 7.5pt off 1.5pt}][line width=0.75]  (235,304.02) -- (280,320) -- (290,350) -- (285.99,380) -- (250,350) -- cycle ;
	\draw  [fill={rgb, 255:red, 0; green, 0; blue, 0 }  ,fill opacity=1 ] (96.51,129.2) .. controls (96.51,126.44) and (98.75,124.2) .. (101.51,124.2) .. controls (104.27,124.2) and (106.51,126.44) .. (106.51,129.2) .. controls (106.51,131.97) and (104.27,134.2) .. (101.51,134.2) .. controls (98.75,134.2) and (96.51,131.97) .. (96.51,129.2) -- cycle ;
	\draw  [fill={rgb, 255:red, 0; green, 0; blue, 0 }  ,fill opacity=1 ] (173.43,169.13) .. controls (173.43,166.36) and (175.67,164.13) .. (178.43,164.13) .. controls (181.19,164.13) and (183.43,166.36) .. (183.43,169.13) .. controls (183.43,171.89) and (181.19,174.13) .. (178.43,174.13) .. controls (175.67,174.13) and (173.43,171.89) .. (173.43,169.13) -- cycle ;
	\draw  [fill={rgb, 255:red, 0; green, 0; blue, 0 }  ,fill opacity=1 ] (96.51,320.8) .. controls (96.51,318.03) and (98.75,315.8) .. (101.51,315.8) .. controls (104.27,315.8) and (106.51,318.03) .. (106.51,320.8) .. controls (106.51,323.56) and (104.27,325.8) .. (101.51,325.8) .. controls (98.75,325.8) and (96.51,323.56) .. (96.51,320.8) -- cycle ;
	\draw  [fill={rgb, 255:red, 0; green, 0; blue, 0 }  ,fill opacity=1 ] (173.43,280.87) .. controls (173.43,278.11) and (175.67,275.87) .. (178.43,275.87) .. controls (181.19,275.87) and (183.43,278.11) .. (183.43,280.87) .. controls (183.43,283.64) and (181.19,285.87) .. (178.43,285.87) .. controls (175.67,285.87) and (173.43,283.64) .. (173.43,280.87) -- cycle ;
	\draw  [fill={rgb, 255:red, 0; green, 0; blue, 0 }  ,fill opacity=1 ] (179.01,380) .. controls (179.01,377.24) and (181.25,375) .. (184.01,375) .. controls (186.77,375) and (189.01,377.24) .. (189.01,380) .. controls (189.01,382.76) and (186.77,385) .. (184.01,385) .. controls (181.25,385) and (179.01,382.76) .. (179.01,380) -- cycle ;
	\draw  [fill={rgb, 255:red, 0; green, 0; blue, 0 }  ,fill opacity=1 ] (230,304.02) .. controls (230,301.26) and (232.24,299.02) .. (235,299.02) .. controls (237.76,299.02) and (240,301.26) .. (240,304.02) .. controls (240,306.78) and (237.76,309.02) .. (235,309.02) .. controls (232.24,309.02) and (230,306.78) .. (230,304.02) -- cycle ;
	\draw  [fill={rgb, 255:red, 0; green, 0; blue, 0 }  ,fill opacity=1 ] (280.99,380) .. controls (280.99,377.24) and (283.23,375) .. (285.99,375) .. controls (288.75,375) and (290.99,377.24) .. (290.99,380) .. controls (290.99,382.76) and (288.75,385) .. (285.99,385) .. controls (283.23,385) and (280.99,382.76) .. (280.99,380) -- cycle ;
	\draw  [fill={rgb, 255:red, 0; green, 0; blue, 0 }  ,fill opacity=1 ] (363.49,320.8) .. controls (363.49,318.03) and (365.73,315.8) .. (368.49,315.8) .. controls (371.25,315.8) and (373.49,318.03) .. (373.49,320.8) .. controls (373.49,323.56) and (371.25,325.8) .. (368.49,325.8) .. controls (365.73,325.8) and (363.49,323.56) .. (363.49,320.8) -- cycle ;
	\draw  [fill={rgb, 255:red, 0; green, 0; blue, 0 }  ,fill opacity=1 ] (286.57,280.87) .. controls (286.57,278.11) and (288.81,275.87) .. (291.57,275.87) .. controls (294.33,275.87) and (296.57,278.11) .. (296.57,280.87) .. controls (296.57,283.64) and (294.33,285.87) .. (291.57,285.87) .. controls (288.81,285.87) and (286.57,283.64) .. (286.57,280.87) -- cycle ;
	\draw  [fill={rgb, 255:red, 0; green, 0; blue, 0 }  ,fill opacity=1 ] (395,225) .. controls (395,222.24) and (397.24,220) .. (400,220) .. controls (402.76,220) and (405,222.24) .. (405,225) .. controls (405,227.76) and (402.76,230) .. (400,230) .. controls (397.24,230) and (395,227.76) .. (395,225) -- cycle ;
	\draw  [fill={rgb, 255:red, 0; green, 0; blue, 0 }  ,fill opacity=1 ] (310,225) .. controls (310,222.24) and (312.24,220) .. (315,220) .. controls (317.76,220) and (320,222.24) .. (320,225) .. controls (320,227.76) and (317.76,230) .. (315,230) .. controls (312.24,230) and (310,227.76) .. (310,225) -- cycle ;
	\draw  [fill={rgb, 255:red, 0; green, 0; blue, 0 }  ,fill opacity=1 ] (286.57,169.13) .. controls (286.57,166.36) and (288.81,164.13) .. (291.57,164.13) .. controls (294.33,164.13) and (296.57,166.36) .. (296.57,169.13) .. controls (296.57,171.89) and (294.33,174.13) .. (291.57,174.13) .. controls (288.81,174.13) and (286.57,171.89) .. (286.57,169.13) -- cycle ;
	\draw  [fill={rgb, 255:red, 0; green, 0; blue, 0 }  ,fill opacity=1 ] (363.49,129.2) .. controls (363.49,126.44) and (365.73,124.2) .. (368.49,124.2) .. controls (371.25,124.2) and (373.49,126.44) .. (373.49,129.2) .. controls (373.49,131.97) and (371.25,134.2) .. (368.49,134.2) .. controls (365.73,134.2) and (363.49,131.97) .. (363.49,129.2) -- cycle ;
	\draw  [fill={rgb, 255:red, 0; green, 0; blue, 0 }  ,fill opacity=1 ] (179.01,70) .. controls (179.01,67.24) and (181.25,65) .. (184.01,65) .. controls (186.77,65) and (189.01,67.24) .. (189.01,70) .. controls (189.01,72.76) and (186.77,75) .. (184.01,75) .. controls (181.25,75) and (179.01,72.76) .. (179.01,70) -- cycle ;
	\draw  [fill={rgb, 255:red, 0; green, 0; blue, 0 }  ,fill opacity=1 ] (280.99,70) .. controls (280.99,67.24) and (283.23,65) .. (285.99,65) .. controls (288.75,65) and (290.99,67.24) .. (290.99,70) .. controls (290.99,72.76) and (288.75,75) .. (285.99,75) .. controls (283.23,75) and (280.99,72.76) .. (280.99,70) -- cycle ;
	\draw [color={rgb, 255:red, 0; green, 0; blue, 0 }  ,draw opacity=1 ][line width=1.5]    (178.43,169.13) -- (291.57,169.13) ;
	\draw [color={rgb, 255:red, 0; green, 0; blue, 255 }  ,draw opacity=1 ][line width=1.5]    (70,225) -- (155,225) ;
	\draw  [fill={rgb, 255:red, 0; green, 0; blue, 0 }  ,fill opacity=1 ] (150,225) .. controls (150,222.24) and (152.24,220) .. (155,220) .. controls (157.76,220) and (160,222.24) .. (160,225) .. controls (160,227.76) and (157.76,230) .. (155,230) .. controls (152.24,230) and (150,227.76) .. (150,225) -- cycle ;
	\draw  [fill={rgb, 255:red, 0; green, 0; blue, 0 }  ,fill opacity=1 ] (65,225) .. controls (65,222.24) and (67.24,220) .. (70,220) .. controls (72.76,220) and (75,222.24) .. (75,225) .. controls (75,227.76) and (72.76,230) .. (70,230) .. controls (67.24,230) and (65,227.76) .. (65,225) -- cycle ;
	\draw [color={rgb, 255:red, 0; green, 0; blue, 255 }  ,draw opacity=1 ][line width=1.5]    (305.04,131.52) -- (359.69,158.24) ;
	\draw  [fill={rgb, 255:red, 0; green, 0; blue, 0 }  ,fill opacity=1 ] (300.04,131.52) .. controls (300.04,128.76) and (302.28,126.52) .. (305.04,126.52) .. controls (307.8,126.52) and (310.04,128.76) .. (310.04,131.52) .. controls (310.04,134.29) and (307.8,136.52) .. (305.04,136.52) .. controls (302.28,136.52) and (300.04,134.29) .. (300.04,131.52) -- cycle ;
	\draw  [fill={rgb, 255:red, 0; green, 0; blue, 0 }  ,fill opacity=1 ] (354.69,158.24) .. controls (354.69,155.47) and (356.93,153.24) .. (359.69,153.24) .. controls (362.45,153.24) and (364.69,155.47) .. (364.69,158.24) .. controls (364.69,161) and (362.45,163.24) .. (359.69,163.24) .. controls (356.93,163.24) and (354.69,161) .. (354.69,158.24) -- cycle ;

	\draw (271,172.4) node [anchor=north west][inner sep=0.75pt]    {$b_{1}$};
	\draw (288,202.4) node [anchor=north west][inner sep=0.75pt]    {$b_{2}$};
	\draw (271,249.4) node [anchor=north west][inner sep=0.75pt]    {$b_{3}$};
	\draw (229,273.4) node [anchor=north west][inner sep=0.75pt]    {$b_{4}$};
	\draw (184,250.4) node [anchor=north west][inner sep=0.75pt]    {$b_{5}$};
	\draw (164,218.4) node [anchor=north west][inner sep=0.75pt]    {$b_{6}$};
	\draw (185.43,172.53) node [anchor=north west][inner sep=0.75pt]    {$b_{7}$};
	\draw (161,45.4) node [anchor=north west][inner sep=0.75pt]    {$a_{1}$};
	\draw (291,45.4) node [anchor=north west][inner sep=0.75pt]    {$a_{2}$};
	\draw (379,105.4) node [anchor=north west][inner sep=0.75pt]    {$a_{3}$};
	\draw (409,215.4) node [anchor=north west][inner sep=0.75pt]    {$a_{4}$};
	\draw (379,315.4) node [anchor=north west][inner sep=0.75pt]    {$a_{5}$};
	\draw (295,379.4) node [anchor=north west][inner sep=0.75pt]    {$a_{6}$};
	\draw (165,379.4) node [anchor=north west][inner sep=0.75pt]    {$a_{7}$};
	\draw (82,319.4) node [anchor=north west][inner sep=0.75pt]    {$a_{8}$};
	\draw (42,219.4) node [anchor=north west][inner sep=0.75pt]    {$a_{9}$};
	\draw (74,109.4) node [anchor=north west][inner sep=0.75pt]    {$a_{10}$};
	\draw (222,200.4) node [anchor=north west][inner sep=0.75pt]  [font=\LARGE]  {$F_{b}$};
	\draw (225,24.4) node [anchor=north west][inner sep=0.75pt]  [font=\LARGE]  {$F_{a}$};
	\draw (300,105.4) node [anchor=north west][inner sep=0.75pt]    {$c$};
	\draw (363.69,157.64) node [anchor=north west][inner sep=0.75pt]    {$d$};

\end{tikzpicture} %
}
		\caption{$\Disk(C,F_a,F_b)$. Changes to $\textsf{DiskDistance}$ relation under insertions $\es[+]{3}{3}$. Compare with Figure~\ref{fig:distSPQR}.}
		\label{fig:distSPQRins}
	\end{figure}
	\subparagraph*{Case $\es[-]{3}{3}$.}
	In this case too, $\textsf{\triTree Distance}$ does not require any updates. We only need to update $\textsf{DiskDistance}$.
	This case is the reverse of the case $\es[+]{3}{3}$. At most two new nodes might appear, due to merging of the two faces adjacent to $e$ in $\mathsf{Emb}(C,F_1)$. The updates required can be expressed in $\FO$, mutatis mutandis, as in the previous case.
	\subparagraph*{Case $\es[+]{2}{3}$.}
	
	\begin{figure}[t]
		\centering
		\begin{subfigure}{.45\textwidth}
			\resizebox{0.97\textwidth}{!}{%

\tikzset{every picture/.style={line width=0.75pt}} %

 %
}
			\caption{Coherent path $\rho(C_a,C_b,(a,b))$. Dashed edges in blue represent virtual edges.}
		\end{subfigure} \\
		\centering	
		\begin{subfigure}{.45\textwidth}
			\resizebox{0.97\textwidth}{!}{%

\tikzset{every picture/.style={line width=0.75pt}} %

%
 %
}
			\caption{$\Disk(C_1,F_u,F_v)$ and $\Disk(C_2,F_p,F_q)$ for the $3$-connected components $C_1$ and $C_2$ on the coherent path, respectively, before insertion. $\Disk(C_1,F_u,F_v)$ is $((u_1,v_2),\allowbreak (u_2,v_2),\allowbreak (u_3,v_3),\allowbreak (u_4,v_4),\allowbreak (u_5,v_5),\allowbreak (u_6,v_6))$. $\Disk(C_2,F_p,F_q)$ for $C_2$ is similarly available. Individual embeddings of $C_1$ and $C_2$ are shown.}
		\end{subfigure} \quad
		\begin{subfigure}{.45\textwidth}
			\resizebox{0.97\textwidth}{!}{%

\tikzset{every picture/.style={line width=0.75pt}} %

 %
}
			\caption{After the insertion of $(a,b)$, $\textsf{DiskDistance}$ with respect to a glue and a straddling face pair ($(F_a,F_{up})$ and $(F_b,F_{up})$), two straddling faces $F_a$ and $F_b$, and two glue faces ($(F_1,F_2)$). }   
		\end{subfigure}
		\caption{Changes in $\textsf{DiskDistance}$. Case $\es[+]{2}{3}$: insertion of an edge along a coherent path, $\rho(C_a,C_b,(a,b))$.}
		\label{fig:spqrDistMerge}
	\end{figure}
	
	The $\textsf{\triTree Distance}$ relation does change in this case, as a coherent path merges into one $3$-connected component. The new distances can be inferred from the old distances by subtracting the length of the path between the merged nodes.
	
	For the merged $3$-connected component $C'$, we need to update the $\textsf{DiskDistance}$ relation.
	Two new faces are created, as a result of this insertion, that are adjacent to the inserted edge. We call such faces \emph{straddling} faces. There are other faces created as well, one each between any two consecutive triconnected components for which a virtual edge due to the former separating pair between them ceases to exist now. We call such faces \emph{glue} faces. 
	After the insertion, we need to consider the $\textsf{DiskDistance}$ relation with respect to new inner-outer face pairs: a pair of glue faces, one glue face and one straddling face, and that of both the straddling faces, see Figure~\ref{fig:spqrDistMerge}. For all such pairs, the $\textsf{DiskDistance}$ can be updated by merging the cyclic orders corresponding to constantly many old face pairs. 
	For example, in Figure~\ref{fig:spqrDistMerge}, $\textsf{DiskDistance}(F_a,F_{up},\ldots)$ (green pairs) is updated by merging together the cyclic orders of pairs in $\Disk(C_1,F_a,F_u)$ and $\Disk(C_2,F_a,F_p)$ from before the insertion. Note that we have to filter out pairs that do not have any vertex lying on $F_a$ (blue pairs).
	
	The tuples in the $\textsf{DiskDistance}$ relation with respect to the face pair $F_a,F_b$, the straddling faces, can be immediately inferred from the old distances on the triconnected component tree of the biconnected component containing the coherent path under consideration. The only other remaining face pair to consider is that of two glue faces, and it is immediate here, too. For example, in Figure~\ref{fig:spqrDistMerge}, for the face pair $F_1,F_2$ the graph $\Disk(C,F_1,F_2)$ is simply the cycle $((a_1,a_3),(b_1,b_3))$.
	\subparagraph*{Case $\es[-]{3}{2}$.}
	This case is the reverse of the case $\es[+]{2}{3}$ and the required updates can be expressed performing the operations of the case $\es[+]{3}{2}$ in reverse.
	\subparagraph*{Case $\es[+/-]{2}{2}$.}
	Insertion or deletion of an edge inside a cycle component does not affect $\textsf{DiskDistance}$, since no new $3$-connected components arise as a result of such a change.
	
	The $\textsf{\triTree Distance}$ relation however does change. In case of an insertion, a cycle component splits into two, and a new separating pair is introduced. Thus the distances between any two nodes of the triconnected component tree that contains the cycle component increases by two if the tree path between the two nodes passes through the newly created separating pair. The deletion case is analogous.  
	\subparagraph*{Case $\es[+]{1}{2}$.}
	In this case, a path of the biconnected component tree merges into one biconnected component. 
	Recall that, in this case, inserting the edge $(a,b)$ has an additional effect: for each old (non-trivial) biconnected component on the path, it induces the insertion of a virtual edge (of type $\es[+]{3}{3}$ or $\es[+]{2}{3}$) between the consecutive cut vertices on the path that are adjacent to that biconnected component. The resulting updates to the $\textsf{DiskDistance}$ relation, restricted to the $3$-connected components created by these virtual edge insertions, can be expressed in $\FO$, exactly as in the previously handled cases $\es[+]{3}{3}$ and $\es[+]{2}{3}$.
	\subparagraph*{Case $\es[-]{2}{1}$.}
	Updates are required in the exact reverse order of the previous case.
	
	The cases $\es[+]{0}{2}$ and $\es[-]{2}{0}$ clearly do not affect $\textsf{DiskDistance}$ and $\textsf{\triTree Distance}$.
\end{proof}

\subsection{Proofs for Maintaining Isomorphism Information for Connected Components (Section \ref{section:proof-details:connected})
}

We now lift the result of the previous section and show that one can maintain in $\DynFO$ whether two connected planar graphs are isomorphic.

\theoremIsoOne*

Our approach to prove this result is similar to the approach from Section~\ref{section:proof-details:biconnected}: we maintain whether contexts of the biconnected component tree $\biTree(G)$ are isomorphic.
We however face an additional challenge in comparison to the approach for maintaining isomorphism of triconnected component trees from Section~\ref{section:proof-details:biconnected}.
For isomorphic triconnected components we maintained the unique isomorphism (see Proposition~\ref{prop:3iso}), therefore we identified the isomorphic copies of separating pairs that are part of the isomorphic components and checked whether the isomorphism extended to the subtrees of the triconnected component tree below these separating pairs. For isomorphic biconnected components, however, we only know that an isomorphism exists, but we cannot directly map cut vertices of these components to their isomorphic copies and then check that also their subtrees of the biconnected component tree are isomorphic.

We solve this problem by colouring cut vertices in such a way that two cut vertices have the same colour if and only if their subtrees are isomorphic. If an isomorphism between biconnected components exists whose cut vertices are coloured this way (and Proposition~\ref{prop:2iso} tells us that we can maintain whether coloured biconnected components are isomorphic), we know that such an isomorphism extends to the subtree below these components in the biconnected component tree.

As the isomorphism type of a cut vertex depends on the specific context, we maintain, for every possible context of the biconnected component tree, a coloured copy of the graph that is described by that context.

We make this more precise now. In the following, we denote by $\biComp(a,b)$ the biconnected component that contains the vertices $a$ and $b$, assuming it exists.
Let $G$ be a planar graph. In the following, we again identify tuples of graph vertices with nodes of the biconnected component tree $\biTree(G)$ of~$G$. If a tuple $\tpl x$ consists of two graph vertices of the same biconnected component $\biComp(\tpl x)$, then we identify it with this component. If the tuple consists of a single vertex which is a cut vertex, then we identify the tuple with this cut vertex. We also use tuples $\tpl f$ of vertices to represent polynomially many colours.

For every context $X=\X(\tpl t, \tpl r, \tpl h)$ of $\biTree(G)$ and every tuple $\tpl f$ of colours for the vertices in $\tpl h$, we store a coloured copy $\cgraph(X,\tpl f)$ of $\graph(X)$, including its biconnected component tree, and maintain $\spqrIso$ for the resulting forest as to Proposition~\ref{prop:2iso}.

Let $X=\X(\tpl t, \tpl r, \tpl h)$ and $X^*=\X(\tpl t^*, \tpl r^*, \tpl h^*)$ be contexts and let $\tpl f, \tpl f^*$ be colours of the nodes in $\tpl h, \tpl h^*$, respectively. We say that $(X,\tpl f)$ and $(X^*, \tpl f^*)$ are \emph{fully isomorphic}, if $\cgraph(X,\tpl f)$ and $\cgraph(X^*,\tpl f^*)$ are isomorphic via an isomorphism that maps $\tpl r$ to $\tpl r^*$.%

We store and maintain an auxiliary relation $\ContextIso_1$ relative to these coloured graphs that shall contain all pairs $((\tpl t, \tpl r, \tpl h, \tpl f),(\tpl t^*, \tpl r^*, \tpl h^*, \tpl f^*))$ of tuples such that
\begin{enumerate}[(1)]
	\item $X = \X(\tpl t, \tpl r, \tpl h)$ and $X^* = \X(\tpl t^*, \tpl r^*, \tpl h^*)$ are contexts of $\biTree(G)$,
	\item $(X,\tpl f)$ and $(X^*, \tpl f^*)$ are fully isomorphic.
\end{enumerate}

The goal of this section is to prove that coloured copies of the graph as to the intuition given above and the corresponding relation $\ContextIso_1$ can be maintained in \DynFO.

\lemmaXIsoOne*

Note that for contexts $X$ and $X^*$ that are rooted at cut vertices $v$ and $v^*$, respectively, for any colourings $\tpl f$ and $\tpl f^*$ of the hole vertices in $X$ and $X^*$, respectively, $(X,\tpl f)$ and $(X^*, \tpl f^*)$ are fully isomorphic if and only if $v$ has the same colour in $\cgraph(X, \tpl f)$ as $v^*$ in $\cgraph(X^*, \tpl f^*)$ due to Lemma~\ref{lem:1isocontexts}(d).

Lemma~\ref{lem:1isocontexts} implies Proposition~\ref{prop:1iso}.

\begin{proofof}{Proposition~\ref{prop:1iso}}
An isomorphism $\pi$ between the connected components $A$ and $A^*$ such that $\pi(a) = a^*$ exists if and only if
\begin{itemize}
	\item there exists vertices $b, b^*$ that are in the same biconnected component as $a$ and $a^*$, respectively,
	\item there are vertices $c,d$ and $c^*,d^*$ from biconnected components that are leafs in the biconnected component trees of $A$ and $A^*$ with the biconnected components that include $a,b$ and $a^*,b^*$ as roots, respectively, and
	\item the tuple $(((a,b), (a,b), (c,d), \tpl f),((a^*,b^*), (a^*,b^*), (c^*,d^*), \tpl f)$ is contained in the relation $\ContextIso_1$, where $\tpl f$ describes two arbitrary colours.
\end{itemize}
This condition can clearly be expressed by a first-order formula.
\end{proofof}

We now prove Lemma~\ref{lem:1isocontexts}.

\begin{proofof}{Lemma~\ref{lem:1isocontexts}}
 Similar as in the proof of Lemma~\ref{lem:2isocontexts}, we maintain an additional auxiliary relation $\sibIsoCount_1$ that helps to recognize whether contexts rooted at cut vertices are isomorphic. This relation contains a tuple $(\tpl t, \tpl r, \tpl h, \tpl f, v, a, m)$ if in the biconnected component tree of $\cgraph(\X(\tpl t, \tpl r, \tpl h), \tpl f)$ with root $\tpl r$, the cut vertex $v$ has a child biconnected component that contains the vertex $a$ and there are vertices $b_1, \ldots, b_m$ from $m$ other child biconnected components such that the subtrees $\ST(\tpl r, (v,a))$ and $\ST(\tpl r, (v,b_i))$ of the biconnected component tree of $\cgraph(\X(\tpl t, \tpl r, \tpl h), \tpl f)$ are fully isomorphic for each $i \leq m$, that is, their graphs are isomorphic via an isomorphism that maps $(v,a)$ to $(v,b_i)$.

 As mentioned above, the coloured graphs for contexts with cut vertices as roots are fully isomorphic if and only if the cut vertices have the same colour.
 To update this isomorphism information after a change affecting the contexts, we need to ignore the previous colours of these root vertices when checking for an isomorphism.
 For such a context $X = \X(\tpl t, v, \tpl h)$ with the cut vertex $v$ as root and a colouring $\tpl f$ of $\tpl h$, the \emph{root-decoloured} $\cgraph(X, \tpl f)$ is obtained from $\cgraph(X, \tpl f)$ by decolouring $v$.

 We will use the following claim, which is analogous to Claim~\ref{clm:subtree-iso-sp-fo}.
 	\begin{claim}
 	\label{clm:subtree-iso-cut-fo}
 	Let $X=\X(\tpl t,c,\tpl h)$ and $X^* = \X(\tpl t^*,c^*, \tpl h^*)$ be contexts of the biconnected component trees that are rooted at cut vertices $c$ and $c^*$, respectively, and let $\tpl f$ and $\tpl f^*$ be colourings of $\tpl h$ and $\tpl h^*$. Suppose the following information is available:
 	\begin{itemize}
 		\item the number of isomorphic siblings of each child of $c$ and $c^*$ in $\cgraph(X, \tpl f)$ and $\cgraph(X^*, \tpl f^*)$, respectively, and
 		\item whether $(Y,\tpl f_y)$ and $(Y^*, \tpl f^*_y)$ are fully isomorphic, where $Y = \X(\tpl t, (c,d), \tpl h_y)$ and $Y^* = \X(\tpl t^*, (c^*,d^*), \tpl h^*_y)$ for any $d, d^*$ contained in a child biconnected component of $c,c^*$, respectively, and any appropriate $\tpl h_y, \tpl f_y, \tpl h^*_y, \tpl f^*_y$.
 	\end{itemize}
 	Then some $\FO$ formula expresses whether the root-decoloured $(X,f)$ and $(X^*,f)$ are fully isomorphic.
 \end{claim}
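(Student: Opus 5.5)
The plan is to mirror the proof of Claim~\ref{clm:subtree-iso-sp-fo}, with biconnected components in place of triconnected components and cut vertices in place of separating pairs. A cut vertex $c$ is a single vertex, so, unlike for separating pairs, there is no orientation to fix. The formula has to express that the multiset of isomorphism types of the child subcontexts of $c$ in $X$ equals the multiset of isomorphism types of the child subcontexts of $c^*$ in $X^*$. Root-decolouring means that the colours of $c$ and $c^*$ are not compared; only the structure below them matters. The children of $c$ are biconnected component nodes. Each is represented by a pair $(c,d)$ with $d$ in that component, and the subcontext rooted there is $Y=\X(\tpl t,(c,d),\tpl h_y)$. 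For children on the root-to-hole path, $\tpl h_y=\tpl h$. For all other children the subcontext is a subtree, which we treat as a context whose hole is a leaf, with the given colouring.

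The formula $\varphi$ is built as follows. First, it requires that for every child component of $c$, witnessed by some $d$, there exists a child component of $c^*$, witnessed by some $d^*$, such that $(Y,\tpl f_y)$ and $(Y^*,\tpl f^*_y)$ are fully isomorphic. Second, the corresponding sibling counts must agree, meaning the tuples with counts $m$ and $m^*$ satisfy $m=m^*$; these counts are the sibling-count information given as the first hypothesis. Third, the symmetric condition must hold: every child of $c^*$ is matched in this way by some child of $c$. Finally, the unique child containing the hole must be mapped to the child of $c^*$ containing $\tpl h^*$, with the hole colourings passed along. This is first-order, since it uses only quantification over vertices, lookups in the given relations, and equality of numbers.

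For correctness, full isomorphism is an equivalence relation on the child subcontexts. The sibling counts then give the size of each equivalence class among the children of $c$, and likewise among the children of $c^*$. The matching conditions show that the same classes occur on both sides with equal sizes. We then get a class-wise bijection of the children, and gluing the individual isomorphisms at $c\mapsto c^*$ gives an isomorphism of the coloured graphs. These graphs share only the vertex $c$, so the pieces are compatible, and $c$ is uncoloured because of root-decolouring. Conversely, any root-preserving isomorphism permutes the child components of $c$ in this way.

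The main obstacle is a consistency issue. The colours of cut vertices inside the child subcontexts are taken from $\cgraph(Y,\tpl f_y)$, not from $\cgraph(X,\tpl f)$. I expect these to agree because of property (d) of Lemma~\ref{lem:1isocontexts}: colours depend only on the isomorphism type of the subtree below the vertex. The vertex $c$ itself is coloured in $Y$, but it is the root of $Y$, and root colours are compared consistently on both sides. To be safe, I would compare root-decoloured $Y$ and $Y^*$ in the same way.
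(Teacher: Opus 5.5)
Your proposal is correct and follows the paper's proof. Like the paper, you match each child component of $c$ to a fully isomorphic child of $c^*$ using the given $(Y,\tpl f_y)$ information, require equal sibling counts, and check the symmetric condition. One correction concerns your consistency worry: the root of $Y$ is the component node represented by $(c,d)$, not $c$, and $c$ is not a cut vertex of $\graph(Y)$. So by property (b) of Lemma~\ref{lem:1isocontexts}, $c$ is uncoloured in $\cgraph(Y,\tpl f_y)$, as the paper notes. The given full-isomorphism information can therefore be used directly, with no extra root-decolouring of $Y$, which the hypothesis would not supply anyway.
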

 \begin{proof}
 	We can check whether the subtree rooted at a child connected component $\biComp(c,d)$ of $c$ is fully isomorphic to the subtree rooted at a child connected component $\biComp(c^*,d^*)$ of $c^*$ by checking that $(Y,\tpl f_y)$ and $(Y^*, \tpl f^*_y)$ are fully isomorphic, where $\tpl h_y, \tpl h^*_y$ describe leafs in the subtree of $(c,d)$ in $X$ of $(c^*,d^*)$ in $X^*$, respectively, and $\tpl f_y$ and $\tpl f^*_y$ describe their actual colours in $\cgraph(X, \tpl f)$ and $\cgraph(X^*, \tpl f^*)$.
 	Note that $c$ and $c^*$ are not cut vertices in the contexts $Y$ and $Y^*$ and therefore uncoloured in the corresponding graphs.

 	The required first-order formula checks for each child biconnected component of $c$ whether there is a child biconnected component of $c^*$ such that the respective subtrees rooted at these components are fully isomorphic.
 	It also checks that the number of isomorphic siblings in the respective biconnected component trees is the same.
 	Then the formula checks that $c^*$ has no such child biconnected components whose subtrees are not fully isomorphic to a subtree that is rooted at some child of $c$.
 \end{proof}

We now discuss the maintenance of the auxiliary relations and distinguish the different types of changes as presented in Section~\ref{section:changes}.

For the graph colouring, we maintain the invariant that colours are represented by tuples of the form $(\tpl t, \tpl r, \tpl h, \tpl f, v)$, where $v$ is a cut vertex in the context $X= \X(\tpl t, \tpl r, \tpl h)$, and if some cut vertex has the colour $(\tpl t, \tpl r, \tpl h, \tpl f, v)$, then so does $v$ in $\cgraph(X, \tpl f)$.
That means that colours are descriptors of a cut vertex with the same colour, so, identify representatives among the colour classes.

Let $X = \X(\tpl t, \tpl r, \tpl h)$ and $X^* = \X(\tpl t^*, \tpl r^*, \tpl h^*)$ be contexts of the biconnected component forest and let $\tpl f, \tpl f^*$ be colouring of the vertices in $\tpl h, \tpl h^*$, respectively.
We assume that the change only affects $\graph(X)$. The generalization to the case that also $\graph(X^*)$ is affected works analogously as discussed for the proof of Lemma~\ref{lem:2isocontexts}.

\subparagraph*{Cases $\es[+]{3}{3},\es[-]{3}{3}, \es[+]{2}{3},\es[-]{3}{2},\es[+]{2}{2}$ and $\es[-]{2}{2}$.}

In these cases, the change only affects one biconnected component, in particular, no biconnected component tree of any connected component changes. Consequently, there are no changes of the vertex set of any graph $\cgraph(X, \tpl f)$.
We explain the colour changes of these graphs and the maintenance of $\ContextIso_1$ and $\sibIsoCount_1$.

Let $B$ be the changed biconnected component in the graph $\cgraph(X,\tpl f)$ and let $b_1$ be its parent cut vertex.
Let us first assume that $\tpl h$ lies in the subtree of $b_1$ in the context $X$ and let $b_2$ some other node from $B$.
See Figure~\ref{fig:bc-iso-op} for an illustration.

\begin{figure}[t]
	\centering

\tikzset{every picture/.style={line width=0.75pt}} %

\begin{tikzpicture}[x=0.75pt,y=0.75pt,yscale=-.9,xscale=.9]
	\draw  [fill={rgb, 255:red, 255; green, 255; blue, 255 }  ,fill opacity=1 ][dash pattern={on 0.84pt off 2.51pt}] (74,179) .. controls (74,173.48) and (78.48,169) .. (84,169) -- (184,169) .. controls (189.52,169) and (194,173.48) .. (194,179) -- (194,209) .. controls (194,214.52) and (189.52,219) .. (184,219) -- (84,219) .. controls (78.48,219) and (74,214.52) .. (74,209) -- cycle ;
	\draw  [fill={rgb, 255:red, 209; green, 212; blue, 211 }  ,fill opacity=1 ] (354,161) -- (284.5,161) -- (214,90) -- (114,161) -- (14,161) -- (194,40) -- cycle ;
	\draw  [fill={rgb, 255:red, 209; green, 212; blue, 211 }  ,fill opacity=1 ] (284,182) -- (349,297) -- (283.88,297) -- (253.88,253.28) -- (199,297) -- (149,297) -- cycle ;
	\draw    (194,48) .. controls (196.22,48.79) and (196.94,50.29) .. (196.15,52.51) .. controls (195.36,54.73) and (196.08,56.24) .. (198.3,57.03) .. controls (200.52,57.82) and (201.24,59.32) .. (200.45,61.54) .. controls (199.66,63.76) and (200.38,65.27) .. (202.6,66.06) .. controls (204.82,66.85) and (205.54,68.35) .. (204.75,70.57) .. controls (203.96,72.79) and (204.68,74.3) .. (206.9,75.09) .. controls (209.12,75.88) and (209.84,77.38) .. (209.05,79.6) -- (209.7,80.97) -- (213.14,88.19) ;
	\draw [shift={(214,90)}, rotate = 244.54] [fill={rgb, 255:red, 0; green, 0; blue, 0 }  ][line width=0.08]  [draw opacity=0] (12,-3) -- (0,0) -- (12,3) -- cycle    ;
	\draw    (214,96) -- (264.18,169.49) ;
	\draw [shift={(265.31,171.14)}, rotate = 235.67] [fill={rgb, 255:red, 0; green, 0; blue, 0 }  ][line width=0.08]  [draw opacity=0] (12,-3) -- (0,0) -- (12,3) -- cycle    ;
	\draw    (214,94) -- (110.59,173.28) ;
	\draw [shift={(109,174.5)}, rotate = 322.52] [fill={rgb, 255:red, 0; green, 0; blue, 0 }  ][line width=0.08]  [draw opacity=0] (12,-3) -- (0,0) -- (12,3) -- cycle    ;
	\draw  [fill={rgb, 255:red, 209; green, 212; blue, 211 }  ,fill opacity=1 ] (109,181) -- (129,211) -- (89,211) -- cycle ;
	\draw  [fill={rgb, 255:red, 155; green, 155; blue, 155 }  ,fill opacity=1 ] (102.5,181) .. controls (102.5,177.41) and (105.41,174.5) .. (109,174.5) .. controls (112.59,174.5) and (115.5,177.41) .. (115.5,181) .. controls (115.5,184.59) and (112.59,187.5) .. (109,187.5) .. controls (105.41,187.5) and (102.5,184.59) .. (102.5,181) -- cycle ;
	\draw  [fill={rgb, 255:red, 209; green, 212; blue, 211 }  ,fill opacity=1 ] (166,181) -- (176,211) -- (156,211) -- cycle ;
	\draw    (214,92) -- (168.97,173.25) ;
	\draw [shift={(168,175)}, rotate = 299] [fill={rgb, 255:red, 0; green, 0; blue, 0 }  ][line width=0.08]  [draw opacity=0] (12,-3) -- (0,0) -- (12,3) -- cycle    ;
	\draw  [fill={rgb, 255:red, 155; green, 155; blue, 155 }  ,fill opacity=1 ] (159.5,181) .. controls (159.5,177.41) and (162.41,174.5) .. (166,174.5) .. controls (169.59,174.5) and (172.5,177.41) .. (172.5,181) .. controls (172.5,184.59) and (169.59,187.5) .. (166,187.5) .. controls (162.41,187.5) and (159.5,184.59) .. (159.5,181) -- cycle ;
	\draw    (269.88,210.28) .. controls (270.86,212.43) and (270.28,213.99) .. (268.13,214.97) .. controls (265.99,215.95) and (265.41,217.51) .. (266.39,219.65) .. controls (267.37,221.79) and (266.79,223.35) .. (264.65,224.34) .. controls (262.5,225.31) and (261.92,226.87) .. (262.9,229.02) .. controls (263.88,231.16) and (263.3,232.72) .. (261.16,233.71) .. controls (259.02,234.7) and (258.44,236.26) .. (259.42,238.4) .. controls (260.4,240.55) and (259.82,242.11) .. (257.67,243.08) -- (257.37,243.91) -- (254.58,251.41) ;
	\draw [shift={(253.88,253.28)}, rotate = 290.41] [fill={rgb, 255:red, 0; green, 0; blue, 0 }  ][line width=0.08]  [draw opacity=0] (12,-3) -- (0,0) -- (12,3) -- cycle    ;
	\draw  [fill={rgb, 255:red, 74; green, 144; blue, 226 }  ,fill opacity=1 ] (187.5,43.5) .. controls (187.5,39.91) and (190.41,37) .. (194,37) .. controls (197.59,37) and (200.5,39.91) .. (200.5,43.5) .. controls (200.5,47.09) and (197.59,50) .. (194,50) .. controls (190.41,50) and (187.5,47.09) .. (187.5,43.5) -- cycle ;
	\draw  [fill={rgb, 255:red, 126; green, 211; blue, 33 }  ,fill opacity=1 ] (207.5,93.5) .. controls (207.5,89.91) and (210.41,87) .. (214,87) .. controls (217.59,87) and (220.5,89.91) .. (220.5,93.5) .. controls (220.5,97.09) and (217.59,100) .. (214,100) .. controls (210.41,100) and (207.5,97.09) .. (207.5,93.5) -- cycle ;
	\draw  [fill={rgb, 255:red, 74; green, 144; blue, 226 }  ,fill opacity=1 ] (247.38,253.28) .. controls (247.38,249.69) and (250.29,246.78) .. (253.88,246.78) .. controls (257.47,246.78) and (260.38,249.69) .. (260.38,253.28) .. controls (260.38,256.87) and (257.47,259.78) .. (253.88,259.78) .. controls (250.29,259.78) and (247.38,256.87) .. (247.38,253.28) -- cycle ;
	\draw  [fill={rgb, 255:red, 80; green, 227; blue, 194 }  ,fill opacity=1 ] (248.88,196.52) .. controls (248.88,181.06) and (261.42,168.52) .. (276.88,168.52) .. controls (292.34,168.52) and (304.88,181.06) .. (304.88,196.52) .. controls (304.88,211.99) and (292.34,224.52) .. (276.88,224.52) .. controls (261.42,224.52) and (248.88,211.99) .. (248.88,196.52) -- cycle ;

	\draw (125,170.4) node [anchor=north west][inner sep=0.75pt]    {$\cdots $};
	\draw (191,11.4) node [anchor=north west][inner sep=0.75pt]  [font=\Large]  {$\overline{r}$};
	\draw (268,240.4) node [anchor=north west][inner sep=0.75pt]  [font=\Large]  {$\overline{h}$};
	\draw (271,189.4) node [anchor=north west][inner sep=0.75pt]    {$B$};
	\draw (188,78.4) node [anchor=north west][inner sep=0.75pt]    {$b_{1}$};
	\draw (50,220) node [anchor=north west][inner sep=0.75pt]   [align=left] {Sibling subtrees of $\displaystyle B$};

\end{tikzpicture} 	\caption{Edge change of type $\es[+/-]{3}{3},\es[+]{2}{3},\es[-]{3}{2}$ or $\es[+/-]{2}{2}$ in a connected component such that $\tpl h$ is in the subtree of the affected biconnected component $B$.}
	\label{fig:bc-iso-op}
\end{figure}

We first consider the subcontext $Y = \X(\tpl r, (b_1, b_2), \tpl h)$ and identify contexts $Y^* = \X(\tpl r^*, (b^*_1, b^*_2), \tpl h^*)$ that include a biconnected component $B^*$ with a parent cut vertex $b^*_1$ and some other node $b^*_2$ together with a colouring $\tpl f^*$ such that $(Y, \tpl f)$ and $(Y^*, \tpl f^*)$ are fully isomorphic.
Note that for any cut vertex other than $b_1$ in $B$, the subtree rooted at this cut vertex did not change, so, the colours of these cut vertices in $\cgraph(Y,\tpl f)$ still represent the isomorphism class of the subtrees. The same is true for the cut vertices in $B^*$.

It follows that after the edge change, $(Y, \tpl f)$ and $(Y^*, \tpl f^*)$ are fully isomorphic if and only if the biconnected components $B$ and $B^*$ are isomorphic via an isomorphism that maps $(b_1, b_2)$ to $(b^*_1, b^*_2)$, that is, it holds $((b_1,b_2),(b^*_1,b^*_2))\in \spqrIso$ after the change.

As $\spqrIso$ can be maintained in \DynFO under edge changes, this condition can be expressed in \FO.
Accordingly, also the $\sibIsoCount_1$ information for $b_1$ can be updated, and then one can express using a first-order formula whether the root-decoloured $(\X(\tpl r, b_1, \tpl h), \tpl f)$ and $(\X(\tpl r^*, b^*_1, \tpl h^*), \tpl f^*)$ are fully isomorphic, using Claim~\ref{clm:subtree-iso-cut-fo}.

As the next step, we identify the new colour of the cut vertex $b_1$. Remember that colours have the form $(\tpl t, \tpl r, \tpl h, \tpl f, v)$ and a node that has this colour has a subtree that is fully isomorphic to the subtree of $v$ in the graph $\cgraph(\X(\tpl t, \tpl r, \tpl h),\tpl f)$. We say that $v$ is the representative of the isomorphism class of its subtree.

In a first step, we check whether before the change $b_1$ represented the isomorphism class of its subtree. In that case, a new representative has to be found.
Afterwards, the representative for the isomorphism class of $b_1$ after the change is identified.

We make this precise now.
Suppose that $b_1$ represents the isomorphism class of its subtree before the change, that is, the colour of $b_1$ in $\cgraph(X, \tpl f)$ is $F^\star = (\tpl t^\star, \tpl r^\star, \tpl h^\star, \tpl f^\star, b_1)$ for some parameters $\tpl t^\star, \tpl r^\star, \tpl h^\star, \tpl f^\star$.
If there are cut vertices with the same colour but whose subtree is not affected by the change, we have to identify a new representative for these nodes. We select the lexicographically smallest tuple $F^\bullet = (\tpl t^\bullet, \tpl r^\bullet, \tpl h^\bullet, \tpl f^\bullet, v^\bullet)$ such that $v^\bullet$ has colour $(\tpl t^\star, \tpl r^\star, \tpl h^\star, \tpl f^\star, b_1)$ in $\cgraph(\X(\tpl t^\bullet, \tpl r^\bullet, \tpl h^\bullet), \tpl f^\bullet)$ and $B$ is not in the subtree of $v^\bullet$ in that context.
All cut vertices with colour $F^\star$ are coloured $F^\bullet$, remember that we can maintain $\spqrIso$ under this kind of changes.

Analogously, we select the new colour of $b_1$. If there is a cut vertex $v^\circ$ whose subtree is not affected by the change but $(\X(\tpl t^\circ, v^\circ, \tpl h^\circ), \tpl f^\circ)$, for some parameters $\tpl t^\circ, \tpl h^\circ, \tpl f^\circ$, is fully isomorphic to $(\X(\tpl t, b_1, \tpl h), \tpl f)$ after the change, then $b_1$ gets the colour of $v^\circ$.
If no such cut vertex exists, $b_1$ gets as colour the lexicographically smallest tuple $(\tpl t^\diamond, \tpl r^\diamond, \tpl h^\diamond, \tpl f^\diamond, v^\diamond)$ such that $(\X(\tpl t, b_1, \tpl h), \tpl f)$ and $(\X(\tpl t^\diamond, v^\diamond, \tpl h^\diamond), \tpl f^\diamond)$ are fully isomorphic after the change.

Now, we can check whether the contexts $(X, \tpl f)$ and $(X^*, \tpl f^*)$ are fully isomorphic. This is the case if and only if for some descendant cut vertex $b^*_1$ of $\tpl r^*$ it holds that
\begin{enumerate}[(a)]
	\item the root-decoloured $(\X(\tpl r, b_1, \tpl h), \tpl f)$ and $(\X(\tpl r^*, b^*_1, \tpl h^*), \tpl f^*)$ are fully isomorphic,
	\item the colour $\tpl f_{b^*_1}$ of $b^*_1$ in $\cgraph(X^*,\tpl f^*)$ is the same as the new colour of $b_1$, and
	\item $(\X(\tpl t, \tpl r, b_1),\tpl f_{b^*_1})$ and $(\X(\tpl t^*, \tpl r^*, b^*_1),\tpl f_{b^*_1})$ are fully isomorphic.
\end{enumerate}
As for the latter check the corresponding coloured graphs are not affected by the change, the information can be read from $\ContextIso_1$.

It remains to explain the case that $\tpl h$ is not in the subtree of $b_1$.
In this case, we proceed similarly as explained in the corresponding case in the proof of Lemma~\ref{lem:2isocontexts} and identify the lowest common ancestor of $\tpl h$ and $b_1$, compute the isomorphism information first for the children of this ancestor and then for the ancestor itself, then we can finally decide whether the contexts in question are fully isomorphic.

More precisely, suppose the lowest common ancestor of $\tpl h$ and $b_1$ in the biconnected component tree of $\cgraph(X,\tpl f)$ is a cut vertex $a_1$. Let the first cut vertex on the path from $a_1$ to $b_1$ be $a_2$.
Analogously to the explanations above, a first-order formula can check that for cut vertices $a^*_1, a^*_2$ of $X^*$ it holds that $(\X(\tpl t, (a_1,a_2), \tpl h_y), \tpl f_y)$ and $(\X(\tpl t^*, (a^*_1, a^*_2, \tpl h^*_y), \tpl f^*_y)$ are fully isomorphic after the change, where $\tpl h_y, \tpl h^*_y$ are leafs in the subtree of $a_2$ of $X$ and $a^*_2$ of $X^*$, respectively, and $f_y, f^*_y$ describe their actual colours in $\cgraph(X, \tpl f)$ and $\cgraph(X^*, \tpl f^*)$, respectively.
Then, using Claim~\ref{clm:subtree-iso-cut-fo} this can be lifted to the context with root $a_1$ and hole $\tpl h$ and the new colour of $a_2$ can be obtained.
At last, if for a cut vertex $a^*_1$ of $X^*$ that has the same colour $\tpl f_{a_1}$ as $a_1$ it holds that $(\X(\tpl t, \tpl r, a_1), \tpl f_{a_1})$ and $(\X(\tpl t^*, \tpl r^*, a^*_1), \tpl f_{a_1})$ are fully isomorphic, so are $(X, \tpl f)$ and $(X^*, \tpl f^*)$.

If the lowest common ancestor of $\tpl h$ and $b_1$ in the biconnected component tree is a biconnected component, let $a_1$ be its cut vertex on the path to $b_1$ and let $a_0$ be its parent cut vertex. Analogously to above, a first-order formula can determine the new colour of $a_1$.
After processing the colour change of $a_1$, from the updated relation $\spqrIso$ we can get all biconnected components whose subtrees are fully isomorphic to the subtree of $\biComp(a_0,a_1)$ and, again using Claim~\ref{clm:subtree-iso-cut-fo}, the new colour of $a_0$ can be obtained.
Then we can proceed as above.

\subparagraph*{Case $\es[+]{1}{2}$.}

Suppose that after the insertion of an edge, a path of the biconnected component tree of $G$ coalesces into a single biconnected component $B$ that is included in the context $X$.
Let $b_1$ be the parent cut vertex of $B$.
Remember that all cut vertices that lay on the coalesced path become part of $B$. They are still cut vertices if in the biconnected component tree before the change they had a neighbour that is not on the coalesced path.

We first consider the subtree of $B$ in $X$ and aim to identify biconnected components of $X^*$ such that the subtrees are fully isomorphic.
We assume that $\tpl h$ is in the subtree of $B$ in $X$. Otherwise, the following approach can be adapted as discussed in the previous case.

For any child cut vertex of $B$, its subtree may be affected by the change: one of its former child biconnected components may be part of the coalesced paths and then becomes part of $B$; all other child biconnected components are unaffected and still present.
That means that the $\sibIsoCount_1$ information for these child cut vertices can easily be updated, and using Claim~\ref{clm:subtree-iso-cut-fo} and the explanations given in the case above, their new colour can be found.
All former cut vertices that are no cut vertices any more in $B$ lose their colour.

As $\spqrIso$ for coloured graphs can not only be maintained under $\es[+]{1}{2}$ changes but additionally all nodes on the newly-formed cycle may change their colour, see Proposition~\ref{prop:2iso}, a first-order formula can express whether the component $B$ with adjusted colours is isomorphic to some other component $B^*$ inside $X^*$, so, whether $(\X(\tpl t, B, \tpl h), \tpl f)$ and $(\X(\tpl t^*, B^*, \tpl h^*), \tpl f^*)$ are fully isomorphic.
As a next step, again using Claim~\ref{clm:subtree-iso-cut-fo}, a first-order formula can express whether there is a cut vertex $b^*_1$ such that $(\X(\tpl t, b_1, \tpl h), \tpl f)$ and $(\X(\tpl t^*, b_1^*, \tpl h^*), \tpl f^*)$ are fully isomorphic and it can find the new colour $\tpl f_{b_1}$ of $b_1$.

It then remains to check whether $(\X(\tpl t, \tpl r, b_1), \tpl f_{b_1})$ and $(\X(\tpl t^*, \tpl r^*, b^*_1), \tpl f_{b_1})$ are fully isomorphic, which can be looked up from $\ContextIso_1$, as the corresponding graphs are not affected by the change.

\subparagraph*{Case $\es[-]{2}{1}$.}
This is reverse of the $\es[+]{1}{2}$ case. It is similar to the case $\es[-]{3}{2}$ from the proof of Lemma~\ref{lem:2isocontexts}, but simpler. See Figure~\ref{fig:21-x-iso-1} for an illustration.

Suppose that after the deletion of an edge a biconnected component unfurls into a path $\rho$ of alternating cut vertices and biconnected components. Let $N$ be the node of this path that is closest to the root of the context $X$.

We first assume that $N$ is a biconnected component node. At most two of its child cut vertices are on the path $\rho$. The subtree of all other child cut vertices did not change, so their colour still represents the isomorphism class of their subtrees.
We explain how to choose the new colour of a child cut vertex $k$ of $N$ that lies on $\rho$. We only consider the case that $\tpl h$ is in the subtree of $k$, the other case is an easy adaptation.
So, we have to identify contexts $Y^* = X(\tpl t^*, k^*, \tpl h^*)$ such that $(Y^*,\tpl f^*)$ is fully isomorphic to $(Y = \X(\tpl t, k, \tpl h), \tpl f)$.

In the biconnected component tree of $\cgraph(Y,\tpl f)$, a new child of $k$ is inserted by the change, whose subtree consists of a part of the path $\rho$ and the corresponding subtrees. Let $\tpl b$ denote the bottom-most node of $\rho$ in the subtree of $k$.
We will only consider the case that $\tpl h$ is in the subtree of $\tpl b$, the other case can be handled along the lines of the discussion for the case $\es[+]{3}{3}$.
\begin{figure}[t]
	\begin{subfigure}{0.45\textwidth}
		\resizebox{\textwidth}{!}{%

\tikzset{every picture/.style={line width=0.75pt}} %

 %
}
		\caption{The graph of the context $X(\tpl t,\tpl r,\tpl h)$ before the change. \label{fig:21-x-iso-1B}}
	\end{subfigure}
	\hfill
	\begin{subfigure}{0.50\textwidth}
		\resizebox{\textwidth}{!}{%

\tikzset{every picture/.style={line width=0.75pt}} %

%
 %
}
		\caption{The graph after the change. \label{fig:21-x-iso-1A}}
	\end{subfigure}	
	\caption{Illustration for the update of $\ContextIso_1$ after an $\es[-]{2}{1}$ edge change.\label{fig:21-x-iso-1}}
\end{figure}

To decide whether $(Y^*,\tpl f^*)$ is fully isomorphic to $(Y, \tpl f)$, a first-order formula now has to
\begin{enumerate}
	\item quantify the isomorphic copy $\tpl b^*$ of $\tpl b$ in the subtree of $k^*$,
	\item check that $(\X(\tpl t, \tpl b, \tpl h), \tpl f)$ and $(\X(\tpl t^*, \tpl b^*, \tpl h^*), \tpl f^*)$ are fully isomorphic,
	\item check that for every number $d$ smaller than the distance between $k$ and $\tpl b$ in $Y$,
	 \begin{enumerate}
	 	\item if the node with distance $d$ from $k$ on the path to $\tpl b$ is a biconnected component $B_d$ and its cut vertices on $\rho$ are $c_{d-1}$ and $c_{d+1}$, then for the biconnected component $B^*_d$ with cut vertices $c^*_{d-1}$ and $c^*_{d+1}$ that have distance $d$ respectively $d-1$ and $d+1$ from $k^*$ to $\tpl b^*$ it holds that $B_d$ is isomorphic to $B^*_d$ via an isomorphism that maps $(c_{d-1}, c_{d+1})$ to $(c^*_{d-1}, c^*_{d+1})$ when the colours of these cut vertices are ignored,
	 	\item if the node with distance $d$ from $k$ on the path to $\tpl b$ is a cut vertex $c_d$, then for the cut vertex $c^*_d$ with distance $d$ from $k^*$ to $\tpl b^*$ it holds that the subtrees of $c_d$ and $c^*_d$ are isomorphic when the child biconnected component $B_{d+1}$ of $c_d$ on the path to $\tpl b$ and the child biconnected component $B^*_{d+1}$ of $c^*_d$ on the path to $\tpl b^*$ are being ignored.
	 \end{enumerate}
\end{enumerate}

A first-order update formula can express these conditions as follows.
For $(2)$, observe that the only affect of the change on the subtree of $\tpl b$ is the potential deletion of a virtual edge in the biconnected component $\tpl b$ itself. This edge deletion is of type $\es[-]{3}{3}$, $\es[-]{3}{2}$ or $\es[-]{2}{2}$. So, the information can be obtained as discussed above.
For $(3)$, the necessary distance information can be maintained thanks to Lemma~\ref{lemma:bc-distances}. For part (a), isomorphism of biconnected components of coloured graphs can be maintained also under additional colour changes of a constant number of vertices, see Proposition~\ref{prop:2iso}. For part (b), we can use Claim~\ref{clm:subtree-iso-cut-fo} with the slight adaptation that before its use we subtract from the $\sibIsoCount_1$ information for $c^*_d$ the accounting of the child biconnected component $B^*_{d+1}$, and for $c_d$ we use the $\sibIsoCount_1$ information from before the change, as for the respective subtree the effect of the change is the insertion of the child biconnected component $B_{d+1}$ and its subtree.

Using this approach, the update $\sibIsoCount_1$ information and the new colour of $k$ can be determined (the same for the other descendant cut vertices from $\rho$), as discussed for the case $\es[+]{3}{3}$ above.

To finally determine whether $(X, \tpl f)$ and $(X^*, \tpl f^*)$ are fully isomorphic, the update formula now has to
\begin{enumerate}
	\item existentially quantify the isomorphic copy $N^*$ of $N$,
	\item check that after changing the colour of the at most two cut vertices of $N$ that are on $\rho$, these biconnected components are isomorphic via an isomorphism that maps the parent cut vertex $c_p$ of $N$ to the parent cut vertex $c^*_p$ of $N$ and some child cut vertex $k$ of $N$ to some child cut vertex $k^*$ of $N^*$, and
	\item check that $(\X(\tpl t, \tpl r, (c_p, k)), \tpl f_N)$ and $(\X(\tpl t^*, \tpl r^*, (c^*_p, k^*)), \tpl f_N)$ are fully isomorphic, where $f_N$ gives the actual colours of $(c_p, k)$ and $(c^*_p, k^*)$, respectively.
\end{enumerate}

In the case that $N$ is a cut vertex, we do as above for its one or two child biconnected components on the path $\rho$ and then employ Claim~\ref{clm:subtree-iso-cut-fo} to determine the new colour of $N$.

\subparagraph*{Case $\es[+]{0}{2}$.}
In this case, a bridge edge $(a_1,a_2)$ is inserted between distinct connected components $A_1, A_2$.
The affected vertices become cut vertices, if they not already are, and they form a new biconnected component $B$ that consists just of these two vertices.
The isomorphism information can be determined along the lines of the case $\es[+]{3}{3}$.
Note that if $a_1$ or $a_2$ are not cut vertices before the change, they have at most one child after the change and therefore their $\sibIsoCount_1$ information is easy to express.

\subparagraph*{Case $\es[-]{2}{0}$.}
Suppose a bridge edge $(a_1,a_2)$ is deleted and a connected component decomposes into two connected components.
In the context $X$, one node loses a child because of the change: if the node from $a_1, a_2$ that is included in the graph of $X$ after the change is still a cut vertex, then this cut vertex loses the child connected component ${a_1, a_2}$ and its subtree; otherwise one biconnected component $B$ loses a child cut vertex.
In the former case, the $\sibIsoCount_1$ information of the cut vertex can be updated easily and its new colour can then be determined as explained above. In the latter case, one vertex of $B$ becomes uncoloured.
In both cases, the remaining updates can be performed as explained for the case $\es[+]{3}{3}$ above.
\end{proofof}

\subsection{Proofs for Maintaining distances in biconnected component trees}

To goal of this section is to show that distances between the nodes of a biconnected component tree can be maintained in \DynFO. 

\begin{lemma}[Distances on $\biTree$]\label{lemma:bc-distances}
	The distance between any two nodes of $\biTree(C)$ for any connected component $C$ of a planar graph $G$ can be maintained in $\DynFO$, provided that $G$ stays planar.
\end{lemma}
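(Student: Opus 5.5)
We maintain a relation $\textsc{dist}_1$ of tuples $(\tpl x,\tpl y,d)$, where nodes of $\biTree(C)$ are represented as in Section~\ref{section:proof-details:connected}: a cut vertex by a single vertex, and a biconnected component by a pair of its vertices. The update formulas are obtained by a case distinction over the change types of Section~\ref{section:changes}. Whether two vertices lie in a common biconnected component, and which vertices are cut vertices, is available by Lemma~\ref{theorem:maintainability-decomposition}. Changes of types $\es[+/-]{3}{3}$, $\es[+]{2}{3}$, $\es[-]{3}{2}$ and $\es[+/-]{2}{2}$ leave $\biTree$ unchanged, so no update is needed.

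\textbf{Insertions.} For $\es[+]{0}{2}$, two trees $\biTree(C_u)$ and $\biTree(C_v)$ are joined through the new bridge node $B_{uv}$, possibly after creating new leaf cut vertex nodes $u$ and $v$. A distance across the new bridge is then $\textsc{dist}(\tpl x,u)+2+\textsc{dist}(v,\tpl y)$, with the obvious adjustments when $u$ or $v$ is a new leaf; this is expressible in \FOar. For $\es[+]{1}{2}$, the path $B_u c_1\cdots c_k B_v$ collapses into the single node $B_{uv}$. For any two nodes $\tpl x,\tpl y$, let $\tpl x'$ and $\tpl y'$ be their projections onto this path, that is, the path nodes closest to them. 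These are definable as the path nodes $\tpl z$ with $\textsc{dist}(\tpl x,\tpl z)$ minimal, where membership in the path is tested via $\textsc{dist}(B_u,\tpl z)+\textsc{dist}(\tpl z,B_v)=\textsc{dist}(B_u,B_v)$. If the tree path between $\tpl x$ and $\tpl y$ meets the collapsed path, the new distance is $\textsc{dist}(\tpl x,\tpl x')+\textsc{dist}(\tpl y',\tpl y)$, corrected by a constant. The correction depends on whether each projection is a cut vertex that survives, since survivors keep their node and attach to $B_{uv}$, or a node that is absorbed into $B_{uv}$. Removed cut vertices of degree $2$ simply disappear.

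\textbf{Deletions.} The main obstacle is the deletion case $\es[-]{2}{1}$. Here a biconnected component $B$ unfurls into a path $B_0c_1B_1\cdots c_kB_k$, and we need the positions of the new cut vertices along this path immediately. The key fact is that the cut vertices $c_1,\dots,c_k$ of $B-\{(u,v)\}$ are exactly the cut vertices lying on every $u$--$v$ path. They are totally ordered by their distance from $u$, in the sense that $c_i$ precedes $c_j$ iff $c_i$ separates $u$ from $c_j$. Deciding whether a vertex $w$ separates $u$ from $v$ in $G-\{(u,v)\}$ reduces to connectivity after removing one vertex, which is maintained for Lemma~\ref{theorem:maintainability-decomposition}. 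So the order is \FO-definable, but the \emph{position} of $c_i$ requires counting the $c_j$ preceding it, and counting is not available in \FO.

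To get this count, we would maintain an additional relation $\textsc{cnt}(u,v,w,i)$ for all pairs $u,v$ in a common biconnected component: $w$ is the $i$-th vertex, counted from $u$, among those separating $u$ from $v$ in the graph with $(u,v)$ removed. Equivalently, $i$ is the number of cut vertices on the $u$-side of $w$. This relation is the analogue, for biconnected components, of $\textsf{DiskDistance}$ for triconnected components.

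The relation $\textsc{cnt}$ is itself updated by the same case analysis. This works because the vertices separating $u$ from $v$ inside $B-(u,v)$ can be read off from $\triTree(B)$. They are the vertices that occur in every separating pair and every component on the $R_u$--$R_v$ path, which arise from cycle components and single-vertex intersections of consecutive separating pairs, and whose count is determined by $\textsc{dist}_2$ and cycle distances (Lemma~\ref{lemma:cycle-distances}). Concretely, we would first try defining $\textsc{cnt}$ directly from $\textsc{dist}_2$ and the cycle-distance information in each step, avoiding a separate update. Given $\textsc{cnt}$, each new distance in the unfurled case is $2i$ plus or minus constants, combined with the old distances through projections, as in the insertion case. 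The case $\es[-]{2}{0}$ only removes a bridge node and possibly leaf cut vertices, so the old distances stay correct on each side.
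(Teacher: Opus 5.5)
Your handling of $\textsc{dist}_1$ under insertions and under $\es[-]{2}{0}$ is fine and matches the paper. You also correctly locate the difficulty in $\es[-]{2}{1}$ and see that positions along the unfurled path must be stored rather than counted. But the argument for that case is incomplete.

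First, the structural link between the new cut vertices and $\triTree(B)$ is missing, and your description of it is incorrect. Since $(u,v)$ is an edge, $u$ and $v$ lie in a common triconnected component, so the $R_u$--$R_v$ path is a single node. Moreover, removing one vertex from a biconnected graph never disconnects it, so ``single-vertex intersections of consecutive separating pairs'' play no role. What is true (Lemma~\ref{lem:distBC}) is that $B-\{(u,v)\}$ fails to be biconnected exactly when $(u,v)$ lies in a unique cycle component $S$. In that case the new cut vertices are $V(S)\setminus\{u,v\}$ in cyclic order, and the unfurled path has length $2|V(S)|-4$. Hence your $\textsc{cnt}(u,v,\cdot,\cdot)$ is nonempty only for such edges, and there it is just the distance from $u$ along $S$.

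So everything hinges on maintaining distances along cycle components, and this you do not do: you cite Lemma~\ref{lemma:cycle-distances} and hedge with ``we would first try''. That lemma is not an independent tool. The paper proves it simultaneously with the present lemma, and it depends on $\biTree$ distances. After an $\es[+]{1}{2}$ insertion, a new cycle component $(u\,c_1\cdots c_k\,v\,u)$ of unbounded length appears, and its distances can only be read off from the old $\biTree$ distances of the $c_i$. $\textsc{dist}_2$ is useless here, since a cycle is a single node of $\triTree$. Cycle distances must also be updated in three further situations:
\begin{itemize}
\item when cycles split ($\es[+]{2}{3}$, $\es[+]{2}{2}$);
\item when the cycles of at most four vertices created by an $\es[-]{3}{2}$ unfurling merge with up to two old cycles;
\item when two cycles merge at a degree-$2$ separating pair ($\es[-]{2}{2}$).
\end{itemize}

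The last case shows that $\textsc{cnt}$ alone is not a maintainable invariant. Suppose $\{a,b\}$ is a degree-$2$ separating pair between cycle components $S_1$ and $S_2$, the graph edge $(a,b)$ is present, and all other edges of $S_2$ are virtual. Suppose further that $S_1$ contains a graph edge $(x,y)$ whose deletion would be of type $\es[-]{2}{1}$. Then nothing in $\textsc{cnt}$ records $|V(S_2)|$: the entry for $(a,b)$ is empty, since $B-\{(a,b)\}$ is biconnected. Yet after deleting $(a,b)$, the entry $\textsc{cnt}(x,y,\cdot,\cdot)$ must count the vertices of $S_2$.

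You therefore need distances on all cycle components, maintained jointly with $\textsc{dist}_1$ through the full case analysis. That joint maintenance is precisely the content of the paper's proof.
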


We also show that distances between vertices of a cycle component (under any desired orientation of the cycle) can be maintained in \DynFO. This will be needed for the proof of~\cref{lemma:bc-distances}.        

\begin{lemma}[Distances on cycle component]\label{lemma:cycle-distances}
	For any biconnected component $B$ of a planar graph $G$ and any cycle component $C$ in the triconnected component tree of $B$, the distances between any two vertices of $C$ can be maintained in $\DynFO$, provided that $G$ stays planar. 
\end{lemma}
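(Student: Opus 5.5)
The statement to prove is Lemma~\ref{lemma:cycle-distances}, on distances inside cycle components. The plan is to maintain, for every cycle component $C$ of every triconnected component tree, a relation $\textsc{CycDist}(x,y,z,d)$. Here $x,y,z$ are distinct vertices of $C$ that fix an orientation of the cycle, and $d$ is the number of cycle edges (real or virtual) passed when walking from $x$ to $y$ in the direction that starts towards $z$. The plain distance is then the minimum over both orientations. For the initialization we use that the empty graph has no cycle components.

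\textbf{Reduction to reachability-like data.} The key point is that the cyclic order of a cycle component can be read off \FO-definably from information already maintained. By Lemma~\ref{theorem:maintainability-decomposition}, we know the triconnected components, the separating pairs, and hence the (real or virtual) edges of $C$. Two vertices $u,w$ of $C$ are adjacent in $C$ exactly if $(u,w)$ is a graph edge inside the biconnected component, or $\{u,w\}$ is a $3$-connected separating pair whose tree node is adjacent to $C$. So the edge set of $C$ is \FO-definable, and maintaining $\textsc{CycDist}$ amounts to maintaining distances on a dynamically changing family of disjoint cycles.

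\textbf{Update by change type.} We go through the change types of Section~\ref{section:changes}.
\begin{itemize}
\item Changes of type $\es[+/-]{3}{3}$ and $\es[+]{0}{2}$, $\es[-]{2}{0}$ leave every cycle component intact.
\item An $\es[+]{2}{2}$ insertion of a chord $(u,v)$ splits $C$ into $C_1,C_2$. Distances in $C_i$ are old oriented distances, except that a walk crossing the chord is shortcut. Concretely, for $x,y$ on the same side, the new oriented distance is either the old one or $d(x,u)+1+d(v,y)$ along the orientation, which is \FO-computable with $+$.
\item An $\es[-]{2}{2}$ deletion merges two cycles glued along the virtual edge $\{u,v\}$. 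The new distance from $x\in C_1$ to $y\in C_2$ is $d_{C_1}(x,u)+d_{C_2}(u,y)$ with consistent orientations.
\item An $\es[+]{2}{3}$ change only splits cycles on the path into at most two pieces by the absorbed vertices $a_1,a_2,b_1,b_2$. Each new cycle $C_u$ is a subpath of the old cycle closed by the new virtual edge $\{a_1,a_2\}$, so the same formula as for chords applies. The reverse $\es[-]{3}{2}$ change merges such pieces back, which is handled like $\es[-]{2}{2}$.
\item For $\es[+]{1}{2}$, a new cycle through consecutive cut vertices $c_0=u,\dots,c_{k+1}=v$ appears. Its distances are exactly the distances between the $c_i$ along the biconnected component tree path, namely $\tfrac12$ of the tree distance between cut-vertex nodes. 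These come from Lemma~\ref{lemma:bc-distances} before the change, plus $1$ for closing through $(u,v)$.
\item $\es[-]{2}{1}$ simply discards a cycle.
\end{itemize}

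\textbf{Main obstacle.} The two lemmas are mutually dependent. The $\es[+]{1}{2}$ case here needs biconnected-tree distances, while maintaining biconnected-tree distances under $\es[-]{2}{1}$ needs the positions of the cut vertices along the unfurled path, which is where cycle distances are needed. So I would prove Lemma~\ref{lemma:bc-distances} and Lemma~\ref{lemma:cycle-distances} simultaneously, always using old values of both relations; this is legitimate since updates are evaluated on the pre-change structure. The delicate part is the $\es[-]{2}{1}$ update for biconnected-tree distances. When a block $B$ unfurls, the order of the new cut vertices along the path $u\to v$ must be determined. I would obtain it from planar embedding information (\cite{DattaKM23}), using the fact that the new cut vertices are exactly the vertices that lie on both faces adjacent to $(u,v)$ in the merged structure. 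Their number between two given ones is then a counting problem over a face, which is \FO-expressible given face-order relations. I expect fixing this consistent orientation bookkeeping to be the hardest step.
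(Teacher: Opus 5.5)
Your overall plan matches the paper's: oriented distances stored per cycle component, maintained jointly with $\biTree$ distances, and a case distinction over change types. Your split and merge formulas for $\es[\pm]{2}{2}$ and $\es[+]{2}{3}$ are fine, and so is reading the new cycle of an $\es[+]{1}{2}$ insertion off old tree distances.

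The gap is the step you flag as delicate, and your cycle lemma depends on it through the $\es[+]{1}{2}$ case. You propose to get the order and number of new cut vertices after an $\es[-]{2}{1}$ deletion of $e=(u,v)$ by counting, between two given vertices, those vertices lying on both faces at $e$. You assert this is \FO-expressible from face-order relations. It is not. Counting how many elements of a set lie between two positions of a (cyclic) order is beyond \FOar, i.e.\ uniform $\AC^0$, since it subsumes parity. This is exactly why distances must be stored rather than recomputed. You even note that this is where cycle distances are needed, but then route around them.

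The missing idea is Lemma~\ref{lem:distBC}. The edge $e$ lies in a unique cycle component $S$, and the cut vertices of $B-\{e\}$ are exactly $V(S)\setminus\{u,v\}$; these coincide with your vertices on both faces at $e$, so your characterization is right but unnecessary. These cut vertices occur along the path $\biTree(B-\{e\})$ in the cyclic order of $S$. Hence the path has length $2|V(S)|-4$. The tree distance between two new cut vertices is twice their distance in $S$ along the arc avoiding $e$, which the old \textsc{CycDist} already provides. No embedding information is needed, and this is what closes the mutual recursion.

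Second, your case analysis ignores side effects of $\es[+]{1}{2}$ and $\es[-]{2}{1}$ inside the old blocks. In $\es[+]{1}{2}$, every old block $B_i$ on the path receives the (virtual) edge $(c_i,c_{i+1})$, which can split cycle components of $B_i$ (an $\es[+]{2}{2}$- or $\es[+]{2}{3}$-type effect). Conversely, $\es[-]{2}{1}$ can merge cycles, so it does not ``simply discard a cycle''.

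For example, take edges $(a,b),(a,c_1),(b,c_2)$ plus two internally disjoint long $c_1$--$c_2$ paths $P_1,P_2$. Then $\{c_1,c_2\}$ is a $3$-connected separating pair adjacent to three cycle components. After deleting $(a,b)$, which is of type $\es[-]{2}{1}$, the new block $P_1\cup P_2$ is a single cycle, and its distances must be derived from the two old cycles.

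The repair, as in the paper, is to process these virtual-edge changes with your $\es[\pm]{2}{2}$, $\es[+]{2}{3}$ and $\es[-]{3}{2}$ updates, in parallel over all affected blocks; each block receives only one such change. Similarly, in $\es[-]{3}{2}$ the cycle components of $\triTree(R_{ab}-\{e\})$ (at most four vertices) can appear as brand-new cycles, or glue up to two old cycles along two separating pairs. That is slightly more than your single-virtual-edge merge, though still routine.
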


The challenging case for maintaining distances are edge deletions of the type $\es[-]{2}{1}$, that is, a deletion that causes a biconnected component to decompose. In the biconnected component tree, the node for this biconnected component is replaced by a path whose length cannot be bounded by a constant.

We make the following structural observation about biconnected graphs. For a biconnected graph $G$, the deletion of an edge $e$ causes $G$ to become non-biconnected if and only if $e$ belongs to a unique cycle component in the triconnected decomposition of $G$. Intuitively, when the edge $e$ is removed, the cycle component that contains $e$ becomes a path that starts and ends at the endpoints of $e$, and the internal vertices on this path are exactly the cut vertices of $G-\{e\}$. In fact, the order in which the internal vertices appear on this path is the same as the order in which the corresponding cut vertices appear in the biconnected component tree of $G-\{e\}$, which is just a path. Thus, intuitively, it should be enough to maintain distances between vertices in each cycle component of the graph, apart from maintaining distances for the biconnected component trees themselves.

We formalize this intuition in the following.

\begin{lemma}\label{lem:distBC}
	Let $G$ be a biconnected graph. Let $e=(a,b)$ be an edge of $G$ such that $G-\{e\}$ is not biconnected. Then,
	\begin{itemize}
		\item[(a)] there exists a unique cycle component $S$ (in the triconnected decomposition of $G$) that contains $e$,
		\item[(b)] the cut vertices of $G-\{e\}$ are exactly the vertices of $S$ except for $a$ and $b$, and
		\item[(c)] $\biTree(G-\{e\})$ is a path of length $2|V(S)|-4$ such that cut vertex nodes on it appear in the same order as their respective cut vertices appear on $S$.
	\end{itemize} 
\end{lemma}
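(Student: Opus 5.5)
The proof works in the triconnected component tree $\triTree(G)$ and uses the SPQR structure of biconnected graphs. The key facts are these. A biconnected graph $G$ minus an edge $e=(a,b)$ fails to be biconnected exactly when some vertex $c$ separates $a$ from $b$ in $G-\{e\}$. Equivalently, every $a$–$b$ path in $G-\{e\}$ passes through $c$, and then $\{a,c\}$ and $\{c,b\}$ are separating pairs of $G$ (if $G$ is not a single cycle of length $3$, which we treat directly).

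\textbf{Part (a).} First I show that $e$ cannot lie in a $3$-connected component. Suppose $e$ belongs to a $3$-connected component $R$ (as a real edge, since real edges belong to exactly one component). Then $R-\{e\}$ is still $2$-connected, because a $3$-connected graph minus an edge is $2$-connected. Every virtual edge of $R$ can be replaced by a path through the corresponding subtree, and these paths are internally disjoint from the rest of $R$. So $G-\{e\}$ would be biconnected, a contradiction.

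If $e$ lies only in a bond, then the two endpoints $a,b$ form a pair joined by at least two other internally disjoint paths in $G$, so deleting $e$ keeps $G$ biconnected. The paper's tree has no bonds anyway, since multi-edges are absent. Hence $e$ lies in a cycle component $S$. It is unique because a real edge belongs to exactly one triconnected component.

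\textbf{Part (b).} Write $S=(a=s_0,s_1,\dots,s_k=b)$ plus the edge $e$, where the other edges of $S$ are real or virtual. Each virtual edge $(s_i,s_{i+1})$ stands for a subgraph $G_i$ that is biconnected once the virtual edge is added back, and that contains an $s_i$–$s_{i+1}$ path. The subgraphs $G_i$ pairwise share only the vertices of $S$. Therefore $G-\{e\}$ is the chain $G_0,G_1,\dots,G_{k-1}$ glued at the vertices $s_1,\dots,s_{k-1}$; for a real edge, $G_i$ is just that edge.

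Each interior $s_i$ separates $G_0\cup\dots\cup G_{i-1}$ from the rest, so it is a cut vertex. Conversely, each $G_i$ is internally $2$-connected relative to its endpoints: the graph $G_i+(s_i,s_{i+1})$ is biconnected, so any vertex other than $s_i,s_{i+1}$ has two disjoint routes to $\{s_i,s_{i+1}\}$ inside $G_i$. The endpoints $a,b$ lie in a single block each, so no other vertex is a cut vertex. This argument depends on the Hopcroft–Tarjan characterization of virtual edges, which I will cite.

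\textbf{Part (c).} From the chain structure, the blocks of $G-\{e\}$ are exactly $G_0,\dots,G_{k-1}$. Each $G_i$ is biconnected, or is a single edge, after the virtual edge is removed: its block structure is obtained by taking $G_i+(s_i,s_{i+1})$ minus one edge, and this must not split further. This is the main obstacle. Removing a virtual edge could in principle create further cut vertices inside $G_i$.

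I handle it as follows. If $G_i$ itself had a cut vertex $c\notin\{s_i,s_{i+1}\}$, then $c$ would separate $s_i$ from $s_{i+1}$ in $G_i$. In $G$, the cycle $S$ would then refine: $\{s_i,c\}$ and $\{c,s_{i+1}\}$ would be separating pairs along the cycle. That contradicts the maximality of the cycle component $S$, since the triconnected decomposition merges adjacent cycles. Hence the blocks and cut vertices alternate along the chain:
\[G_0,\ s_1,\ G_1,\ \dots,\ s_{k-1},\ G_{k-1}.\]
This is a path with $k+(k-1)=2k-1$ nodes, so its length is $2k-2=2|V(S)|-4$, using $|V(S)|=k+1$. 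The cut vertices appear in the order $s_1,\dots,s_{k-1}$, which is their order along $S$.
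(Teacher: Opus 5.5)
Your proof is correct in substance, but it runs in the opposite direction from the paper's. The paper works bottom-up from $\biTree(G-\{e\})$ using only definitions. It first shows that $a$ and $b$ lie in different blocks of $G-\{e\}$ and that $\biTree(G-\{e\})=B_0v_1B_1\cdots v_kB_k$ is a path; in both cases the alternative would make some cut vertex of $G-\{e\}$ a cut vertex of $G$. It proves (a) by the three-disjoint-paths argument. For (b) it shows that each consecutive pair on $a v_1\cdots v_k b$ is either joined by an edge or forms a $3$-connected separating pair of $G$ (two paths inside $B_i$, a third through $e$). So $(a v_1\cdots v_k b)$ is a cycle component, which must be $S$ by uniqueness in (a).

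You instead go top-down from $S$ in $\triTree(G)$, cut $G-\{e\}$ into the pieces $G_i$ behind the edges of $S-e$, and verify that these pieces are exactly the blocks. Your route makes (b) and (c) a direct count and tells you which part of $\triTree(G)$ each block consists of. The price is that it imports structural facts about the decomposition: edges are partitioned among the pieces, each $G_i+(s_i,s_{i+1})$ is biconnected, and the pieces meet only in $V(S)$. Its one real step is showing that deleting the virtual edge creates no cut vertex inside $G_i$. The paper's route avoids this machinery and gets the identification with $S$ for free, but only sketches why no two vertices of $a v_1\cdots v_k b$ are separated by a $3$-connected separating pair.

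Some local points:
\begin{itemize}
\item Your preliminary claim that $\{a,c\}$ and $\{c,b\}$ are separating pairs of $G$ is false in general. In the $4$-cycle $a\,b\,x\,c$ with $e=(a,b)$, the vertex $c$ separates $a$ from $b$ in $G-\{e\}$, yet $\{a,c\}$ separates nothing. Nothing later uses the claim.
\item In (b), the ``two disjoint routes to $\{s_i,s_{i+1}\}$'' do not exclude a vertex of $G_i$ separating $s_i$ from $s_{i+1}$; take $G_i$ to be a path $s_i\,c\,s_{i+1}$. So the converse direction of (b) really rests on what you prove in (c).
\item In (c), the cleanest justification is not maximality of $S$. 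Such a $c$ leaves at most two internally disjoint $s_i$--$s_{i+1}$ paths in $G$: one through $c$, and one around $S$ via $e$. Hence $\{s_i,s_{i+1}\}$ is not a $3$-connected separating pair and cannot carry a virtual edge of $S$.
\item The paper's decomposition indeed has no bonds, but then a real edge $(a,b)$ with $\{a,b\}$ a $3$-connected separating pair lies in every triconnected component containing $a$ and $b$. Your bond argument is exactly what rules this case out and makes ``unique'' in (a) correct.
\end{itemize}
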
  
\begin{proof}
	In this proof, for brevity, we refer to the biconnected components and the nodes corresponding to them in the biconnected component tree with the same name. The ambiguity can be resolved from context.
	   
	For the proof of part (c), we first claim that $a$ and $b$ must belong to distinct biconnected components of $G-\{e\}$. 
	For the sake of contradiction, let us assume that $a$ and $b$ belong to a common biconnected component $B_{ab}$ of $G-\{e\}$. Since $G-\{e\}$ is not biconnected, there must exist a cut vertex, say $x$, and another biconnected component, say $B$, such that $x$ is adjacent to both $B$ and $B_{ab}$. 
	Consider the graph $G'$ obtained from $G-\{e\}$ by removing $x$ from it. Clearly, $V(B)\setminus\{x\}$ and $V(B_{ab})\setminus\{x\}$ are non-empty, and are in different connected components of $G'$. Since both $a$ and $b$ are in the same connected component of $G'$ (the component that contains the vertices $V(B_{ab})\setminus\{x\}$), adding the edge $e$ to $G'$ (if $x\not\in\{a,b\}$) does not change the connectivity of the graph. That is, the vertices of $V(B)\setminus\{x\}$ and $V(B_{ab})\setminus\{x\}$ remain disconnected from each other even in $G'+e$. But, $G'+e$ is the same as $G$ with just the vertex $x$ removed, meaning that $x$ is a cut vertex in $G$, a contradiction to the assumption that $G$ is biconnected. So, our initial assumption that $a$ and $b$ belong to a common biconnected component of $G-\{e\}$ must be wrong. 
	
	Let $C_a$ and $C_b$ be the biconnected components of $G-\{e\}$ that contain $a$ and $b$, respectively. 
	Next, we show that $\biTree(G-\{e\})$ is a path. Again, for the sake of contradiction, let us assume that there exists a node of $\biTree(G-\{e\})$ that does not appear on the $B_a$ to $B_b$ path in $\biTree(G-\{e\})$. Then there exists a biconnected component, say $D$, of $G-\{e\}$ that does not lie on the $B_a$ to $B_b$ path in $\biTree(G-\{e\})$ and is adjacent to a cut vertex node, say $y$, that is either on the path or is adjacent to a biconnected component on the path. Using similar arguments as before, we can show that $y$ must be a cut vertex in $G$ too, leading to a contradiction. Thus, $\biTree(G-\{e\})$ must be a path with $B_a$ and $B_b$ as its endpoints. We defer the justification of the length of the path for now.
 	
 	For part (a), that $e$ must have been a part of unique cycle component in the triconnected decomposition of $G$, we exhaustively go through the cases. We first show that $\{a,b\}$ could not have been a $3$-connected separating pair of $G$. To see this, realize that if $\{a,b\}$ were to be a $3$-connected separating pair of $G$, then by definition, there must have been $3$ vertex disjoint paths from $a$ to $b$ in $G$. But the edge $e$ itself could have been part of at most one of those paths, so in $G-\{e\}$ there are still two internally vertex disjoint paths between $a$ and $b$, implying $a$ and $b$ belong to a common biconnected component of $G-\{e\}$, a contradiction. For the same reason, $a$ and $b$ could not have been part of a $3$-connected component of $G$. Thus we are left with the only remaining case that $a$ and $b$, and therefore also $e$, must be part of a cycle component in the triconnected decomposition of $G$.
 	The uniqueness of the cycle component is clear: if it is part of two cycle components then $\{a,b\}$ form a $3$-connected separating pair, which we have ruled out. 
 	Let this unique cycle component be $S$.
 	
 	For part (b), we first prove that every vertex on the cycle $S$, except for $a$ and $b$, is a cut vertex in $G-\{e\}$.
 	Consider $\biTree(G-\{e\})$. As we have established before, it must be a path, say $B_0 (=B_a)v_1B_1v_2\ldots v_kB_k(=b)$, where $B_i$ and $v_i$ are biconnected component and cut vertex nodes, respectively. We can obtain $G$ back by inserting the edge $e$ to $G-\{e\}$. We will show that the cycle $(av_1v_2\ldots v_kba)$ is a cycle component in $G$, and thus due to uniqueness shown in part (a), it must be $S$. 
 	
 	Towards that, for any $i\in[k-1]$, if $B_i$ is just the edge $(v_i,v_{i+1})$ then $v_i$ and $v_{i+1}$ belong to a common triconnected component, as they cannot be separated by any separating pair.
 	On the other hand, if $B_i$ is a non-trivial biconnected component of $G-\{e\}$, we claim that $\{v_i,v_{i+1}\}$ is a $3$-connected separating pair in $\triTree(G)$, and thus belong to a common triconnected component as well. We first establish that $v_i$ and $v_{i+1}$ are $3$-connected. To see this, note that if $B_i$ is a non-trivial biconnected component of $G-\{e\}$, then there exist two internally vertex disjoint paths between $v_i$ and $v_{i+1}$ inside $B_i$, say $\rho_1$ and $\rho_2$. 
 	Hence, in $G$, where edge $e$ is available, a third path internally vertex disjoint to both $\rho_1$ and $\rho_2$ can be routed via $e$ as follows: $v_i\cdots v_1\cdots ab\cdots v_{k-1}\cdots v_{i+1}$. This ensures that $v_i$ and $v_{i+1}$ are $3$-connected in $G$. 
 	Now, we show that $\{v_i,v_{i+1}\}$ is a separating pair in $G$. Consider the graph $G''=(G-\{e\})-\{v_i,v_{i+1}\}$. Since $B_i$ is a non-trivial biconnected component of $G-\{e\}$ (by assumption), the vertices $V(B_i)\setminus\{v_i,v_{i+1}\}$ form a connected component of $G''$. Also, as $V(B_i)\setminus\{v_i,v_{i+1}\}$ does not contain $a$ and $b$, adding back $e$ to $G''$ keeps $V(B_i)\setminus\{v_i,v_{i+1}\}$ disconnected from $a$ and $b$. But, $G''$ with $e$ is $G-\{v_i,v_{i+1}\}$, and thus $\{v_i,v_{i+1}\}$ must be a separating pair. Using similar arguments, we can also show that $a,v_1$ belong to a common triconnected component, and $v_k,b$ belong to a common triconnected component. Finally, we can show that in fact, any two vertices of the sequence $av_1v_2\ldots v_kb$ cannot be separated by a $3$-connected separating pair. 
 	
 	Given the above, it is clear that the length of the path $\biTree(G-\{e\})$ is $2|V(S)|-4$, and that the order of cut vertices on the path is the same as in $S$.
\end{proof}

For any connected planar graph $G$, after the deletion of an edge $e$ of type $\es[-]{2}{1}$, the biconnected component $B$ that contains $e$ unfurls into a path in the biconnected component tree, and the neighbours of $B$ in $\biTree(G)$ get attached to the appropriate node of the unfurled path. So if the distances between any two nodes of the unfurled path are available, we can update the distances between any two nodes of $\biTree(G-\{e\})$. As the~\cref{lem:distBC} suggests, the distances between nodes of the unfurled path can be inferred from the distances between vertices of the cycle component that contains $e$.
  
Naturally, our dynamic program for maintaining distances between nodes of the biconnected component tree maintains the following auxiliary relations.
\begin{itemize}
	\item \emph{Distances between the vertices of a cycle components.} The relation $\textsf{CycleDistance}$ contains a tuple $(\tpl s = (s_1, s_2,s_3),u,v,k)$ if
	 \begin{enumerate}[(i)]
	 	\item the vertices $s_1, s_2, s_3$ appear together in a cycle component $S$ that also includes $u$ and $v$, and
	 	\item the distance between $u$ and $v$ in $S$ is $k$ when the cycle $S$ is oriented as $s_1\cdots s_2\cdots s_3\cdots s_1$. 
	 \end{enumerate}
	\item \emph{Distances between nodes of the biconnected component forest.} The relation $\textsf{\biTree Distance}$ contains a tuple $(\tpl p,\tpl q,k)$, if $\tpl p$ and $\tpl q$ are nodes of a biconnected component tree such that the distance between them on the tree is $k$.
\end{itemize}

The update of $\textsf{\biTree Distance}$ is only non-trivial for $\es[-]{2}{1}$ changes, that is, when a node of a biconnected component tree unfurls into a path. In this case, the distances can be updated using $\textsf{CycleDistance}$, thanks to Lemma~\ref{lem:distBC}. So, we can focus on maintaining $\textsf{CycleDistance}$.

\begin{proofof}{\cref{lemma:bc-distances,lemma:cycle-distances}}
	We show that relations $\textsf{\biTree Distance}$ and $\textsf{CycleDistance}$ can be maintained in $\DynFO$ under insertions and deletions of edges.%
	
	We distinguish the different types of the changed edge $e = (a,b)$.
	\subparagraph*{Case $\es[+]{3}{3}$ and $\es[-]{3}{3}$}
	In these cases no updates are required, since both the biconnected component trees as well as cycle components are not affected by the change.
	\subparagraph*{Case $\es[+]{2}{3}$}
	In this case, the biconnected component trees do not change by definition. Thus, no updates are required to $\textsf{\biTree Distance}$. 
	
	However, cycle components might change as a result of the insertion, and we need to update the $\textsf{CycleDistance}$ relation. Let $B$ be the biconnected component that contains both $a$ and $b$. Let $B'$ denote the same biconnected component after the change. Our goal is to update $\textsf{CycleDistance}$ for every new (with respect to $\triTree(B)$) cycle component of $\triTree(B')$. Some cycle components from $\triTree(B)$ are not present in $\triTree(B')$, the corresponding tuples are removed from $\textsf{CycleDistance}$. 
	
	Let $R_a$ and $R_b$ be the triconnected components of $B$ that contain the vertices $a$ and $b$, respectively. Clearly, only the cycle components of $B$ that lie on the $R_a$ and $R_b$ path in $\triTree(B)$ are of concern; all other cycle components of $\triTree(B)$ remain intact in $\triTree(B')$. 
	
	Let $S$ be a cycle component that lies internally on the $R_a$ to $R_b$ path in $\triTree(B)$. We defer the case that $R_a$ (or $R_b$) itself is a cycle component for now. 
	In $\triTree(B')$, the cycle component $S$ is replaced by up to two new cycle components, which can be described as follows. Let $\{x,y\}$ and $\{w,z\}$ be the two separating pairs adjacent to $S$ that also lie on the $R_a$ to $R_b$ path. Let the vertices $x,y,w,z$ appear in that cyclic order\footnote{Note that in the cyclic sequence $x,y,w,z$ there can be a repetition as two separating pairs may share a vertex.} in the cycle $S$, so $S = x y \cdots w z \cdots x$.
	The new cycle components in $\triTree(B')$ are $S_1 = x z \cdots x$ and $S_2 = y \cdots w y$, assuming they consist of at least three vertices.
	
	We state this formally now. 
	\begin{claim}
		Let $S$ be a cycle component that lies internally on the $R_a$ to $R_b$ path of $\triTree(B)$. Let $\{x,y\}$ and $\{w,z\}$ be the two separating pairs that are adjacent to $S$ and also lie on the $R_a$ to $R_b$ path in $\triTree(B)$. Let the vertices $x,y,w,z$ appear in that cyclic order in $S$. Let $S'$ be the graph obtained from $S$ by inserting the edges $(y,w)$ and $(z,x)$ (if they do not already exist and the vertices are different), and deleting the edges $(x,y)$ and $(w,z)$. Then, $S$ is replaced in $\triTree(B')$ by the cycles that remain in $S'$.
	\end{claim}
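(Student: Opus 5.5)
The plan is to exhibit $B'$'s triconnected decomposition directly and to check that the cycles of $S'$ are exactly those components of $\triTree(B')$ that arise from $S$. By Section~\ref{section:changes} (case $\es[+]{2}{3}$), every triconnected component on the $R_a$ to $R_b$ path merges into $R_{ab}$, and only the vertices of the path's separating pairs survive into it. Since $S$ lies internally on this path, exactly $x,y,w,z$ are absorbed from $S$. Moreover, $\{x,y\}$ and $\{w,z\}$ stop being $3$-connected separating pairs, because the new edge $(a,b)$ together with the rest of the path reconnects the two sides.

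The first step is to show that $\{y,w\}$ and $\{x,z\}$ (when distinct and non-adjacent in $S$) are $3$-connected separating pairs of $B'$.
\begin{itemize}
\item \emph{Separation.} Removing $y,w$ cuts the arc $y\cdots w$ of $S$, together with the subtrees hanging off its inner virtual edges, from the rest of $B'$. The cut is not bridged by $(a,b)$, since $a$ and $b$ lie beyond the pairs $\{x,y\}$ and $\{w,z\}$.
\item \emph{Three disjoint paths.} The arc itself gives one path. A second runs through $R_{ab}$: from $y$ via the subtree behind $\{x,y\}$ to the $a$-side, across $(a,b)$, then back via the $b$-side to $w$. A third runs through $x$ and $z$ along the other arc of $S$.
\end{itemize}
The argument for $\{x,z\}$ is symmetric.

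The second step is to show that no vertex of the arc $y\cdots w$ is separated from that arc by a $3$-connected separating pair of $B'$. The pairs already present in $\triTree(B)$ and attached to $S$ at internal arc edges remain unchanged. Consequently, the arc closed by the virtual edge $(y,w)$ is a chordless cycle component, namely $S_2$; the same argument gives $S_1$. This uses the characterization that triconnected components are maximal sets that are not separated by $3$-connected separating pairs. It also uses the uniqueness of $\triTree$, so the constructed decomposition must coincide with $\triTree(B')$.

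The degenerate cases need separate treatment:
\begin{itemize}
\item If an arc has fewer than three vertices, for example when $y=w$ or $(y,w)$ is already an edge, then no cycle remains and the corresponding edge is simply a real or virtual edge of $R_{ab}$.
\item If the two separating pairs share a vertex, the same argument applies with one cycle.
\end{itemize}

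The main obstacle is the second step. We must rule out that some new pair through $R_{ab}$ splits an arc further, and that the arc instead merges into a $3$-connected component. I would handle both by noting that every path leaving the arc must pass through $y$ or $w$, so the arc's block structure relative to $\{y,w\}$ matches its structure in $S$, which was a cycle.
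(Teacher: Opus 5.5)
Your proposal is correct and follows essentially the paper's proof. It uses the same separation argument: $a$ and $b$ lie on the same side of $\{y,w\}$, so adding $(a,b)$ cannot reconnect the arc. It uses the same three $y$--$w$ paths (the arc, the other arc of $S$ through $x$ and $z$, and a detour through $e$) to make $\{y,w\}$ and $\{x,z\}$ into $3$-connected separating pairs of $B'$, and it concludes as the paper does that the arc vertices, which no $3$-connected pair separates, form the new cycle components.

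Two minor points, neither of which affects the conclusion. First, your side remark that $\{x,y\}$ and $\{w,z\}$ stop being $3$-connected separating pairs is only true when they have degree $2$ in $\triTree(B)$; otherwise they survive and are attached to $R_{ab}$, as described in Section~\ref{section:changes}. Second, the path through $x$ and the detour through $e$ may both have to use the single branch behind $\{x,y\}$ (or behind $\{w,z\}$). In that case their internal disjointness needs a short fan argument based on the $3$-connectivity of that pair in $B$, a step the paper's proof also passes over.
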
   
	\begin{proof}
		Let us assume without loss of generality that $x,y,w,z$ are all distinct. If $x,y,w,z$ are the only vertices in $S$ then $S'$ consists of two isolated edges. In this case, $S$ is simply omitted from the list of cycle components of $B'$ as per the claim. This is indeed correct, as $x,y,w,z$ end up in a common $3$-connected component of $B'$, because in $B'$ there are three internally vertex disjoint paths between any two of $x,y,w,z$, one of them using $e$. 
		
		Now, we assume that there is a non-zero number of vertices between $y$ and $w$ as well as between $z$ and $x$ in the cyclic order of appearance of the vertices in $S$. We show that $(y,w)$ is a separating pair in $B'$. Clearly, $(y,w)$ is a separating pair in $B$. More specifically, $a$ and $b$ are in the same connected component of $B-\{y,w\}$, and the vertices in between $y$ and $w$ are in another connected component. So, even after adding back the edge $(a,b)$ to $B-\{y,w\}$, we have these two connected components disconnected. Thus, $(y,w)$ is a separating pair in $B'$. Moreover, it is a $3$-connected separating pair: there are three internally vertex disjoint paths between $y$ and $w$, two of them were there already in $B$ along the cycle $S$ and the third one is made possible in $B'$ via the edge $e$. Similarly, $(z,x)$ is also a $3$-connected separating pair.
		The vertices between $y$ and $w$ in $S$, including $y$ and $w$, form a new cycle component of $B'$, since their vertices can only be separated by non-$3$-connected separating pairs. For the same reason, the vertices between $z$ and $x$ including $z$ and $x$ form a new cycle component of $B'$. The vertices $x,y,w,z$ themselves end up in a common $3$-connected component of $B'$, and thus cannot be part of any common cycle component.
	\end{proof}
	
	The distances for the cycles $S_1$ and $S_2$ can easily be obtained from the distances within $S$.%

	The case when $R_{a}$ (or $R_b$) itself is a cycle component can be handled similarly. The cycle $R_a$ is replaced by up to two cycles that are obtained by adding the edges from $a$ to both the vertices of the separating pair adjacent to $R_a$ (towards $R_b$), and then deleting the edge corresponding to this separating pair from $R_a$. %
	\subparagraph*{Case $\es[-]{3}{2}$.}
	This case is exactly the reverse of the $\es[+]{2}{3}$ case. No updates are required to the $\textsf{\biTree Distance}$ relation, as the biconnected component tree does not change. New cycle components are introduced and for them we need to update the $\textsf{CycleDistance}$ relation. 
	
	Let $B$ be the biconnected component and $R_{ab}$ the $3$-connected component of $B$ that contain the edge $e$ before its deletion. Let $B'$ denote the biconnected component $B$ after the deletion. We know that $R_{ab}$ unfurls into a path in $\triTree(B')$ (see Lemma~\ref{lem:spqrDist}), and any newly created cycle component must lie on this unfurled path. 
	Consider the graph $H \df R_{ab}-\{e\}$.%
	Any cycle component in $\triTree(H)$ can have at most $4$ vertices, since otherwise $R_{ab}$ could not be $3$-connected to begin with. 
	In $\triTree(B')$, these cycle components either just appear unaltered or they are merged with up to two cycle components $S_1, S_2$ of $\triTree(B)$ into a new cycle component. 
	In the first case, $\textsf{CycleDistance}$ can be updated in $\FO$ as the cycle length is at most $4$. In the latter case, we can infer the distances for the new cycle using the distance information for $S_1$ and $S_2$ which is available via $\textsf{CycleDistance}$.
	\subparagraph*{Cases $\es[+]{2}{2}$ and $\es[-]{2}{2}$.}
	For both insertion and deletion of this type, the biconnected component tree is unaffected. Thus, $\textsf{\biTree Distance}$ does not require updating. Let $B$ be the biconnected component that contains the endpoints of $e$ before the change. Let $B'$ denote the same biconnected component after the change.
	
	We first establish where the endpoints of the update edge could be with respect to $\triTree(B)$. 
	For the case $\es[-]{2}{2}$ we claim that $\{a,b\}$ must have been a $3$-connected separating pair of $B$ before the deletion and the vertices $a$ and $b$ lie together in more than one cycle component.
	\begin{claim}
		For any edge $(a,b)$ in a biconnected graph $B$, if $a$ and $b$ do not belong to a common $3$-connected component of $B$ and $B-\{e\}$ is biconnected, then $\{a,b\}$ is a $3$-connected separating pair in $\triTree(B)$ such that all its neighbours are cycle components.
	\end{claim}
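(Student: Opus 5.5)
We argue by contradiction on the two alternatives for where $a$ and $b$ can sit in $\triTree(B)$. Since $(a,b)$ is an edge of $B$ and $a,b$ do not lie in a common $3$-connected component, either $\{a,b\}$ is a $3$-connected separating pair of $B$, or $a$ and $b$ are adjacent vertices of a single cycle component $S$ and $\{a,b\}$ is not a $3$-connected separating pair. In the second case the edge $e=(a,b)$ belongs to the unique cycle component $S$. Deleting it turns $S$ into a path whose internal vertices become cut vertices of $B-\{e\}$, by the argument of Lemma~\ref{lem:distBC}(b), reversed. More directly: any internal vertex $x$ of the path $S-\{e\}$ separates the two sides of $S$, because every part of $B$ hanging off $S$ is attached through a separating pair of consecutive cycle vertices. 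Since $S$ has at least three vertices, such an $x$ exists, so $B-\{e\}$ is not biconnected, contradicting the hypothesis. Hence $\{a,b\}$ is a $3$-connected separating pair.

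It remains to show that every neighbour of the node $\{a,b\}$ in $\triTree(B)$ is a cycle component. Suppose instead that some neighbour $R$ is a $3$-connected component containing $a$ and $b$. By definition, $R$ contains the (virtual or real) edge $(a,b)$. Because the real edge $e$ is present and $\{a,b\}$ is a separating pair, $a$ and $b$ would then be in a common $3$-connected component. The cleaner way to see this is to note that $R$ itself is such a component, so the hypothesis ``$a$ and $b$ do not belong to a common $3$-connected component'' is violated immediately. The only subtlety is the convention that a triconnected component containing a separating pair $\{a,b\}$ is counted as a component containing both $a$ and $b$ even when $(a,b)$ occurs there only as a virtual edge. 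I would fix that convention explicitly, following the paper's definition of triconnected components via $G[U]$ plus virtual edges. Under it, $a,b\in R$ with $R$ $3$-connected gives the contradiction. Therefore all neighbours are cycle components.

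The step I expect to need care is the first case: showing that deleting an edge lying only in one cycle component destroys biconnectivity. Here I would use the structure of $\triTree(B)$. Every other component attached to $S$ meets $S$ exactly in a separating pair of two consecutive cycle vertices. Hence in $B-\{e\}$ any internal vertex of the path $S-\{e\}$, together with the attachments on either side, separates the graph. This mirrors the contrapositive of Lemma~\ref{lem:distBC}. Finally, if one also wants the statement that $a,b$ lie together in more than one cycle component, this follows because a $3$-connected separating pair has at least two neighbours in the tree (its node has degree at least two), and all of them are cycles by the above.
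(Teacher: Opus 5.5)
Your proof is correct and follows the paper's own exhaustion argument: a common $3$-connected component is excluded by hypothesis, and a unique cycle component is excluded because $B-\{e\}$ stays biconnected, so $\{a,b\}$ must be a $3$-connected separating pair whose tree neighbours are all cycles. The paper additionally verifies $3$-connectivity directly (two disjoint paths in $B-\{e\}$ plus $e$), whereas you read it off from the tree structure. Your explicit cut-vertex argument for the single-cycle case supplies exactly the implication needed, which is the converse rather than the contrapositive of Lemma~\ref{lem:distBC}(a). The paper justifies that step only by pointing to Lemma~\ref{lem:distBC}(a), whose literal statement is the reverse implication.
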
  
	\begin{proof}
		We need to show that (i) $a$ and $b$ are $3$-connected, and that (ii) $\{a,b\}$ is a separating pair.  
		For (i) notice that, since $B-\{e\}$ is biconnected, there must be two internally vertex disjoint paths from $a$ to $b$ in $B-\{e\}$. Clearly, these two paths are also present in $B$, plus the edge $e$ forms another path between $a$ and $b$ that is trivially internally vertex disjoint, and thus $a$ and $b$ are $3$-connected in $B$. Now, we prove (ii). By definition of a $\es[-]{2}{2}$ change, $a$ and $b$ do not lie together in any common $3$-connected component of $B$. Also, $(a,b)$ could not belong to a unique cycle component of $B$, as otherwise $B-\{e\}$ cannot be biconnected (see part (a) of Lemma~\ref{lem:distBC}). Thus, the only alternative remaining is that $a$ and $b$ are part of more than one cycle components, and in that case, $\{a,b\}$ is clearly a separating pair.
	\end{proof}
	
	If the degree of the separating pair $\{a,b\}$ in $\triTree(B)$ is more than $2$, then deleting the edge $(a,b)$ does not affect any cycle component. This is because $\{a,b\}$ remains a $3$-connected separating pair, and the virtual edge corresponding to it takes the place of the graph edge $(a,b)$ in all the cycle components that are adjacent to $\{a,b\}$ in $\triTree(B)$, keeping them intact.
	
	However, if the degree of $\{a,b\}$ in $\triTree(B)$ is exactly $2$, then $\{a,b\}$ no longer remains a $3$-connected separating pair. In this case, the two cycle components are merged into one cycle component. The distances can easily be updated.%
	
	Similarly, in the case of a $\es[+]{2}{2}$ insertion, if $\{a,b\}$ is already a $3$-connected separating pair, adding the edge does not change the cycle components in the biconnected component tree, as the corresponding virtual edge was already present. 
	Otherwise, let $S$ be the unique cycle component in $\triTree(B)$ that contains both $a$ and $b$. We see that $\{a,b\}$ is a $3$-connected separating pair in $B+\{e\}$: it already is a separating pair in $B$, and now becomes $3$-connected because of the edge $e$ in $B + \{e\}$. In $\triTree(B+e)$, we have two new cycle components in place of $S$, obtained by splitting $S$ along the edge $(a,b)$. The new distances on this new cycle can be updated as argued before.      
	 
	\subparagraph*{Case $\es[+]{1}{2}$.}
	In this case, the decomposition into biconnected components of the connected component $C$ that contains the vertices $a$ and $b$ changes. Let $B_a$ and $B_b$ be the biconnected components of $C$ that contain $a$ and $b$, respectively. We can see that in $\biTree(C+\{e\})$, the $B_a$ to $B_b$ path in $\biTree(C)$, $B_0(= B_a)c_1B_1c_2\cdots B_{k-1}c_kB_k(=B_b)$, is replaced by a single biconnected component $B_{ab}$, where $B_1,\ldots,B_{k}$ are biconnected component nodes and $c_1,\ldots,c_k$ are cut vertex nodes.
	\begin{claim}
		There exists a biconnected component $B_{ab}$ in $\biTree(C+\{e\})$ such that $V(B_{ab})=V(B_a)\cup V(B_b)\bigcup_{i\in[k-1]}V(B_i)$, that is, $B_{ab}$ consists of all vertices which belong to some biconnected component that lies on the $B_a$ to $B_b$ path in $\biTree(C)$. Moreover, $(a,c_1),(c_1,c_2),\ldots,(c_{k-1},c_k),$ $(c_k,b)$ are either virtual or graph edges in $B_{ab}$, and together with $(a,b)$ they form a cycle component.   
	\end{claim}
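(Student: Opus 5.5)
The plan is to prove the three assertions in order: first the vertex set, then that each consecutive pair is joined by a (virtual or real) edge of the triconnected decomposition, and finally that the resulting cycle is a cycle component.

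\emph{Vertex set.} Set $U \df V(B_a)\cup V(B_b)\cup\bigcup_{i\in[k-1]}V(B_i)$. I will show that $C+\{e\}$ restricted to $U$ is biconnected and that $U$ is inclusion-maximal with this property.

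For biconnectivity, take any vertex $x\in U$ and remove it. Each $B_i$ with $x\notin V(B_i)$ stays connected. A component $B_i$ that contains $x$ also stays connected, because it is biconnected or a single edge. So after removing $x$ the chain $B_0,\dots,B_k$ can break at most at one point, namely at $x$ if $x$ is some $c_i$ or inside the unique block containing $x$. The edge $(a,b)$ then reconnects the two halves of the chain, which contain $a$ and $b$ respectively; if $x\in\{a,b\}$, the chain is not broken at all. Hence $C+\{e\}$ restricted to $U$ is $2$-connected.

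For maximality, suppose a vertex $w\notin U$ were biconnected to $U$. Then its block in $C$ hangs off the path at some cut vertex $c$, and $c$ remains a cut vertex in $C+\{e\}$, since $e$ has both ends in $U$ and so cannot create a path around $c$.

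\emph{Consecutive pairs.} Write $c_0\df a$ and $c_{k+1}\df b$. For each $i$, both $c_i$ and $c_{i+1}$ lie in $B_i$.

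If $B_i$ is the single edge $(c_i,c_{i+1})$, it is a graph edge. Otherwise $B_i$ gives two internally disjoint $c_i$--$c_{i+1}$ paths. A third path runs through the rest of the chain and $e$, and it is internally disjoint from $B_i$. Moreover, removing $\{c_i,c_{i+1}\}$ separates $V(B_i)\setminus\{c_i,c_{i+1}\}$ from $a$ and $b$. So $\{c_i,c_{i+1}\}$ is a $3$-connected separating pair and contributes a virtual edge. This is the same argument already used in the proof of Lemma~\ref{lem:distBC}(b).

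\emph{Cycle component.} The remaining step, which I expect to be the main obstacle, is to show that the cycle $a c_1\cdots c_k b a$ is itself a triconnected component. I will argue this by applying Lemma~\ref{lem:distBC} to $G \df B_{ab}$ and the edge $e$.

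Deleting $e$ from $B_{ab}$ gives back exactly the old chain. This chain is not biconnected, because every $c_i$ is a cut vertex of it. Lemma~\ref{lem:distBC}(a) therefore yields a unique cycle component $S$ containing $e$. Lemma~\ref{lem:distBC}(b) says that the vertices of $S$ other than $a$ and $b$ are exactly the cut vertices of $B_{ab}-\{e\}$. Those cut vertices are precisely $c_1,\dots,c_k$, and Lemma~\ref{lem:distBC}(c) gives their order along $S$. Consequently
\[
S = a c_1 c_2\cdots c_k b a,
\]
and its edges are the pairs described in the previous step together with $(a,b)$, which proves the claim.

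The only care needed is degenerate cases: when $k$ is small, or when some $c_i = c_{i+1}$ coincide or $a$ equals a cut vertex. I will handle these by noting that repeated vertices simply shorten the sequence, so the same argument goes through.
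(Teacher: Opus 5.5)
Your proof is correct. The first two steps follow the paper; your vertex-set argument, which checks every single-vertex removal and proves maximality via the attachment cut vertex, is actually more careful than the paper's two-path sketch. For the cycle-component part, however, you take a different route.

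The paper argues directly by exhaustion. The vertices of $a c_1\cdots c_k b a$ cannot share a $3$-connected component, since any two non-consecutive ones are separated by deleting one vertex from each arc. Yet these separating pairs are not $3$-connected, so the vertices share a triconnected component, which must therefore be a cycle.

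You instead note that $B_{ab}-\{e\}$ is exactly the old chain, whose cut vertices are precisely $c_1,\dots,c_k$. You then apply Lemma~\ref{lem:distBC} to the biconnected graph $B_{ab}$ and the edge $e$, which fixes $V(S)=\{a,c_1,\dots,c_k,b\}$ for the unique cycle component $S$ through $e$. This is more modular and makes explicit that $\es[+]{1}{2}$ is the inverse of $\es[-]{2}{1}$, so no fresh connectivity reasoning is needed. The paper's version is self-contained, but it essentially repeats the argument in the proof of Lemma~\ref{lem:distBC}(b). Your reduction relocates the work there rather than removing it, including the step that no two vertices of the cycle are split by a $3$-connected separating pair.

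Two minor remarks. First, part (c) only gives the order of the $c_i$ up to reversal, but you do not need it: your second step shows that the $k+2$ pairs $(a,c_1),\dots,(c_k,b),(b,a)$ are all edges of the chordless cycle $S$ on $k+2$ vertices, which already forces $S=a c_1\cdots c_k b a$. Second, the degenerate cases you mention do not arise. With $B_a,B_b$ chosen as the only blocks on the path containing $a$ and $b$, respectively, the vertices $a,c_1,\dots,c_k,b$ are pairwise distinct and $k\ge 1$.
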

	\begin{proof}
		The first part is clear. It suffices to show that in $C+\{e\}$, for any two vertices $x$ and $y$ that lay in distinct biconnected components, say $B_i$ and $B_j$ respectively, there are two internally vertex disjoint paths between $x$ and $y$. There exists a path between $x$ and $y$ already in $C$, since $C$ is a connected component that contains both $x$ and $y$. Another path between $x$ and $y$ that is internally vertex disjoint from this one can be found in $C+\{e\}$ via the edge $e$.

		For the second part, notice that if $B_i$ is a non-trivial biconnected component, then removing both $c_i,c_{i+1}$ from $C+\{e\}$ disconnects the graph. In that case, $\{c_i,c_{i+1}\}$ forms a separating pair. It is also a $3$-connected separating pair in $C+\{e\}$, because the added edge $(a,b)$ creates a path between $c_i$ and $c_{i+1}$ that is internally disjoint from $B_i$ while there are already two internally vertex disjoint paths between them just in $B_i$, as $B_i$ is a biconnected component. For such $3$-connected separating pair $\{c_i,c_{i+1}\}$, the virtual edge $(c_i, c_{i+1})$ is present. 
		If $B_i$ is a trivial biconnected component, is consists of the edge $(c_i, c_{i+1})$.
		It follows that $B_{ab}$ contains the cycle $(a c_1 c_2 \cdots c_k b a)$.
		Moreover, the vertices of this cycle cannot be part of a common $3$-connected component, because any two non-consecutive vertices on this cycle can be disconnected by removing two vertices, one from each segment of the cycle between the two non-consecutive vertices we want to separate. However, such pair of vertices are not $3$-connected separating pairs. %
		 Thus we have that the vertices of the cycle are in a common triconnected component. By exhaustion, we are left to conclude that the cycle $(ac_1c_2\ldots c_kba)$ is indeed a cycle component in $\triTree(B_{ab})$. 
	\end{proof} 
	
    To determine the distances between two nodes in $\biTree(C+\{e\})$, we just need to know the length of the path between these nodes in $\biTree(C)$ and the length of the intersection of this path with the $B_a$ to $B_b$ path. These lengths can be read from the old $\textsf{\biTree Distance}$ relation.
	
	Now, we move to updating $\textsf{CycleDistance}$. The new cycle components that are introduced are of two types. Some are introduced because of virtual edge insertions between consecutive cut vertices, these insertions are of type $\es[+]{2}{3}$ or $\es[+]{3}{3}$ and are discussed above. The other new cycle component is the one that consists of $(a c_1 c_2 \cdots c_k b a)$. The $\textsf{CycleDistance}$ relation with respect to this last cycle can be inferred from the distances between the respective cut vertices in $\biTree(C)$, which are can be read from $\textsf{\biTree Distance}$. 
	\subparagraph*{Case $\es[-]{2}{1}$.}
	This case is the reverse of the case $\es[+]{1}{2}$. Let $B$ be the biconnected component that contains the edge $e$ that is being deleted, let $C$ be the connected component containing $B$. From Lemma~\ref{lem:distBC}, we know that the edge $e$ must be part of some cycle component $S$ of $B$ before the deletion, and that the vertices of $S$ other than $a$ and $b$ become cut vertices in $B-\{e\}$. The virtual edges in the cycle $S$ need to be removed from the adjacent biconnected components if the degree of the corresponding $3$-connected separating pair was two in $\triTree(B)$. The updates required for such virtual edge deletions fall into the cases $\es[-]{3}{3}$, $\es[-]{3}{2}$ or $\es[-]{2}{2}$, and we have seen how to handle updates to $\textsf{\biTree Distance}$ and $\textsf{CycleDistance}$ in these cases.
		
	The distances between any two new biconnected component nodes in $\biTree(C-\{e\})$ can be obtained from the distances in the cycle component $S$, due to Lemma~\ref{lem:distBC}. In general, the distance between two nodes in $\biTree(C-\{e\})$ can be obtained from the old distance and the length of the path between them that  intersects with the unfurled path.    
	\subparagraph*{Case $\es[+]{0}{2}$ or $\es[-]{2}{0}$.}
	Only the distances in the biconnected component tree needs changing. In case of insertion, two distinct connected components become one, and their biconnected component trees join together at the trivial biconnected component corresponding to the added edge. In case of deletion, it is just the reverse.  
	In either case, the $\textsf{\biTree Distance}$ relation can be updated in a straightforward manner.
	
	Since no cycle components are affected, $\textsf{CycleDistance}$ needs no update.	
\end{proofof}

\end{document}